\documentclass[11pt]{article}

\usepackage{setspace,graphicx,epstopdf,amsmath,amsfonts,amssymb,amsthm,nccmath,versionPO}
\usepackage{marginnote,datetime,enumitem,subfigure,rotating,fancyvrb}
\usepackage{float}

\usepackage{titling}
\usepackage{xpatch}
\makeatletter
{\begin{titlepage}\def\@thanks{}}%
	{\end{titlepage}}
\xpatchcmd\titlepage{\setcounter{page}\@ne}{}{}{}
\xpatchcmd\endtitlepage{\setcounter{page}\@ne}{}{}{}
\makeatother

\usepackage[stable,bottom]{footmisc}
\usepackage[dvipsnames]{xcolor}
\usepackage{hyperref}
\usepackage[capitalize]{cleveref}
\hypersetup{
	colorlinks  = true,
	urlcolor    = blue,
	linkcolor   = red,
	citecolor   = blue
}

\usepackage{microtype}

\usepackage[longnamesfirst]{natbib}
\usdate

\usepackage{titlesec}
\newcommand{\periodafter}[1]{#1.}
\titleformat{\subsubsection}[runin]
{\normalfont\bfseries}{\thesubsubsection}{1em}{\periodafter}

\excludeversion{notes}		% Include notes?
\includeversion{links}      % Turn hyperlinks on?

\iflinks{}{\hypersetup{draft=true}}

\ifnotes{%
	\usepackage[margin=1in,paperwidth=10in,right=2.5in]{geometry}%
	\usepackage[textwidth=1.4in,shadow,colorinlistoftodos]{todonotes}%
}{%
	\usepackage[margin=1in]{geometry}%
	\usepackage[disable]{todonotes}%
}

\makeatletter\let\chapter\@undefined\makeatother % Undefine \chapter for todonotes

\theoremstyle{definition}
\newtheorem{theorem}{Theorem}%[section]
\newtheorem{lemma}{Lemma}%[section]
\newtheorem{assumption}{Assumption}%[section]
\newtheorem{remark}{Remark}%[section]
\newcommand{\argmin}{\mathop{\rm argmin}}

\usepackage{booktabs}
\usepackage{bbm}
\usepackage{dsfont}
\usepackage{multirow}
\usepackage{booktabs,caption}
\usepackage[flushleft]{threeparttable}
\usepackage[title]{appendix}
\usepackage{lscape}
\usepackage{longtable,rotating}
\usepackage{rotating}
\usepackage{bm,bbm}
\usepackage{mathtools}
\usepackage[font={scriptsize,onehalfspacing},labelfont=bf]{caption}
\usepackage[title]{appendix}
\usepackage{siunitx}
\usepackage{scrextend}
\usepackage{longtable}
\usepackage{slashbox}
\mathtoolsset{centercolon}
\usepackage{framed}
\usepackage{tabularx}
\newcolumntype{Y}{>{\raggedleft\arraybackslash}X}
\newcolumntype{Z}{>{\raggedright\arraybackslash}X}

\begin{document}

\setlist{noitemsep}  % Reduce space between list items (itemize, enumerate, etc.)
\onehalfspacing

\newcommand{\tit}{Realized Regularized Regressions}

\title{{\tit}\thanks{We thank Kim Christensen, Seok Young Hong, Yifan Li, Yingying Li, Olga Kolokolova, Ingmar Nolte, Manh Pham, and participants at 56th New Economic School Research Conference and 5th Frontiers of Factor Investing Conference, for their comments and suggestions. The usual disclaimer applies. }
\\[0.1cm]}

\author{
{Aleksey Kolokolov}\thanks{New Economic School, Russia. Email: \href{mailto:akolokolov@nes.ru}{akolokolov@nes.ru}. }\and
{Shifan Yu}\thanks{Corresponding author, Oxford-Man Institute of Quantitative Finance, University of Oxford, United Kingdom. Email: \href{mailto:shifan.yu@omi.ox.ac.uk}{shifan.yu@omi.ox.ac.uk}. }\\[0.4cm]}

\date{This Version: \today}
%\date{This Version: \today\\[0.5cm]
%\textsf{\textcolor{blue}{\textbf{For conference review only. Please do not circulate or distribute.}}}}

\maketitle

\begin{abstract}
\noindent %We study the asymptotic properties of spline-based penalized estimators for high-frequency regressions, in which both the response and covariates are Itô semimartingales with jumps. 
\noindent %Financial datasets are expanding rapidly, with an increasing number of variables observed at high frequencies. This raises challenges for estimation and variable selection in high-frequency regressions within a high-dimensional environment. 
We develop a continuous-time penalized regression framework for the estimation of time-varying coefficients and variable selection when both the response and covariates are Itô semimartingales with jumps. The coefficient paths are approximated by spline basis expansions and estimated via least squares from truncated high-frequency increments. In a finite-dimensional setting, we establish consistency and derive a feasible asymptotic distribution for the integrated coefficient estimator under infill asymptotics. We then extend the framework to high-dimensional settings in which the number of candidate covariates diverges, and show that a group-wise penalized estimator with a truncated $\ell_1$-penalty attains the oracle property, which delivers both consistent model selection and coefficient estimation. An empirical application to a large panel of more than two hundred high-frequency factors documents sparse factor structure across a large cross-section of stocks and industry portfolios.
%In an empirical application, we perform factor selection within the asset-pricing “factor zoo” of more than two hundred high-frequency factors, and document substantial sparsity in factor exposures across a large cross-section of stocks and industry portfolios.
\\
\vspace{0in}\\
\noindent\textbf{JEL Classifications:} C14, C22, C58, G12 \medskip\\
\noindent\textbf{Keywords:} high-frequency regressions, time-varying coefficients, penalized least squares, high-dimensional semimartingales, model selection, systematic risk
\end{abstract}

\clearpage

\section{Introduction}
\label{Sec:introduction}

In recent decades, the empirical asset pricing literature has discovered hundreds of risk factors intended to explain variation in expected returns, i.e., the so-called ``factor zoo'' \citep{cochrane2011presidential,harvey2016and}. The increased availability of high-frequency intraday data for both individual assets and aggregate factors (see, e.g., \citealp{aleti2023high}) has further enhanced opportunities for more granular analysis of time variation in factor loadings, or betas, by allowing much shorter estimation windows \citep{ait2020high,ait2025continuous}.\footnote{Some recent studies apply high-frequency factors to return prediction \citep{aleti2025intraday}, stochastic discount factor (SDF) estimation \citep{aleti2025news}, and volatility forecasting \citep{cinquetti2025volatility}.} Such advances have substantially increased the complexity of the model selection problem from both theoretical and practical perspectives. In this paper, we develop a rigorous methodology for model selection and factor identification in high-frequency regressions with many candidate covariates, while simultaneously retaining valid estimation and inference for the time-varying loadings of the relevant factors.

The standard approach to model selection in static regression models typically relies on penalized estimation, where a regularization term, such as the least absolute shrinkage and selection operator (LASSO; \citealp{tibshirani1996regression}), is added to the objective function. However, in regression models with time-varying coefficients, model selection is considerably less straightforward. A simple strategy is to apply penalized regression separately within each local estimation window, but this can lead to inconsistent outcomes in which the set of selected regressors changes across (possibly overlapping) estimation windows. This issue is especially acute for high-frequency regressions, where the window length usually shrinks to zero under infill asymptotics.

To address this issue, we propose a new approach to continuous-time nonparametric regression that is better suited to ``global'' model selection over a fixed time interval. Our approach approximates each coefficient path by polynomial splines, and therefore reduces the time-varying beta estimation to the estimation of scalar coefficients in their spline basis expansions. With time variation absorbed by the spline basis functions, it allows us to fit the model once over the entire observation interval, rather than repeatedly re-estimating local regressions. 
%While spline-based methods are well established in the statistics literature (see, e.g., \citealp{hastie1993varying,hoover1998nonparametric,huang2002varying,huang2004polynomial}), to the best of our knowledge this is the first application of this reparameterization approach to high-frequency regressions in financial econometrics. 
%Importantly, the 
The spline basis expansion also induces a natural group structure: each regressor corresponds to a group of spline coefficients. This enables group-wise model selection to identify the relevant regressors (or equivalently, the groups with nonzero expansion coefficients).
% with a nonconvex truncated $\ell_{1}$-penalty (TLP; \citealp{shen2012likelihood}) to identify the relevant regressors (equivalently, the nonzero groups of scalar spline coefficients) over the entire interval. 

When the number of regressors is fixed, we establish consistency of the time-varying beta estimator, and derive a feasible asymptotic distribution for the integrated beta estimator under infill asymptotics. In high-dimensional settings where the number of irrelevant regressors is allowed to diverge, we show that the penalized spline-based estimator with a truncated $\ell_{1}$-penalty (TLP; \citealp{shen2012likelihood}) attains the ``oracle property,'' i.e., it selects the correct set of relevant regressors with probability approaching one and estimates their coefficients as if the true model were known a priori \citep{fan2001variable}.

We investigate the finite-sample performance of the proposed approach through extensive Monte Carlo experiments based on simulated continuous-time factor models with time-varying betas and jumps. In low-dimensional cases, our spline-based estimator performs as well as the benchmark estimator of \citet{ait2020high}. As the cross-section of candidate factors grows, our estimator remains robust. Our penalized estimation procedure delivers accurate model selection with limited false discoveries, while preserving stable finite-sample estimation performance for the relevant factor exposures.

In the empirical analysis, we illustrate our approach by applying it to the high-frequency ``factor zoo'' of \citet{aleti2023high}. We find persistently sparse factor structures in every month across a cross-section of individual stocks and industry portfolios. The market factor emerges as the most stable driver of intraday comovement. A case study of risk decomposition further suggests that a factor set constructed from the most frequently selected factors in our sample achieves greater explanatory power for integrated variance than the standard six-factor benchmark.

Our paper contributes to three strands of the literature. First and foremost, it contributes to the financial econometrics literature on nonparametric regression with high-frequency data. Early studies in this area develop the realized beta estimator as the ratio of realized covariance to realized variance \citep{barndorff2004econometric,andersen2005framework,andersen2006realized}. \citet{mykland2009inference} propose the integrated beta estimator by aggregating local coefficient estimates. \citet{ait2020high} further develop this approach with local ordinary least squares (OLS) and establish feasible limit theorems in the presence of jumps. More recent studies incorporate local regularization to accommodate high-dimensional covariates \citep{chen2024realized,shin2025robust,kim2026high}. Our contribution to this literature is twofold. First, we propose a spline-based estimator for time-varying betas and establish the associated asymptotic theory. Second, we introduce a method for simultaneous coefficient estimation and model selection in high dimensions which, to the best of our knowledge, is the first to address this issue formally in the context of high-frequency regressions. The closest related work is \citet{chen2026high}, who develop high-dimensional coefficient tests for the null hypothesis that an additional block of regressors provides no incremental explanatory power. Other notable contributions in this field include, among others: (i) separate inference on factor structure, betas, and risk premia for continuous and jump components \citep{todorov2010jumps,bollerslev2016roughing,li2017jump,li2019jump,ait2025continuous}, and (ii) inference on time variation in betas \citep{patton2012does,reiss2015nonparametric,kong2018testing,kalnina2023inference}, and their cross-sectional distribution \citep{andersen2021recalcitrant,andersen2023intraday}.

Second, we contribute to the rapidly growing literature on high-dimensional problems and statistical learning in finance. As modern financial markets feature a vast set of assets, characteristics, and candidate factors, the penalized regression methods have been widely applied to (i) covariance/correlation matrix estimation and forecasting \citep{brownlees2018realized,christensen2023high,bollerslev2025forecasting}, (ii) return prediction with large predictor sets \citep{chinco2019sparse,gu2020empirical,ait2025predictable,aleti2025intraday}, (iii) factor/characteristic selection and SDF estimation for the cross-section of expected returns \citep{feng2020taming,freyberger2020dissecting,kozak2020shrinking,chen2025cross}, and (iv) implementable portfolio formation and asset allocation \citep{ao2019approaching,ding2024statistical}, among others.

Third, our work contributes to the statistics literature on varying-coefficient models. Varying-coefficient models extend classical linear regression by allowing coefficients to vary over time or with other variables \citep{cleveland1991local,hastie1993varying,fan1999statistical}, which provide a flexible and interpretable way to capture dynamic covariate effects, and have been widely adopted in longitudinal studies \citep{hoover1998nonparametric,huang2002varying,huang2004polynomial}. Work on model selection in varying-coefficient models has closely tracked advances in penalized regression. For example, \citet{wang2008variable} and \citet{wang2009shrinkage} adopt the smoothly clipped absolute deviation (SCAD; \citealp{fan2001variable}) and the adaptive LASSO \citep{zou2006adaptive}, respectively; \citet{wei2011variable} combine spline-based estimation with the group LASSO \citep{yuan2006model} in high-dimensional settings; and \citet{xue2012variable} employ the nonconvex TLP to avoid reliance on a consistent initial estimator, which can be problematic for the SCAD and LASSO approaches, as the irrepresentable conditions for the initial LASSO estimator are often implausible in high-dimensional settings \citep{zhao2006model,bach2008consistency}. Our contribution to this literature is to establish an infill asymptotic theory for spline-based estimation and model selection in a nonstandard high-frequency setting. Moreover, our asymptotic framework allows for substantially weaker assumptions on the time-varying coefficients. Whereas the classical varying-coefficient literature typically requires smoothness conditions on the coefficient functions, we allow the coefficients to evolve as sample paths of It{\^o} semimartingales with infinite-activity jumps, and hence to be substantially rougher.

%... we allow coefficient paths with different degrees of roughness, and derive limit theorems under a mild requirement that the best spline approximation error becomes negligible as the spline dimension increases and the sampling grid becomes finer.

The rest of this paper is organized as follows: In \cref{Sec:model}, we present the continuous-time regression model, and introduce the required notation and assumptions. In \cref{Sec:estimation}, we propose the estimation procedure and establish its theoretical properties. Monte Carlo experiments and empirical applications are reported in \cref{Sec:simulation,Sec:empirical}, respectively. \cref{Sec:conclusions} concludes. Proofs and supplementary results can be found in the \hyperref[AP:Proofs]{Appendix}.

%\newpage

\section{Continuous-Time Regression Model}
\label{Sec:model}

We follow the framework of \citet{ait2020high} to specify a continuous-time multiple regression model and state the assumptions. Consider the following nonparametric regression model,
\begin{equation}\label{Eq:regression}
Y_{t}=Y_{0}+\int_{0}^{t}\beta_{s-}^{\top}dX_{s}^{c}+\sum_{0\leq s\leq t}(\beta_{s-}^{J})^{\top}\Delta X_{s}+Z_{t},
\end{equation}
where $Y=(Y_{t})_{t\geq0}$ is the response process, $X=(X_{1,t},\dots,X_{p,t})^{\top}_{t\geq0}$ is a $p$-dimensional covariate process, and $Z=(Z_{t})_{t\geq0}$ is the residual process. We denote by $X^{c}$ the continuous component of $X$, and by $\Delta X_{t}=X_{t}-X_{t-}$ the jump of $X$ at time $t$. $\beta=(\beta_{1,t},\dots,\beta_{p,t})_{t\geq0}^{\top}$ and $\beta^{J}=(\beta_{1,t}^{J},\dots,\beta_{p,t}^{J})_{t\geq0}^{\top}$ collect the time-varying coefficients with respect to the continuous and discontinuous parts of $X$, respectively.

We assume that both $X$ and $Z$ are It{\^o} semimartingales defined on a filtered probability space $(\Omega,\mathcal{F},(\mathcal{F}_{t})_{t\geq0},\mathbb{P})$:
%\begin{equation}
%\label{Eq:X_Z}
%X_{t}=X_{0}+\int_{0}^{t}b_{s}ds+\int_{0}^{t}\sigma_{s}dW_{s}+\delta\star\mu_{t},\qquad
%Z_{t}=Z_{0}+\int_{0}^{t}\widetilde{b}_{s}ds+\int_{0}^{t}\widetilde{\sigma}_{s}d\widetilde{W}_{s}+\widetilde{\delta}\star\widetilde{\mu}_{t},
%\end{equation}
\begin{align}
X_{t}&=X_{0}+\int_{0}^{t}b_{s}ds+\int_{0}^{t}\sigma_{s}dW_{s}+\int_{0}^{t}\int_{\mathbb{R}^{p}}\delta(s,x)\mu(ds,dx),\label{Eq:X}\\
Z_{t}&=Z_{0}+\int_{0}^{t}\widetilde{b}_{s}ds+\int_{0}^{t}\widetilde{\sigma}_{s}d\widetilde{W}_{s}+\int_{0}^{t}\int_{\mathbb{R}}\widetilde{\delta}(s,x)\widetilde{\mu}(ds,dx),\label{Eq:Z}
\end{align}
where $X_{0}\in\mathbb{R}^{p}$ and $Z_{0}\in\mathbb{R}$ are $\mathcal{F}_{0}$-measurable, $W$ and $\widetilde{W}$ denote a $p'$-dimensional and a one-dimensional Brownian motion, respectively, the spot volatility $\sigma$ takes values on $\mathbb{R}^{p}\otimes\mathbb{R}^{p'}$, $\mu$ (resp.~$\widetilde{\mu}$) is a Poisson random measure with a compensator $\nu$ (resp.~$\widetilde{\nu}$) of the form $\nu(dt,dx)=dt\otimes\lambda(dx)$ for some $\sigma$-finite measure $\lambda$ on $\mathbb{R}^{p}$ (resp.~$\widetilde{\lambda}$ on $\mathbb{R}$). The spot covariance of $X$, denoted as $c=\sigma\sigma^{\top}$, follows another It{\^o} semimartingale defined on $(\Omega,\mathcal{F},(\mathcal{F}_{t})_{t\geq0},\mathbb{P})$:
%\begin{equation}
%\label{Eq:c}
%c_{t}=c_{0}+\int_{0}^{t}b'_{s}ds+\int_{0}^{t}\sigma'_{s}dW_{s}+(\delta'\mathbbm{1}_{\{\Vert\delta'\Vert\leq 1\}})\star(\mu-\nu)_{t}+(\delta'\mathbbm{1}_{\{\Vert\delta'\Vert>1\}})\star\mu_{t}, 
%\end{equation}
\begin{equation}
\label{Eq:c}
c_{t}=c_{0}+\int_{0}^{t}b'_{s}ds+\int_{0}^{t}\sigma'_{s}dW_{s}+\int_{0}^{t}\int_{\mathbb{R}^{p\times p}}\delta'(s,x)\mu(ds,dx),
\end{equation}
where $c_{0}\in\mathbb{R}^{p\times p}$ is $\mathcal{F}_{0}$-measurable, and $\sigma'$ takes values on $\mathbb{R}^{p\times p}\otimes\mathbb{R}^{p'}$. We use the same Brownian motion $W$ and Poisson random measure $\mu$ for both $X$ and $c$ without loss of generality, and this specification accommodates leverage effects as well as co-jumps in price and volatility.

We assume that the processes defined above satisfy the following regularity conditions:
\begin{assumption}
\label{As:Ito}
The following properties hold for the processes defined in \cref{Eq:regression,Eq:X,Eq:Z,Eq:c}:
\begin{itemize}
\item[(i)] $\beta$, $\beta^{J}$, $\sigma$, $\widetilde{\sigma}$, $\sigma'$ are adapted c{\`a}dl{\`a}g and locally bounded;
\item[(ii)] $b$, $\widetilde{b}$ and $b'$ are progressively measurable and locally bounded;
\item[(iii)] $c$ is positive definite, and both $c$ and $c^{-1}$ are locally bounded;
\item[(iv)] $\delta$, $\widetilde{\delta}$ and $\delta'$ are predictable functions;
%\item[(v)] There is a sequence $(\tau_{m})$ of stopping times increasing to $\infty$, and some sequences of deterministic nonnegative $\lambda$-integrable functions $f_{m}$ and $f'_{m}$ on $\mathbb{R}$ for each $m$, such that $\Vert\delta(\omega,t,x)\Vert^{r}\wedge1\leq f_{m}(x)$, for some $r\in[0,1)$, and $\Vert\delta'(\omega,t,x)\Vert^{2}\wedge1\leq f'_{m}(x)$ for all $(\omega,t,x)$ with $t\leq\tau_{m}(\omega)$;
\item[(v)] There exists a sequence $(\tau_{m})_{m\geq1}$ of stopping times increasing to $\infty$, and sequences $(f_{m})_{m\geq1}$ and $(f'_{m})_{m\geq1}$ of deterministic nonnegative $\lambda$-integrable functions, such that for some $r\in[0,1)$, $\Vert\delta(\omega,t,x)\Vert^{r}\wedge1\leq f_{m}(x)$ and $\Vert\delta'(\omega,t,x)\Vert^{r}\wedge1\leq f'_{m}(x)$ for all $(\omega,t,x)$ with $t\leq\tau_{m}(\omega)$;
\item[(vi)] There exists a sequence $(\widetilde{\tau}_{m})_{m\geq1}$ of stopping times increasing to $\infty$, and a sequence $(\widetilde{f}_{m})_{m\geq1}$ of deterministic nonnegative $\widetilde{\lambda}$-integrable functions, such that $\vert\widetilde{\delta}(\omega,t,x)\vert^{r}\wedge1\leq\widetilde{f}_{m}$ for all $(\omega,t,x)$ with $t\leq\widetilde{\tau}_{m}(\omega)$, where $r\in[0,1)$ is, without loss of generality, the same as in (v).
\end{itemize}
\end{assumption}

The above conditions are standard in the high-frequency literature. Conditions (i) and (ii) ensure the stochastic integrals are well defined and permit the standard localization arguments. Condition (iii) implies that $c$ is uniformly nondegenerate and locally bounded on compact intervals, which guarantees invertibility and rules out pathological cases of vanishing or explosive volatility. Conditions (v) and (vi) allow for both finite- and infinite-activity jumps, while restricting them to be of finite variation.
%The use of the same Brownian motion $W$ and Poisson random measure $\mu$ for both $X$ and $c$ is without loss of generality \citep{ait2020high}, and this specification accommodates leverage effects as well as co-jumps in price and volatility. The parameter $r$ sets a bound on the degree of jump activity. With some $r\in[0,1)$, we allow for jumps of either finite or infinite activity, but restrict them to be of finite variation, i.e., they are absolutely summable. The assumed nonsingular $c$ ensures that $\sigma$ is also an It{\^o} semimartingale; see Assumption (K'-$r$) in \citet{jacod2012discretization}. %(Assumption 4.4.4)

%The use of the same Brownian motion $W$ and Poisson random measure $\mu$ for both $X$ and $c$ is without loss of generality \citep{ait2020high}. The parameter $r$ sets a bound on the degree of jump activity. For both $X$ and $Z$, with some $r\in[0,1)$, we allow for jumps of either finite or infinite activity, but restrict them to be of finite variation, i.e., they are absolutely summable. We allow the jumps in $c$ to be of infinite variation, and therefore retain the full Grigelionis decomposition in \cref{Eq:c}; see \citet{ait2014high} for details. The assumed nonsingular $c$ ensures that $\sigma$ is also an It{\^o} semimartingale; see Assumption (K'-$r$) in \citet{jacod2012discretization}. %(Assumption 4.4.4)

To account for the time-varying coefficient, we further assume that $\beta$ follows an It{\^o} semimartingale with finite-variation jumps:
\begin{equation}
\label{Eq:beta}
\beta_{t}=\beta_{0}+\int_{0}^{t}b_{s}^{\beta}ds+\int_{0}^{t}\sigma_{s}^{\beta}dW_{s}^{\beta}+\int_{0}^{t}\int_{\mathbb{R}^{p}}\delta^{\beta}(s,x)\mu^{\beta}(ds,dx).
\end{equation}
Similar to \cref{As:Ito}, we impose the following regularity conditions:
\begin{assumption}
\label{As:Ito_beta}
$b^{\beta}$ is progressively measurable and locally bounded, $\sigma^{\beta}$ is adapted c{\`a}dl{\`a}g and locally bounded, and $\delta^{\beta}$ is a predictable function. There exists a sequence $(\tau_{m})_{m\geq1}$ of stopping times increasing to $\infty$, and a sequence $(f_{m}^{\beta})_{m\geq1}$ of deterministic nonnegative $\lambda^{\beta}$-integrable functions, such that for some $r\in[0,1)$, $\Vert\delta^{\beta}(\omega,t,x)\Vert^{r}\wedge1\leq f_{m}^{\beta}(x)$ for all $(\omega,t,x)$ with $t\leq\tau_{m}(\omega)$.
\end{assumption}

Analogous to the classical regression framework, we require an exogeneity assumption over the fixed time interval $[0,T]$. This assumption is specified separately for the continuous and jump components: the former follows \citet{barndorff2004econometric} and \citet{mykland2006anova,mykland2009inference}, whereas the latter is in line with \citet{li2017jump}.

%We impose this assumption separately on the continuous and jump components, which parallels \citet{barndorff2004econometric} and \citet{mykland2006anova,mykland2009inference} for the continuous part, and \citet{li2017jump} on jumps:
\begin{assumption}
\label{As:orthogonality}
For any $t\in[0,T]$, the following orthogonality conditions hold element-wise:
\begin{equation}
\langle X^{c}, Z^{c}\rangle_{t}=0,\qquad
\sum_{0\leq s\leq t}\Delta X_{s}\Delta Z_{s}=0,
\end{equation}
%\begin{equation}
%\langle X_{t}^{c}, Z_{t}^{c}\rangle=\bm{0},\qquad
%\langle \sum_{0\leq s\leq t}\Delta X_{s}, \sum_{0\leq s\leq t}\Delta Z_{s}\rangle=\bm{0},
%\end{equation}
where $\langle\cdot,\cdot\rangle_{t}$ denotes the quadratic covariation over $[0,t]$, and 0 is a $p$-dimensional zero vector.
\end{assumption}

Both $Y$ and $X$ are observed at discrete times $i\Delta_{n}$ for $0\leq i\leq n=\lfloor T/\Delta_{n}\rfloor$. The increments of a generic $p$-dimensional process $A$ are denoted by
\begin{equation}
\Delta_{i}^{n}A=(\Delta_{1,i}^{n}A,\dots,\Delta_{p,i}^{n}A)^{\top}=A_{i\Delta_{n}}-A_{(i-1)\Delta_{n}}.
\end{equation}
For the asymptotic theory below we work in an infill framework, i.e., $n\to\infty$ as $\Delta_{n}\to0$, while $T$ is fixed. Based on the discrete observations of both $Y$ and $X$, our goal is to estimate the time-varying coefficients $\beta$ for the continuous component of $X$. We define the integrated beta ($I\beta$) over $[0,T]$ as
\begin{equation}
I\beta_{T} = \int_{0}^{T} \beta_{s}ds.
\end{equation}

When the dimension $p$ is fixed, $I\beta_{T}$ can be estimated by aggregating local estimates of $\beta$ over $[0,T]$ \citep{mykland2009inference,ait2020high}. In the fixed-$p$ setting, the local OLS estimator of \citet{ait2020high} is semiparametrically efficient \citep{jacod2013quarticity,renault2017efficient, li2021efficient}. However, our main focus is the case where $p \to \infty$. In particular, we assume that the dimension of $X$ increases but the corresponding $\beta$ remains sparse, i.e., a large number of covariates are redundant. In such a high-dimensional setting, the local OLS estimation becomes ill-posed: the local design matrix is nearly singular or not of full column rank, so the local OLS estimator is either not uniquely defined or numerically unstable. For this reason, we seek an alternative estimation procedure whose performance is comparable to that of \citet{ait2020high} when $p$ is fixed, but which also remains valid as $p$ increases and enables automatic model selection.

One could accommodate the large $p$-case by introducing regularization into each local window. For example, \citet{chen2024realized} employ the thresholding technique of \citet{bickel2008covariance} to estimate large spot covariance matrices and to stabilize local regressions in the presence of many regressors; see also, e.g., \citet{shin2025robust} and \citet{kim2026high}. Similar ideas could be used to perform model selection in a purely local sense, but such an approach treats model selection as intrinsically timewise: the active set of regressors is allowed to fluctuate across windows, which obscures the economic interpretation of the resulting factor structure.

%\citet{ait2020high} run local OLS regressions over moving windows for the continuous components of \cref{Eq:regression}, and then aggregate the spot coefficient estimates to obtain estimates for the integrated beta. More precisely, on a shrinking time window around a time point $t$, the coefficient process $\beta_{t}$ can be treated as approximately constant, such that the continuous-time regression locally behaves like a standard linear regression of the high-frequency increments of $Y^{c}$ on those of $X^{c}$. Within the general inference framework of \citet{jacod2013quarticity}, the estimator of \citet{ait2020high} is a local maximum likelihood estimator which is semiparametrically efficient \citep{renault2017efficient,li2021efficient}. 

%With a finite dimension $p$ of $X$, the estimation method of \citet{ait2020high} enjoys strong theoretical support and demonstrates reliable finite-sample performance. In recent development with a rich set of observable high-frequency factors, however, it is common that the number of regressors $p$ is of the same order as, or even exceeds, the number of observations in each local window. In such regimes the local OLS estimation becomes ill-posed: the local design matrix is nearly singular or not of full column rank, such that the local OLS estimator is either not uniquely defined or numerically unstable. Moreover, this procedure does not perform any model selection and cannot distinguish relevant factors from spurious ones when $p$ is large.

These considerations motivate our work. The question we consider in this paper is whether one can simultaneously (i) retain the local estimation of $\beta$ (and the estimation of $I\beta$), and (ii) perform global model selection over the entire fixed time interval $[0,T]$.

\section{Estimation Method and Theoretical Results}
\label{Sec:estimation}

In this section, we present the proposed estimation method and develop its theoretical properties. \cref{Sec:estimator} outlines the proposed estimator, with technical details deferred to the subsequent subsections. \cref{Sec:spline} derives the asymptotic rate of the spline approximation error. \cref{Sec:limitTheory} establishes consistency and derives a feasible asymptotic distribution for our estimators. \cref{Sec:selection} extends the framework to high-dimensional settings with group-wise regularization and establishes model selection consistency. \cref{Sec:implementation} discusses the nonconvex optimization algorithm and the practical choice of tuning parameters.

\subsection{Estimation Method}
\label{Sec:estimator}
Our estimation method is based on approximating the coefficient process $\beta=(\beta_{1},\dots,\beta_{p})^{\top}$ on $[0,T]$ by a linear combination of $K_n$ deterministic polynomial spline basis functions with random weights, where $K_n \to \infty$ as $\Delta_n \to 0$:
%Here we outline the proposed estimator, with technical details deferred to the next section. Suppose that the coefficient process $\beta=(\beta_{1},\dots,\beta_{p})^{\top}$ over $[0,T]$ can be approximated by a linear combination of $K_n$ deterministic polynomial spline functions, where $K_n\to\infty$ as $\Delta_n\to 0$:
\begin{equation}\label{Eq:spline_approx}
%\beta_{t} \approx \sum_{k=1}^{K_{n}}(\gamma^{(k)})^{\top}B_{t}^{(k)},
\beta_{t} \approx \sum_{k=1}^{K_{n}}B_{t}^{(k)}\gamma^{(k)},
\end{equation}
where $(B_{t}^{(1)},\dots, B_{t}^{(K_{n})})$ collects the spline basis functions of $t$, and $\gamma^{(k)}=(\gamma_{1}^{(k)},\dots,\gamma_{p}^{(k)})^{\top}$ are scalar coefficients in the expansion, which are $\mathcal{F}_{T}$-measurable random variables. Pathwise, for each fixed $\omega\in\Omega$, the expansion coefficients are uniquely determined by the realized path $t\mapsto\beta_t(\omega)$ on $[0,T]$. Specifically, we employ a B-spline basis (see %, e.g., Chapter 5 in \citet{hastie2009elements} and 
Appendix \ref{AP:Bspline} for details). This leads to the following approximation for the regression model in \cref{Eq:regression}:
\begin{equation}\label{Eq:approx_reg}
Y_{t} \approx Y_{0} + \sum^{K_n}_{k=1} \int_{0}^{t} B_s^{(k)}\gamma^{(k)}dX_{s}^{c} + \sum_{0\leq s\leq t}(\beta_{s-}^{J})^{\top}\Delta X_{s} + Z_{t}.
\end{equation}
We collect the spline coefficients in the vector $\bm{\gamma}=(\gamma_{1}^{(1)},\ldots,\gamma_{1}^{(K_n)},\ldots,\gamma_{p}^{(1)},\ldots,\gamma_{p}^{(K_n)})^{\top}\in\mathbb{R}^{pK_n}$. In the absence of jumps on $[0,T]$, we can discretize \cref{Eq:approx_reg} in terms of high-frequency increments as
\begin{equation}
\label{Eq:regression_continuous}
\Delta_{i}^{n}Y \approx \sum_{k=1}^{K_{n}}(\gamma^{(k)})^{\top}B_{(i-1)\Delta_{n}}^{(k)}\Delta_{i}^{n}X + \Delta_{i}^{n}Z, %= \sum_{k=1}^{K_{n}} (\gamma^{(k)})^{\top}B_{(i-1)\Delta_{n}}^{(k)}\begin{pmatrix}
%\Delta_{1,i}^{n}X\\
%%\Delta_{2,i}^{n}X\\
%\vdots\\
%\Delta_{p,i}^{n}X\\
%\end{pmatrix} + \Delta_{i}^{n}Z,
\end{equation}
%\begin{equation}
%\label{Eq:regression_continuous}
%\Delta_{i}^{n}Y \approx\sum_{k=1}^{K_{n}}\sum_{j=1}^{p}\gamma_{j}^{(k)}B_{j,(i-1)\Delta_{n}}^{(k)}\Delta_{j,i}^{n}X + \Delta_{i}^{n}Z,
%\end{equation}
which is a static regression model with dependent variable $\Delta_{i}^{n}Y $ and regressors $B_{(i-1)\Delta_{n}}^{(k)}\Delta_{j,i}^{n}X$ for $j=1,\dots,p$ and $k=1,\dots,K_{n}$. When $p$ is fixed, the spline coefficients in $\bm{\gamma}$ can be estimated by OLS.

In practice, jumps in $X$ or $Z$ may occur on $[0,T]$. To mitigate their impact, we employ the truncation technique of \citet{mancini2009non} for both $\Delta^n_i Y$ and $\Delta_i^n X$. We then define $\widehat{\bm{\gamma}}$ as the estimator of $\bm{\gamma}$ obtained by minimizing the following truncated least squares criterion:
\begin{equation}\label{Eq:objective}
Q_{n}(\bm{\gamma}) = \sum_{i=1}^{n}\left(\Delta_{i}^{n}Y - \sum_{k=1}^{K_{n}}(\gamma^{(k)})^{\top}B_{(i-1)\Delta_{n}}^{(k)}\Delta_{i}^{n}X\right)^{2}\mathbbm{1}_{\{|\Delta_{i}^{n}Y|\leq u_{n},\,\Vert\Delta_{i}^{n}X\Vert\leq u_{n}\}},
\end{equation}
where $u_{n}>0$ is a truncation threshold satisfying $u_{n}\asymp\Delta_{n}^{\varpi}$ for some $0<\varpi<1/2$. The same truncation threshold $u_{n}>0$ is used for both $\Delta_{i}^{n}Y$ and each component of $\Delta_{i}^{n}X$ for ease of notation, while these thresholds can be different.\footnote{We use the norm truncation $\Vert\Delta_{i}^{n}X\Vert\leq u_{n}$ for notational convenience, as is standard in high-frequency regression literature \citep{reiss2015nonparametric,ait2020high}. When $p$ is fixed, it is asymptotically equivalent to component-wise truncation. In the high-dimensional analysis in \cref{Sec:selection}, we instead adopt the component-wise notation $\vert\Delta_{j,i}^{n}X\vert\leq u_{n}$. }

Given $\widehat{\bm{\gamma}}$, the corresponding estimator of $\beta_{t}=(\beta_{1,t},\dots,\beta_{p,t})^{\top}$ takes the form:
\begin{equation}
\widehat{\beta}_{t} =  \sum_{k=1}^{K_{n}}B_{t}^{(k)}\widehat{\gamma}^{(k)},\qquad\text{for any }0\leq t\leq T.
\end{equation}
The resulting estimator of $I\beta$ is defined as the Riemann sum of $\widehat{\beta}$ on $i\Delta_{n}$ \citep{mykland2009inference}:
\begin{equation}
\widehat{I\beta}_{T}=\sum_{i=1}^{n}\widehat{\beta}_{(i-1)\Delta_{n}}\Delta_{n}. 
\end{equation}
%and therefore we can construct an estimator for $I\beta$ as
%\begin{equation}
%\widehat{I\beta}_{T}=\sum_{i=1}^{n}\int_{(i-1)\Delta_{n}}^{i\Delta_{n}}\widehat{\beta}_{s}ds. 
%\end{equation}

Our proposed estimators $\widehat{\bm{\gamma}}$ and $\widehat{\beta}$ admit closed-form representations in matrix notation. We collect truncated returns of $Y$ and $X$ into the vector $\mathbf{Y}$ and matrix $\mathbf{X}$, respectively:
\begin{equation}
\mathbf{Y}=\begin{pmatrix}
\Delta_{1}^{n}Y\mathbbm{1}_{\{|\Delta_{1}^{n}Y|\leq u_{n}\}}\\
\Delta_{2}^{n}Y\mathbbm{1}_{\{|\Delta_{2}^{n}Y|\leq u_{n}\}}\\
\vdots\\
\Delta_{n}^{n}Y\mathbbm{1}_{\{|\Delta_{n}^{n}Y|\leq u_{n}\}}
\end{pmatrix}\in\mathbb{R}^{n},\qquad
\mathbf{X}=\begin{pmatrix}
\mathbf{X}_{1}^{\top}\\
\mathbf{X}_{2}^{\top}\\
\vdots\\
\mathbf{X}_{n}^{\top}
\end{pmatrix}=
\begin{pmatrix}
(\Delta_{1}^{n}X)^{\top}\mathbbm{1}_{\{\Vert\Delta_{1}^{n}X\Vert\leq u_{n}\}}\\
(\Delta_{2}^{n}X)^{\top}\mathbbm{1}_{\{\Vert\Delta_{2}^{n}X\Vert\leq u_{n}\}}\\
\vdots\\
(\Delta_{n}^{n}X)^{\top}\mathbbm{1}_{\{\Vert\Delta_{n}^{n}X\Vert\leq u_{n}\}}
\end{pmatrix}\in\mathbb{R}^{n\times p}.
\end{equation}
Define a block-diagonal matrix that collects all B-spline basis functions at time $t$ as
\begin{equation}
\mathbf{B}_{t}=\begin{pmatrix}
B_{t}^{(1)} & \dots & B_{t}^{(K_{n})} & 0 & \dots & 0 & \dots & 0 & \dots & 0\\
0 & \dots & 0 & B_{t}^{(1)} & \dots & B_{t}^{(K_{n})} & \dots & 0 & \dots & 0\\
\vdots & \ddots & \vdots & \vdots & \ddots & \vdots & \ddots & \vdots & \ddots & \vdots\\
0 & \dots & 0 & 0 & \dots & 0 & \dots & B_{t}^{(1)} & \dots & B_{t}^{(K_{n})}
\end{pmatrix}\in\mathbb{R}^{p\times pK_{n}},
\end{equation}
and the design matrix $\mathbf{R}=(\mathbf{R}_{1},\dots,\mathbf{R}_{n})^{\top}\in\mathbb{R}^{n\times pK_{n}}$ with
\begin{equation}
\mathbf{R}_{i}=\mathbf{B}_{(i-1)\Delta_{n}}^{\top}\mathbf{X}_{i}=\begin{pmatrix}
B_{(i-1)\Delta_{n}}^{(1)}\Delta_{1,i}^{n}X\mathbbm{1}_{\{\Vert\Delta_{i}^{n}X\Vert\leq u_{n}\}}\\
\vdots\\
B_{(i-1)\Delta_{n}}^{(K_{n})}\Delta_{1,i}^{n}X\mathbbm{1}_{\{\Vert\Delta_{i}^{n}X\Vert\leq u_{n}\}}\\
\vdots\\
\vdots\\
B_{(i-1)\Delta_{n}}^{(1)}\Delta_{p,i}^{n}X\mathbbm{1}_{\{\Vert\Delta_{i}^{n}X\Vert\leq u_{n}\}}\\
\vdots\\
B_{(i-1)\Delta_{n}}^{(K_{n})}\Delta_{p,i}^{n}X\mathbbm{1}_{\{\Vert\Delta_{i}^{n}X\Vert\leq u_{n}\}}\\
\end{pmatrix}\in\mathbb{R}^{pK_{n}}.
\end{equation}
Then, our estimator $\widehat{\bm{\gamma}}$ takes the following form:
\begin{equation}
\label{Eq:estimator_GammaVec}
\widehat{\bm{\gamma}}=(\mathbf{R}^{\top}\mathbf{R})^{-1}\mathbf{R}^{\top}\mathbf{Y},
\end{equation}
and the spline-based estimator of $\beta_{t}=(\beta_{1,t},\dots,\beta_{p,t})^{\top}$ can be expressed as 
\begin{equation}
\label{Eq:estimator_beta}
\widehat{\beta}_{t}=\mathbf{B}_{t}\widehat{\bm{\gamma}}=\mathbf{B}_{t}(\mathbf{R}^{\top}\mathbf{R})^{-1}\mathbf{R}^{\top}\mathbf{Y}.
\end{equation}

\subsection{Spline Approximations}
\label{Sec:spline}

We begin by defining a spline approximation to the time-varying coefficient process $\beta=(\beta_{1},\dots,\beta_{p})^{\top}$ on $[0,T]$, where each component is approximated by a polynomial spline of degree $d$. A polynomial spline of degree $d$ on a knot sequence $0=v_{0}<v_{1}<\dots<v_{N_{n}+1}=T$ is a function that coincides with a polynomial of degree $d$ on each subinterval $[v_{i-1},v_{i})$ for $1\le i\le N_{n}$ and on $[v_{N_{n}},v_{N_{n}+1}]$, and that has $d-1$ continuous derivatives when $d\ge 1$. In particular, we consider a quasi-uniform sequence of partitions satisfying
\begin{equation}
\max_{1\leq i,j\leq N_{n}+1}\frac{v_{i}-v_{i-1}}{v_{j}-v_{j-1}}\leq C<\infty.
\end{equation}
The collection of such spline functions, for a given degree and knot sequence, forms a scalar linear space $\mathcal{G}_{n}$ with a B-spline basis $\{B^{(k)}_{\cdot}\}_{k=1}^{K_{n}}$ of dimension $K_{n}=N_{n}+d+1$; see \citet{de1978practical} and \citet{schumaker2007spline} for further details. We then define the vector-valued spline space
\begin{equation}
\mathbb{G}_{n}=\mathcal{G}_{n}^{p}=\bigl\{g:[0,T]\to\mathbb{R}^{p}: g_{t}=\mathbf{B}_{t}\bm{\gamma}\ \text{for some }\bm{\gamma}\in\mathbb{R}^{pK_n}\bigr\}.
\end{equation}

%Let $\mathbb{G}_{n}$ denote the linear space of dimension $pK_{n}$ and let the collection of polynomials $\{B_{t}^{(k)}, k=1,\dots,K_n\} $ with $B_{t}^{(k)}=(B_{1,t}^{(k)},\dots, B_{p,t}^{(k)})^{\top}$ be the B-spline basis of $\mathbb{G}_{n}$.

Rather than adopting an arbitrary spline approximation of $\beta$, we work with a uniquely defined approximation under a Hilbert-space metric that is intrinsic to the continuous-time regression model. We equip the space of $\mathbb{R}^p$-valued measurable functions on $[0,T]$ with the weighted inner product
\begin{equation}
\langle f,g \rangle_{c} = \int_{0}^{T}f_{s}^{\top}c_{s}g_{s}ds,
\end{equation}
where the spot covariance process $c$ is defined in \cref{Eq:c}, and write $\Vert f\Vert_{L^{2}(c)}=\langle f,f \rangle_{c}^{1/2}$. We denote by $L^{2}(c)=\{f:\Vert f\Vert_{L^{2}(c)}<\infty\}$ the corresponding Hilbert space. For each $\omega\in\Omega$, let $\Pi_{n}\beta(\omega)$ denote the $L^{2}(c)$-orthogonal projection of the path $t\mapsto \beta_t(\omega)$ onto the spline space $\mathbb{G}_{n}$, i.e., the unique $g\in\mathbb{G}_{n}$ that minimizes
\begin{equation}
\Vert\beta(\omega)-g\Vert_{L^{2}(c)}^{2} = \int_{0}^{T}(\beta_{s}(\omega)-g_{s})^{\top}c_{s}(\omega)(\beta_{s}(\omega)-g_{s})ds .
\end{equation}
Since, for each $n$, $\mathbb{G}_n$ is a finite-dimensional subspace of $L^2(c)$, the minimizer exists and is unique. In other words, there exists a unique coefficient vector $\bm{\gamma}(\omega)\in\mathbb{R}^{pK_n}$ such that $\Pi_{n}\beta_{t}(\omega) = \mathbf{B}_{t}\bm{\gamma}(\omega)$ for all $0\leq t\leq T$. Consequently, any coefficient path admits the decomposition $\beta_{t}(\omega)=\mathbf{B}_{t}\bm{\gamma}(\omega)+e_{t}(\omega)$, where the spline approximation error $e$ is defined pathwise, and therefore $\widehat{\beta}_{t}$ in \cref{Eq:estimator_beta} estimates the projected coefficient path $\Pi_{n}\beta_{t}(\omega)$ with the standard OLS estimator $\widehat{\bm{\gamma}}$.

When $\beta$ follows a discontinuous It{\^o} semimartingale in \cref{Eq:beta}, the spline approximation error satisfies the following asymptotic order:
\begin{lemma}
\label{lemma:approximationError}
Suppose Assumptions \ref{As:Ito} and \ref{As:Ito_beta} hold. Then, as $\Delta_{n}\to0$,
\begin{equation}
\Vert e\Vert_{L^{2}}=\left(\int_{0}^{T}\Vert e_{s}\Vert^{2}ds\right)^{1/2}=O_{p}\left(\sqrt{\frac{\log K_{n}}{K_{n}}}\right), %O_{p}(K_{n}^{-1/2}\sqrt{\log K_{n}}).
\end{equation}
where $\Vert \cdot\Vert$ inside the integral is the Euclidean norm on $\mathbb{R}^{p}$.
\end{lemma}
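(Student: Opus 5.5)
The plan is to compare the $L^{2}(c)$-projection with an explicit spline built from the path of $\beta$, and then bound the error of that explicit spline by the local oscillation of $\beta$.

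\textbf{Step 1: reduction to an explicit spline.} By localization (Assumptions \ref{As:Ito}(iii) and \ref{As:Ito_beta}), we may assume that $c$, $c^{-1}$, $b^{\beta}$ and $\sigma^{\beta}$ are bounded, and that the jumps of $\beta$ are dominated by a deterministic integrable function. Then $\Vert\cdot\Vert_{L^{2}}$ and $\Vert\cdot\Vert_{L^{2}(c)}$ are equivalent with constants $\lambda_{\min}(c)^{-1/2}$ and $\lambda_{\max}(c)^{1/2}$, uniformly in $n$. Since $\Pi_n\beta$ is the $L^2(c)$-best approximant in $\mathbb{G}_n$, for any $g\in\mathbb{G}_n$,
\[
\Vert e\Vert_{L^2}\le C\Vert\beta-\Pi_n\beta\Vert_{L^2(c)}\le C\Vert\beta-g\Vert_{L^2(c)}\le C'\Vert\beta-g\Vert_{L^2}.
\]
It therefore suffices to exhibit one spline $g$ with $\Vert\beta-g\Vert_{L^2}^2=O_p(\log K_n/K_n)$. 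Because $p$ is fixed, we can work componentwise.

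\textbf{Step 2: choice of $g$ and local bound.} Take the de Boor--Fix / quasi-interpolant $g=Q_n\beta$ (for instance with point-evaluation functionals, or simply the piecewise-constant spline $\beta_{v_{i-1}}$ if $d=0$). Standard spline theory \citep{de1978practical,schumaker2007spline} gives, on each knot interval $I_i$ with extended neighborhood $\tilde I_i$ consisting of $O(d)$ adjacent intervals,
\[
\sup_{t\in I_i}|\beta_t-Q_n\beta_t|\le C\,\mathrm{osc}_{\tilde I_i}(\beta),
\]
because $Q_n$ reproduces constants and is locally bounded. Hence $\Vert\beta-g\Vert_{L^2}^2\le C h\sum_i \mathrm{osc}_{\tilde I_i}(\beta)^2$, where $h\asymp T/K_n$ by quasi-uniformity.

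\textbf{Step 3: bounding oscillations.} Decompose $\beta=\beta^{c}+\beta^{d}$, where $\beta^{c}$ is the drift plus Brownian part and $\beta^{d}$ is the finite-variation jump part.

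For the continuous part, the modulus of continuity of the Brownian integral gives $\sup_i \mathrm{osc}_{\tilde I_i}(\beta^{c})=O_p(\sqrt{h\log(1/h)})$ (Lévy modulus, or the Dambis--Dubins--Schwarz representation with bounded $\sigma^\beta$). So this part contributes $h\cdot K_n\cdot h\log(1/h)\asymp \log K_n/K_n$. This is where the logarithm arises, and it is the rate claimed.

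For the jump part, $\mathrm{osc}_{\tilde I_i}(\beta^{d})\le V_i:=\sum_{s\in\tilde I_i}|\Delta\beta_s|$. Under Assumption \ref{As:Ito_beta} with $r<1$, the total variation $\sum_i V_i\le C\sum_{s\le T}|\Delta\beta_s|=O_p(1)$. Moreover, $\max_i V_i$ is bounded by the total variation, which is $O_p(1)$. Therefore
\[
h\sum_i V_i^2\le h\,\max_i V_i\sum_i V_i=O_p(h)=O_p(1/K_n).
\]
The cross terms are handled by Cauchy--Schwarz. Summing the two contributions gives the claim.

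\textbf{Main obstacle.} The main obstacle I expect is making Step 2 rigorous when $\beta$ is only càdlàg, because a quasi-interpolant built from point values must be well defined and locally stable. I would first try the $L^2$ local-projection quasi-interpolant, which is defined for any $L^2$ function, and cite its local $L^\infty$ stability. I would also check that the Lévy-modulus bound holds uniformly over the random $\omega$ after localization, using a Garsia--Rodemich--Rumsey or Kolmogorov argument applied to the time-changed Brownian motion.
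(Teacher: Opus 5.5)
Your proposal is correct and follows essentially the same route as the paper: reduction via $L^{2}$/$L^{2}(c)$ norm equivalence to a single explicit spline, a constant-reproducing quasi-interpolant whose error is controlled by the local oscillation of $\beta$, then L{\'e}vy's modulus (via Dambis--Dubins--Schwarz) for the continuous part to get the $\log K_{n}/K_{n}$ rate, and a total-variation bound giving $O_{p}(1/K_{n})$ for the finite-variation jumps. The obstacle you anticipate does not actually arise. The paper takes the point-evaluation operator $\overline{\beta}_{t}=\sum_{k}\beta_{\tau^{(k)}}B_{t}^{(k)}$ with $\tau^{(k)}\in\mathrm{supp}(B^{(k)})$. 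This is well defined for c{\`a}dl{\`a}g paths, and it satisfies your oscillation bound with constant one, by nonnegativity and partition of unity of the B-splines.
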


\subsection{Estimation Consistency and Central Limit Theorem}
\label{Sec:limitTheory}

%\textcolor{purple}{($\xrightarrow{\mathbb{P}(\cdot|\beta)}$: conditional on the realized path?)}

We establish consistency of our spline-based estimator and derive the corresponding limit theorem for $\widehat{I\beta}$ when $p$ is fixed. Following the literature on varying-coefficient models (see, e.g., \citealp{hoover1998nonparametric,huang2002varying,huang2004polynomial}), we say that $\widehat{\beta}$ is a consistent estimator of $\beta$ if $\Vert\widehat{\beta}-\beta\Vert_{L^{2}}\overset{\mathbb{P}}{\longrightarrow}0$. We denote by $\asymp$ the same order of magnitude, i.e., $f\asymp g$ indicates that $K|g|\leq|f|\leq K'|g|$ for some constants $K,K'>0$.

\begin{theorem}%[Consistency]
\label{Th:consistency}
Suppose Assumptions \ref{As:Ito}--\ref{As:orthogonality} hold for the regression model in \cref{Eq:regression}, and $\Delta_{n}K_{n}\log K_{n}\to0$. Take the truncation threshold $u_{n}\asymp\Delta_{n}^{\varpi}$ for some $0<\varpi<1/2$. Then, as $\Delta_{n}\to0$, $\widehat{\beta}$ is uniquely defined with probability approaching one, and $\widehat{\beta}$ is consistent. 
\end{theorem}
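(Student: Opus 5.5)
The proof splits $\widehat{\beta}-\beta$ into the spline error $e$ and a spline-space estimation error. By the decomposition $\beta_t=\mathbf{B}_t\bm{\gamma}+e_t$,
\begin{equation*}
\Vert\widehat{\beta}-\beta\Vert_{L^{2}}\leq\Vert\mathbf{B}_{\cdot}(\widehat{\bm{\gamma}}-\bm{\gamma})\Vert_{L^{2}}+\Vert e\Vert_{L^{2}}.
\end{equation*}
\cref{lemma:approximationError} gives $\Vert e\Vert_{L^{2}}=O_p(\sqrt{\log K_n/K_n})=o_p(1)$. It then suffices to show that $\mathbf{R}^{\top}\mathbf{R}$ is invertible with probability approaching one and that the first term vanishes in probability. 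By localization (\cref{As:Ito,As:Ito_beta}) I will assume throughout that all coefficients, $c$ and $c^{-1}$ are bounded, and that the jump measures are dominated by fixed integrable functions.

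\textbf{Step 1 (design matrix).} I will show that the normalized Gram matrix $\mathbf{R}^{\top}\mathbf{R}$ is comparable to the matrix $\mathbf{\Sigma}_n=\int_0^T\mathbf{B}_s^{\top}c_s\mathbf{B}_s\,ds$. Since the B-spline basis on a quasi-uniform knot sequence is well conditioned (de Boor), and $c$ has eigenvalues bounded above and below, $\mathbf{\Sigma}_n$ has eigenvalues of exact order $K_n^{-1}$. The entries of $\mathbf{R}^{\top}\mathbf{R}$ are
\begin{equation*}
\sum_i B^{(k)}_{(i-1)\Delta_n}B^{(l)}_{(i-1)\Delta_n}\Delta^n_{j,i}X\,\Delta^n_{j',i}X\,\mathbbm{1}_{\{\cdot\}},
\end{equation*}
and each is nonzero only for $|k-l|\leq d$. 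Replacing truncated increments by continuous increments costs $O_p(\Delta_n^{(2-r)\varpi})$-type terms, by the standard truncation bounds. The remaining martingale part of each entry, with support of length about $1/K_n$, has standard deviation about $\sqrt{\Delta_n/K_n}$. This gives
\begin{equation*}
\Vert\mathbf{R}^{\top}\mathbf{R}-\mathbf{\Sigma}_n\Vert_{op}=O_p\bigl(\sqrt{\Delta_n\log K_n/K_n}\bigr)
\end{equation*}
plus a negligible truncation error. The $\log K_n$ comes from a maximal inequality over the $O(K_n)$ local blocks, using the banded structure to bound the operator norm by the maximal row sum. Since $\sqrt{\Delta_n\log K_n/K_n}=o(K_n^{-1})$ exactly when $\Delta_nK_n\log K_n\to0$, the smallest eigenvalue of $\mathbf{R}^{\top}\mathbf{R}$ is at least $cK_n^{-1}$ with probability approaching one. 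This gives uniqueness of $\widehat{\beta}$.

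\textbf{Step 2 (score).} I will write $\widehat{\bm{\gamma}}-\bm{\gamma}=(\mathbf{R}^{\top}\mathbf{R})^{-1}\mathbf{R}^{\top}(\mathbf{Y}-\mathbf{R}\bm{\gamma})$ and decompose the residual vector into four parts:
\begin{enumerate}[label=(\alph*)]
\item the truncated continuous increments $\Delta_i^nZ^c$;
\item the approximation residual $\int_{(i-1)\Delta_n}^{i\Delta_n}\bigl(\beta_{s}-\mathbf{B}_{(i-1)\Delta_n}\bm{\gamma}\bigr)^{\top}dX^c_s$, which splits into $e$ and the within-interval variation of $\mathbf{B}\bm{\gamma}$;
\item jump and truncation remainders;
\item drift terms.
\end{enumerate}
Since $\Vert\mathbf{B}_\cdot v\Vert_{L^2}^2\asymp K_n^{-1}\Vert v\Vert^2$, it suffices to show $K_n^{-1/2}\Vert\widehat{\bm{\gamma}}-\bm{\gamma}\Vert=o_p(1)$. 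The eigenvalue bound from Step 1 reduces this to showing
\begin{equation*}
K_n^{1/2}\Vert\mathbf{R}^{\top}\epsilon\Vert=o_p(1),
\end{equation*}
where $\epsilon=\mathbf{Y}-\mathbf{R}\bm{\gamma}$ is the residual vector from Step 2. I will bound each component of $\epsilon$ as follows.
\begin{itemize}
\item For (a), \cref{As:orthogonality} makes the product $\Delta^n_iX^c\Delta^n_iZ^c$ have negligible conditional mean, so the second moment gives $\Vert\mathbf{R}^{\top}\epsilon_a\Vert^2=O_p(\Delta_n)$. The resulting bound $K_n\Delta_n\to0$ handles it.
\item For (b), a cleaner route is the projection inequality $\Vert\mathbf{P}_{\mathbf{R}}\epsilon_b\Vert\leq\Vert\epsilon_b\Vert$. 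Combined with the eigenvalue bound, it gives $K_n^{-1}\Vert\widehat{\bm{\gamma}}-\bm{\gamma}\Vert^2\lesssim\Vert\epsilon_b\Vert^2$. Here $\Vert\epsilon_b\Vert^2\approx\int e^{\top}ce\,ds+o_p(1)\to0$ by \cref{lemma:approximationError}, plus the discretization error of $\mathbf{B}\bm{\gamma}$. I will control the latter using the Markov inequality for splines, $\Vert g'\Vert\lesssim K_n\Vert g\Vert$, giving an order of $K_n\Delta_n\to0$.
\item For (c), the Mancini truncation arguments with $r<1$ and $\varpi<1/2$ make these remainders vanish.
\item For (d), the drift terms are of smaller order.
\end{itemize}

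The main obstacle is Step 1: a uniform operator-norm concentration of $\mathbf{R}^{\top}\mathbf{R}$ around $\mathbf{\Sigma}_n$ at the sharp rate. This is exactly where the condition $\Delta_nK_n\log K_n\to0$ enters. To get it, I would exploit bandedness by applying a Bernstein/Freedman inequality for martingales blockwise, followed by a union bound over the $K_n$ blocks.
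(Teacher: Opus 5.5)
Your outline is sound and has the same skeleton as the paper's proof. Both show $\mathbf{R}^{\top}\mathbf{R}$ is invertible with $\lambda_{\text{min}}(\mathbf{R}^{\top}\mathbf{R})\gtrsim K_n^{-1}$, split $\mathbf{Y}-\mathbf{R}\bm{\gamma}$ into continuous-noise, approximation and jump/truncation pieces, and use \cref{lemma:approximationError}. The two main steps are justified differently, though.

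For the Gram matrix, the paper uses a pointwise LLN for truncated functionals and then imports uniform closeness of empirical and theoretical Gram matrices from Lemma A.2 of \citet{huang2004polynomial}, a result established for independent longitudinal data. Your direct argument (bandedness, maximal row sums, an exponential martingale inequality per entry, a union bound over $O(K_n)$ entries) is self-contained. It also shows exactly why $\Delta_nK_n\log K_n\to0$ is the operative condition.

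For the score, the paper passes the approximation and jump pieces through the second inequality of \cref{lemma:norm}, which relies on $\Vert\mathbf{R}\Vert=O_p(K_n^{-1/2})$. Your projection bound $\Vert\mathbf{P}_{\mathbf{R}}\epsilon\Vert\le\Vert\epsilon\Vert$ reaches the same conclusion using only $\lambda_{\text{min}}$. The $Z^c$ piece, for which $\Vert\epsilon_a\Vert\not\to0$, you handle through orthogonality and the score bound, exactly as in the paper's $A_4$.

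That difference matters, because the two-sided claim in your Step 1 is false as stated. A jump of $X$ of size about $u_n/2$ survives truncation and adds order $u_n^2$ to one local block of $\mathbf{R}^{\top}\mathbf{R}$. With infinite-activity (e.g.\ stable-like) jumps, such jumps occur with probability approaching one. Moreover, $K_nu_n^2=K_n\Delta_n^{2\varpi}$ need not vanish under $\Delta_nK_n\log K_n\to0$ (take $K_n=\Delta_n^{-0.9}$, $\varpi=0.3$). Then $\lambda_{\text{max}}(\mathbf{R}^{\top}\mathbf{R})\gtrsim u_n^2\gg K_n^{-1}$, so neither your ``negligible truncation error'' nor the paper's bound on $\Vert\mathbf{R}\Vert$ holds in general.

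Only the lower bound survives, and it is all you use. Write $g=\mathbf{B}\bm{v}$; every summand $(g_{(i-1)\Delta_n}^{\top}\Delta_i^nX)^2$ is nonnegative. So you may discard the intervals with $\Vert\Delta_i^nX\Vert>u_n$ or $\Vert\Delta_i^nX-\Delta_i^nX^c\Vert>\eta_n$, where $\eta_n=o(\sqrt{\Delta_n})$ and $\Delta_n\eta_n^{-r}\to0$. Apply $(a+b)^2\ge a^2/2-b^2$ on the remaining intervals. Then verify that $\max_{k}\sum_{i}\Vert\Delta_i^nX^c\Vert^2$ is $o_p(K_n^{-1})$, where the sum runs over discarded intervals with $(i-1)\Delta_n\in\text{supp}(B^{(k)})$.

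For the concentration itself, use a Bernstein inequality under conditional moment conditions on the untruncated continuous increments; their products are conditionally sub-exponential at scale $\Delta_n$. Freedman's inequality needs an almost-sure bound $b_n$. Any available one ($u_n^2$, or $\Delta_n\log n$ after re-truncation) leaves a term $b_n\log K_n$ that is not $o(K_n^{-1})$ under the stated rate.

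Finally, $\Vert\epsilon_b\Vert^2\approx\int_0^Te_s^{\top}c_se_s\,ds$ treats $\int e_s^{\top}dX_s^c$ as an It\^o integral, although $e$, a global projection residual, is not adapted. The paper faces the same issue and handles it only through its predictable-projection remark. For consistency you can avoid it by centering at the coefficients of the left-endpoint quasi-interpolant from the proof of \cref{lemma:approximationError}, which is adapted and bounded by $\sup_t\Vert\beta_t\Vert$.
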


\cref{Th:consistency} shows that the consistency of $\widehat{\beta}$ holds with the standard choice of truncation threshold from \citet{mancini2009non} under the mild condition $\Delta_{n}K_{n}\log K_{n}\to0$. The identifiability of $\widehat{\bm{\gamma}}$, and hence of $\widehat{\beta}$, follows from the nonsingularity of the empirical Gram matrix $\mathbf{R}^{\top}\mathbf{R}$, which is implied by that of its theoretical counterpart \citep{huang2003local}, and, in turn, is ensured by the boundedness of B-spline basis functions \citep{de1978practical}.

%\cref{Th:CLT} provides the asymptotic distribution of $\widehat{I\beta}$. We write $X^{n}\xrightarrow{\mathcal{L}-s}X$ to indicate the stable convergence in law, i.e., if for all $\mathcal{F}$-measurable processes $Y$, we have $(X^{n},Y)\overset{\mathcal{L}}{\longrightarrow}(X,Y)$. We denote by $\mathcal{MN}$ a mixed normal distribution, i.e., a normal distribution conditional on the realization of its $\mathcal{F}$-conditional variance, which is also a random variable.

We now present a central limit theorem for $\widehat{I\beta}$. We write $X^{n}\xrightarrow{\mathcal{L}-s}X$ to denote stable convergence in law, i.e., $(X^{n},Y)\overset{\mathcal{L}}{\longrightarrow}(X,Y)$ for any $\mathcal{F}$-measurable process $Y$. We denote by $\mathcal{MN}$ a mixed normal distribution, i.e., a conditionally normal distribution with random $\mathcal{F}$-conditional variance.

\begin{theorem}%[Asymptotic Normality]
\label{Th:CLT}
%Suppose Assumptions \ref{As:Ito} and \ref{As:orthogonality} hold, and \cref{As:spline} holds with some $\alpha>1/2$ and a sequence $K_{n}$ satisfying $K_{n}^{\alpha}\sqrt{\Delta_{n}}\to\infty$ and $\Delta_{n}K_{n}\log K_{n}\to0$. %or, equivalently, $K_{n}\asymp\Delta_{n}^{-\rho}$ with $(2\alpha)^{-1}<\rho<1$. 
%Let $u_{n}\asymp\Delta_{n}^{\varpi}$ for some $\varpi\in\left(\frac{1}{2(2-r)},\,\frac{1}{2}\right)$. Then, as $\Delta_{n}\to0$, it holds that $\Vert\widehat{\beta}-\beta\Vert_{L^{2}}=O_{p}(\sqrt{\Delta_{n}})$, and
Suppose Assumptions \ref{As:Ito}--\ref{As:orthogonality} hold, $\Delta_{n}K_{n}\log K_{n}\to0$, and $K_{n}\sqrt{\Delta_{n}}/\log K_{n}\to\infty$. %or, equivalently, $K_{n}\asymp\Delta_{n}^{-\rho}$ with $(2\alpha)^{-1}<\rho<1$. 
Let $u_{n}\asymp\Delta_{n}^{\varpi}$ for some $\varpi\in\left(\frac{1}{2(2-r)},\,\frac{1}{2}\right)$. Then, as $\Delta_{n}\to0$, it holds that %$\Vert\widehat{\beta}-\beta\Vert_{L^{2}}=O_{p}(\sqrt{\Delta_{n}})$, and
\begin{equation}
\label{Eq:CLT_IB}
\frac{1}{\sqrt{\Delta_{n}}}\,\bigl(\widehat{I\beta}_{T}-I\beta_{T}\bigr)\xrightarrow{\mathcal{L}-s}%\mathcal{W}_{t}^{\beta},
\mathcal{MN}\bigl(0,\,\Sigma_{T}^{\beta}\bigr),
\end{equation}
%where $\mathcal{W}^{\beta}$ is a process defined on the extension of the original probability space $(\Omega,\mathcal{F},(\mathcal{F}_{t})_{t\geq0},\mathbb{P})$ ... 
where the asymptotic covariance is given by
\begin{equation}
\Sigma_{T}^{\beta}=\left(\int_{0}^{T}\mathbf{B}_{s}ds\right)\left(\int_{0}^{T}\mathbf{B}_{s}^{\top}c_{s}\mathbf{B}_{s}ds\right)^{-1}\left(\int_{0}^{T}\widetilde{\sigma}_{s}^{2}\mathbf{B}_{s}^{\top}c_{s}\mathbf{B}_{s}ds\right)\left(\int_{0}^{T}\mathbf{B}_{s}^{\top}c_{s}\mathbf{B}_{s}ds\right)^{-1}\left(\int_{0}^{T}\mathbf{B}_{s}ds\right)^{\top}.
\end{equation}
\end{theorem}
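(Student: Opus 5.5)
The plan is to decompose $\widehat{I\beta}_T-I\beta_T$ into a bias part coming from the spline approximation and discretization, and a martingale part coming from the residual $Z$. The CLT is then obtained from the martingale part via a stable limit theorem for triangular arrays of martingale differences.

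Write $\mathbf{G}_n=\Delta_n^{-1}\mathbf{R}^\top\mathbf{R}$ and $\mathbf{G}=\int_0^T\mathbf{B}_s^\top c_s\mathbf{B}_s ds$. Since $\widehat{I\beta}_T=\bigl(\sum_i\mathbf{B}_{(i-1)\Delta_n}\Delta_n\bigr)\widehat{\bm\gamma}$, substituting $\Delta_i^nY$ from \cref{Eq:regression} on the truncation event gives
\[
\widehat{I\beta}_T-I\beta_T=\Bigl(\sum_{i}\mathbf{B}_{(i-1)\Delta_n}\Delta_n\Bigr)\mathbf{G}_n^{-1}\Delta_n^{-1}\mathbf{R}^\top\mathbf{Z}^c+\mathcal{R}_n .
\]
Here $\mathbf{Z}^c$ collects $\Delta_i^nZ^c$ (times the indicators). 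The remainder $\mathcal{R}_n$ gathers four pieces: (a) the jump and truncation errors; (b) the discretization error $\int_{(i-1)\Delta_n}^{i\Delta_n}(\beta_{s}-\beta_{(i-1)\Delta_n})^\top dX_s^c$ and drift terms; (c) the spline error, namely the difference between $\sum_i\mathbf{B}_{(i-1)\Delta_n}\Delta_n\,\mathbf{G}_n^{-1}\Delta_n^{-1}\mathbf{R}^\top(e\text{-increments})$ and $\int_0^T e_sds$; and (d) the Riemann-sum error for $\int\beta$.

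\textbf{Step 1 (Gram matrix).} Show $\|\mathbf{G}_n-\mathbf{G}\|_{op}=o_p(1/K_n)$ relative to $\lambda_{\min}(\mathbf{G})\asymp K_n^{-1}$. The lower bound on $\lambda_{\min}(\mathbf{G})$ uses \cref{As:Ito}(iii) and the B-spline stability $\|\mathbf{B}\bm\gamma\|_{L^2}^2\asymp K_n^{-1}\|\bm\gamma\|^2$. The convergence is entrywise via truncated realized covariance, uniformly over the $O(K_n)$ nonzero bands using local supports; a matrix Bernstein-type bound yields the $\sqrt{\Delta_nK_n\log K_n}$ rate, and $\Delta_nK_n\log K_n\to0$ makes it negligible. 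This step is shared with \cref{Th:consistency}, which I take as given.

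\textbf{Step 2 (bias terms).} For the spline error the key observation is orthogonality. Since $\Pi_n\beta$ is the $L^2(c)$-projection, $\int_0^T\mathbf{B}_s^\top c_s e_s ds=0$, so the population version of term (c) is
\[
\Bigl(\int\mathbf{B}\Bigr)\mathbf{G}^{-1}\!\int\mathbf{B}^\top c\,e\,ds-\int e\,ds=-\int e\,ds .
\]
This is not automatically zero. I would handle it by noting that $\int_0^T e_s ds=\langle e,c^{-1}\mathbf{1}\rangle_c$, then approximating $c^{-1}\mathbf{1}$ (an Itô semimartingale path) by a spline and invoking \cref{lemma:approximationError} for both functions. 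Cauchy–Schwarz then gives a bias of order $O_p(\log K_n/K_n)$, which is $o_p(\sqrt{\Delta_n})$ exactly because $K_n\sqrt{\Delta_n}/\log K_n\to\infty$. The empirical-minus-population fluctuation of (c) is a martingale of size $\|e\|_{L^2}\sqrt{\Delta_n}=o_p(\sqrt{\Delta_n})$. Term (b) contributes $O_p(\Delta_n)$ after being multiplied by the bounded-norm row vector $\int\mathbf{B}\,\mathbf{G}^{-1}$, since the $\mathbf{B}$-weighted sums keep operator norms $O(1)$. Term (d) is $O_p(\Delta_n^{1/2+})$ pathwise for finite-variation jumps. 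Term (a) is controlled by standard truncation bounds: the condition $\varpi>1/(2(2-r))$ ensures that $\sum_i|\Delta_i^nX|^2\mathbbm{1}\{\cdot\}$ errors are $o_p(\sqrt{\Delta_n})$, together with \cref{As:orthogonality} for the co-jumps.

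\textbf{Step 3 (CLT).} Replace $\mathbf{G}_n$ by $\mathbf{G}$ and $\sum\mathbf{B}\Delta_n$ by $\int\mathbf{B}$. The leading term becomes $\Delta_n^{-1/2}\sum_i \mathbf{a}_n^\top\mathbf{B}_{(i-1)\Delta_n}^\top\Delta_i^nX^c\,\Delta_i^nZ^c$ with the deterministic-in-$i$ weight $\mathbf{a}_n=\mathbf{G}^{-1}(\int\mathbf{B})^\top$. Its conditional variance converges to $\Sigma^\beta_T$ by \cref{As:orthogonality}, and Lindeberg holds via fourth moments. Stability follows from Jacod's theorem, because the covariation with $W$, $\widetilde W$ and bounded martingales orthogonal to them vanishes. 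The main obstacle is that $\mathbf{a}_n$ is $\mathcal{F}_T$-measurable and $K_n$-dimensional, so it is neither adapted nor of fixed dimension. I would first freeze $c$ on blocks to make the weights adapted, or localize so that $c$ is known up to an $o_p(1)$ error and work conditionally. I would then verify the Lindeberg condition and the bracket convergence uniformly in $K_n$, using $\|\mathbf{a}_n\|\asymp K_n^{1/2}$ combined with the locality of the B-splines.
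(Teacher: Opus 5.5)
Your architecture matches the paper's: a leading $Z$-driven term plus discretization, jump/truncation and spline-bias remainders. Your treatment of the spline bias is exactly the paper's, namely $L^2(c)$-orthogonality of $e$ to $\mathbb{G}_n$ together with $\int_0^T e_s\,ds=\langle e,\psi-\Pi_n\psi\rangle_c$ for $\psi=c^{-1}$. For a Cram\'er--Wold direction $a$ you need $c^{-1}a$, not $c^{-1}\mathbf{1}$.

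The genuine gap is remainder (a), which you dismiss as ``standard truncation bounds''. Those fixed-$p$ bounds control unweighted sums. Here every remainder is weighted by the spline $\theta_{n,t}=\mathbf{B}_t\bm{w}_n$, with $\bm{w}_n=(\mathbf{R}^{\top}\mathbf{R})^{-1}(\sum_i\mathbf{B}_{(i-1)\Delta_n}\Delta_n)^{\top}a$. The only size you record for these weights is $\Vert\mathbf{a}_n\Vert\asymp K_n^{1/2}$, and that does not suffice:
\begin{itemize}
\item For the small jumps of $X$ that survive truncation, Cauchy--Schwarz with $\bm{w}_n^{\top}\mathbf{R}^{\top}\mathbf{R}\bm{w}_n=O_p(1)$ only gives the bound $\Delta_n^{-1/2}\Vert\widetilde{\mathbf{J}}\Vert=O_p(\Delta_n^{-1/2}u_n^{1-r/2})$. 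This diverges for every $\varpi<1/2$.
\item H\"older combined with the spline inverse inequality $\sup_t\Vert\theta_{n,t}\Vert\lesssim\sqrt{K_n}$ gives $\sqrt{K_n}\,u_n^{2-r}/\sqrt{\Delta_n}$. Since $K_n\sqrt{\Delta_n}\to\infty$, this forces at least $\varpi(2-r)>3/4$, not the stated $\varpi>1/(2(2-r))$.
\item The truncation-mismatch terms fail in the same way.
\end{itemize}
The missing idea is $\Vert\bm{w}_n\Vert_\infty=O_p(1)$, hence $\sup_t\Vert\theta_{n,t}\Vert=O_p(1)$. By local support, $K_n\mathbf{R}^{\top}\mathbf{R}$ is banded with eigenvalues bounded away from $0$ and $\infty$ with probability approaching one. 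The Demko--Moss--Smith decay bound then gives its inverse exponentially decaying entries and a bounded $\ell^\infty$ operator norm, while $\Vert K_n(\sum_i\mathbf{B}_{(i-1)\Delta_n}\Delta_n)^{\top}a\Vert_\infty=O(1)$. With this, each jump/truncation remainder is at most $Ku_n\Delta_n^{-1/2}\sum_i|\cdot|$, e.g.\ $O_p(u_n^{2-r}/\sqrt{\Delta_n})$ for surviving jumps. That is exactly where $\varpi>1/(2(2-r))$ is used. The ``bounded-norm row vector $\int\mathbf{B}\,\mathbf{G}^{-1}$'' you invoke in (b) is bounded only in this $\ell^\infty$ sense, which you never establish.

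Three further points need repair.
\begin{enumerate}
\item Term (d) is mislabelled. $\widehat{I\beta}_T$ is built on $\widetilde\beta=\Pi_n\beta$, and you already subtract $\int_0^T e_s\,ds$ in (c). What remains is therefore the Riemann error of the \emph{projected spline} $\Pi_n\beta$, not of $\beta$. Finite variation of $\beta$'s jumps does not transfer to $\Pi_n\beta$, because the projection is nonlocal and its derivative grows with $K_n$. The paper needs \cref{lemma:derivative}, $\Vert(\Pi_n\beta)'\Vert_{L^2}=O_p(\sqrt{K_n\log K_n})$, to get $O_p(\sqrt{\Delta_nK_n\log K_n})$ here and for the within-interval variation of $\widetilde\beta$. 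If you instead decompose around $\beta_{(i-1)\Delta_n}$ as in your (b), the same variation reappears in the Riemann approximation of $\int_0^T\theta_{n,s}^{\top}c_se_s\,ds=0$.
\item Your Step 1 normalization is off. $\mathbf{R}^{\top}\mathbf{R}$ itself converges to $\int_0^T\mathbf{B}_s^{\top}c_s\mathbf{B}_s\,ds$, so $\Delta_n^{-1}\mathbf{R}^{\top}\mathbf{R}$ does not. Your Step 3 display is nonetheless correct.
\item ``Working conditionally on $c$'' is not available, because $c$ is driven by the same $W$ as $X$ (leverage is allowed). The paper instead replaces $\theta_n$ by its limit $\theta$ and invokes the truncated realized-covariance CLT with stable Slutsky.
\end{enumerate}
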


The above stable central limit theorem is established by decomposing the normalized estimation error into a leading martingale term and a collection of remainder terms. The latter arise from spline approximation and discretization errors in the construction of $\widehat{I\beta}$, as well as from jumps in the covariate and residual processes. The leading term admits a truncated realized-covariance representation and converges stably to a mixed normal limit. This result requires stronger conditions than Theorem \ref{Th:consistency} on the truncation threshold $u_n$, as is standard in the asymptotic theory of truncated realized covariance (see, e.g., Theorem 13.2.1, \citealp{jacod2012discretization}). The condition $\Delta_{n}K_{n}\log K_{n}\to0$ follows from \cref{Th:consistency} and is analogous to standard conditions in the spline-based varying-coefficient model literature (see, e.g., Theorem 3, \citealp{huang2004polynomial}). The additional condition $K_{n}\sqrt{\Delta_{n}}/\log K_{n}\to\infty$ ensures that the bias induced by spline approximation is asymptotically negligible and thus has no impact on the central limit theorem. 

%The above stable central limit theorem (CLT) is obtained by decomposing the normalized estimation error into a dominant martingale term and a collection of remainders. The dominant component admits a truncated realized-covariance representation, so the stable mixed normality follows from the standard results of truncated functionals (see, e.g., Theorem 13.2.1, \citealp{jacod2012discretization}), and an application of the Cram{\'e}r-Wold device. The remainder terms come from the spline approximation and discretization errors by the construction of $\widehat{I\beta}$, as well as components related to both jumps in the covariate and residual processes, which are either removed by truncation or shown to be smaller order under the jump-activity conditions. 
%%The above stable central limit theorem (CLT) follows from the standard results of truncated realized covariance, see, e.g., Theorem 13.2.1 in \citet{jacod2012discretization}, and an application of the Cram{\'e}r-Wold device.  
%In particular, as the basis dimension $K_n$ increases such that $K_{n}\sqrt{\Delta_{n}}/\log K_{n}\to\infty$, the bias induced by the spline approximation error is asymptotically negligible and therefore has no impact on the CLT. 

For feasible implementation of the asymptotic distribution in \cref{Th:CLT}, we can estimate $\Sigma_{T}^{\beta}$ by a jackknife White's heteroskedasticity-consistent estimator \citep{mackinnon1985some}:

\begin{theorem}%[Feasible Inference]
\label{Th:feasiblility}
Under the conditions of \cref{Th:CLT}, $\Sigma_{T}^{\beta}$ can be consistently estimated by
\begin{equation}
\widehat{\Sigma}_{T}^{\beta}=\left(\sum_{i=1}^{n}\mathbf{B}_{(i-1)\Delta_{n}}\Delta_{n}\right)(\mathbf{R}^{\top}\mathbf{R})^{-1}\mathbf{R}^{\top}\mathbf{D}\mathbf{R}(\mathbf{R}^{\top}\mathbf{R})^{-1}\left(\sum_{i=1}^{n}\mathbf{B}_{(i-1)\Delta_{n}}\Delta_{n}\right)^{\top},%\overset{\mathbb{P}}{\longrightarrow}\Sigma_{T}^{\beta},
\end{equation}
where $\mathbf{D}=\text{diag}(\Delta_{n}^{-1}(\Delta_{i}^{n}Y-(\Delta_{i}^{n}X)^{\top}\widehat{\beta}_{(i-1)\Delta_{n}}^{(-i)})^2 \mathbbm{1}_{\{|\Delta_{i}^{n}Y|\leq u_{n},\,\Vert\Delta_{i}^{n}X\Vert\leq u_{n}\}})\in\mathbb{R}^{n\times n}$. Here, $\widehat{\beta}_{(i-1)\Delta_{n}}^{(-i)}$ denotes the leave-one-out counterpart of $\widehat{\beta}_{(i-1)\Delta_{n}}$ computed with the increments of both $X$ and $Y$ with the $i$-th interval removed.
\end{theorem}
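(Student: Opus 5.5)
The proof treats the three factors of $\widehat{\Sigma}_{T}^{\beta}$ separately. Write $\widehat{\Sigma}_{T}^{\beta}=\widehat{A}_n\widehat{G}_n^{-1}\widehat{V}_n\widehat{G}_n^{-1}\widehat{A}_n^{\top}$, where
\[
\widehat{A}_n=\sum_{i=1}^{n}\mathbf{B}_{(i-1)\Delta_{n}}\Delta_{n},\qquad
\widehat{G}_n=\mathbf{R}^{\top}\mathbf{R},\qquad
\widehat{V}_n=\mathbf{R}^{\top}\mathbf{D}\mathbf{R}.
\]
Their population counterparts are
\[
A=\int_0^T\mathbf{B}_sds,\qquad
G=\int_0^T\mathbf{B}_s^{\top}c_s\mathbf{B}_sds,\qquad
V=\int_0^T\widetilde{\sigma}_s^2\mathbf{B}_s^{\top}c_s\mathbf{B}_sds.
\]
The dimension $pK_n$ grows, so entrywise convergence is not enough. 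I would instead control each piece in operator norm after the natural rescaling. The B-spline Gram matrix $\int_0^T \mathbf{B}_s^{\top}\mathbf{B}_sds$ has eigenvalues of exact order $K_n^{-1}$ (de Boor's stability), and $c$ and $c^{-1}$ are locally bounded. Hence $K_nG$ has eigenvalues bounded away from zero and infinity, with probability approaching one after localization. The same holds for $K_nV$ on the set where $\widetilde\sigma^2$ is bounded below; otherwise I only need an upper bound.

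First, $\|\widehat{A}_n-A\|=O(\Delta_nK_n)\to0$ relative to $\|A\|=O(K_n^{-1/2})$. This follows because each basis function is Lipschitz with constant $O(K_n)$ and is supported on an interval of length $O(K_n^{-1})$.

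Second, $\|K_n(\widehat{G}_n-G)\|=o_p(1)$. This is the same step that gives uniqueness in \cref{Th:consistency}, and I would reuse it directly. Each entry of $\widehat G_n$ is a truncated realized covariance localized on the support of $B^{(k)}B^{(l)}$. Truncation removes the jumps, with the usual $u_n$ bounds. The continuous part has conditional bias $O(\Delta_n)$ and conditional variance of order $\Delta_n K_n^{-1}$ per entry. Banded structure gives at most $O(d)$ nonzero entries per row, so the operator norm is bounded by the maximal row sum. A Bernstein-type maximal inequality over the $O(K_n)$ blocks then yields the rate $\sqrt{\Delta_nK_n\log K_n}$, which vanishes under $\Delta_nK_n\log K_n\to0$. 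Consequently $\|(\widehat{G}_n/K_n^{-1})^{-1}-(G/K_n^{-1})^{-1}\|=o_p(1)$.

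Third, and this is the main obstacle, I must show $\|K_n(\widehat{V}_n-V)\|=o_p(1)$. Decompose the leave-one-out residual on the truncation event as
\[
\Delta_i^nY-(\Delta_i^nX)^{\top}\widehat\beta^{(-i)}_{(i-1)\Delta_n}=\Delta_i^nZ^c+(\Delta_i^nX)^{\top}\bigl(\beta_{(i-1)\Delta_n}-\widehat\beta^{(-i)}_{(i-1)\Delta_n}\bigr)+r_i,
\]
where $r_i$ collects the discretization and jump remainders handled in \cref{Th:CLT}.

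The squared leading term $(\Delta_i^nZ^c)^2$ gives $\sum_i\Delta_n^{-1}(\Delta_i^nZ^c)^2\mathbf{R}_i\mathbf{R}_i^{\top}$. Under \cref{As:orthogonality}, this is a localized fourth-order power variation with limit $V$. I would handle it with the same blockwise martingale-plus-maximal-inequality argument as in step two, using the Itô product rule to get the conditional mean $\widetilde\sigma^2\mathbf{B}^{\top}c\mathbf{B}\Delta_n^2$ per increment.

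For the estimation-error term, the leave-one-out construction makes $\widehat\beta^{(-i)}$ independent of the $i$-th increment up to an $O_p(\Delta_n K_n)$ leverage perturbation. I then use $\|\widehat\beta-\beta\|_{L^2}=O_p(\sqrt{\Delta_nK_n}+\sqrt{\log K_n/K_n})$, obtained from the proof of \cref{Th:consistency} together with \cref{lemma:approximationError}. With Cauchy–Schwarz, the estimation-error term and the cross terms contribute $o_p(1)$ in the rescaled operator norm. The delicate point is that a pointwise rather than $L^2$ bound on $\widehat\beta-\beta$ is needed inside the weights. I would obtain it from the locality of B-splines: the inverse Gram matrix has exponentially decaying off-diagonal entries, in the manner of Demko's lemma. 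This converts the $L^2$ rate into a uniform rate losing at most a $\sqrt{K_n}$ factor, and the resulting rate is still $o(1)$ under the stated conditions. The remainders $r_i$ are bounded using $u_n$ and the range of $\varpi$, exactly as in the proof of \cref{Th:CLT}.

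Finally, I combine the three pieces:
\[
\widehat{\Sigma}_{T}^{\beta}-\Sigma_{T}^{\beta}=\bigl(\widehat A_n K_n^{1/2}\bigr)\bigl(K_n^{-1}\widehat G_n\bigr)^{-1}\bigl(K_n^{-1}\widehat V_n\bigr)\bigl(K_n^{-1}\widehat G_n\bigr)^{-1}\bigl(K_n^{1/2}\widehat A_n\bigr)^{\top}-(\text{same with population terms}).
\]
Here I use the normalizations $\|K_n^{1/2}A\|=O(1)$ and $\|K_nG\|,\|K_nV\|=O(1)$; the $K_n$ powers cancel. A telescoping bound over the product, using the operator-norm limits from steps one to three, then gives $\widehat{\Sigma}_{T}^{\beta}\overset{\mathbb{P}}{\longrightarrow}\Sigma_{T}^{\beta}$.
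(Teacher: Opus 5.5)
\textbf{Verdict: genuine gap.} Your route differs from the paper's. You control the bread, the Gram matrix and the meat separately in rescaled operator norm and then telescope. The paper instead compares $\mathbf{R}^{\top}\mathbf{D}\mathbf{R}$ with the infeasible meat built from $\varepsilon_i=\Delta_i^nY-(\Delta_i^nX)^{\top}\beta_{(i-1)\Delta_n-}$ and bounds the difference by the two sums $S_1,S_2$.

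\textbf{Where your third step fails.}
\begin{itemize}
\item \emph{The uniform-rate conversion diverges.} Converting the $L^2$ rate into a uniform one with a $\sqrt{K_n}$ loss gives $\sqrt{K_n}\,(\sqrt{\Delta_nK_n}+\sqrt{\log K_n/K_n})=K_n\sqrt{\Delta_n}+\sqrt{\log K_n}$. Both terms diverge, and $K_n\sqrt{\Delta_n}\to\infty$ is actually forced by the hypothesis $K_n\sqrt{\Delta_n}/\log K_n\to\infty$ of \cref{Th:CLT}. A local argument via the decay of the inverse Gram matrix would give $\sqrt{\Delta_nK_n\log K_n}$ for the stochastic part, but that does not help with the approximation error.
\item \emph{The approximation error has no vanishing uniform bound.} An inverse inequality applies only to elements of $\mathbb{G}_n$, and $e=\beta-\widetilde\beta$ is not a spline. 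Since \cref{As:Ito_beta} lets $\beta$ jump, no uniform or windowed bound on $e$ can vanish.
\item \emph{The target $\|K_n(\widehat V_n-V)\|=o_p(1)$ itself fails in general.} Suppose $\beta$ jumps by $J$ at $\tau$. The continuous spline fit leaves $\|\beta_{(i-1)\Delta_n}-\widehat\beta^{(-i)}_{(i-1)\Delta_n}\|\gtrsim\|J\|$ on a window of length $\asymp K_n^{-1}$ next to $\tau$. The residual variance there is inflated by roughly $\Delta_nJ^{\top}c_\tau J$. Hence, for a basis function supported on that window, the diagonal block of $K_n\widehat V_n$ exceeds that of $K_nV$ by an amount of order $\|J\|^2$.
\item \emph{Leave-one-out does not give independence.} $\widehat\beta^{(-i)}$ is not independent of the $i$-th increment, because all increments share the paths of $c$, $\widetilde\sigma$ and $\beta$. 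The leverage identity only relates $\widehat\beta^{(-i)}$ to $\widehat\beta$.
\end{itemize}

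\textbf{Remedy.} Give up operator-norm convergence of the meat and evaluate the quadratic form directly; this is the scale at which the paper's $S_1,S_2$ bounds are the right quantities.
\begin{itemize}
\item For fixed $a\in\mathbb{R}^p$, $a^{\top}\widehat\Sigma_T^{\beta}a=\sum_i\mathbf{D}_{ii}(\bm{w}_n^{\top}\mathbf{R}_i)^2$ with $\bm{w}_n=(\mathbf{R}^{\top}\mathbf{R})^{-1}\widehat A_n^{\top}a$.
\item The Demko-type argument in the proof of \cref{Th:CLT} gives $\|\bm{w}_n\|_\infty=O_p(1)$. By local support, $(\bm{w}_n^{\top}\mathbf{R}_i)^2\le K\|\bm{w}_n\|_\infty^2\|\Delta_i^nX\|^2$.
\item The estimation-error part is then bounded by $\Delta_n^{-1}\sum_i\|\Delta_i^nX\|^4\|\beta_{(i-1)\Delta_n-}-\widehat\beta^{(-i)}_{(i-1)\Delta_n}\|^2\mathbbm{1}_{\{\Vert\Delta_i^nX\Vert\le u_n\}}$. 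This is the paper's $S_2$, a Riemann-sum analogue of $\|\widehat\beta-\beta\|_{L^2}^2$.
\item So only the $L^2$ consistency from \cref{Th:consistency} is needed. The window around a jump of $\beta$ now contributes only $O(\|J\|^2K_n^{-1})$.
\item The cross term ($S_1$) follows by Cauchy--Schwarz.
\item The dependence between $\widehat\beta^{(-i)}$ and the $i$-th increment must be handled explicitly, not by appeal to independence. The paper does this by conditioning on the enlarged $\sigma$-field $\widetilde{\mathcal{F}}_{(i-1)\Delta_n}$.
\end{itemize}

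\textbf{Normalization slip.} In your final display you wrote $(K_n^{-1}\widehat G_n)^{-1}$ and $K_n^{-1}\widehat V_n$. With those normalizations the powers of $K_n$ do not cancel; you mean $K_n\widehat G_n$ and $K_n\widehat V_n$.
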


In $\widehat{\Sigma}_{T}^{\beta}$ we use leave-one-out residuals to avoid reusing the same increments for the estimation of both coefficients and covariance matrix, which restores the conditional orthogonality between the coefficient estimate $\widehat{\beta}_{(i-1)\Delta_{n}}^{(-i)}$ and the $i$-th interval increments needed for the covariance matrix estimation consistency.\footnote{For consistency of the standard sandwich estimator based on the full-sample $\widehat{\beta}$, one needs additional uniform control of negligible leverage across all $i$, which becomes restrictive when the regression dimension $pK_n$ diverges; see, e.g., \citet{kline2020leave}, for related discussions. } For computational convenience, the leave-one-out residuals can be replaced by a block cross-fitted analogue in the spirit of \citet{chernozhukov2018double}, which requires only a finite number of refits. We leave this extension to future work.

\subsection{Model Selection with Penalized Regressions}
\label{Sec:selection}

%Up to this point you approximate the time-varying coefficients via a growing B-spline basis and estimate them from truncated high-frequency increments—yielding a standard OLS form for the continuous part—before deriving a stable CLT for the integrated path and a feasible (White-type) covariance estimator.

In this section, we assume that the $p$-dimensional coefficient process takes the following form:
\begin{equation}
\beta_{t}=(\beta_{1,t},\dots,\beta_{q,t}, 0,\dots,0)^{\top},
\end{equation}
where $q\ge1$ is the number of relevant regressors and $0<\Vert\beta_{j}\Vert_{L^{2}}<\infty$ for $j=1,\dots,q$. That is, the first $q$ regressors are relevant with nonzero coefficients, while the remaining $p-q$ regressors are redundant and have coefficients that are almost surely zero for all $t\in[0,T]$.\footnote{We assume $\beta^{J}$ has the same sparsity structure as $\beta$, i.e., $\beta_{t}^{J}=(\beta_{1,t}^{J},\dots,\beta_{q,t}^{J}, 0,\dots,0)^{\top}$. } In line with the model selection literature, we consider a high-dimensional regime in which $p=p_{n}\to\infty$, while the number of relevant regressors $q$ remains fixed. 

Our goal is to simultaneously estimate the coefficients of relevant regressors and perform the model selection. For each irrelevant regressor, the associated B-spline coefficient vector $\gamma_j=(\gamma_j^{(1)},\dots,\gamma_j^{(K_n)})^\top$ is identically zero. Consequently, consistent model selection can be achieved by imposing a group penalty on $\gamma_{j}$ for $j=1,\dots,p$. 

We consider an estimator $\widehat{\bm{\gamma}}^{*}$ defined as a global minimizer of
\begin{equation}
Q_{n}^{*}(\bm{\gamma})=\frac{1}{2}Q_{n}(\bm{\gamma})+\lambda_{n}\sum_{j=1}^{p}\rho_{n}\left(\sqrt{\gamma_{j}^{\top}\mathbf{W}_{j}\gamma_{j}}\right),\qquad\text{where }\rho_{n}(x)=\min\left(\frac{|x|}{\tau_{n}},\,1\right),
\end{equation} 
where $\rho_{n}(x)$ is the truncated $\ell_{1}$-penalty (TLP) function with a thresholding parameter $\tau_n>0$ \citep{shen2012likelihood}, and $\mathbf{W}_{j}=(\mathbf{R}_{j}^{*})^{\top}\mathbf{R}_{j}^{*}$ is the empirical Gram matrix of block $j$, with $\mathbf{R}_{j}^{*}$ the $n\times K_{n}$ submatrix of $\mathbf{R}$ corresponding to the $j$-th regressor:
\begin{equation}
\label{Eq:R_j^*}
\mathbf{R}_{j}^{*}=\begin{pmatrix}
B_{0}^{(1)}\Delta_{j,1}^{n}X\mathbbm{1}_{\{\vert\Delta_{j,1}^{n}X\vert\leq u_{n}\}} & \dots & B_{0}^{(K_{n})}\Delta_{j,1}^{n}X\mathbbm{1}_{\{\vert\Delta_{j,1}^{n}X\vert\leq u_{n}\}}\\
B_{\Delta_{n}}^{(1)}\Delta_{j,2}^{n}X\mathbbm{1}_{\{\vert\Delta_{j,2}^{n}X\vert\leq u_{n}\}} & \dots & B_{\Delta_{n}}^{(K_{n})}\Delta_{j,2}^{n}X\mathbbm{1}_{\{\vert\Delta_{j,2}^{n}X\vert\leq u_{n}\}}\\
\vdots & \ddots & \vdots\\
B_{(n-1)\Delta_{n}}^{(1)}\Delta_{j,n}^{n}X\mathbbm{1}_{\{\vert\Delta_{j,n}^{n}X\vert\leq u_{n}\}} & \dots & B_{(n-1)\Delta_{n}}^{(K_{n})}\Delta_{j,n}^{n}X\mathbbm{1}_{\{\vert\Delta_{j,n}^{n}X\vert\leq u_{n}\}}\\
\end{pmatrix}
%\begin{pmatrix}
%B_{0}^{(1)}\Delta_{j,1}^{n}X\mathbbm{1}_{\{\vert\Delta_{j,1}^{n}X\vert\leq u_{n}\}} & \dots & B_{(n-1)\Delta_{n}}^{(1)}\Delta_{j,n}^{n}X\mathbbm{1}_{\{\vert\Delta_{j,n}^{n}X\vert\leq u_{n}\}}\\
%B_{0}^{(2)}\Delta_{j,1}^{n}X\mathbbm{1}_{\{\vert\Delta_{j,1}^{n}X\vert\leq u_{n}\}} & \dots & B_{(n-1)\Delta_{n}}^{(2)}\Delta_{j,n}^{n}X\mathbbm{1}_{\{\vert\Delta_{j,n}^{n}X\vert\leq u_{n}\}}\\
%\vdots & \ddots & \vdots\\
%B_{0}^{(K_{n})}\Delta_{j,1}^{n}X\mathbbm{1}_{\{\vert\Delta_{j,1}^{n}X\vert\leq u_{n}\}} & \dots & B_{(n-1)\Delta_{n}}^{(K_{n})}\Delta_{j,n}^{n}X\mathbbm{1}_{\{\vert\Delta_{j,n}^{n}X\vert\leq u_{n}\}}\\
%\end{pmatrix}
\in\mathbb{R}^{n\times K_{n}}.
\end{equation} 
The objective function $Q_{n}^{*}(\bm{\gamma})$ adopts a general form for group selection \citep{yuan2006model,huang2012selective}, while we select the nonconvex TLP to obtain model selection consistency, which cannot be achieved with standard group LASSO without irrepresentable conditions \citep{bach2008consistency}. The TLP function is flat beyond the threshold $\tau_n$, so it penalizes small coefficients strongly while leaving sufficiently large true signals unpenalized. 

Let $\widehat{\bm{\gamma}}^{\circ}$ be the oracle estimator, i.e., the restricted minimizer of $Q_{n}(\bm{\gamma})$ subject to $\gamma_{j}\equiv0$ for $q<j\leq p$. Equivalently, $\widehat{\bm{\gamma}}^{\circ}$ is the least-squares spline estimator for the model that includes only the $q$ relevant regressors \citep{fan2001variable}.  \cref{Th:oracle_local} shows that $\widehat{\bm{\gamma}}^{\circ}$ is a local minimizer of the penalized objective $Q_{n}^{*}(\bm{\gamma})$, and \cref{Th:oracle_global} provides sufficient conditions such that the global minimizer $\widehat{\bm{\gamma}}^{*}$ coincides with $\widehat{\bm{\gamma}}^{\circ}$, which establishes the model selection consistency.

\begin{theorem}
\label{Th:oracle_local}
%Under the conditions of \cref{Th:CLT}, and $\Delta_{n}^{2\varpi-1/2}\sqrt{K_{n}\log pK_{n}}\to 0$, let $\tau_{n}\to0$ and $\lambda_{n}>0$ satisfy %$(\lambda_{n}/\tau_{n})\sqrt{K_{n}/\Delta_{n}}\to\infty$. 
%$\frac{\lambda_{n}}{\tau_{n}}\sqrt{\frac{K_{n}}{\Delta_{n}}}\to\infty$.
Under the conditions of \cref{Th:CLT}, let $\tau_{n}\to0$ and $\lambda_{n}>0$ satisfy
\begin{equation}
\label{Eq:penaltyCondition1}
\frac{\lambda_{n}}{\tau_{n}}\bigg/\left(\sqrt{\frac{\Delta_{n}\log pK_{n}}{K_{n}}}+u_{n}\log pK_{n}+u_{n}^{1-r/2}\right)\to\infty.
%\frac{\lambda_{n}}{\tau_{n}}\bigg/\left(\sqrt{\frac{\Delta_{n}\log pK_{n}}{K_{n}}}+u_{n}\log pK_{n}+u_{n}^{1-r/2}\right)\,\asymp\,\frac{\lambda_{n}}{\tau_{n}}\bigg/\left(\Delta_{n}^{\varpi}\log pK_{n}+\Delta_{n}^{\varpi(1-r/2)}\right)\to\infty.
\end{equation}
Then the oracle estimator $\widehat{\bm{\gamma}}^{\circ}$ is a local minimizer of $Q_{n}^{*}(\bm{\gamma})$ with probability approaching one. 
\end{theorem}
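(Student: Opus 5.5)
We follow the standard local-optimality argument for the TLP, in the form used for group selection by \citet{shen2012likelihood} and \citet{xue2012variable}. Write $\bm{\gamma}=(\bm{\gamma}_{\mathcal{A}}^{\top},\bm{\gamma}_{\mathcal{A}^c}^{\top})^{\top}$, where $\mathcal{A}=\{1,\dots,q\}$ collects the relevant groups. We will show that with probability approaching one there is a neighbourhood $\mathcal{N}_n$ of $\widehat{\bm{\gamma}}^{\circ}$ such that $Q_n^*(\bm{\gamma})\ge Q_n^*(\widehat{\bm{\gamma}}^{\circ})$ for all $\bm{\gamma}\in\mathcal{N}_n$. The neighbourhood will have radius of order $\tau_n$ in each group norm $\Vert\gamma_j\Vert_{\mathbf{W}_j}:=\sqrt{\gamma_j^{\top}\mathbf{W}_j\gamma_j}$.

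\textbf{Step 1: the relevant groups are in the flat region.} Apply \cref{Th:consistency} and \cref{Th:CLT} to the $q$-dimensional oracle model, with $p$ replaced by the fixed $q$. This gives $\Vert\widehat{\beta}_j^{\circ}-\beta_j\Vert_{L^2}\to0$. We also need that $\Delta_n^{-1}\mathbf{W}_j$ has eigenvalues bounded away from zero and infinity on the scale of $K_n^{-1}$. This follows from the Gram matrix argument behind \cref{Th:consistency}, namely \citet{huang2003local} together with the boundedness of $c$ and $c^{-1}$. Combining the two, $\Vert\widehat{\gamma}_j^{\circ}\Vert_{\mathbf{W}_j}$ is of the order of the path norm $\Vert\widehat{\beta}^{\circ}_j\Vert_{L^2(c_{jj})}$. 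Since $\Vert\beta_j\Vert_{L^2}>0$ and $\tau_n\to0$, we get $\min_{j\le q}\Vert\widehat{\gamma}_j^{\circ}\Vert_{\mathbf{W}_j}>2\tau_n$ with probability approaching one. (If the scaling of $\mathbf{W}_j$ by $\Delta_n$ is absorbed into $\tau_n$, the same conclusion holds after rescaling.) Consequently, for every $\bm{\gamma}$ with $\Vert\gamma_j-\widehat{\gamma}_j^{\circ}\Vert_{\mathbf{W}_j}<\tau_n$, $j\le q$, the penalty on the relevant groups equals the constant $q\lambda_n$. On this set the objective therefore reduces to
\[
\tfrac12 Q_n(\bm{\gamma})+q\lambda_n+\frac{\lambda_n}{\tau_n}\sum_{j>q}\Vert\gamma_j\Vert_{\mathbf{W}_j}.
\]

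\textbf{Step 2: first-order comparison.} Set $\bm{\gamma}=\widehat{\bm{\gamma}}^{\circ}+\bm{h}$. Since $Q_n$ is quadratic and $\widehat{\bm{\gamma}}^{\circ}$ solves the restricted normal equations, $\mathbf{R}_{\mathcal{A}}^{\top}\widehat{\bm\varepsilon}=0$ with $\widehat{\bm\varepsilon}=\mathbf{Y}-\mathbf{R}\widehat{\bm{\gamma}}^{\circ}$. Hence
\[
Q_n^*(\bm{\gamma})-Q_n^*(\widehat{\bm{\gamma}}^{\circ})\ \ge\ -\sum_{j>q}h_j^{\top}(\mathbf{R}^*_j)^{\top}\widehat{\bm\varepsilon}+\frac{\lambda_n}{\tau_n}\sum_{j>q}\Vert h_j\Vert_{\mathbf{W}_j},
\]
where we have dropped the nonnegative term $\tfrac12\Vert\mathbf{R}\bm{h}\Vert^2$. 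By Cauchy--Schwarz in the $\mathbf{W}_j$ metric, $|h_j^{\top}(\mathbf{R}_j^*)^{\top}\widehat{\bm\varepsilon}|\le\Vert h_j\Vert_{\mathbf{W}_j}\,\Vert P_j\widehat{\bm\varepsilon}\Vert$, where $P_j$ is the projection onto the column space of $\mathbf{R}^*_j$. It therefore suffices to show
\[
\max_{q<j\le p}\Vert\mathbf{W}_j^{-1/2}(\mathbf{R}_j^*)^{\top}\widehat{\bm\varepsilon}\Vert=o_p(\lambda_n/\tau_n).
\]
Once this holds, the difference is nonnegative on the neighbourhood, which proves local minimality.

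\textbf{Step 3: uniform score bound (main obstacle).} Decompose $\widehat{\bm\varepsilon}$ into four parts:
\begin{itemize}
\item[(a)] the truncated continuous residual increments $\widetilde{\sigma}\Delta_i^n\widetilde W$;
\item[(b)] the oracle estimation error $\mathbf{R}_{\mathcal{A}}(\bm{\gamma}_{\mathcal{A}}-\widehat{\bm{\gamma}}^{\circ}_{\mathcal{A}})$;
\item[(c)] the spline approximation and discretization error, controlled by \cref{lemma:approximationError};
\item[(d)] the residual contributions of jumps and drift that survive truncation.
\end{itemize}

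For (a), use \cref{As:orthogonality} and the independence of $\widetilde W$ from the relevant $X^c$ increments. Each coordinate of $(\mathbf{R}_j^*)^{\top}\bm\varepsilon^{(a)}$ is then a martingale sum with increments bounded by $u_n^2$ because of truncation. A Bernstein/Freedman inequality and a union bound over the $pK_n$ coordinates, normalised by $\mathbf{W}_j^{-1/2}$ (eigenvalues of order $\Delta_n$-scaled $K_n^{-1}$), yield the terms $\sqrt{\Delta_n\log pK_n/K_n}+u_n\log pK_n$. Here the first term is the variance part and the second the bounded-increment part.

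For (d), the standard truncation bounds for finite-variation jumps, $\mathbb{E}[|\Delta_i^nJ|^2\wedge u_n^2]\lesssim \Delta_n u_n^{2-r}$, together with Cauchy--Schwarz, give the term $u_n^{1-r/2}$.

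For (b) and (c), $P_j$ applied to a vector of $\ell_2$-norm $o_p(1)$ relative to the normalisation is small. This uses consistency from \cref{Th:consistency} and the bias condition $K_n\sqrt{\Delta_n}/\log K_n\to\infty$ of \cref{Th:CLT}. The delicate point is that all bounds must hold uniformly over $j\le p_n$. In particular we need a uniform (over $j$) lower bound on $\lambda_{\min}(\mathbf{W}_j)$, via a union bound on the Gram concentration, and care with cross-dependence between irrelevant $X_j$ and $Z$, which is handled through the quadratic covariation being zero. Together, (a)--(d) reproduce exactly the denominator in \cref{Eq:penaltyCondition1}, which completes the proof.
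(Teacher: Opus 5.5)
\textbf{There is a genuine gap in Step~3: the rate you claim cannot hold under your own (correct) normalization, and in fact cannot hold at all under the stated hypothesis.}

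Your Step~2 reduction is right, and the condition it isolates is necessary as well as sufficient. If $\Vert\mathbf{W}_j^{-1/2}(\mathbf{R}_j^*)^{\top}\widehat{\bm\varepsilon}\Vert>\lambda_n/\tau_n$ for some $j>q$, a small step of $\gamma_j$ along $\mathbf{W}_j^{-1}(\mathbf{R}_j^*)^{\top}\widehat{\bm\varepsilon}$ strictly lowers $Q_n^*$.

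\emph{Where the count goes wrong.} For part (a), Freedman plus a union bound controls each of the $K_n$ coordinates of $(\mathbf{R}_j^*)^{\top}\widehat{\bm\varepsilon}$ at level $\sqrt{\Delta_n\log(pK_n)/K_n}+u_n\log(pK_n)$. Passing to the group dual norm costs two factors of $\sqrt{K_n}$. One comes from the Euclidean norm over $K_n$ coordinates. The other comes from $\mathbf{W}_j^{-1/2}$: $\mathbf{W}_j$ is close to $\bigl(\int_0^T B^{(k)}_sB^{(l)}_s c_{jj,s}\,ds\bigr)_{k,l}$, so its eigenvalues are of order $K_n^{-1}$. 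There is no factor $\Delta_n$ in $\mathbf{W}_j$, contrary to your Step~1.

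\emph{Why no sharper bound can rescue it.} Take no jumps, $\widetilde{\sigma}$ constant and positive, and an irrelevant $X_j$ independent of $(X_1,\dots,X_q,Z)$. Then $\Vert P_j\widehat{\bm\varepsilon}\Vert^2$ is, up to negligible terms, a conditionally Gaussian quadratic form. Its conditional mean is of order $\widetilde{\sigma}^2\Delta_nK_n$ and its standard deviation is $O(\Delta_n\sqrt{K_n})$, so it is of exact order $\Delta_nK_n$. Compare this with the denominator in \cref{Eq:penaltyCondition1} under the rates of \cref{Th:CLT}:
\begin{itemize}
\item $\sqrt{\Delta_nK_n}\gg\Delta_n^{1/4}$;
\item $u_n^{1-r/2}=o(\Delta_n^{1/4})$, because $\varpi(2-r)>1/2$;
\item $\sqrt{\Delta_n\log(pK_n)/K_n}=o(\sqrt{\Delta_nK_n})$;
\item $u_n\log(pK_n)=o(\sqrt{\Delta_nK_n})$ when, say, $p$ grows polynomially, since $\varpi>1/4$.
\end{itemize}
So $\lambda_n/\tau_n$ may satisfy \cref{Eq:penaltyCondition1} and still be $o(\sqrt{\Delta_nK_n})$. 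Your necessary condition then fails with probability approaching one, and no refinement of (a)--(d) can close Step~3.

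\emph{How the paper gets its rate.} The paper's proof states the inactive-block KKT condition as $\sqrt{c_j^{\top}\mathbf{W}_jc_j}\le\lambda_n/\tau_n$. It then bounds this with $\Vert\mathbf{W}_j^{1/2}\Vert=O_p(K_n^{-1/2})$, which cancels the $\sqrt{K_n}$ from summing coordinates. Your Step~2 shows that the dual of $\gamma_j\mapsto\Vert\mathbf{W}_j^{1/2}\gamma_j\Vert$ involves $\mathbf{W}_j^{-1}$ instead. With $\mathbf{W}_j^{-1}$, the two $\sqrt{K_n}$ factors compound rather than cancel.

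\emph{What your route can prove.} It establishes the statement with \cref{Eq:penaltyCondition1} strengthened to $\lambda_n/\tau_n\gg\sqrt{\Delta_n(K_n+\log p)}$, plus truncation remainders. Write $\mathbf{M}_q=\mathbf{I}_n-\mathbf{R}_{\mathcal{A}}(\mathbf{R}_{\mathcal{A}}^{\top}\mathbf{R}_{\mathcal{A}})^{-1}\mathbf{R}_{\mathcal{A}}^{\top}$. Then $(\mathbf{R}_j^*)^{\top}\widehat{\bm\varepsilon}=(\mathbf{M}_q\mathbf{R}_j^*)^{\top}\widehat{\bm\varepsilon}$ and $(\mathbf{M}_q\mathbf{R}_j^*)^{\top}\mathbf{M}_q\mathbf{R}_j^*\preceq\mathbf{W}_j$. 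Hence $\Vert P_j\widehat{\bm\varepsilon}\Vert^2$ is at most the squared norm of the projection of $\widehat{\bm\varepsilon}$ onto the column space of $\mathbf{M}_q\mathbf{R}_j^*$. That is exactly the quantity the Hanson--Wright argument in \cref{lemma:overfitOneBlock} bounds by $K\Delta_n(K_n+\log p)$ uniformly in $j$, under the remainder conditions used there.

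\emph{A separate problem with (b).} Do not handle the oracle estimation error through consistency. That gives only $o_p(1)$, not $o_p(\lambda_n/\tau_n)$ when $\lambda_n/\tau_n\to0$. Since $\widehat{\bm\varepsilon}=\mathbf{M}_q(\mathbf{Y}-\mathbf{R}_{\mathcal{A}}\bm{\gamma}_{\mathcal{A}})$, this error drops out identically. Residualizing the irrelevant block against the oracle design is what the paper exploits.
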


\cref{Th:oracle_local} is established by verifying the standard Karush-Kuhn-Tucker (KKT) local optimality conditions. The key requirement is that, uniformly over all irrelevant regressors, the effective penalty level $\lambda_{n}/\tau_{n}$ dominates the blockwise gradient of least-squares loss—a standard calibration for both convex and nonconvex penalties for model selection. The first two terms in the denominator of \cref{Eq:penaltyCondition1} follow from a Bernstein-type maximal inequality for high-dimensional martingale difference arrays (as an analogue to Lemma A.1, \citealp{van2008high}), combined with a uniform control of the predictable compensators in the third term. Altogether, the condition ensures that irrelevant blocks are shrunk to zero while the active blocks fall into the flat region of TLP and are therefore left unpenalized, such that the local KKT solution coincides with the oracle estimator $\widehat{\bm{\gamma}}^{\circ}$. 

\begin{theorem}
\label{Th:oracle_global}
Under the conditions of \cref{Th:oracle_local}, and $\Delta_{n}^{(1-\varpi(2-r))\vee2\varpi r}(K_{n}+\log p)\to\infty$, let $\tau_{n}\to0$ and $\lambda_{n}>0$ satisfy
\begin{equation}
\label{Eq:penaltyCondition2}
\lambda_{n}\to0\qquad\text{and}\qquad
\frac{\lambda_{n}}{\Delta_{n}(K_{n}+\log p)}\to\infty,
\end{equation}
Then it holds for the global minimizer $\widehat{\bm{\gamma}}^{*}$ of $Q_{n}^{*}(\bm{\gamma})$ that
\begin{equation}
\mathbb{P}(\widehat{\bm{\gamma}}^{*}=\widehat{\bm{\gamma}}^{\circ})\to1,
\end{equation}
i.e., the penalized estimator has the oracle property.
\end{theorem}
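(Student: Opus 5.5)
The plan follows the TLP global-optimality argument of \citet{shen2012likelihood} and \citet{xue2012variable}. We compare $Q_n^*$ at the oracle $\widehat{\bm{\gamma}}^{\circ}$ with its value at any competitor whose active set differs from $\{1,\dots,q\}$, and show that, with probability approaching one, every such competitor has strictly larger objective. For any $\bm{\gamma}$ with active set $A=\{j:\gamma_j\neq0\}$, split into two disjoint events.
\begin{itemize}
\item[(a)] Missing signal: $A\not\supseteq\{1,\dots,q\}$, or some relevant block has $\sqrt{\gamma_j^\top\mathbf{W}_j\gamma_j}<\tau_n$.
\item[(b)] Spurious selection: $A\supseteq\{1,\dots,q\}$, all relevant blocks lie in the flat region of the TLP, but $A$ contains irrelevant indices.
\end{itemize}
Since $\tau_n\to0$, the oracle's relevant blocks have $\sqrt{\gamma_j^\top\mathbf{W}_j\gamma_j}\asymp\Vert\beta_j\Vert_{L^2(c)}$ bounded away from zero by \cref{Th:consistency}. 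Hence the oracle's penalty equals exactly $q\lambda_n$ with probability approaching one, and $Q_n^*(\widehat{\bm{\gamma}}^{\circ})=\tfrac12 Q_n(\widehat{\bm{\gamma}}^{\circ})+q\lambda_n$.

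\emph{Case (a).} Here the penalty of $\bm{\gamma}$ may be smaller than $q\lambda_n$, but by at most $q\lambda_n\to0$. We lower bound the loss increase: $\tfrac12\{Q_n(\bm{\gamma})-Q_n(\widehat{\bm{\gamma}}^{\circ})\}$ is at least a constant times $\min_{j\le q}\Vert\beta_j\Vert_{L^2(c)}^2$, minus stochastic terms. For this we need a uniform restricted-eigenvalue bound on $\mathbf{R}^\top\mathbf{R}$ over sparse supports. This comes from the nonsingularity argument behind \cref{Th:consistency}, made uniform over supports of size $|A|$ via a union bound with $\binom{p}{|A|}$ terms. Because $\lambda_n\to0$, the penalty saving is negligible, and the difference is positive with probability approaching one.

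\emph{Case (b).} Let $A=\{1,\dots,q\}\cup S$ with $S$ a nonempty set of irrelevant indices and $|S|=s$. Then the penalty exceeds the oracle's by at least $s\lambda_n$ plus the $\ell_1$ portion from blocks with small norm. This is controlled by a refinement in which each small block contributes $\lambda_n\Vert\cdot\Vert/\tau_n$, and that piece is handled exactly as in \cref{Th:oracle_local}. The loss can decrease by at most the reduction from projecting onto the larger column space $\mathbf{R}_A$. This reduction is the squared norm of the projection of the oracle residual onto the span of $\mathbf{R}_S$ orthogonalized against the relevant blocks.

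The oracle residual splits into three parts:
\begin{itemize}
\item a continuous martingale part $\widetilde\sigma\,\Delta\widetilde W$, projected onto an $sK_n$-dimensional space and conditionally roughly $\chi^2$ with scale $\Delta_n$;
\item spline bias, controlled by \cref{lemma:approximationError};
\item residual jumps and truncation errors.
\end{itemize}
A Bernstein/Hanson--Wright bound for martingale quadratic forms, with a union bound over the $\binom{p}{s}$ supports, should give a reduction of at most $C\Delta_n s(K_n+\log p)$ uniformly in $s$. Taking logs, $\log\binom{p}{s}\le s\log p$. The spline-bias and jump/truncation contributions are of order $\Delta_n^{(1-\varpi(2-r))\wedge\cdots}$, times the relevant factors. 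The rate condition $\Delta_n^{(1-\varpi(2-r))\vee2\varpi r}(K_n+\log p)\to\infty$ is exactly what makes these remainders dominated by $\Delta_n(K_n+\log p)$ per added block. Then $\lambda_n/\{\Delta_n(K_n+\log p)\}\to\infty$ gives $s\lambda_n>$ loss reduction, uniformly in $s\ge1$. This implies $Q_n^*(\bm{\gamma})>Q_n^*(\widehat{\bm{\gamma}}^{\circ})$. Within the oracle support the oracle is the unconstrained least-squares minimizer with constant penalty, so the global minimizer equals $\widehat{\bm{\gamma}}^{\circ}$.

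\textbf{Main obstacle.} The hardest step is the uniform (over all $s$ and all $\binom{p}{s}$ supports) bound on the loss reduction in case (b). It requires:
\begin{itemize}
\item a sparse-eigenvalue lower bound for $\mathbf{R}_A^\top\mathbf{R}_A/\Delta_n$ uniformly in $|A|$ up to the relevant sparsity;
\item an exponential inequality for projected martingale increments that survives the truncation indicators and the non-Gaussian jump remainders.
\end{itemize}
I would first establish, by localization, that $\widetilde\sigma$, $c$ and $c^{-1}$ are bounded. I would then apply a Freedman-type inequality to $\Vert P_S\,\widetilde{\mathbf{Z}}\Vert^2$ conditional on $\mathcal{F}$-measurable design approximations, with predictable compensators absorbed as in \cref{Th:oracle_local}. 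To bound the support size a priori, I would compare with the zero-penalty value, which gives $|A|\lambda_n\lesssim Q_n(0)=O_p(1)$.
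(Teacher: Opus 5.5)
Your case split differs from the paper's. The paper classifies competitors by the thresholded set $A(\bm{\gamma})=\{j:\sqrt{\gamma_{j}^{\top}\mathbf{W}_{j}\gamma_{j}}\geq\tau_{n}\}$ and writes $\min_{A(\bm{\gamma})=A}Q_{n}^{*}=\frac{1}{2}\min_{A(\bm{\gamma})=A}Q_{n}+\lambda_{n}|A|$. It then treats three cases:
\begin{itemize}
\item[(i)] overfitting ($\Gamma_{1}$), via a one-block Hanson--Wright bound $\max_{j>q}\Delta Q_{n,j}\leq K\Delta_{n}(K_{n}+\log p)$ (union over $p$ blocks only) combined with the sequential Frisch--Waugh--Lovell \cref{lemma:extraSSR};
\item[(ii)] underfitting ($\Gamma_{2}$), via the fixed-$q$ Gram matrix and $\lambda_{n}\to0$;
\item[(iii)] mixed models ($\Gamma_{3}$), by reduction to the overfitting bound for $A'''$.
\end{itemize}
Your direct union bound over the $\binom{p}{s}$ supports is the sturdier way to get a loss reduction of order $\Delta_{n}s(K_{n}+\log p)$. \cref{lemma:extraSSR} rests on the inclusion $\mathrm{span}(\mathbf{M}_{\overline{A}_{k-1}}\mathbf{R}_{j_{k}}^{*})\subseteq\mathrm{span}(\mathbf{M}_{A^{\circ}}\mathbf{R}_{j_{k}}^{*})$, which fails in general. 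In the conditional Hanson--Wright argument, only the rank $sK_{n}$ and the unit operator norm of the residualized projection enter. So, argued that way, case (b) needs no sparse-eigenvalue bound. Your treatment of the spline-bias and jump/truncation remainders matches the paper's.

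\emph{Gap in case (a).} The genuine gap is here. You need the loss increase from dropping relevant blocks $A''$ to survive the simultaneous inclusion of spurious blocks $S$. That is, you need a lower bound on $\lambda_{\text{min}}\bigl(K_{n}\mathbf{R}_{A''}^{\top}\mathbf{M}_{A'\cup S}\mathbf{R}_{A''}\bigr)$ that is uniform in $S$. No hypothesis of the theorem provides this, and it does not follow the way you indicate. The Gram-matrix argument behind \cref{Th:consistency} (LLN for truncated functionals plus Lemma A.2 of \citet{huang2004polynomial}), as used in the paper, gives only an $o_{p}(1)$ statement for a fixed set of regressors. It has no tail bound that could be summed over $\binom{p}{|A|}$ supports. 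You would need a matrix Freedman/Bernstein inequality for $K_{n}\mathbf{R}_{A}^{\top}\mathbf{R}_{A}$, plus sparse-eigenvalue lower bounds for $c$ that are uniform in $p$, and these bring in conditions absent from the statement.

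You can shrink what is needed. If the number $s$ of above-threshold spurious blocks exceeds $|A''|$, the penalty surplus $\lambda_{n}(s-|A''|)\geq\lambda_{n}s/(q+1)$ already beats the overfitting reduction $Cs\Delta_{n}(K_{n}+\log p)$. So your $O_{p}(1/\lambda_{n})$ support bound is unnecessary, and the restricted-eigenvalue bound is required only for swaps with $s\leq|A''|\leq q$, i.e.\ over supports of at most $2q$ blocks. This swap case is exactly where the paper's $\Gamma_{3}$ bound $|A'''|\bigl(\lambda_{n}-\frac{K}{2}\Delta_{n}(K_{n}+\log p)\bigr)-\lambda_{n}|A''|$ fails to be positive, since it is positive only if $|A'''|>|A''|$. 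So you cannot borrow the ingredient from the paper; you have to assume it or prove it.

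\emph{Gap from sub-threshold blocks.} These are not under control. With $A=\{j:\gamma_{j}\neq0\}$, the comparison with $Q_{n}^{*}(\bm{0})$ bounds only the number of blocks with $\sqrt{\gamma_{j}^{\top}\mathbf{W}_{j}\gamma_{j}}\geq\tau_{n}$. A smaller nonzero spurious block costs only $(\lambda_{n}/\tau_{n})\Vert\mathbf{W}_{j}^{1/2}\gamma_{j}\Vert<\lambda_{n}$. So ``at least $s\lambda_{n}$'' fails if $s$ counts such blocks, while your projection bound on the loss reduction does count them.

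Their $\ell_{1}$ price has to be weighed against their score at the residual after fitting $A^{\circ}$ and the above-threshold spurious set $S_{1}$, uniformly in $S_{1}$. That residual is $\mathbf{M}_{A^{\circ}\cup S_{1}}\mathbf{Y}=\mathbf{M}_{A^{\circ}}\mathbf{Y}-\mathbf{P}_{S_{1}}\mathbf{M}_{A^{\circ}}\mathbf{Y}$, where $\mathbf{P}_{S_{1}}$ is the projection onto $\mathrm{span}(\mathbf{M}_{A^{\circ}}\mathbf{R}_{S_{1}})$. \cref{Th:oracle_local} controls only the score of $\mathbf{M}_{A^{\circ}}\mathbf{Y}$. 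The correction has norm up to order $\sqrt{\Delta_{n}|S_{1}|(K_{n}+\log p)}$, and the stated conditions do not make this $o(\lambda_{n}/\tau_{n})$. So ``handled exactly as in \cref{Th:oracle_local}'' does not go through as written. The paper sidesteps this only by identifying $\min_{A(\bm{\gamma})=A}Q_{n}(\bm{\gamma})$ with $\mathbf{Y}^{\top}\mathbf{M}_{A}\mathbf{Y}$, i.e.\ by implicitly setting sub-threshold blocks to zero.
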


\cref{Th:oracle_global} follows from the fact that any misspecified model—whether it omits relevant groups, includes redundant ones, or both—attains a strictly larger penalized objective than the oracle model with probability approaching one. The additional conditions are calibrated to control two types of errors in opposite directions. On the one hand, letting $\lambda_{n}\to0$ ensures the penalty does not over-reward parsimony, and therefore the increase in squared loss from omitting relevant groups dominates any penalty reduction. On the other hand, if instead the model includes redundant groups, it may achieve a decrease in squared loss through overfitting, but the second condition in \cref{Eq:penaltyCondition2} guarantees that the penalty is sufficiently strong to outweigh that spurious improvement. In short, the conditions balance fidelity and parsimony: omissions are ruled out because loss inflation dominates when $\lambda_{n}$ is small, and overfitted models are also ruled out because the penalty does not vanish too quickly. Moreover, the second condition in \cref{Eq:penaltyCondition2} implies $\Delta_{n}(K_{n}+\log p)\to0$, so the dimension $p$ can diverge only at a rate slower than exponential relative to $n$. Extending the result to accommodate more rapidly growing model complexity is left for future work.

%\begin{proof}
%Let $c_j(\bm\gamma):=-\tfrac{1}{n}R_j^\top\!\big(Y-R\bm\gamma\big)$. Analogously to \citet{xue2012variable}, any local minimiser $\widehat{\bm\gamma}$ of $R_n(\bm\gamma)$
%must satisfy: for each $j=1,\ldots,p$,
%\begin{itemize}
%\item[(i)] if $\widehat{\bm\gamma}_j\neq 0$, then
%$$
%c_j(\widehat{\bm\gamma})\;+\;
%\lambda_n\,\rho_n'\!\Big(\sqrt{\widehat{\bm\gamma}_j^\top W_j\,\widehat{\bm\gamma}_j}\Big)\,
%\frac{W_j\,\widehat{\bm\gamma}_j}{\sqrt{\widehat{\bm\gamma}_j^\top W_j\,\widehat{\bm\gamma}_j}}
%\;=\;0.
%$$
%\item[(ii)]  If $\widehat{\bm\gamma}_j=0$, then
%$$
%c_j(\widehat{\bm\gamma})^\top\,W_j^{-1}\,c_j(\widehat{\bm\gamma})
%\;\le\; \big(\lambda_n\,\kappa_n\big)^2,
%\qquad \kappa_n:=\rho_n'(0+).
%$$
%\end{itemize}
%
%\end{proof}

%We consider a decomposition of the $p$-dimensional covariate process and the coefficient process, and allows $p\to\infty$. Without loss of generality, we assume there exists an integer $q\geq1$, such that $0<\mathbb{E}[|\beta_{t}|]<\infty$ for $l=1,\dots,q$, and $\mathbb{E}[\beta_{t}]=0$ for $l=q+1,\dots,p$. 
%
%
%\begin{equation}\label{Eq:objectivePenality}
%Q_{n}(\bm{\gamma})=\frac{1}{2}(\mathbf{Y}-\mathbf{R}\bm{\gamma})^{2}+\frac{\lambda_{n}}{n}\sum_{l=1}^{p}\rho(\Vert\gamma_{j}\Vert ... ),
%\end{equation}
%where $\gamma_{j}=(\gamma_{j}^{(1)},\dots,\gamma_{j}^{(K_{n})})$ and ... for each $l=1,\dots,p$.

\subsection{Practical Implementation}
\label{Sec:implementation}

In practice, the nonconvex objective  $Q_{n}^{*}(\bm{\gamma})$ is minimized by the difference-of-convex (DC) algorithm in the spirit of \citet{shen2012likelihood} and \citet{xue2012variable}. Specifically, the TLP function can be written as $\rho_{n}(x)=\rho_{1,n}(x)-\rho_{2,n}(x)$, where
%In this section, we discuss the nonconvex optimization algorithm and the tuning parameter selection used in practice. We minimize the nonconvex objective  $Q_{n}^{*}(\bm{\gamma})$ by the difference-of-convex (DC) algorithm in the spirit of \citet{shen2012likelihood} and \citet{xue2012variable}. Recall that the TLP function can be written as $\rho_{n}(x)=\rho_{1,n}(x)-\rho_{2,n}(x)$, where
\begin{equation}
\rho_{1,n}(x)=\frac{|x|}{\tau_{n}},\qquad\rho_{2,n}(x)=\max\left(\frac{|x|}{\tau_{n}}-1,\,0\right).
\end{equation}
This allows us to decompose $Q_{n}^{*}(\bm{\gamma})=Q_{1,n}^{*}(\bm{\gamma})-Q_{2,n}^{*}(\bm{\gamma})$, where
\begin{equation}
Q_{1,n}^{*}(\bm{\gamma})=\frac{1}{2}Q_{n}(\bm{\gamma})+\frac{\lambda_{n}}{\tau_{n}}\sum_{j=1}^{p}\sqrt{\gamma_{j}^{\top}\mathbf{W}_{j}\gamma_{j}},\qquad
Q_{2,n}^{*}(\bm{\gamma})=\lambda_{n}\sum_{j=1}^{p}\rho_{2,n}\left(\sqrt{\gamma_{j}^{\top}\mathbf{W}_{j}\gamma_{j}}\right).
\end{equation}
Both $Q_{1,n}^{*}(\bm{\gamma})$ and $Q_{2,n}^{*}(\bm{\gamma})$ are convex functions and, in particular, $Q_{1,n}^{*}(\bm{\gamma})$ is a standard group LASSO objective with effective penalty level $\lambda_{n}/\tau_{n}$. We initialize the algorithm at $\widehat{\bm{\gamma}}^{(0)}=0$, since the DC procedure does not require a consistent initial estimator. At iteration $m\geq1$, we keep $Q_{1,n}^{*}(\bm{\gamma})$ unchanged but approximate $Q_{2,n}^{*}(\bm{\gamma})$ by its affine minorant at $\widehat{\bm{\gamma}}^{(m-1)}$. Because $\rho_{2,n}(x)$ is piecewise linear with slope $0$ when $0\leq x<\tau_{n}$ and slope $1/\tau_{n}$ when $x>\tau_{n}$, this leads to the convex optimization for $\widehat{\bm{\gamma}}^{(m)}$:
\begin{equation}
\widehat{\bm{\gamma}}^{(m)}=\argmin_{\bm{\gamma}}\left\{\frac{1}{2}Q_{n}(\bm{\gamma})+\sum_{j=1}^{p}\lambda_{j,n}^{(m)}\sqrt{\gamma_{j}^{\top}\mathbf{W}_{j}\gamma_{j}}\right\},\quad
\text{where }\lambda_{j,n}^{(m)}=\frac{\lambda_{n}}{\tau_{n}}\mathbbm{1}_{\left\{\sqrt{(\widehat{\gamma}_{j}^{(m-1)})^{\top}\mathbf{W}_{j}\widehat{\gamma}_{j}^{(m-1)}}\leq\tau_{n}\right\}},
\end{equation}
so that blocks with $\sqrt{(\widehat{\gamma}_{j}^{(m-1)})^{\top}\mathbf{W}_{j}\widehat{\gamma}_{j}^{(m-1)}}\leq\tau_{n}$ are penalized with weight $\lambda_{n}/\tau_{n}$, whereas blocks with $\sqrt{(\widehat{\gamma}_{j}^{(m-1)})^{\top}\mathbf{W}_{j}\widehat{\gamma}_{j}^{(m-1)}}>\tau_{n}$ remain unpenalized. Therefore, the DC algorithm generates a sequence $(\widehat{\bm{\gamma}}^{(m)})_{m\ge0}$, where each update $\widehat{\bm{\gamma}}^{(m)}$ is from a standard group LASSO with group-specific tuning parameters $(\lambda_{j,n}^{(m)})_{j=1}^{p}$. In our implementation, this group LASSO is solved by the standard proximal gradient method. 

We consider a data-driven approach to selecting the tuning parameters. For the threshold $\tau_{n}$, the standard KKT conditions suggest that it should be small enough to distinguish whether $\sqrt{\gamma_{j}^{\top}\mathbf{W}_{j}\gamma_{j}}$ for each $j$ is essentially zero or not. Intuitively, $\gamma_{j}^{\top}\mathbf{W}_{j}\gamma_{j}$ measures the contribution of regressor $j$ to the integrated variance of $Y$. In our implementation, we set $\tau_{n}$ to be a small multiple, e.g., $\alpha_{\tau}=0.05$ or 0.01, of the square root of the median realized volatility (MedRV, \citealp{andersen2012jump}) of $Y$. 

We employ cubic B-splines throughout the simulation and empirical applications in this paper. Conditional on the above choice of $\tau_{n}$, we select the number of B-spline basis functions $K_{n}$ and the TLP parameter $\lambda_{n}$ (or, equivalently, the effective penalty level $\lambda_{n}/\tau_{n}$) by $K$-fold cross-validation over a reasonable grid of candidate values. The cross-validation criterion is the out-of-sample mean squared prediction error for the truncated returns of $Y$. Furthermore, to improve interpretability in finite samples, we can apply the ``one-standard-error'' rule with cross-validation, i.e., select the largest effective penalty whose cross-validation criterion is no more than one standard error above the minimum to obtain the most parsimonious model (see, e.g., Chapter 7.10, \citealp{hastie2009elements}). We implement this rule in our empirical applications in \cref{Sec:empirical}.

\section{Monte Carlo Simulations}
\label{Sec:simulation}

This section contains a Monte Carlo study to evaluate both the estimation and model selection performance of our estimators, which correspond to the theoretical results developed in \cref{Sec:limitTheory,Sec:selection}.

\subsection{Simulation Design}
\label{Sec:simulation_design}

We assess the finite‐sample performance of our estimator following a simulation design similar to that in \cite{ait2020high}. We consider a continuous-time factor model in \cref{Eq:regression} for the log-price $Y_{t}$ of a single asset, with $q=3$ relevant factors out of $p$ factors in total. The covariate process $X_t = (X_{1,t},\dots,X_{p,t})^{\top}$ evolves as
\begin{equation}
\begin{pmatrix}
dX_{1,t} \\ dX_{2,t} \\ \vdots \\dX_{p,t}
\end{pmatrix}=
\begin{pmatrix}
b_{1} \\ b_{2} \\ \vdots \\b_{p}
\end{pmatrix}dt + 
\begin{pmatrix}
\sigma_{1,t} & 0 & \dots & 0\\
0 & \sigma_{2,t} & \dots & 0\\
\vdots & \vdots & \ddots & \vdots\\
0 & 0 & \dots & \sigma_{p,t}\\
\end{pmatrix}
\begin{pmatrix}
1 & \rho_{12} & \dots & \rho_{1p}\\
\rho_{12} & 1 & \dots & \rho_{2p}\\
\vdots & \vdots & \ddots & \vdots\\
\rho_{1p} & \rho_{2p} & \dots & 1\\
\end{pmatrix}
\begin{pmatrix}
dW_{1,t} \\ dW_{2,t} \\ \vdots \\dW_{p,t}
\end{pmatrix} + 
\begin{pmatrix}
J_{1,t} \\ J_{2,t} \\ \vdots \\J_{p,t}
\end{pmatrix}dN_{t},
\end{equation}
where $b_{j}$ is the drift of factor $j$, $(W_{1,t},\dots,W_{p,t})^{\top}$ is a $p$-dimensional standard Brownian motion, the symmetric matrix with ones on the diagonal and off-diagonal entries $\rho_{jk}$ controls the dependence across factors, $J_{j,t}$ is the jump size of factor $j$ at time $t$, and $N_t$ is a Poisson process with annualized intensity $\lambda$. The factor volatilities follow the Cox-Ingersoll-Ross (CIR) dynamics:
\begin{equation}
d\sigma_{j,t}^{2}=\kappa'_{j}(\alpha'_{j}-\sigma_{j,t}^{2})dt+v'_{j}\sqrt{\sigma_{j,t}^{2}}dW'_{j,t}+J'_{j,t}dN_{t}, \qquad j=1,\dots,p.
\end{equation}
where $\kappa'_j$ is the mean-reversion speed, $\alpha'_j$ is the long-run volatility level, $v'_j$ is the volatility-of-volatility parameter, and $(W'_{j,t})_{j=1}^{p}$ are standard Brownian motions independent of $(W_t, N_t)$. For each $j = 1,\dots,p$, the factor jump sizes $J_{j,t}$ are i.i.d.~across jump times and follow a double-exponential distribution:
\begin{equation}
J_{j,t}\sim
\begin{cases}
\textcolor{white}{+}\exp(g_{j}^{+}) & \text{with probability }\omega_{j}\\
-\exp(g_{j}^{-}) & \text{with probability }1-\omega_{j}
\end{cases},
\end{equation}
where $\omega_{j}$ is the mixture probability. The jump sizes in the volatility process, $J'_{j,t}$, are exponentially distributed with mean $g'_{j}$. The Poisson process $N_t$ is common to all factors and their volatilities, so jumps occur simultaneously across components.

The idiosyncratic component $Z_t$ is simulated as a jump-diffusion,
\begin{equation}
dZ_{t}=\widetilde{b}_{t}dt+\widetilde{\sigma}_{t}d\widetilde{W}_{t}+\widetilde{J}_{t}d\widetilde{N}_{t},
\end{equation}
where $\widetilde{W}_{t}$ is a standard Brownian motion independent of $(W_t, W'_t, N_t)$, and the idiosyncratic jump sizes $\widetilde{J}_{t}$ follow another double-exponential distribution:
\begin{equation}
\widetilde{J}_{t}\sim
\begin{cases}
\textcolor{white}{+}\exp(\widetilde{g}^{+}) & \text{with probability }\widetilde{\omega}\\
-\exp(\widetilde{g}^{-}) & \text{with probability }1-\widetilde{\omega}
\end{cases},
\end{equation}
while $\widetilde{N}_{t}$ is a Poisson process with the same annualized intensity $\lambda$ and is independent of $N_{t}$.

Finally, we simulate the time-varying coefficients $(\beta_{1,t},\beta_{2,t},\beta_{3,t},0,\dots,0)^{\top}$ following an Ornstein-Uhlenbeck process:
\begin{equation}\label{Eq:OU}
d\beta_{j,t}=\kappa_{j}(\alpha_{j}-\beta_{j,t})dt + v_{j}dB_{j,t}, \qquad j = 1,2,3,
\end{equation}
where $(B_{1,t},B_{2,t},B_{3,t})^{\top}$ is a three-dimensional standard Brownian motion independent of all previously defined sources of randomness. The response process $Y_{t}$ is then obtained as
\begin{equation}
dY_{t}=\sum_{j=1}^{3}\beta_{j,t}\,dX_{j,t} + dZ_{t}.
\end{equation}
The parameters are set as follows: 
\begin{itemize}
\item[(i)] Covariates: For $j=1,2,3$, $(b_{j},\sigma_{j,0}^{2},\kappa'_{j},\alpha'_{j},v'_{j})=(0.05,0.10,5,0.06,0.35)$. For $j=4,\dots,p$, we draw $b_{j}\sim U[0.03,0.07]$, $\sigma_{j,0}^{2}\sim U[0.06,0.15]$, $\kappa'_{j}\sim U[3,5]$, $\alpha'_{j}\sim U[0.04,0.09]$, and $v'_{j}\sim U[0.3,0.4]$, where $U[a,b]$ denotes the uniform distribution on $[a,b]$. The correlation matrix is Toeplitz with $\rho_{jj'}=0.15^{|j-j'|}$ for all $j\neq j'$. For jumps in both factor prices and volatilities, $(\omega_{j}, g_{j}^{+}, g_{j}^{-}, \lambda)=(0.5 , 7\sigma_{j,0}\sqrt{\Delta_{n}}, -7\sigma_{j,0}\sqrt{\Delta_{n}}, 67)$, and $g'_{j}\sim U[0.004,0.005]$ for all $j=1,\dots,p$.
\item[(ii)] Coefficients: For $j=1,2,3$, $\kappa_{j}=2$, $(\alpha_{1},\alpha_{2},\alpha_{3})=(0.7,-0.5,0.3)$, $v_{j}=0.1$, and $\beta_{j,0}=\alpha_{j}$. 
\item[(iii)] Idiosyncratic component: $(\widetilde b,\widetilde\sigma,\widetilde\omega,\widetilde g^{+},\widetilde g^{-},\lambda)=(0,0.35,0.5,14\widetilde\sigma\sqrt{\Delta_{n}}, -14\widetilde\sigma\sqrt{\Delta_{n}}, 67)$.
\end{itemize}

We simulate 5-minute observations for both $Y$ and $X$ over a month (21 days) for each replication. %\footnote{The simulation results for 10-minute observations are reported in Appendix \ref{AP:Supp-simulation}. } 
\cref{Fig:simulation-example_Paths} illustrates a simulated path of $Y$ together with the three relevant factors, with all initial values set to zero. Most of our parameter choices are calibrated to be consistent with \citet{ait2020high} in their benchmark case with $p=q=3$. The only deviation is that we set relatively high long-run means for the coefficients of relevant factors in order to avoid a ``weak factor'' configuration, which would make true signals difficult to distinguish from noise and thus complicate the finite-sample assessment of model selection performance. Under this calibration, the three relevant factors contribute approximately 18.7\%, 7.6\% and 3.7\%, respectively, to the quadratic variation of $Y$.

\begin{figure}[!ht]
\centering
\addtolength{\leftskip} {-2cm}
\addtolength{\rightskip}{-2cm}
\includegraphics[width=\textwidth]{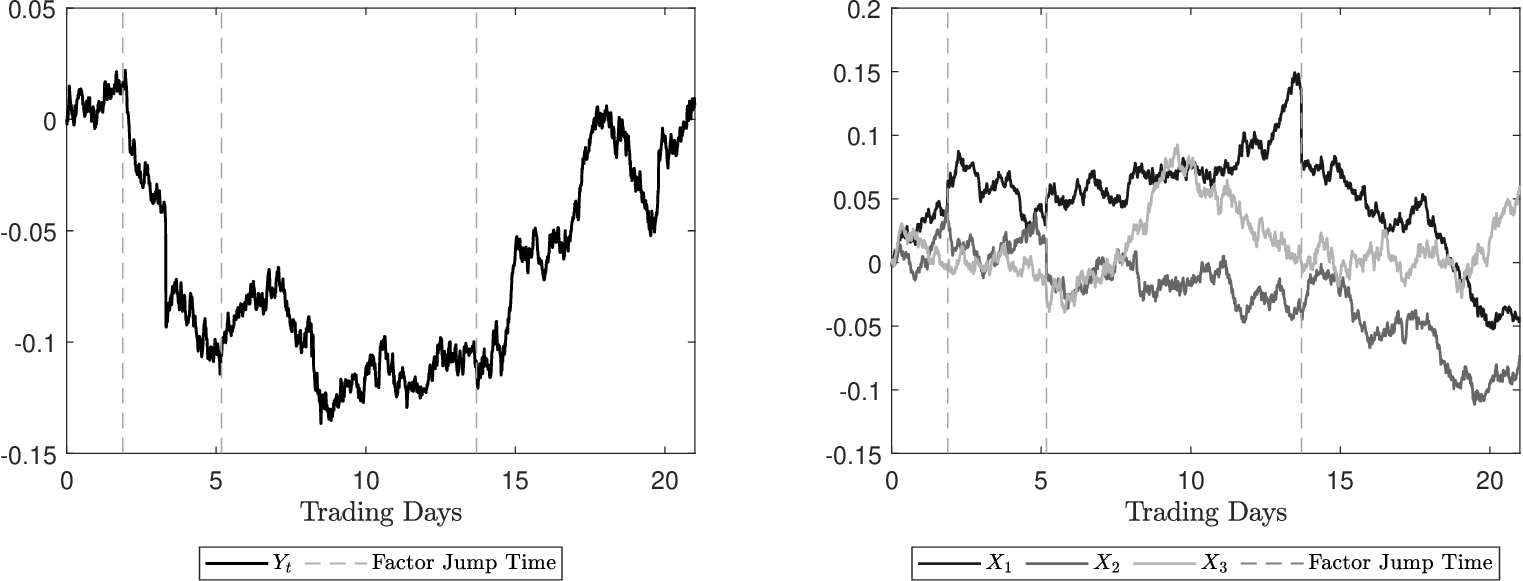}
\caption{Simulated trajectory of log-prices from the continuous-time three-factor model.}
\label{Fig:simulation-example_Paths}
\end{figure}

Following the tuning parameter selection in \cref{Sec:implementation}, for each Monte Carlo experiment, we use the first 50 simulated paths to perform 5-fold cross-validation, and then fix the parameters at the median selected values for all replications. In \cref{Sec:simulation_estimation,Sec:simulation_selection}, we report the finite-sample performance of the $\widehat{I\beta}$ estimator for the relevant factors and the model selection results, for $p$ varying from 3 to 500, which covers both the three-factor benchmark case of \citet{ait2020high} and a high-dimensional ``factor zoo'' setting.\footnote{We also consider designs with more pronounced cross-factor dependence, which is common in high-dimensional settings; see Appendix \ref{AP:Supp-simulation} for further discussion.}

%To comprehensively verify the theoretical results and further investigate the practical reliability of our estimation method, we also consider the cases with more pronounced cross-factor correlation, which is common in high-dimensional settings; see Appendix \ref{AP:Supp-simulation} for further discussion.

\subsection{Coefficient Estimation}
\label{Sec:simulation_estimation}

\cref{Table:simulation-IBeta} reports the sample bias, standard deviation and root mean square error (RMSE) of the $\widehat{I\beta}$ estimators for relevant factors.  Panel A shows that, in the three-factor benchmark case $p=q=3$, the unpenalized spline-based estimator delivers very accurate estimates for $I\beta$, with small bias and low RMSE, and performs comparably to the estimators of \citet{ait2020high}. The penalized spline-based estimator yields nearly the same results as the unpenalized one, since all three factors are truly relevant and, with our data-driven truncation threshold $\tau_{n}$, the TLP is effectively inactive in this case. Overall, in this low-dimensional case the spline-based estimator matches the performance of the local OLS benchmark, in line with our asymptotic results for fixed $p=q$.

\begin{table}%[h!]
	\centering
	\onehalfspacing
	\addtolength{\leftskip} {-2cm}
	\addtolength{\rightskip}{-2cm}
	\setlength{\tabcolsep}{6pt}
	\begin{threeparttable}
		\caption{Simulation results for model estimation}
		\scriptsize
		\begin{tabular}{lllrrrrrrrrrrrr}
			\hline\hline
			\multicolumn{15}{l}{Panel A: $p=3$}\\	
			& & & & \multicolumn{3}{c}{$\widehat{I\beta}_{1,T}$} & & \multicolumn{3}{c}{$\widehat{I\beta}_{2,T}$} & & \multicolumn{3}{c}{$\widehat{I\beta}_{3,T}$}\\
			\cline{5-7}\cline{9-11}\cline{13-15}
			Estimator & & Tuning & & Bias & Stdev & RMSE & & Bias & Stdev & RMSE & & Bias & Stdev & RMSE\\
			\hline
			Spline      & &                      & & -0.249 & 2.829 & 2.839 & &  0.161 & 2.527 & 2.531 & & -0.112 & 2.345 & 2.347 \\
			Spline TLP  & & $\alpha_{\tau}=0.05$ & & -0.250 & 2.829 & 2.839 & &  0.162 & 2.527 & 2.531 & & -0.112 & 2.345 & 2.347 \\
			Spline TLP  & & $\alpha_{\tau}=0.01$ & & -0.250 & 2.829 & 2.839 & &  0.162 & 2.527 & 2.531 & & -0.112 & 2.345 & 2.347 \\
			AKX         & & $k_{n}=78$           & & -0.257 & 2.900 & 2.910 & &  0.170 & 2.622 & 2.626 & & -0.103 & 2.447 & 2.448 \\
			AKX         & & $k_{n}=91$           & & -0.279 & 2.920 & 2.931 & &  0.165 & 2.619 & 2.623 & & -0.127 & 2.486 & 2.488 \\
			AKX         & & $k_{n}=117$          & & -0.314 & 2.929 & 2.945 & &  0.177 & 2.631 & 2.636 & & -0.091 & 2.497 & 2.498 \\
			\hline
			\multicolumn{15}{l}{Panel B: $p=10$}\\	
			& & & & \multicolumn{3}{c}{$\widehat{I\beta}_{1,T}$} & & \multicolumn{3}{c}{$\widehat{I\beta}_{2,T}$} & & \multicolumn{3}{c}{$\widehat{I\beta}_{3,T}$}\\
			\cline{5-7}\cline{9-11}\cline{13-15}
			Estimator & & Tuning & & Bias & Stdev & RMSE & & Bias & Stdev & RMSE & & Bias & Stdev & RMSE\\
			\hline
			Spline      & &                      & & -0.369 & 3.085 & 3.105 & &  0.153 & 2.825 & 2.828 & & -0.208 & 2.530 & 2.537 \\
			Spline TLP  & & $\alpha_{\tau}=0.05$ & & -0.378 & 3.050 & 3.072 & &  0.155 & 2.792 & 2.795 & & -0.194 & 2.464 & 2.471 \\
			Spline TLP  & & $\alpha_{\tau}=0.01$ & & -0.378 & 3.050 & 3.072 & &  0.155 & 2.791 & 2.794 & & -0.193 & 2.465 & 2.471 \\
			AKX         & & $k_{n}=78$           & & -0.342 & 3.318 & 3.334 & &  0.138 & 3.040 & 3.042 & & -0.238 & 2.754 & 2.763 \\
			AKX         & & $k_{n}=91$           & & -0.389 & 3.261 & 3.283 & &  0.158 & 3.080 & 3.082 & & -0.207 & 2.716 & 2.723 \\
			AKX         & & $k_{n}=117$          & & -0.368 & 3.246 & 3.265 & &  0.173 & 3.063 & 3.066 & & -0.238 & 2.705 & 2.714 \\
			\hline
			\multicolumn{15}{l}{Panel C: $p=50$}\\	
			& & & & \multicolumn{3}{c}{$\widehat{I\beta}_{1,T}$} & & \multicolumn{3}{c}{$\widehat{I\beta}_{2,T}$} & & \multicolumn{3}{c}{$\widehat{I\beta}_{3,T}$}\\
			\cline{5-7}\cline{9-11}\cline{13-15}
			Estimator & & Tuning & & Bias & Stdev & RMSE & & Bias & Stdev & RMSE & & Bias & Stdev & RMSE\\
			\hline
			Spline      & &                      & & -0.454 & 2.984 & 3.017 & &  0.128 & 3.196 & 3.197 & & -0.160 & 3.014 & 3.016 \\
			Spline TLP  & & $\alpha_{\tau}=0.05$ & & -0.467 & 2.755 & 2.793 & &  0.158 & 3.097 & 3.100 & & -0.331 & 3.200 & 3.215 \\
			Spline TLP  & & $\alpha_{\tau}=0.01$ & & -0.471 & 2.755 & 2.794 & &  0.140 & 3.085 & 3.086 & & -0.224 & 2.885 & 2.893 \\
			AKX         & & $k_{n}=78$           & & -0.293 & 4.870 & 4.877 & &  0.083 & 4.913 & 4.911 & & -0.440 & 4.817 & 4.835 \\
			AKX         & & $k_{n}=91$           & & -0.436 & 4.467 & 4.486 & &  0.164 & 4.558 & 4.559 & & -0.397 & 4.386 & 4.402 \\
			AKX         & & $k_{n}=117$          & & -0.435 & 4.023 & 4.044 & &  0.163 & 4.056 & 4.057 & & -0.154 & 3.873 & 3.874 \\
			\hline
			\multicolumn{15}{l}{Panel D: $p=100$}\\	
			& & & & \multicolumn{3}{c}{$\widehat{I\beta}_{1,T}$} & & \multicolumn{3}{c}{$\widehat{I\beta}_{2,T}$} & & \multicolumn{3}{c}{$\widehat{I\beta}_{3,T}$}\\
			\cline{5-7}\cline{9-11}\cline{13-15}
			Estimator & & Tuning & & Bias & Stdev & RMSE & & Bias & Stdev & RMSE & & Bias & Stdev & RMSE\\
			\hline
			Spline      & &                      & & -0.391 & 3.497 & 3.517 & &  0.424 & 3.128 & 3.155 & & -0.372 & 3.629 & 3.646 \\
			Spline TLP  & & $\alpha_{\tau}=0.05$ & & -0.347 & 3.026 & 3.044 & &  0.417 & 2.709 & 2.739 & & -0.432 & 3.410 & 3.436 \\
			Spline TLP  & & $\alpha_{\tau}=0.01$ & & -0.345 & 3.021 & 3.039 & &  0.414 & 2.713 & 2.743 & & -0.434 & 3.628 & 3.652 \\
			AKX         & & $k_{n}=78$           & &   --   &  --   &  --   & &   --   &  --   &  --   & &   --   &  --   &  --   \\
			AKX         & & $k_{n}=91$           & &   --   &  --   &  --   & &   --   &  --   &  --   & &   --   &  --   &  --   \\
			AKX         & & $k_{n}=117$          & & -0.304 & 7.659 & 7.662 & &  0.340 & 7.643 & 7.646 & &  0.138 & 8.965 & 8.962 \\
			\hline
			\multicolumn{15}{l}{Panel E: $p=500$}\\	
			& & & & \multicolumn{3}{c}{$\widehat{I\beta}_{1,T}$} & & \multicolumn{3}{c}{$\widehat{I\beta}_{2,T}$} & & \multicolumn{3}{c}{$\widehat{I\beta}_{3,T}$}\\
			\cline{5-7}\cline{9-11}\cline{13-15}
			Estimator & & Tuning & & Bias & Stdev & RMSE & & Bias & Stdev & RMSE & & Bias & Stdev & RMSE\\
			\hline
			Spline      & &                      & & -52.971 & 1.479 & 52.992 & & 40.906 & 1.502 & 40.933 & & -22.523 & 1.468 & 22.571 \\
			Spline TLP  & & $\alpha_{\tau}=0.05$ & &  -0.230 & 3.001 & 3.009 & &  0.388 & 3.128 & 3.150 & &  -0.296 & 3.778 & 3.788 \\
			Spline TLP  & & $\alpha_{\tau}=0.01$ & &  -0.237 & 3.003 & 3.011 & &  0.359 & 3.114 & 3.133 & &  -0.057 & 2.961 & 2.960 \\
			AKX         & & $k_{n}=78$           & &   --   &  --   &  --   & &   --   &  --   &  --   & &   --   &  --   &  --   \\
			AKX         & & $k_{n}=91$           & &   --   &  --   &  --   & &   --   &  --   &  --   & &   --   &  --   &  --   \\
			AKX         & & $k_{n}=117$          & &   --   &  --   &  --   & &   --   &  --   &  --   & &   --   &  --   &  --   \\
			\hline\hline
		\end{tabular} 
		\label{Table:simulation-IBeta}
		All these quantities are multiplied by 100. The cutoff level in the TLP function is $\tau_{n}=\alpha_{\tau}\sqrt{\text{MedRV}_{T}}$ for each simulated path of $Y$. The number of B-spline basis functions $K_{n}$ and the effective penalty level $\lambda_{n}/\tau_{n}$ are selected via 5-fold cross-validation. AKX stands for the estimator of \citet{ait2020high}, and $k_{n}$ is the size of local windows. For both the spline-based and AKX estimators, the truncation thresholds for the increments of $Y$ and each component of $X$ are given by $3\Delta_{n}^{0.47}\sqrt{\text{MedRV}_{T}}$.  All results are based on 1,000 Monte Carlo replications.	
	\end{threeparttable}
\end{table}

As the number of candidate factors increases from $p=3$ to 100, the RMSEs of the unpenalized spline-based estimator remain stable and rise only slightly across Panels A–D. Introducing the penalty has very little impact on the estimation error of $\widehat{I\beta}$, which indicates that the TLP regularizes the contribution of irrelevant factors without shrinking the true coefficients of the relevant ones, in line with the main intuition behind TLP and the oracle property in \cref{Th:oracle_global}. In particular, for $p=100$, the \citet{ait2020high} estimator uses a local window size smaller than $p$, which renders the local design matrix ill-conditioned and makes estimation infeasible. By contrast, our spline-based estimator is obtained from a single ``global'' regression of all truncated increments of $Y$ on those of $X$ over $[0,T]$, and thus it remains numerically stable and more robust than the local estimation procedure when $p$ is large.

The high-dimensional case with $p=500$ (Panel E) further highlights the role of regularization. In this case, the unpenalized spline-based estimator breaks down: the biases of $\widehat{I\beta}$ exceed those in the $p=100$ case by more than two orders of magnitude, and the RMSEs are of the same order, which indicates that OLS is no longer reliable when the number of nuisance factors is very large such that the effective number of coefficients in $\bm{\gamma}$, $pK_{n}$, exceeds the sample size $n$. In sharp contrast, the penalized spline-based estimators retain RMSEs of about 0.03 and small biases, comparable to those observed for $p=10$, 50 and 100. In conclusion, the Monte Carlo evidence in \cref{Table:simulation-IBeta} shows that our spline-based estimator matches the efficiency of the local OLS method in small dimensions but is substantially more robust when the cross-section of candidate factors becomes large, and that the specific nonconvex penalty is crucial to maintaining good finite-sample performance when $p$ is very large.

\subsection{Model Selection}
\label{Sec:simulation_selection}

In this section, we examine the finite-sample model selection performance of the penalized spline-based estimator. Consistent with \cref{Sec:selection}, “selection” refers to a factor whose corresponding block of B-spline coefficient estimates $\widehat{\gamma}_{j}=(\widehat{\gamma}_{j}^{(1)},\dots,\widehat{\gamma}_{j}^{(K_n)})^\top$ is nonzero. 

To illustrate the impact of the tuning parameters, we first consider the case $p=10$ and $q=3$. \cref{Fig:simulation-TDR_FDR_10_3} reports the true discovery rate (TDR), defined as the fraction of truly relevant factors that are correctly selected, and the false discovery rate (FDR), defined as the fraction of irrelevant factors that are incorrectly included in the active set, over a two-dimensional grid of tuning parameters, i.e., the number of B-spline basis functions $K_{n}$ and the effective penalty level $\lambda_{n}/\tau_{n}$. For small values of $\lambda_{n}/\tau_{n}$, the penalization is weak and both the TDR and FDR are high. As $\lambda_{n}/\tau_{n}$ increases, the FDR decreases sharply and the procedure becomes increasingly conservative, until, for very large penalties, the TDR starts to deteriorate because some relevant factors are also shrunk out. Hence, the ``good'' region of the grid is characterized by combinations of $(\lambda_{n}/\tau_{n},K_{n})$ for which the TDR is essentially one and the FDR is close to zero. For this example, the values of $K_{n}=4$ and $\lambda_{n}/\tau_{n}=0.07$ selected by 5-fold cross-validation fall into this region. %Moreover, the cross-validated choice of $K_{n}$ is approximately of the order $n^{1/5}$ for cubic splines, which is consistent with some rule-of-thumb choices for the B-spline basis dimension in the literature of varying-coefficient models (see, e.g., \citealp{huang2004polynomial,zhang2015varying}, albeit in some different contexts).

\begin{figure}[!ht]
\centering
\addtolength{\leftskip} {-2cm}
\addtolength{\rightskip}{-2cm}
\includegraphics[width=\textwidth]{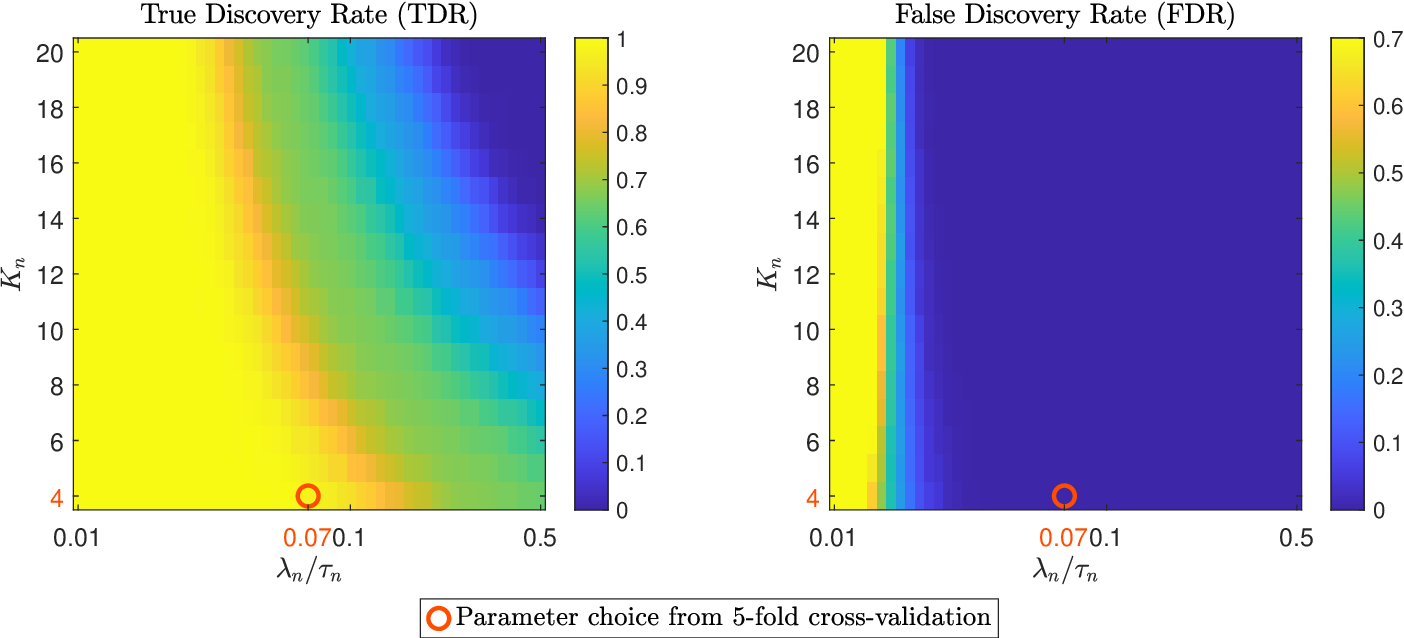}
\caption{True discovery rates (TDR, fraction of relevant factors that are correctly selected) and false discovery rates (FDR, fraction of irrelevant factors that are wrongly selected) for $p=10$ regressors with $q=3$ relevant ones, across different combinations of (i) the number of B-spline basis functions $K_{n}$, and (ii) the effective penalty level $\lambda_{n}/\tau_{n}$. The cutoff $\tau_{n}$ in the TLP function is $\tau_{n}=0.05\sqrt{\text{MedRV}_{T}}$ for each simulated path of $Y$, and each rate is computed from 1,000 Monte Carlo replications. }
\label{Fig:simulation-TDR_FDR_10_3}
\end{figure}

\cref{Fig:simulation-example_10_3} reports the empirical selection frequencies of each factor in the case with $p=10$ and $q=3$, with two choices of the truncation constant $\alpha_{\tau}=0.05$ and 0.01, and corresponding cross-validated $K_{n}$ and $\lambda_{n}/\tau_{n}$. The three relevant factors are selected in essentially all replications, while the irrelevant factors are selected only rarely when $\alpha_{\tau}=0.05$ and nearly never when $\alpha_{\tau}=0.01$. This pattern is fully in line with our model selection consistency in \cref{Th:oracle_local,Th:oracle_global}.

\begin{figure}[!ht]
\centering
\addtolength{\leftskip} {-2cm}
\addtolength{\rightskip}{-2cm}
\includegraphics[width=\textwidth]{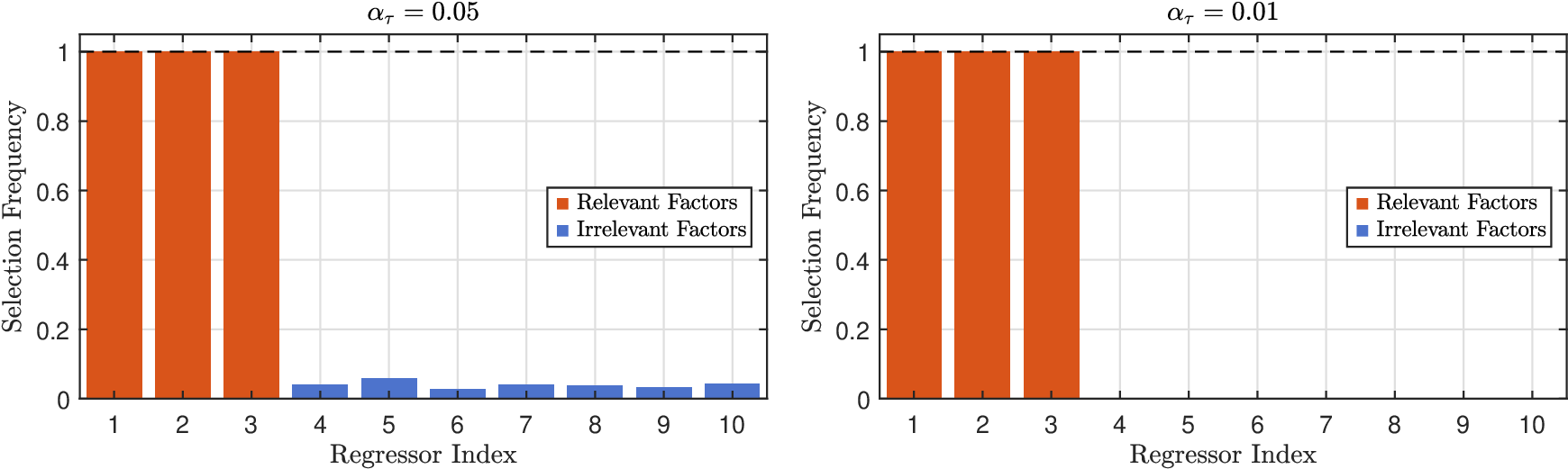}
\caption{Model selection results for $p=10$ regressors with $q=3$ relevant ones. The cutoff level in the TLP function is $\tau_{n}=\alpha_{\tau}\sqrt{\text{MedRV}_{T}}$ for each simulated path of $Y$. The number of B-spline basis functions $K_{n}$ and the effective penalty level $\lambda_{n}/\tau_{n}$ are selected via 5-fold cross-validation. All results are based on 1,000 Monte Carlo replications.}
\label{Fig:simulation-example_10_3}
\end{figure}

\cref{Table:simulation-selection} summarizes the model selection results across all cases considered in the previous section, which reports the average selection frequencies for relevant and irrelevant factors, and the frequencies of exactly correct model specification, i.e., all three relevant factors selected and all irrelevant ones excluded. For all cases and both values of $\alpha_{\tau}$, the (near-)unity selection frequencies for the relevant factors and the near-zero frequencies for the irrelevant ones, jointly with the estimation results in \cref{Table:simulation-IBeta}, indicate that the proposed penalized spline-based estimator simultaneously delivers accurate coefficient estimation and reliable recovery of sparse factor structure, which provides a direct finite-sample illustration of the oracle property established in \cref{Th:oracle_global}.

\begin{table}[h!]
	\centering
	\onehalfspacing
	\addtolength{\leftskip} {-2cm}
	\addtolength{\rightskip}{-2cm}
	\setlength{\tabcolsep}{10pt}
	\begin{threeparttable}
		\caption{Simulation results for model selection}
		\scriptsize 
		\begin{tabularx}{0.9\textwidth}{Xc *{3}{Y} c *{3}{Y}}
			\hline\hline
			& & \multicolumn{3}{c}{$\alpha_{\tau}=0.05$} & & \multicolumn{3}{c}{$\alpha_{\tau}=0.01$}\\
			\cline{3-5} \cline{7-9}
			& & Relevant & Irrelevant & Correct & & Relevant & Irrelevant & Correct\\
			\hline
			$p=3$   & & 1.000 & --    & 1.000 & & 1.000 & --    & 1.000 \\
			$p=10$  & & 1.000 & 0.039 & 0.759 & & 1.000 & 0.000 & 0.998\\
			$p=50$  & & 1.000 & 0.009 & 0.666  & & 1.000 & 0.000 & 1.000\\
			$p=100$ & & 1.000 & 0.083 & 0.006  & & 1.000 & 0.000 & 1.000\\
			$p=500$ & & 1.000 & 0.006 & 0.063 & & 1.000 & 0.000 & 0.952\\
			\hline\hline
		\end{tabularx} 
		\label{Table:simulation-selection}
		Average selection frequencies for the relevant and irrelevant factors, and frequencies of correct model specification. The cutoff level in the TLP function is $\tau_{n}=\alpha_{\tau}\sqrt{\text{MedRV}_{T}}$ for each simulated path of $Y$. The number of B-spline basis functions $K_{n}$ and the effective penalty level $\lambda_{n}/\tau_{n}$ are selected via 5-fold cross-validation. All results are based on 1,000 Monte Carlo replications.
	\end{threeparttable}
\end{table}

\section{Empirical Analysis}
\label{Sec:empirical}

In this section, we apply the methodology developed in this paper to a large number of high-frequency factors and anomalies. We start with a standard six-factor analysis for some representative stock, and compare our high-frequency estimates with their low-frequency counterparts to illustrate the benefits of exploiting granular intraday information. Then we examine whether a relatively small number of high-frequency factors from the ``factor zoo'' can explain comovement across a large cross-section of liquid U.S.~stocks and industry portfolios, where we implement simultaneous coefficient estimation and model selection by employing our penalized spline-based estimation procedure.

\subsection{Data}
\label{Sec:empirical_data}

We use the high-frequency factor zoo dataset of \citet{aleti2023high}. This minute-level dataset comprises 272 high-frequency portfolios over 1996--2020: (i) the five \citet{fama2015five} (FF5) factors plus momentum, constructed following the original Fama-French methodology; (ii) 218 characteristic-sorted factor portfolios based on the predictor libraries of \citet{chen2022open} and \citet{jensen2023there}; and (iii) 48 Fama-French industry portfolios \citep{fama1997industry}. For our empirical analysis, we consider the period from January 2016 to December 2020, and adopt a 5-minute sampling frequency for all 224 factors, individual stocks, and industry portfolios.

For the individual stocks, we start with S\&P 100 constituents and retain only firms that are consistently included in the index throughout January 2016 to December 2020. As is standard in empirical research with the Trade and Quote (TAQ) data,\footnote{We use the SAS code from \citet{holden2014liquidity} to extract all tick-by-tick transaction records matched with relevant ask/bid quotes from the daily TAQ dataset of WRDS. } we use the standard data filters as in \cite{barndorff2009realized} to eliminate data errors, remove all transactions in the original record that are later corrected, canceled or otherwise invalidated. In addition, we remove all trading days with an early market closure, and restrict our sample to transactions between 9:30:00--16:00:00 Eastern Time (ET). More importantly, we keep only stocks whose 5-minute returns are almost always non-zero (fewer than 5\% zero returns in each year). After these screens, our sample retains 57 stocks, which span 9 sectors and 21 Fama-French industry groups.\footnote{Appendix \ref{AP:Supp-empirical} reports the list of individual stocks retained in our sample, together with their sector and industry group assignments, and the fraction of zero 5-minute returns in each year. Sector classifications are based on the GICS sector code from WRDS Compustat. The Fama-French 48 industry group for each firm is assigned based on its four-digit SIC code from Compustat and the SIC-to-industry definitions from Kenneth R.~French's data library. } In addition, we include the full set of 48 Fama-French industry portfolios from \citet{aleti2023high}, for a total of 105 test assets.

\subsection{Beta Estimation for Fama-French Factors}

We estimate the betas in a standard six-factor model (FF5 plus momentum) for Apple Inc.~(AAPL) with three different approaches: (i) our unpenalized spline-based estimator, (ii) the local OLS estimator of \citet{ait2020high}, and (iii) a low-frequency OLS estimator based on daily data. Following the Monte Carlo simulations in \cref{Sec:simulation}, we select the number of B-spline basis functions $K_{n}$ via 5-fold cross-validation. For the AKX estimator, we set the local window length to $k_n = 78$.

\cref{Fig:empirical_ff6} reports monthly beta estimates for each factor. For the two high-frequency methods, the plotted series are monthly $\widehat{I\beta}$'s computed from truncated 5-minute returns of both AAPL and the six factors. For the daily-data benchmark, the OLS beta is estimated from daily returns within each month and is therefore constant over that month. We find that the spline-based estimator and the local OLS estimator deliver nearly identical beta estimates across all six factors. Moreover, both high-frequency estimators produce markedly smoother and more stable beta dynamics than the low-frequency benchmark, which exhibits substantially larger month-to-month variation and pronounced spikes. These differences are consistent with the role of jumps. Since large price moves identified as jumps are truncated and do not contribute to $\widehat{I\beta}$, the high-frequency beta estimators primarily reflect the continuous covariation between the asset and the factors. In contrast, the daily OLS estimator is directly exposed to large daily returns when jumps occur, which can induce sizable distortions in monthly beta estimates. Overall, \cref{Fig:empirical_ff6} indicates that high-frequency beta estimators are more robust and deliver more stable beta dynamics, which is consistent with the empirical evidence in \citet{ait2020high}.

\begin{figure}[!ht]
\centering
\addtolength{\leftskip} {-2cm}
\addtolength{\rightskip}{-2cm}
\includegraphics[width=\textwidth]{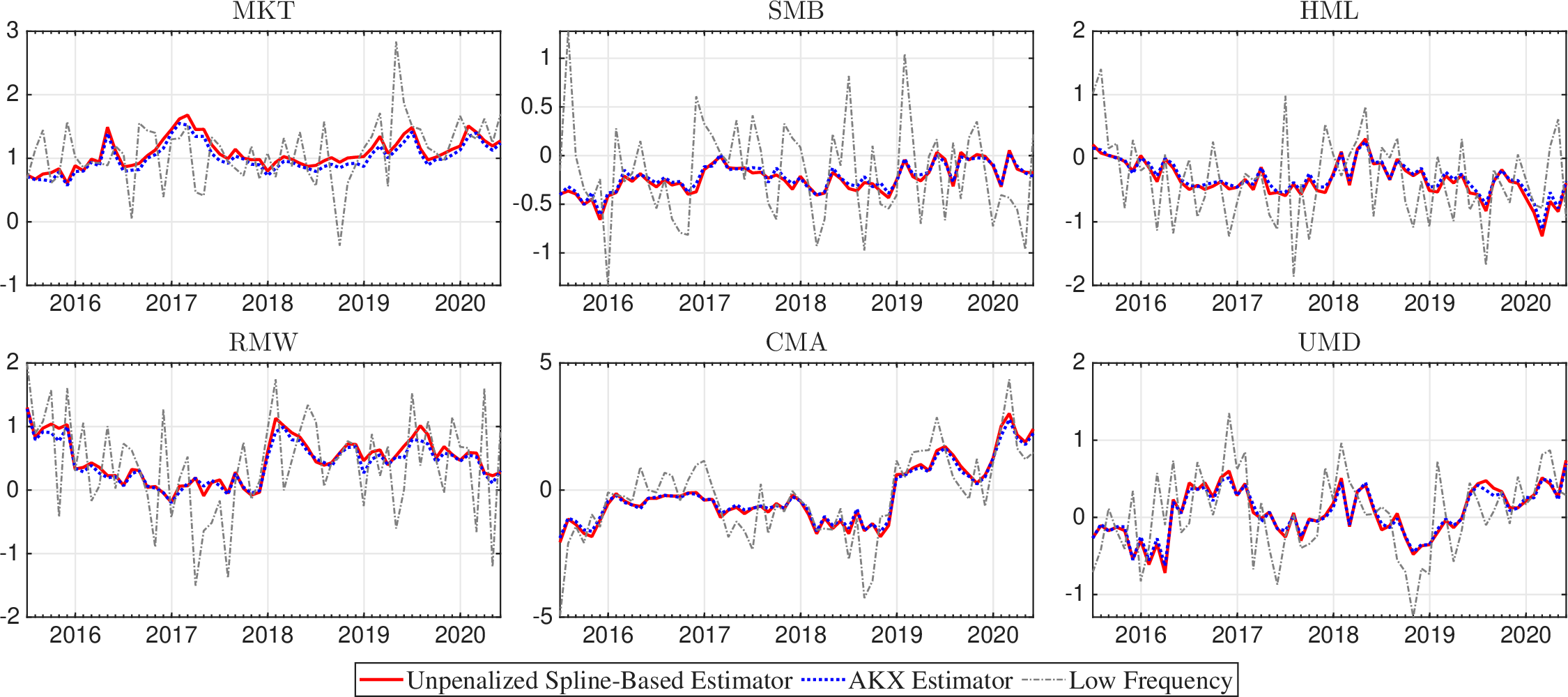}
\caption{Six-factor (FF5 + Momentum) beta estimates of AAPL from January 2016 to December 2020. We select the number of B-spline basis functions $K_{n}$ via 5-fold cross-validation. For the AKX estimator, we set the local window length to $k_n = 78$. For the two high-frequency estimators, the plotted series are monthly $\widehat{I\beta}$'s computed from truncated 5-minute returns of both AAPL and the six factors. For the low-frequency benchmark, the OLS beta is estimated from daily returns within each month and is therefore constant over that month. }
\label{Fig:empirical_ff6}
\end{figure}

\subsection{High-Frequency Factor Zoo Selection and Sparse Risk Structure}

We now apply our penalized spline-based estimator to the high-frequency factor zoo. The goal is to investigate whether the continuous component of each selected stock and industry portfolio can be explained by a relatively small set of factors with time-varying exposures. As summarized in \cref{Sec:empirical_data}, we work with the 224 high-frequency factors of \citet{aleti2023high} as candidate covariates, and study a cross-section of liquid S\&P 100 stocks together with the Fama-French 48 industry portfolios as test assets. Consistent with the Monte Carlo applications in \cref{Sec:simulation}, we select the tuning parameters, i.e., the number of B-spline basis functions $K_{n}$ and the effective penalty level $\lambda_{n}/\tau_{n}$, via 5-fold cross-validation using the one-standard-error rule. We set the TLP cutoff level $\tau_{n}=0.01\sqrt{\mathrm{MedRV}_{T}}$, where $\mathrm{MedRV}_{T}$ is the MedRV of the test asset computed over the corresponding estimation window $[0,T]$. 

\cref{Fig:empirical_selectionExample} reports the average number of selected factors for a set of representative stocks, i.e., Apple (AAPL), Microsoft (MSFT), JPMorgan Chase (JPM), Johnson \& Johnson (JNJ), Exxon Mobil (XOM), and their corresponding Fama-French industry portfolios, for each year from 2016 to 2020. The results suggest that the factor representations fitted by our penalized estimation procedure are generally parsimonious. For most of the representative stocks, the average number of selected factors falls between 15 and 25, whereas the corresponding industry portfolios tend to select fewer factors, with averages around 10. This difference is consistent with the additional diversification at the industry level, which attenuates idiosyncratic variation and concentrates the continuous comovement on a smaller set of common drivers. Furthermore, the results suggest time variation in sparsity: For AAPL and MSFT, the average number of selected factors declines over 2016--2020, while JPM, JNJ, and XOM exhibit an increase over the last two years of the sample. The most pronounced change occurs for the Petroleum and Natural Gas industry portfolio, for which our estimator selects around 30 factors on average in 2020, making it one of the industry portfolios with the largest selected sets in that year. The full results for all selected S\&P 100 stocks and industry portfolios are reported in Appendix \ref{AP:Supp-empirical}.

\begin{figure}[!ht]
\centering
\addtolength{\leftskip} {-2cm}
\addtolength{\rightskip}{-2cm}	
\includegraphics[width=\textwidth]{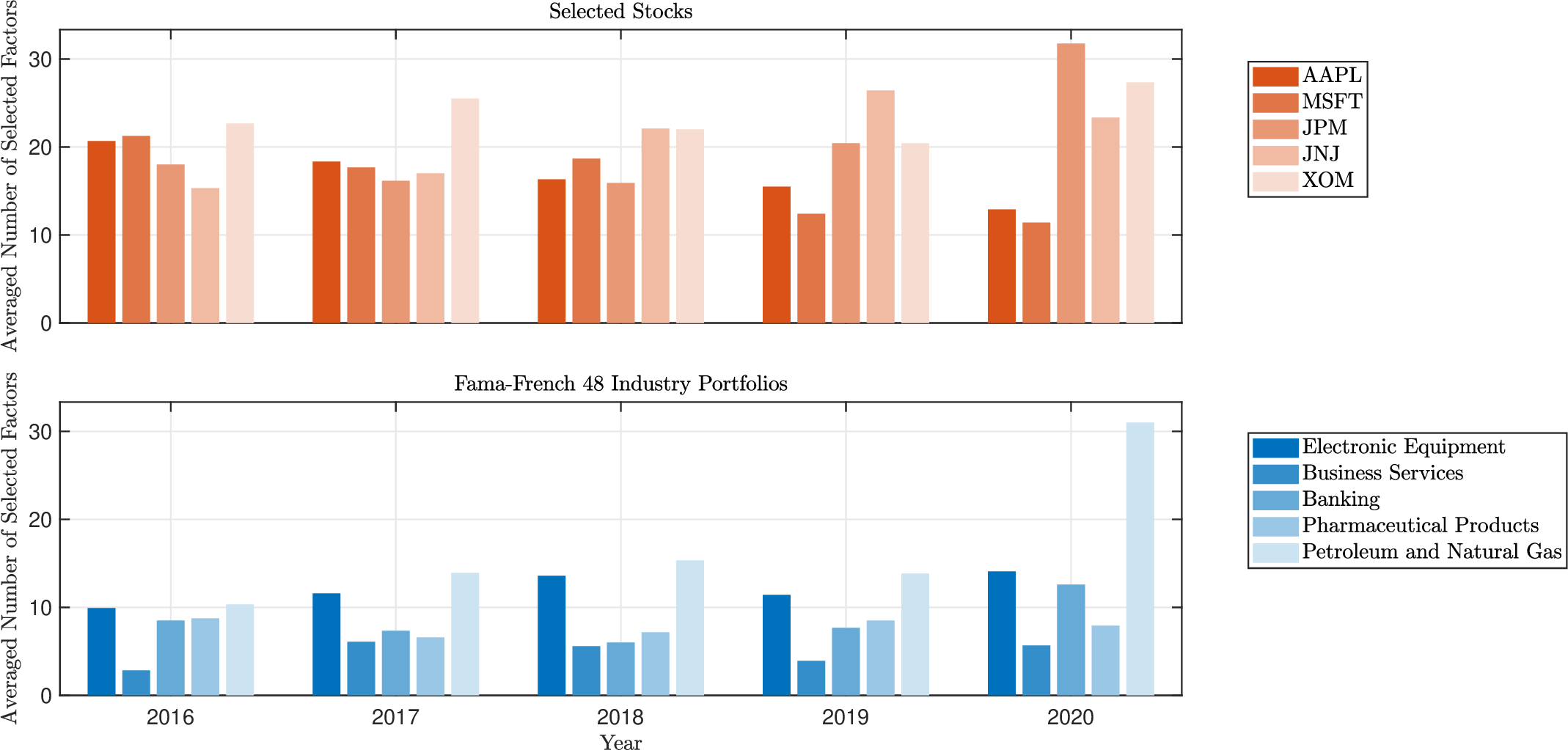}
\caption{Average number of selected factors per year for representative stocks and the corresponding Fama-French 48 industry portfolios. The candidate covariate set consists of the 224 high-frequency factors of \citet{aleti2023high}. The cutoff level in the TLP function is $\tau_{n}=0.01\sqrt{\text{MedRV}_{T}}$, where $\mathrm{MedRV}_{T}$ denotes the MedRV of the test asset computed over the corresponding estimation window $[0,T]$. The number of B-spline basis functions $K_{n}$ and the effective penalty level $\lambda_{n}/\tau_{n}$ are selected via 5-fold cross-validation using the one-standard-error rule. }
\label{Fig:empirical_selectionExample}
\end{figure}

Based on factor selections across the full cross-section of individual stocks and industry portfolios, \cref{Table:empirical_top20Factors} lists the twenty factors that are selected most frequently over the period from January 2016 to December 2020. The reported selection rate is the fraction of asset-month pairs for which the factor is selected by our penalized estimation procedure. To facilitate interpretation, we also report broad economic labels from \citet{chen2022open} and the theme classification from \citet{jensen2023there} for these frequently selected factors. To assess whether the dominant channels of intraday comovement differ between firm-level and industry-level returns, \cref{Table:empirical_top20Factors_stocks_industries} reports the same rankings separately for the selected S\&P 100 stocks and for the industry portfolios.

\begin{table}[h!]
	\centering
	\onehalfspacing
	\addtolength{\leftskip}{-2cm}
	\addtolength{\rightskip}{-2cm}
	\setlength{\tabcolsep}{4pt}
	\begin{threeparttable}
		\caption{Most frequently selected factors across all asset-month pairs}
		\scriptsize
		\begin{tabularx}{\textwidth}{l X l l r}
			\hline\hline
			Rank & Factor description & CZ classification & JKP classification & Selection rate\\
			\hline
			1  & Market & -- & -- & 0.9289 \\
			2  & R\&D-to-market & R\&D & Size & 0.1154 \\
			3  & Market correlation & -- & Seasonality & 0.1056 \\
			4  & Dollar trading volume & Volume & Size & 0.0979 \\
			5  & Pension funding status & Composite Accounting & -- & 0.0952 \\
			6  & Change in capital investment (industry-adjusted) & Investment Growth & -- & 0.0917 \\
			7  & Intrinsic value-to-market & Valuation & Profitability & 0.0884 \\
			8  & Organizational capital & R\&D & -- & 0.0884 \\
			9  & Order backlog & Sales Growth & -- & 0.0876 \\
			10 & Advertising expense & R\&D & -- & 0.0859 \\
			11 & Change in net working capital & Investment & -- & 0.0849 \\
			12 & Share turnover volatility & Liquidity & Profitability & 0.0846 \\
			13 & Cash flow volatility & -- & Low Risk & 0.0830 \\
			14 & Earnings persistence & -- & Seasonality & 0.0824 \\
			15 & Trading volume variance & Liquidity & Profitability & 0.0817 \\
			16 & Operating leverage & Other & Quality & 0.0816 \\
			17 & Mohanram G-score & Composite Accounting & -- & 0.0813 \\
			18 & Change in current operating working capital & -- & Accruals & 0.0810 \\
			19 & Years 11-15 lagged returns & Other & Seasonality & 0.0808 \\
			20 & Change in long-term investments & Investment & Seasonality & 0.0798 \\
			\hline\hline
		\end{tabularx}
		\label{Table:empirical_top20Factors}
		Twenty most frequently selected factors across the full cross-section of selected S\&P 100 stocks and Fama-French 48 industry portfolios from January 2016 to December 2020. The candidate covariate set consists of the 224 high-frequency factors of \citet{aleti2023high}. The cutoff level in the TLP function is $\tau_{n}=0.01\sqrt{\text{MedRV}_{T}}$, where $\mathrm{MedRV}_{T}$ denotes the MedRV of the test asset computed over the corresponding estimation window $[0,T]$. The number of B-spline basis functions $K_{n}$ and the effective penalty level $\lambda_{n}/\tau_{n}$ are selected via 5-fold cross-validation using the one-standard-error rule.
	\end{threeparttable}
\end{table}

\begin{table}[h!]
	\centering
	\onehalfspacing
	\addtolength{\leftskip}{-2cm}
	\addtolength{\rightskip}{-2cm}
	\setlength{\tabcolsep}{4pt}
	\begin{threeparttable}
		\caption{Most frequently selected factors for selected S\&P 100 stocks and Fama-French 48 industry portfolios}
		\scriptsize
		\begin{tabularx}{\textwidth}{l X l l r}
			\hline\hline
			\multicolumn{5}{l}{Panel A: S\&P 100 Stocks}\\
			Rank & Factor description & CZ classification & JKP classification & Selection rate\\
			\hline
			1  & Market & -- & -- & 0.9339 \\
			2  & R\&D-to-market & R\&D & Size & 0.1415 \\
			3  & Dollar trading volume & Volume & Size & 0.1395 \\
			4  & Intrinsic value-to-market & Valuation & Profitability & 0.1333 \\
			5  & Trading volume variance & Liquidity & Profitability & 0.1260 \\
			6  & Share turnover volatility & Liquidity & Profitability & 0.1257 \\
			7  & Market correlation & -- & Seasonality & 0.1208 \\
			8  & Change in capital investment (industry-adjusted) & Investment Growth & -- & 0.1181 \\
			9  & Change in net working capital & Investment & -- & 0.1152 \\
			10 & Change in long-term investments & Investment & Seasonality & 0.1140 \\
			11 & Earnings persistence & -- & Seasonality & 0.1120 \\
			12 & Pension funding status & Composite Accounting & -- & 0.1114 \\
			13 & Real dirty surplus & Composite Accounting & -- & 0.1070 \\
			14 & Mohanram G-score & Composite Accounting & -- & 0.1058 \\
			15 & Operating leverage & Other & Quality & 0.1020 \\
			16 & Years 6-10 lagged returns & Other & Seasonality & 0.1003 \\
			17 & Cash flow volatility & -- & Low Risk & 0.1000 \\
			18 & Years 11-15 lagged returns & Other & Seasonality & 0.0982 \\
			19 & Change sales minus change receivables & -- & Profit Growth & 0.0974 \\
			20 & Change in current operating working capital & -- & Accruals & 0.0968 \\
			\hline
			\multicolumn{5}{l}{Panel B: Fama-French 48 Industry Portfolios}\\
			Rank & Factor description & CZ classification & JKP classification & Selection rate\\
			\hline
			1  & Market & -- & -- & 0.9229 \\
			2  & Organizational capital & R\&D & -- & 0.0903 \\
			3  & Market correlation & -- & Seasonality & 0.0875 \\
			4  & Order backlog & Sales Growth & -- & 0.0858 \\
			5  & R\&D-to-market & R\&D & Size & 0.0844 \\
			6  & Advertising expense & R\&D & -- & 0.0809 \\
			7  & Dimson beta & -- & Low Risk & 0.0760 \\
			8  & Pension funding status & Composite Accounting & -- & 0.0760 \\
			9  & Quality minus junk & -- & Profit Growth & 0.0736 \\
			10  & Industry return of big firms & Lead Lag & -- & 0.0708 \\
			11  & Tail risk beta & Tail Risk & -- & 0.0694 \\
			12  & Change in order backlog & Accruals & -- & 0.0691 \\
			13  & Downside beta & -- & Low Risk & 0.0663 \\
			14  & Frazzini-Pedersen market beta & -- & Low Risk & 0.0632 \\
			15  & Cash flow volatility & -- & Low Risk & 0.0628 \\
			16  & Change in current operating working capital & -- & Accruals & 0.0622 \\
			17  & Change in capital investment (industry-adjusted) & Investment Growth & -- & 0.0604 \\
			18  & Years 11-15 lagged returns & Other & Seasonality & 0.0601 \\
			19  & Earnings surprise of big firms & Lead Lag & -- & 0.0594 \\
			20  & Industry concentration (assets) & Other & -- & 0.0573 \\
			\hline\hline
		\end{tabularx}
		\label{Table:empirical_top20Factors_stocks_industries}
		Twenty most frequently selected factors across the selected S\&P 100 stocks and Fama-French 48 industry portfolios, respectively, from January 2016 to December 2020. The candidate covariate set consists of the 224 high-frequency factors of \citet{aleti2023high}. The cutoff level in the TLP function is $\tau_{n}=0.01\sqrt{\text{MedRV}_{T}}$, where $\mathrm{MedRV}_{T}$ denotes the MedRV of the test asset computed over the corresponding estimation window $[0,T]$. The number of B-spline basis functions $K_{n}$ and the effective penalty level $\lambda_{n}/\tau_{n}$ are selected via 5-fold cross-validation using the one-standard-error rule.
	\end{threeparttable}
\end{table}

Both \cref{Table:empirical_top20Factors,Table:empirical_top20Factors_stocks_industries} show that the market factor is selected for the vast majority of asset-month pairs, which indicates that the market remains the most stable driver of continuous intraday comovement in our sample. Beyond the market factor, however, selection rates decline rapidly: the next-ranked factors are selected in fewer than $15\%$ of asset-months. This pattern suggests that there is no single non-market factor that plays a pervasive role across assets and time. Instead, the non-market component is accounted for by a set of factors that are heterogeneous and episodic, and is therefore well captured by a sparse representation estimated from more granular information within each month. The results also indicate that the most frequently selected factors cluster into a small number of economically coherent themes, including trading activity and liquidity, intangibles and innovation, investment, and balance-sheet-related characteristics. Moreover, the relative importance of these themes differs across firm-level and industry-level analyses. In particular, the stock-level rankings place more weight on trading activity and liquidity-related factors, together with valuation and investment-related measures, whereas the industry-level rankings tilt more toward broader risk-sensitivity and lead-lag-type predictors.

%Key points: market is always there; beyond that, continuous comovement is sparse but heterogeneous; and this heterogeneity is structured into a few economic themes that differ between firms and industries.

We further conduct a case study of total risk decomposition using two alternative factor sets: the standard six factors (FF5 plus momentum) and the twenty most frequently selected factors from \cref{Table:empirical_top20Factors}. The analysis parallels the decomposition in \citet{ait2020high}  (Fig.~6) with six factors to quantify how much of AAPL’s total intraday variation is attributable to systematic covariation with a given factor set versus the residual (idiosyncratic) component.

\cref{Fig:empirical_riskdecomp} illustrates the monthly risk decomposition for AAPL from January 2016 to December 2020. Specifically, the (annualized) quadratic variation of AAPL is estimated by realized volatility constructed from 5-minute returns within each month, and the integrated variance is estimated by the truncated realized volatility of \citet{mancini2009non} with the same truncation threshold used in our regression analysis. For each month and for each factor set, we estimate AAPL’s time-varying betas using our unpenalized spline-based regression on truncated 5-minute increments. Given the fitted beta paths, we calculate the residual increments as the difference between the truncated increments of AAPL and their corresponding predicted values. The unexplained integrated variance is then obtained as the annualized sum of squared residual increments, and the monthly $R^2$ is defined as the fraction of integrated variance explained by the regression model. 

\begin{figure}[!ht]
	\centering
	\addtolength{\leftskip} {-2cm}
	\addtolength{\rightskip}{-2cm}
	\includegraphics[width=\textwidth]{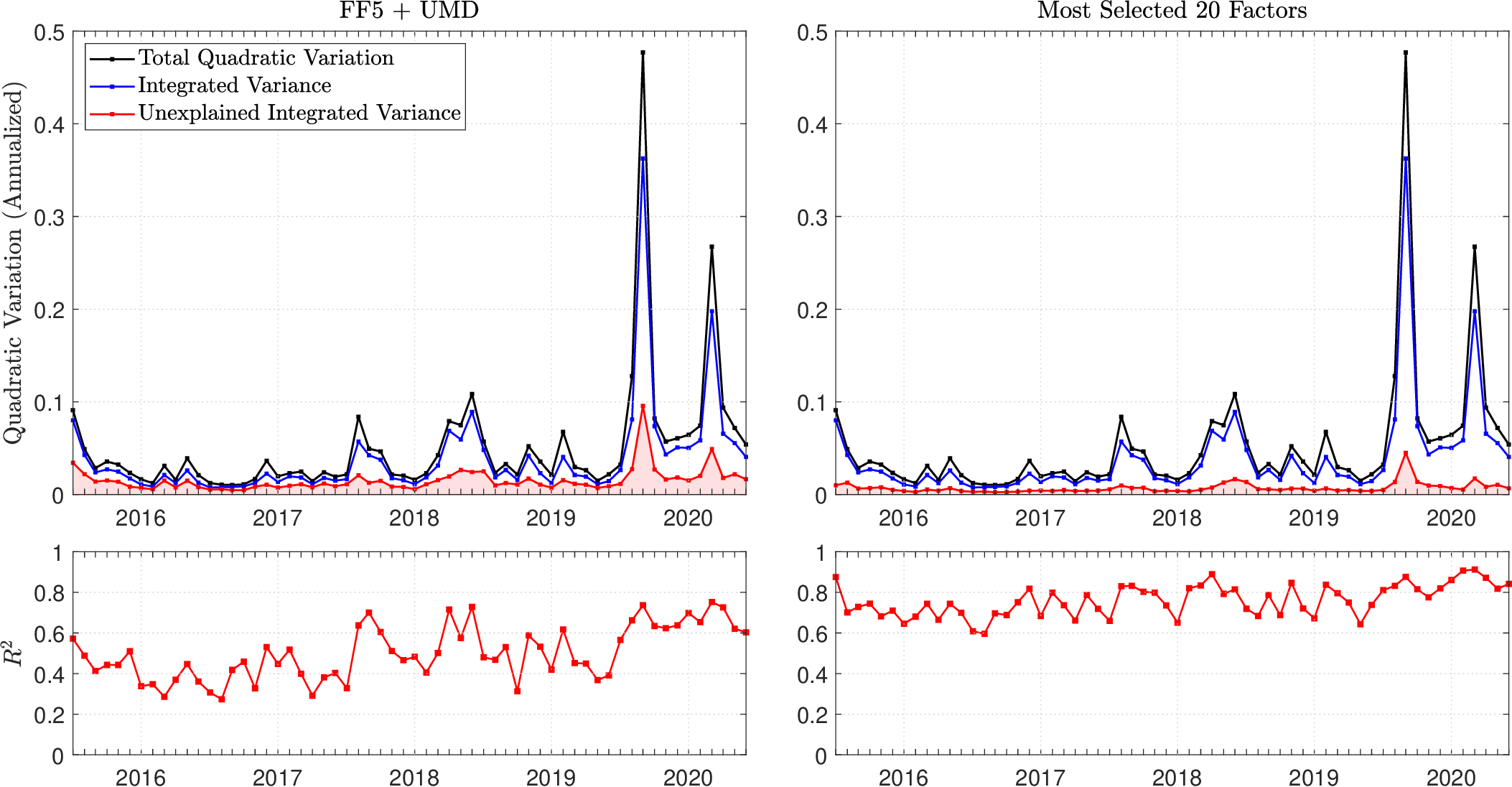}
	\caption{Monthly risk decomposition for AAPL under two alternative factor sets from January 2016 to December 2020. The left column uses the standard six factors (FF5 plus momentum), and the right column uses the twenty most frequently selected factors from \cref{Table:empirical_top20Factors}. The (annualized) total quadratic variation (black), integrated variance (blue), and unexplained integrated variance (red) are estimated by the realized volatility, truncated realized volatility, and the sum of squared residual increments, respectively. The monthly $R^2$ in the bottom panels is defined as the fraction of integrated variance explained by the model. }
	\label{Fig:empirical_riskdecomp}
\end{figure}

Relative to the six-factor benchmark, the specification based on the twenty most frequently selected factors provides a systematically tighter fit. The unexplained integrated variance is smaller, and the corresponding $R^2$ is higher in nearly every month. The results indicate that our cross-sectionally selected factors provide a more effective sparse representation for AAPL’s continuous intraday variation than the standard six-factor specification over this sample period.

\section{Conclusions}
\label{Sec:conclusions}

This paper develops a high-frequency penalized regression framework for It\^{o} semimartingales with time-varying coefficients. By approximating coefficient paths with polynomial splines, the approach provides a simple “global” alternative to local regressions over rolling windows, and makes high-dimensional model selection feasible with a group-wise $\ell_{1}$-penalty. We establish a comprehensive asymptotic theory for the spline-based estimator, and show that the penalized estimator attains the oracle property when there exists a large number of redundant factors. Monte Carlo experiments corroborate that the proposed procedure simultaneously delivers reliable coefficient estimation and model selection in finite samples. An empirical application to a large collection of high-frequency factors further indicates that continuous intraday comovement is well captured by a sparse and economically coherent set of factors. The selection results underscore the important role of the market factor and reveal a sparse theme structure: The frequently selected factors cluster into a small number of thematic classifications, such as liquidity, investment and valuation, and these themes vary systematically across stock- and industry-level analyses. %Finally, an important avenue for future research is to extend the proposed spline-based method to a long-span setting, ... ... 

%\clearpage

\bibliographystyle{apalike}
\bibliography{reference}

\clearpage

%\pagenumbering{arabic}
%Follow the requirements of the Journal of Econometrics
\numberwithin{assumption}{section}
\numberwithin{equation}{section}
\numberwithin{table}{section}
\numberwithin{figure}{section}
\numberwithin{lemma}{section}
\numberwithin{remark}{section}
\numberwithin{proposition}{section}
\numberwithin{corollary}{section}
\numberwithin{definition}{section}

\begin{appendices}

\section{Proofs}
\label{AP:Proofs}

\subsubsection*{Notation}

%We start with the notation. 
For a vector $x$, $\Vert x\Vert=\sqrt{\sum_{i}|x_{i}|^{2}}$ is the Euclidean norm ($\ell_{2}$-norm), $\Vert x\Vert_{1}=\sum_{i}|x_{i}|$ is the $\ell_{1}$-norm, and $\Vert x\Vert_{\infty}=\max_{i}|x_{i}|$ is the maximum norm ($\ell_{\infty}$-norm). For a matrix $\mathbf{A}$, $\Vert\mathbf{A}\Vert_{F}=\sqrt{\sum_{i}\sum_{j}|\mathbf{A}_{ij}|^{2}}$ is the Frobenius norm, and $\Vert\mathbf{A}\Vert=\sup_{\Vert x\Vert=1}\Vert\mathbf{A}x\Vert$ is the operator norm, which is the square root of the largest eigenvalue of the symmetric matrix $\mathbf{A}^{\top}\mathbf{A}$. For a symmetric positive definite matrix $\mathbf{A}$, $\lambda_{\text{max}}(\mathbf{A})$ and $\lambda_{\text{min}}(\mathbf{A})$ denote its largest and smallest eigenvalues, and it holds that $(\lambda_{\text{min}}(\mathbf{A}))^{-1}=\Vert \mathbf{A}^{-1}\Vert$. For a square integrable function $f$: $[0,T]\to\mathbb{R}^{p}$, $\Vert f\Vert_{L^{2}}=\sqrt{\int_{0}^{T}\Vert f_{s}\Vert^{2}ds}$ is the $L^{2}$-norm over $[0,T]$, where $\Vert\cdot\Vert$ inside the integral is the Euclidean norm on $\mathbb{R}^{p}$, and $\Vert f\Vert_{L^{2}(c)}=\langle f,f \rangle_{c}^{1/2}$, where $\langle f,g \rangle_{c} = \int_{0}^{T}f_{s}^{\top}c_{s}g_{s}ds$. %For two positive semidefinite matrices $\mathbf{A}$ and $\mathbf{B}$, the Loewner order $\mathbf{A}\preceq\mathbf{B}$ implies that $\mathbf{B}-\mathbf{A}$ is positive semidefinite. 

In all the sequel, the positive constants $\{K,K',...\}$ may change from line to line (and even within a line) but never depend on $n$ or other indices.

\subsubsection*{Localization}

With a standard localization procedure (see, e.g., Lemma 4.4.9, \citealp{jacod2012discretization}), we impose the following stronger assumption than Assumptions \ref{As:Ito} and \ref{As:Ito_beta} without loss of generality:
\begin{assumption}\label{As:Ito_Iocal}
We have Assumptions \ref{As:Ito} and \ref{As:Ito_beta} with $\tau_{1}\to\infty$. Moreover, the processes $\beta$, $\beta^{J}$, $b$, $\widetilde{b}$, $b'$, $b^{\beta}$, $\sigma$, $\widetilde{\sigma}$, $\sigma'$, $\sigma^{\beta}$, $c$, and the functions $\delta$, $\widetilde{\delta}$, $\delta'$, $\delta^{\beta}$ are bounded. 
\end{assumption}
Throughout the remainder of Appendix \ref{AP:Proofs}, references to any part of Assumptions \ref{As:Ito} and \ref{As:Ito_beta} are understood as its strengthened version under \cref{As:Ito_Iocal}.

\subsection{Properties of B-Splines}
\label{AP:Bspline}

Fix an integer degree $d\geq1$ and the partition $0=v_{0}<v_{1}<\dots,v_{N_{n}+1}=T$. Let $K_n=N_n+d+1$ and define an extended knot sequence
\begin{equation}
\bar{v}_{0}=\dots=\bar{v}_{d}=0,\qquad \bar{v}_{d+j}=v_{j} \text{ (for }j=1,\ldots,N_n), \qquad \bar{v}_{N_{n}+d+1}=\dots=\bar{v}_{K_{n}+d}=T.
\end{equation}
We define B-spline basis function $B_{t}^{(k)}\equiv B_{d,t}^{(k)}$ by the Cox-de Boor recursion: 
\begin{equation}
B_{0,t}^{(k)}=\begin{cases}
1 &\quad\text{if } t\in[\bar{v}_{k-1},\bar{v}_{k})\\
0 &\quad\text{otherwise}
\end{cases},\qquad\text{for }k=1,\dots,K_{n}+d,
\end{equation}
and $B_{r,t}^{(k)}$ for all $r=1,\dots,d$ and $k=1,\dots,K_{n}+d-r$ are calculated recursively from degree 0:
\begin{equation}
\label{Eq:Cox-deBoorRecursion}
B_{r,t}^{(k)}=\frac{t-\bar{v}_{k-1}}{\bar{v}_{k+r-1}-\bar{v}_{k-1}}B_{r-1,t}^{(k)}+\frac{\bar{v}_{k+r}-t}{\bar{v}_{k+r}-\bar{v}_{k}}B_{r-1,t}^{(k+1)},
\end{equation}
where
\begin{equation}
\begin{split}
\label{Eq:Cox-deBoorRecursion-boundries}
\frac{t-\bar{v}_{k-1}}{\bar{v}_{k+r-1}-\bar{v}_{k-1}}B_{r-1,t}^{(k)}=0,\qquad&\text{if }\bar{v}_{k+r-1}=\bar{v}_{k-1};\\
\frac{\bar{v}_{k+r}-t}{\bar{v}_{k+r}-\bar{v}_{k}}B_{r-1,t}^{(k+1)}=0,\qquad&\text{if }\bar{v}_{k+r}=\bar{v}_{k}.
\end{split}
\end{equation}
%For each $j=1,\dots,p$, $ \{B_{\cdot}^{(k)}\}_{k=1}^{K_{n}}$ is a basis for a linear space $\mathcal{G}_{n}$ of spline functions on $[0,T]$ with a particular degree and knot sequence. 
The B-spline basis functions $B_{t}^{(k)}$ have the following properties \citep{de1978practical}: 
\begin{itemize}
\item[(i)] $B_{t}^{(k)}\geq0$;
\item[(ii)] Partition of unity: $\sum_{k=1}^{K_{n}}B_{t}^{(k)}=1$;
%\item[(iii)] $\Vert B^{(k)}v^{(k)}\Vert_{L^{2}}^{2}=\int_{0}^{T}\left(\sum_{k=1}^{K_{n}}B_{s}^{(k)}v^{(k)}\right)^{2}ds\asymp K_{n}^{-1}\Vert v^{(k)}\Vert^{2}$;
\item[(iii)] Local support: %For any function $f$, define its support as the set of points where $f$ is non-zero, i.e., $\text{supp}(f)=\{x:f(x)\neq0\}$. For $B_{t}^{(k)}$ of degree $d\geq1$, $\text{supp}(B_{t}^{(k)})\subseteq$
For each $k=1,\dots,K_{n}$, $B_{t}^{(k)}$ is nonzero only on an interval covering at most $d+1$ consecutive knot intervals. For any function $f$, define its support as  $\text{supp}(f)=\{t:f(t)\neq0\}$, and denote by $|[a,b]|=b-a$ the interval length. Therefore, it holds that
\begin{equation}
|\text{supp}(B^{(k)})|\leq(d+1)H_{n}\asymp K_{n}^{-1}, \qquad\text{where }H_{n}=\max_{1\leq j\leq N_{n}+1}(v_{j}-v_{j-1}),
\end{equation}
by the quasi-uniformity of partitions. Equivalently, for any $t\in[0,T]$, at most $d+1$ basis functions in $\{B_{t}^{(k)}\}_{k=1}^{K_{n}}$ are nonzero, i.e., $|\{k:B_{t}^{(k)}>0\}|\leq d+1$.
\item[(iv)] (Lemma A.1, \citealp{huang2004polynomial}). Let $g_{t}=\sum_{k=1}^{K_{n}}B_{t}^{(k)}\gamma^{(k)}$ with $\gamma^{(k)}=(\gamma_{1}^{(k)},\dots,\gamma_{p}^{(k)})^{\top}\in\mathbb{R}^{p}$ and $\bm{\gamma}=((\gamma^{(1)})^{\top},\dots,(\gamma^{(K_{n})})^{\top})^{\top}\in\mathbb{R}^{pK_{n}}$. Then it holds that for any $\bm{\gamma}\in\mathbb{R}^{pK_{n}}$,
\begin{equation}
\Vert g\Vert_{L^{2}}^{2}\asymp\sum_{j=1}^{p}\Vert g_{j}\Vert_{L^{2}}^{2}\asymp\frac{\Vert\bm{\gamma}\Vert^{2}}{K_{n}}.
\end{equation}
\end{itemize}

%$B_{t}^{(k)}\geq0$, $\sum_{k=1}^{K_{n}}B_{t}^{(k)}=1$ (``partition of unity''), and for any $v_{j}^{(k)}\in\mathbb{R}$,
%\begin{equation}
%\label{Eq:B_spline_bounds}
%\Vert B^{(k)}v_{j}^{(k)}\Vert_{L^{2}}^{2}=\int_{0}^{T}\left(\sum_{k=1}^{K_{n}}B_{s}^{(k)}v_{j}^{(k)}\right)^{2}ds\asymp\frac{\Vert v_{j}^{(k)}\Vert^{2}}{K_{n}}.
%\end{equation}
%\begin{equation}
%\frac{K}{K_{n}}\sum_{k=1}^{K_{n}}(\gamma_{j}^{(k)})^{2}\leq\int_{0}^{T}\left(\sum_{k=1}^{K_{n}}\gamma_{j}^{(k)}B_{l,s}^{(k)}\right)^{2}ds\leq\frac{K'}{K_{n}}\sum_{k=1}^{K_{n}}(\gamma_{j}^{(k)})^{2}.
%\end{equation}

\subsection[Proof of Lemma 1]{Proof of \cref{lemma:approximationError}}
\label{AP:Proof_lemma1}

Under \cref{As:Ito} (iii), $c$ is bounded and nonsingular, i.e., there exist constants $0<\underline{\kappa}\leq \overline{\kappa}<\infty$ such that for all $x\in\mathbb{R}^p$ and any $t\in[0,T]$, it holds that
\begin{equation}
\label{Eq:c_equiv}
\underline{\kappa}\|x\|^2 \le x^\top c_{t} x \le \overline{\kappa}\|x\|^2,
\end{equation}
and thus implies the norm equivalence, i.e., for all measurable $f:[0,T]\to\mathbb{R}^p$,
\begin{equation}
\label{Eq:norm_equiv}
\sqrt{\underline{\kappa}}\Vert f\Vert_{L^{2}} \leq \Vert f\Vert_{L^{2}(c)} \leq \sqrt{\overline{\kappa}}\Vert f\Vert_{L^{2}}.
\end{equation}
Therefore, by definition of the $L^2(c)$-orthogonal projection, for any $g\in\mathbb{G}_{n}$,
\begin{equation}
\label{Eq:ArrpxError_L2norm_inequalities}
\Vert \beta-\Pi_n\beta \Vert_{L^{2}}\leq\frac{1}{\sqrt{\underline{\kappa}}}\Vert \beta-\Pi_n\beta \Vert_{L^{2}(c)}\leq \frac{1}{\sqrt{\underline{\kappa}}}\Vert \beta-g\Vert_{L^{2}(c)}\leq\sqrt{\frac{\overline{\kappa}}{\underline{\kappa}}}\Vert \beta-g\Vert_{L^{2}}.
\end{equation}
Hence it suffices to construct a single $\overline{\beta}\in\mathbb{G}_{n}$ that satisfies
\begin{equation}
\label{Eq:lemma1-0}
\Vert \beta-\overline{\beta}\Vert_{L^{2}}=O_{p}\left(\sqrt{\frac{\log K_{n}}{K_{n}}}\right).
\end{equation}
We consider a specific $\overline{\beta}$ as a local spline approximation, i.e. the quasi-interpolant (Chapter XII, \citealp{de1978practical}): For each $k=1,\dots,K_{n}$, choose an arbitrary $\tau^{(k)}\in\text{supp}(B^{(k)})$, and define
\begin{equation}
\label{Eq:quasiInterpolant}
\overline{\beta}_{t}=\sum_{k=1}^{K_{n}}\beta_{\tau^{(k)}}B_{t}^{(k)},
\end{equation}
which satisfies $\overline{\beta}\in\mathbb{G}_{n}$ by construction, and it remains to bound $\Vert \beta-\overline{\beta}\Vert_{L^{2}}$. The local support property of B-splines (Appendix \ref{AP:Bspline} (iii)) implies that, whenever $B_{t}^{(k)}>0$, $|t-\tau^{(k)}|\leq|\text{supp}(B^{(k)})|\leq (d+1)H_{n}\asymp K_{n}^{-1}$.

We denote by $\beta^{c}$ the continuous component of $\beta$, and $\beta^{j}=\beta-\beta^{c}$ collects the jumps:
\begin{equation}
\label{Eq:beta_decomp}
\beta_{t}^{c}=\beta_{0}+\int_{0}^{t}b_{s}^{\beta}ds+\int_{0}^{t}\sigma_{s}^{\beta}dW_{s}^{\beta},\qquad
\beta_{t}^{j}=\int_{0}^{t}\int_{\mathbb{R}^{p}}\delta^{\beta}(s,x)\mu^{\beta}(ds,dx).
\end{equation}
Let $\overline{\beta}^{c}$ and $\overline{\beta}^{j}$ denote the corresponding quasi-interpolants (on the same knots) of $\beta^{c}$ and $\beta^{j}$, and thus $\overline{\beta}=\overline{\beta}^{c}+\overline{\beta}^{j}$ by linearity of \cref{Eq:quasiInterpolant}. By the triangle inequality,
\begin{equation}
\label{Eq:lemma1-1}
\Vert \beta-\overline{\beta}\Vert_{L^{2}}\leq \Vert \beta^{c}-\overline{\beta}^{c}\Vert_{L^{2}} + \Vert \beta^{j}-\overline{\beta}^{j}\Vert_{L^{2}},
\end{equation}
and we bound the two terms separately. 

We start with $\Vert \beta^{c}-\overline{\beta}^{c}\Vert_{L^{2}}$. Define the modulus of continuity of $\beta^{c}$:
\begin{equation}
\omega_{\beta^{c}}(h)=\sup_{\substack{s,t\in[0,T]\\ |t-s|\le h}}\Vert\beta^c_t-\beta^c_s\Vert.
\end{equation}
By the convexity of the norm and $\sum_{k=1}^{K_{n}}B_{t}^{(k)}=1$, for any $t\in[0,T]$,
\begin{equation}
\label{Eq:convexity}
\Vert\beta_{t}^{c}-\overline{\beta}_{t}^{c}\Vert\leq\left\Vert\sum_{k=1}^{K_{n}}(\beta_{t}^{c}-\beta_{\tau^{(k)}}^{c})B_{t}^{(k)}\right\Vert\leq\sum_{k=1}^{K_{n}}\Vert\beta_{t}^{c}-\beta_{\tau^{(k)}}^{c}\Vert B_{t}^{(k)}\leq\omega_{\beta^{c}}((d+1)H_{n}),
\end{equation}
and thus
\begin{equation}
\label{Eq:L2norm_c}
\Vert \beta^{c}-\overline{\beta}^{c}\Vert_{L^{2}}\leq\sqrt{T}\sup_{t\in[0,T]}\Vert\beta_{t}^{c}-\overline{\beta}_{t}^{c}\Vert\leq\sqrt{T}\omega_{\beta^{c}}((d+1)H_{n}).
\end{equation}
To calculate the asymptotic order of $\omega_{\beta^{c}}(h)$, we decompose $\beta^{c}=A+M$ with
\begin{equation}
A_{t}=\beta_{0}+\int_{0}^{t}b_{s}^{\beta}ds,\qquad
M_{t}=\int_{0}^{t}\sigma_{s}^{\beta}dW_{s}^{\beta}.
\end{equation}
Under \cref{As:Ito_Iocal}, for $|t-s|\leq h$,
\begin{equation}
\omega_{A}(h)=\sup_{\substack{s,t\in[0,T]\\ |t-s|\le h}}\Vert A_{t}-A_{s}\Vert\leq\left\vert\int_{s}^{t}\Vert b_{u}^{\beta}\Vert du\right\vert\leq Kh.
\end{equation}
Moreover, $M$ is a continuous local martingale, which is a time-changed Brownian motion by the Dambis-Dubins-Schwarz theorem. By Lévy’s modulus of continuity for Brownian motion, it holds that when $h\to 0$,
\begin{equation}
\omega_{M}(h)=\sup_{\substack{s,t\in[0,T]\\ |t-s|\le h}}\Vert M_{t}-M_{s}\Vert = O_{p}\left(\sqrt{h\log \frac{1}{h}}\right).
\end{equation}
Therefore, we obtain
\begin{equation}
\label{Eq:modulus_beta_c}
\omega_{\beta^{c}}(h)\leq \omega_{A}(h)+\omega_{M}(h)=O_{p}\left(\sqrt{h\log \frac{1}{h}}\right).
\end{equation}
Taking $h=(d+1)H_{n}\asymp K_{n}^{-1}$, by \cref{Eq:L2norm_c}, it holds that
\begin{equation}
\label{Eq:lemma1-2}
\Vert \beta^{c}-\overline{\beta}^{c}\Vert_{L^{2}}=O_{p}\left(\sqrt{\frac{\log K_{n}}{K_{n}}}\right).
\end{equation}

Next, we verify that $\Vert \beta^{j}-\overline{\beta}^{j}\Vert_{L^{2}}=O_{p}(K_{n}^{-1/2})$. Define $V_{\beta^{j}}([a,b])$ the total variation of $\beta^{j}$ on the interval $[a,b]$, i.e.,
\begin{equation}
\label{Eq:TV_Jumps}
V_{\beta^{j}}([a,b])=\int_{a}^{b}\int_{\mathbb{R}^{p}}\Vert\delta^{\beta}(s,x)\Vert\mu^{\beta}(ds,dx).
\end{equation}
Then for any $s,t\in[a,b]$, $\Vert\beta_{t}^{j}-\beta_{s}^{j}\Vert\leq V_{\beta^{j}}([a,b])$. Therefore, whenever $B_{t}^{(k)}>0$, $\Vert\beta_{t}^{j}-\beta_{\tau^{(k)}}^{j}\Vert\leq V_{\beta^{j}}(\text{supp}(B^{(k)}))$. By the convexity of the norm, $\sum_{k=1}^{K_{n}}B_{t}^{(k)}=1$, for any $t\in[0,T]$,
\begin{equation}
\Vert\beta_{t}^{j}-\overline{\beta}_{t}^{j}\Vert\leq\sum_{k\in\{k:B_{t}^{(k)}>0\}}\Vert\beta_{t}^{j}-\beta_{\tau^{(k)}}^{j}\Vert B_{t}^{(k)}\leq\sum_{k\in\{k:B_{t}^{(k)}>0\}}B_{t}^{(k)}V_{\beta^{j}}(\text{supp}(B^{(k)})).
\end{equation}
By the Cauchy-Schwarz inequality and $|\{k:B_{t}^{(k)}>0\}|\leq d+1$,
\begin{equation}
\Vert\beta_{t}^{j}-\overline{\beta}_{t}^{j}\Vert^{2}\leq (d+1)\sum_{k\in\{k:B_{t}^{(k)}>0\}}V_{\beta^{j}}^{2}(\text{supp}(B^{(k)})).
\end{equation}
Therefore, we have
\begin{equation}
\begin{split}
\Vert \beta^{j}-\overline{\beta}^{j}\Vert_{L^{2}}^{2}&=\int_{0}^{T}\Vert\beta_{t}^{j}-\overline{\beta}_{t}^{j}\Vert^{2}dt\leq(d+1)\sum_{k=1}^{K_{n}}V_{\beta^{j}}^{2}(\text{supp}(B^{(k)}))\int_{0}^{T}\mathbbm{1}_{\{B_{t}^{(k)}>0\}}dt\\
&\leq(d+1)|\text{supp}(B^{(k)})|\sum_{k=1}^{K_{n}}V_{\beta^{j}}^{2}(\text{supp}(B^{(k)}))\leq(d+1)^{2}H_{n}\sum_{k=1}^{K_{n}}V_{\beta^{j}}^{2}(\text{supp}(B^{(k)})).
\end{split}
\end{equation}
Since supp($B^{(k)}$) covers at most $d+1$ knot intervals and each knot interval belongs to the supports of at most $d+1$ B-splines,
\begin{equation}
\begin{split}
\sum_{k=1}^{K_{n}}V_{\beta^{j}}^{2}(\text{supp}(B^{(k)}))&\leq (d+1)^{2}\sum_{j=1}^{N_{n}+1}V_{\beta^{j}}^{2}([v_{j-1},v_{j}])\\
&\leq (d+1)^{2}\left(\sum_{j=1}^{N_{n}+1}V_{\beta^{j}}([v_{j-1},v_{j}])\right)^{2}\leq (d+1)^{2}V_{\beta^{j}}^{2}([0,T]). 
\end{split}
\end{equation}
Therefore, under \cref{As:Ito_beta}, $V_{\beta^{j}}([0,T])=O_{p}(1)$, and
\begin{equation}
\label{Eq:lemma1-3}
\Vert \beta^{j}-\overline{\beta}^{j}\Vert_{L^{2}}\leq K\sqrt{H_{n}}V_{\beta^{j}}([0,T])=O_{p}(K_{n}^{-1/2}). 
\end{equation}
Combining \cref{Eq:lemma1-1,Eq:lemma1-2,Eq:lemma1-3} leads to \cref{Eq:lemma1-0}. This completes the proof.

\subsection[Proof of Theorem 1]{Proof of \cref{Th:consistency}}
\label{AP:consistency}

\subsubsection*{Identifiability}

We start with the identifiability of $\widehat{\bm{\gamma}}$ by verifying that $\mathbf{R}^{\top}\mathbf{R}$ is nonsingular. We first prove that the theoretical Gram matrix is positive definite, and then show that the theoretical and empirical Gram matrices are uniformly close over the estimation space, such that the empirical identifiability holds with $\widehat{\bm{\gamma}}$ uniquely defined; see Section 2 in \citet{huang2003local} for further details.

Define the matrix $\widetilde{\mathbf{R}}$ as the theoretical counterpart of $\mathbf{R}$:
\setlength{\arraycolsep}{2pt}
\begin{align}
\widetilde{\mathbf{R}}&=
\begin{pmatrix}
\int_{0}^{\Delta_{n}} B_{s}^{(1)}\,dX_{1,s}^{c} & \cdots & \int_{0}^{\Delta_{n}} B_{s}^{(K_{n})}\,dX_{1,s}^{c} & \cdots & \int_{0}^{\Delta_{n}} B_{s}^{(1)}\,dX_{p,s}^{c} & \cdots & \int_{0}^{\Delta_{n}} B_{s}^{(K_{n})}\,dX_{p,s}^{c} \\[1ex]
\int_{\Delta_{n}}^{2\Delta_{n}} B_{s}^{(1)}\,dX_{1,s}^{c} & \cdots & \int_{\Delta_{n}}^{2\Delta_{n}} B_{s}^{(K_{n})}\,dX_{1,s}^{c} & \cdots & \int_{\Delta_{n}}^{2\Delta_{n}} B_{s}^{(1)}\,dX_{p,s}^{c} & \cdots & \int_{\Delta_{n}}^{2\Delta_{n}} B_{s}^{(K_{n})}\,dX_{p,s}^{c} \\[1ex]
\vdots & \ddots & \vdots & \ddots & \vdots & \ddots & \vdots \\[1ex]
\int_{(n-1)\Delta_{n}}^{n\Delta_{n}} B_{s}^{(1)}\,dX_{1,s}^{c} & \cdots & \int_{(n-1)\Delta_{n}}^{n\Delta_{n}} B_{s}^{(K_{n})}\,dX_{1,s}^{c} & \cdots & \int_{(n-1)\Delta_{n}}^{n\Delta_{n}} B_{s}^{(1)}\,dX_{p,s}^{c} & \cdots & \int_{(n-1)\Delta_{n}}^{n\Delta_{n}} B_{s}^{(K_{n})}\,dX_{p,s}^{c}
\end{pmatrix}\notag\\
&=\left(\int_{(i-1)\Delta_{n}}^{i\Delta_{n}}\mathbf{B}_{s}^{\top}dX^{c}_{s}\right)_{1\leq i\leq n}^{\top}\in\mathbb{R}^{n \times pK_{n}},
\end{align}
and the corresponding Gram matrix is $\widetilde{\mathbf{R}}^{\top}\widetilde{\mathbf{R}}$. Let $g_{t}=(g_{1,t},\dots,g_{p,t})^{\top}$ with $g_{j,t}=\sum_{k=1}^{K_{n}}B_{t}^{(k)}v_{j}^{(k)}$ for any $v_{j}^{(k)}\in\mathbb{R}$ and $\bm{v}=(v_{1}^{(1)},\dots,v_{1}^{(K_{n})},\dots,v_{p}^{(1)},\dots,v_{p}^{(K_{n})})^{\top}\in\mathbb{R}^{pK_{n}}$, and $g\in\mathbb{G}_{n}$ by construction. By the properties of B-splines (Appendix \ref{AP:Bspline} (iv)) as well as bounded and nonsingular $c$, as $\Delta_{n}\to0$,
\begin{equation}
\label{Eq:Gram}
\bm{v}^{\top}\widetilde{\mathbf{R}}^{\top}\widetilde{\mathbf{R}}\bm{v}\overset{\mathbb{P}}{\longrightarrow}\int_{0}^{T}g_{s}^{\top}c_{s}g_{s}ds\asymp\sum_{j=1}^{p}\Vert g_{j}\Vert_{L^{2}}^{2}\asymp\frac{\Vert\bm{v}\Vert^{2}}{K_{n}},\qquad\text{where }\Vert\bm{v}\Vert^{2}=\sum_{j=1}^{p}\sum_{k=1}^{K_{n}}(v_{j}^{(k)})^{2},
\end{equation}
which implies that $\widetilde{\mathbf{R}}^{\top}\widetilde{\mathbf{R}}$ is positive definite. The standard law of large numbers (LLN) result for truncated functionals (Theorem 9.2.1, \citealp{jacod2012discretization}) implies that, with $u_{n}\asymp\Delta_{n}^{\varpi}$ for some $0<\varpi<1/2$,
\begin{equation}
\label{Eq:LLN}
\bm{v}^{\top}\mathbf{R}^{\top}\mathbf{R}\bm{v}\overset{\mathbb{P}}{\longrightarrow}\int_{0}^{T}g_{s}^{\top}c_{s}g_{s}ds,
\end{equation}
as $\Delta_{n}\to0$. By Lemma A.2 in \citet{huang2004polynomial}, it holds that
\begin{equation}
\label{Eq:closenessGrams}
\sup_{g\in\mathbb{G}_{n}}\left\vert\frac{\bm{v}^{\top}\mathbf{R}^{\top}\mathbf{R}\bm{v}}{\bm{v}^{\top}\widetilde{\mathbf{R}}^{\top}\widetilde{\mathbf{R}}\bm{v}}-1\right\vert=o_{p}(1),
\end{equation}
if $\Delta_{n}K_{n}\log K_{n}\to 0$. Therefore, the empirical Gram matrix $\mathbf{R}^{\top}\mathbf{R}$ is nonsingular and $\widehat{\bm{\gamma}}$ is uniquely defined with probability approaching one.

By \cref{Eq:Gram,Eq:closenessGrams}, for some $\epsilon_{n}=o_{p}(1)$ and $0\leq\epsilon_{n}<1$ with probability approaching one,
\begin{equation}
\label{Eq:quotient}
\bm{v}^{\top}\mathbf{R}^{\top}\mathbf{R}\bm{v}\geq(1-\epsilon_{n})\bm{v}^{\top}\widetilde{\mathbf{R}}^{\top}\widetilde{\mathbf{R}}\bm{v}\asymp(1-\epsilon_{n})\frac{\Vert\bm{v}\Vert^{2}}{K_{n}}.
\end{equation}
Let $\Vert\bm{v}\Vert\asymp1$, the eigenvalue bound of Rayleigh quotient implies that $\lambda_{\text{min}}(\mathbf{R}^{\top}\mathbf{R})\geq K/K_{n}$, and thus $\Vert(\mathbf{R}^{\top}\mathbf{R})^{-1}\Vert=(\lambda_{\text{min}}(\mathbf{R}^{\top}\mathbf{R}))^{-1}=O_{p}(K_{n})$. A similar but reverse inequality to \cref{Eq:quotient} indicates that $\lambda_{\text{max}}(\mathbf{R}^{\top}\mathbf{R})=O_{p}(K_{n}^{-1})$ and $\Vert\mathbf{R}\Vert=\sqrt{\lambda_{\text{max}}(\mathbf{R}^{\top}\mathbf{R})}=O_{p}(K_{n}^{-1/2})$. The probabilistic bounds for $\Vert(\mathbf{R}^{\top}\mathbf{R})^{-1}\Vert$ and $\Vert\mathbf{R}\Vert$ will be used in the following proofs.

\subsubsection*{Consistency of $\widehat{\beta}$}

The truncated increment of $Y$ in \cref{Eq:regression} over the $i$-th interval is
\begin{equation}
\label{Eq:decomp_Yi}
\begin{split}
\mathbf{Y}_{i}&=\left(\int_{(i-1)\Delta_{n}}^{i\Delta_{n}}\beta_{s-}^{\top}dX_{s}^{c}+\sum_{(i-1)\Delta_{n}\leq s\leq i\Delta_{n}}(\beta_{s-}^{J})^{\top}\Delta X_{s}+\Delta_{i}^{n}Z\right)\mathbbm{1}_{\{|\Delta_{i}^{n}Y|\leq u_{n}\}}\\
&=\underbrace{\left(\int_{(i-1)\Delta_{n}}^{i\Delta_{n}}\widetilde{\beta}_{s}^{\top}dX_{s}^{c}\right)\mathbbm{1}_{\{|\Delta_{i}^{n}Y|\leq u_{n}\}}}_{\widetilde{\mathbf{Y}}_{i}}+\underbrace{\left(\int_{(i-1)\Delta_{n}}^{i\Delta_{n}}e_{s}^{\top}dX_{s}^{c}\right)\mathbbm{1}_{\{|\Delta_{i}^{n}Y|\leq u_{n}\}}}_{\mathbf{U}_{i}}\\
&\qquad+\underbrace{\left(\sum_{(i-1)\Delta_{n}\leq s\leq i\Delta_{n}}(\beta_{s-}^{J})^{\top}\Delta X_{s}\right)\mathbbm{1}_{\{|\Delta_{i}^{n}Y|\leq u_{n}\}}}_{\mathbf{J}_{i}}+\underbrace{\Delta_{i}^{n}Z\mathbbm{1}_{\{|\Delta_{i}^{n}Y|\leq u_{n}\}}}_{\mathbf{Z}_{i}},
\end{split}
\end{equation}
where $\widetilde{\beta}=\mathbf{B}\bm{\gamma}$ is the spline approximation of $\beta$, and $e_{t}=\beta_{t-}-\widetilde{\beta}_{t}$. 

\begin{remark}[Predictability]\label{remark:Predictability}
Note that, since $H=\widetilde{\beta}$ or $e$ is not predictable, whenever we write the stochastic integral $\int H_{s}^{\top}dX_{s}^{c}$, it is understood as $\int (^{p}H_{s})^{\top}dX_{s}^{c}$, where $^{p}H$ is the predictable projection with $^{p}H_{\tau}=\mathbb{E}[H_{\tau}|\mathcal{F}_{\tau-}]$ for all predictable times $\tau$; see Theorem I.2.28 in \citet{jacod2013limit}. 

Since $\beta_{\tau-}$ is predictable under \cref{As:Ito} (i), by Proposition I.2.18 in \citet{jacod2013limit} and linearity of predictable projection, it holds that $^{p}\beta_{\tau-}=\,^{p}\widetilde{\beta}_{\tau}+\,^{p}e_{\tau}=\beta_{\tau-}$. In particular, we will apply the It{\^o} isometry and the Burkholder-Davis-Gundy (BDG) inequality to $^{p}H$. Moreover, by Jensen's inequality, $\Vert^{p}H_{s}\Vert^{2}\leq\,^{p}(\Vert H_{s}\Vert^{2})$, and thus whenever we use It{\^o} isometry,
\begin{equation}
\label{Eq:ItoIsometryRevised}
\mathbb{E}\left[\left(\int_{0}^{T}H_{s}^{\top}dX_{s}^{c}\right)^{2}\right]=\mathbb{E}\left[\int_{0}^{T}(^{p}H_{s})^{\top}c_{s}(^{p}H_{s})ds\right]\leq K\mathbb{E}\left[\int_{0}^{T}\Vert H_{s}\Vert^{2}ds\right]=K\mathbb{E}[\Vert H\Vert_{L^{2}}^{2}]. 
\end{equation}
by the boundedness of $c$ under \cref{As:Ito_Iocal}. 
\end{remark}

Then, by \cref{Eq:estimator_GammaVec},
\begin{equation}
\label{Eq:gammaBias}
\widehat{\bm{\gamma}}-\bm{\gamma}=(\mathbf{R}^{\top}\mathbf{R})^{-1}\mathbf{R}^{\top}((\widetilde{\mathbf{Y}}-\mathbf{R}\bm{\gamma})+\mathbf{U}+\mathbf{J}+\mathbf{Z}).
\end{equation}
To prove $\Vert\widehat{\beta}-\beta\Vert_{L^{2}}=o_{p}(1)$, we verify that the following four components are all $o_{p}(1)$:
\begin{equation}
\label{Eq:normDecomposition}
\begin{split}
\Vert\widehat{\beta}-\beta\Vert_{L^{2}}&\leq\Vert\widehat{\beta}-\widetilde{\beta}\Vert_{L^{2}}+\Vert e\Vert_{L^{2}}\\
&\leq\,\underbrace{\Vert\mathbf{B}(\mathbf{R}^{\top}\mathbf{R})^{-1}\mathbf{R}^{\top}(\widetilde{\mathbf{Y}}-\mathbf{R}\bm{\gamma})\Vert_{L^{2}}}_{A_{1}}\,+\,\underbrace{\Vert\mathbf{B}(\mathbf{R}^{\top}\mathbf{R})^{-1}\mathbf{R}^{\top}\mathbf{U}\Vert_{L^{2}}}_{A_{2}}\,\\
&\qquad+\,\underbrace{\Vert\mathbf{B}(\mathbf{R}^{\top}\mathbf{R})^{-1}\mathbf{R}^{\top}\mathbf{J}\Vert_{L^{2}}}_{A_{3}}\,+\,\underbrace{\Vert\mathbf{B}(\mathbf{R}^{\top}\mathbf{R})^{-1}\mathbf{R}^{\top}\mathbf{Z}\Vert_{L^{2}}}_{A_{4}}\,+\,\Vert e\Vert_{L^{2}},
\end{split}
\end{equation}
where $\Vert e\Vert_{L^{2}}=o_{p}(1)$ by \cref{lemma:approximationError}. 
%where $\Vert e\Vert_{L^{2}}=o_{p}(1)$ by Markov's inequality, i.e., for any $\varepsilon>0$,
%\begin{equation}
%\label{Eq:e_L2_op1}
%\mathbb{P}(\Vert e\Vert_{L^{2}}>\varepsilon)\leq\frac{\mathbb{E}[\Vert e\Vert_{L^{2}}]}{\varepsilon}=O(K_{n}^{-\alpha}).
%\end{equation}

We start with two lemmas. \cref{lemma:norm} is a result from the properties of B-spline basis functions and will be used to derive probabilistic bounds for the components listed above. \cref{lemma:truncationEvents} provides some results regarding the truncation technique of \citet{mancini2009non}, and will be used repeatedly in subsequent proofs. 

\begin{lemma}
\label{lemma:norm}
For any vector $\bm{r}\in\mathbb{R}^{n}$, $\Vert\mathbf{B}(\mathbf{R}^{\top}\mathbf{R})^{-1}\mathbf{R}^{\top}\bm{r}\Vert_{L^{2}}\leq K\sqrt{K_{n}}\Vert\mathbf{R}^{\top}\bm{r}\Vert\leq K'\Vert\bm{r}\Vert$. 
\end{lemma}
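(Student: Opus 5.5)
The plan is to prove the two inequalities separately. The first comes from the B-spline norm equivalence in Appendix \ref{AP:Bspline}(iv) together with the eigenvalue bound on $(\mathbf{R}^{\top}\mathbf{R})^{-1}$. The second comes from the operator-norm bound on $\mathbf{R}$ obtained in the identifiability part of the proof of \cref{Th:consistency}. Both bounds hold with probability approaching one; the constants $K,K'$ are deterministic and the inequality is understood on the event where the Rayleigh-quotient bounds are valid.

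For the first inequality, set $\bm{v}=(\mathbf{R}^{\top}\mathbf{R})^{-1}\mathbf{R}^{\top}\bm{r}\in\mathbb{R}^{pK_{n}}$. The function $g_{t}=\mathbf{B}_{t}\bm{v}$ then lies in $\mathbb{G}_{n}$. By Appendix \ref{AP:Bspline}(iv),
\begin{equation*}
\Vert g\Vert_{L^{2}}^{2}\asymp \Vert\bm{v}\Vert^{2}/K_{n}, \qquad\text{so}\qquad \Vert\mathbf{B}\bm{v}\Vert_{L^{2}}\leq K\Vert\bm{v}\Vert/\sqrt{K_{n}}.
\end{equation*}
Next, $\Vert\bm{v}\Vert\leq\Vert(\mathbf{R}^{\top}\mathbf{R})^{-1}\Vert\,\Vert\mathbf{R}^{\top}\bm{r}\Vert$. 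The identifiability argument gives $\lambda_{\min}(\mathbf{R}^{\top}\mathbf{R})\geq K/K_{n}$ with probability approaching one, via the Rayleigh-quotient lower bound \cref{Eq:quotient}. Hence $\Vert(\mathbf{R}^{\top}\mathbf{R})^{-1}\Vert\leq K K_{n}$, and combining,
\begin{equation*}
\Vert\mathbf{B}\bm{v}\Vert_{L^{2}}\leq K K_{n}\Vert\mathbf{R}^{\top}\bm{r}\Vert/\sqrt{K_{n}}=K\sqrt{K_{n}}\Vert\mathbf{R}^{\top}\bm{r}\Vert.
\end{equation*}

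For the second inequality, use $\Vert\mathbf{R}^{\top}\bm{r}\Vert\leq\Vert\mathbf{R}\Vert\,\Vert\bm{r}\Vert$. The reverse Rayleigh-quotient inequality from the same argument gives $\lambda_{\max}(\mathbf{R}^{\top}\mathbf{R})\leq K/K_{n}$, so $\Vert\mathbf{R}\Vert\leq K/\sqrt{K_{n}}$. Multiplying by $\sqrt{K_{n}}$ yields $K'\Vert\bm{r}\Vert$.

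The main obstacle is the uniform two-sided eigenvalue control of $\mathbf{R}^{\top}\mathbf{R}$ at scale $1/K_{n}$. It has to hold uniformly over $\bm{v}$, not just pointwise, which is why the paper invokes \cref{Eq:closenessGrams} under $\Delta_{n}K_{n}\log K_{n}\to0$. I would take that uniform closeness as given, together with the bounds on $c$ in \cref{Eq:c_equiv}. The upper bound needs the analogous reverse inequality $\bm{v}^{\top}\mathbf{R}^{\top}\mathbf{R}\bm{v}\leq(1+\epsilon_{n})\bm{v}^{\top}\widetilde{\mathbf{R}}^{\top}\widetilde{\mathbf{R}}\bm{v}\lesssim\Vert\bm{v}\Vert^{2}/K_{n}$. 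With those in hand the lemma is pure linear algebra.
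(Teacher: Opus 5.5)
Your proposal is correct and follows the paper's proof essentially line for line. The norm equivalence in Appendix \ref{AP:Bspline}(iv) reduces $\Vert\mathbf{B}\bm{v}\Vert_{L^{2}}$ to $\Vert\bm{v}\Vert/\sqrt{K_{n}}$, and the bounds $\Vert(\mathbf{R}^{\top}\mathbf{R})^{-1}\Vert=O_{p}(K_{n})$ and $\Vert\mathbf{R}\Vert=O_{p}(K_{n}^{-1/2})$ from the Rayleigh-quotient argument around \cref{Eq:quotient} finish both inequalities, with your explicit restriction to an event of probability approaching one correctly making precise what the paper's deterministic-looking statement actually relies on.
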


\begin{proof}[Proof of \cref{lemma:norm}]
It holds that, by Appendix \ref{AP:Bspline} (iv) and the Cauchy-Schwarz inequality,
\begin{equation}
\label{Eq:normInequality}
\begin{split}
\Vert\mathbf{B}(\mathbf{R}^{\top}\mathbf{R})^{-1}\mathbf{R}^{\top}\bm{r}\Vert_{L^{2}}^{2}&\asymp K_{n}^{-1}\Vert(\mathbf{R}^{\top}\mathbf{R})^{-1}\mathbf{R}^{\top}\bm{r}\Vert^{2}\leq KK_{n}^{-1}\Vert(\mathbf{R}^{\top}\mathbf{R})^{-1}\Vert^{2}\Vert\mathbf{R}^{\top}\bm{r}\Vert^{2}\\
&\leq K'K_{n}\Vert\mathbf{R}\Vert^{2}\Vert\bm{r}\Vert^{2}\leq K''\Vert\bm{r}\Vert^{2},
\end{split}
\end{equation}
where $\Vert(\mathbf{R}^{\top}\mathbf{R})^{-1}\Vert=O_{p}(K_{n})$ and $\Vert\mathbf{R}\Vert=O_{p}(K_{n}^{-1/2})$ from \cref{Eq:quotient}. 
\end{proof}

\begin{lemma}
\label{lemma:truncationEvents}
Let $u_{n}\asymp\Delta_{n}^{\varpi}$ for some $0<\varpi<1/2$. Then, as $\Delta_{n}\to0$,
\begin{itemize}
\item[(i)] $\mathbb{P}(\Vert\Delta_{i}^{n}X\Vert>u_{n})\leq K\Delta_{n}u_{n}^{-r}$, $\mathbb{P}(|\Delta_{i}^{n}Y|>u_{n})\leq K\Delta_{n}u_{n}^{-r}$;
%\item[(ii)] $\mathbb{E}[\Vert\Delta_{i}^{n}X-\Delta_{i}^{n}X^{c}\Vert^{2}\mathbbm{1}_{\{\Vert\Delta_{i}^{n}X\Vert\leq u_{n}\}}]\leq K\Delta_{n}^{2}u_{n}^{-2r}+K'\Delta_{n}u_{n}^{2-r}$.
\item[(ii)] $\Vert\Delta_{i}^{n}X-\Delta_{i}^{n}X^{c}\Vert^{2}\mathbbm{1}_{\{\Vert\Delta_{i}^{n}X\Vert\leq u_{n}\}}=O_{p}(\Delta_{n}u_{n}^{2-r})$.
\end{itemize}
\end{lemma}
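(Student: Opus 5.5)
The plan is the standard truncation argument of \citet{mancini2009non} and Lemma 13.2.6 of \citet{jacod2012discretization}, with the localization in \cref{As:Ito_Iocal} so that all coefficients and jump sizes are bounded. I would split each increment into a continuous part and a jump part, $\Delta_{i}^{n}X=\Delta_{i}^{n}X^{c}+\Delta_{i}^{n}X^{d}$, where $X^{c}=X_{0}+\int b\,ds+\int\sigma dW$ and $X^{d}=\delta\star\mu$. Because $r<1$ the jumps have finite variation, so no compensator is needed. The increment of $Y$ splits the same way: its continuous part is $\int\beta_{s-}^{\top}dX^{c}_{s}+\int\widetilde{b}ds+\int\widetilde{\sigma}d\widetilde{W}$, and its jump part collects the terms $(\beta^{J})^{\top}\Delta X$ plus the jumps of $Z$. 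The jump measure of this jump part is dominated by a function satisfying the same $r$-integrability (up to constants, since $\beta^{J}$ is bounded).

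For (i), I would use the inclusion
\begin{equation*}
\{\Vert\Delta_{i}^{n}X\Vert>u_{n}\}\subseteq\{\Vert\Delta_{i}^{n}X^{c}\Vert>u_{n}/2\}\cup\{\Vert\Delta_{i}^{n}X^{d}\Vert>u_{n}/2\}.
\end{equation*}
For the continuous part, the BDG inequality with bounded $b,\sigma$ gives $\mathbb{E}\Vert\Delta_{i}^{n}X^{c}\Vert^{m}\leq K_{m}\Delta_{n}^{m/2}$. Markov's inequality then bounds the first event by $K_{m}\Delta_{n}^{m(1/2-\varpi)}$, which is at most $K\Delta_{n}u_{n}^{-r}$ once $m$ is taken large, since $\varpi<1/2$. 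For the jump part, I use that $r\le1$ implies $\Vert\sum x_{k}\Vert^{r}\leq\sum\Vert x_{k}\Vert^{r}$ for a finite-variation sum, which gives
\begin{equation*}
\mathbb{E}\bigl[\Vert\Delta_{i}^{n}X^{d}\Vert^{r}\wedge1\bigr]\leq\mathbb{E}\int_{(i-1)\Delta_{n}}^{i\Delta_{n}}\int(\Vert\delta\Vert^{r}\wedge1)\,\lambda(dx)ds\leq K\Delta_{n}.
\end{equation*}
Markov's inequality applied to $\Vert\cdot\Vert^{r}\wedge1$ then gives $\mathbb{P}(\Vert\Delta_{i}^{n}X^{d}\Vert>u_{n}/2)\leq K\Delta_{n}u_{n}^{-r}$. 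The same argument applied to $Y$ gives the second bound in (i).

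For (ii), the quantity is $\Vert\Delta_{i}^{n}X^{d}\Vert^{2}\mathbbm{1}_{\{\Vert\Delta_{i}^{n}X\Vert\le u_{n}\}}$. On the event $\{\Vert\Delta_{i}^{n}X^{c}\Vert\leq u_{n}\}$, truncation forces $\Vert\Delta_{i}^{n}X^{d}\Vert\leq2u_{n}$. I would then use the elementary bound $x^{2}\mathbbm{1}_{\{x\leq2u_{n}\}}\leq(2u_{n})^{2-r}x^{r}$. Combined with the $r$-th moment estimate above, this gives
\begin{equation*}
\mathbb{E}\bigl[\Vert\Delta_{i}^{n}X^{d}\Vert^{2}\mathbbm{1}_{\{\Vert\Delta_{i}^{n}X^{d}\Vert\le2u_{n}\}}\bigr]\leq Ku_{n}^{2-r}\Delta_{n}.
\end{equation*}
On the complementary event $\{\Vert\Delta_{i}^{n}X^{c}\Vert>u_{n}\}$, the integrand is at most $(\Vert\Delta_{i}^{n}X^{c}\Vert+u_{n})^{2}\leq4\Vert\Delta_{i}^{n}X^{c}\Vert^{2}$. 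Cauchy--Schwarz with the high-moment bound on $\Delta_{i}^{n}X^{c}$ makes its expectation $O(\Delta_{n}^{M})$ for any $M$, hence $o(\Delta_{n}u_{n}^{2-r})$. The expectation is therefore $O(\Delta_{n}u_{n}^{2-r})$, and Markov's inequality gives the stated $O_{p}$ rate, uniformly in $i$ in expectation.

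The main obstacle is the jump-moment step, since $r$ may be $0$ (finite activity) and the jumps may be of infinite activity. I would handle it by splitting jumps of size $\le1$ from those of size $>1$. Jumps of size $>1$ occur on an interval with probability $O(\Delta_{n})$ by $\lambda$-integrability of $f_{m}$, and on that event the truncation kills the increment or the bound $u_{n}^{2}\le u_{n}^{2-r}$ applies. For jumps of size $\le1$ the subadditivity argument goes through directly. The claim as stated uses the norm $\Vert\cdot\Vert$ with fixed $p$, so the dimension enters only through constants.
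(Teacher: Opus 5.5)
Your proposal is correct, and its skeleton matches the paper's. You split each increment into continuous and jump parts and handle the continuous part by BDG plus Markov with a high moment, using $\varpi<1/2$. For (ii) you separate the event $\{\Vert\Delta_i^nX^c\Vert\le u_n\}$, where $\Vert\Delta_i^nX^d\Vert\le 2u_n$, from its complement, where the integrand is at most $4\Vert\Delta_i^nX^c\Vert^2\mathbbm{1}_{\{\Vert\Delta_i^nX^c\Vert>u_n\}}$ and is negligible.

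The genuine difference is the jump estimate. The paper splits jumps at level $u_n/2$ into a count $N_i^n(u_n/2)$ of large jumps and a sum $S_i^n(u_n/2)$ of small ones. It proves L\'evy-tail bounds of order $u^{-r}$, $u^{1-r}$ and $u^{2-r}$ for $\int_{\{\Vert\delta\Vert>u\}}\lambda(dx)$ and for the truncated first and second moments of $\Vert\delta\Vert$. For (ii) it controls $(\Vert\Delta_i^nX^d\Vert\wedge2u_n)^2$ by writing $\mu=\nu+(\mu-\nu)$ and applying the It\^o isometry for compensated Poisson integrals.

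You replace all of this by the single bound $\mathbb{E}[\Vert\Delta_i^nX^d\Vert^r\wedge1]\le K\Delta_n$, which follows from subadditivity of $x\mapsto x^r\wedge1$. Markov then gives (i). For (ii) you combine $x^2\mathbbm{1}_{\{x\le2u_n\}}\le(2u_n)^{2-r}x^r$ with $\Vert\Delta_i^nX^d\Vert^r\mathbbm{1}_{\{\Vert\Delta_i^nX^d\Vert\le2u_n\}}\le\Vert\Delta_i^nX^d\Vert^r\wedge1$, valid once $2u_n\le1$. This route is shorter, uses nothing beyond $\int f_m\,d\lambda<\infty$, and even avoids the dominated $\Delta_n^2u_n^{2-2r}$ term.

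It also makes your last paragraph superfluous, since the $\wedge1$ already absorbs jumps larger than one and the case $r=0$. Its claim that truncation ``kills'' an increment containing a big jump is not literally true, because jumps can cancel, though your fallback bound covers that case.

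What the paper's heavier machinery buys is reuse. The objects $N_i^n$ and $S_i^n$, and the second-moment bound for $\int\int(\Vert\delta\Vert\wedge2u_n)\mu$, are invoked again for $\widetilde{\mathbf{J}}_i$ and in the CLT remainders. There the quantity is a $\beta^J$-weighted jump sum that the truncation of $\Delta_i^nX$ does not directly control, so your $r$-th moment trick would need supplementing in those places.
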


\begin{proof}[Proof of \cref{lemma:truncationEvents}]

(i) Under \cref{As:Ito} (v), it holds that for $0<u<1$, 
\begin{equation}
\label{Eq:LevyTail}
%\int_{\{f_{m}>u\}}\lambda(dx)\leq Ku^{-r},
\int_{\{\Vert\delta\Vert>u\}}\lambda(dx)\leq u^{-r}\int_{\mathbb{R}^{p}}(\Vert\delta(s,x)\Vert^{r}\wedge1)\lambda(dx)\leq u^{-r}\int_{\mathbb{R}^{p}}f_{m}(x)\lambda(dx)\leq Ku^{-r},
\end{equation}
\begin{equation}
\label{Eq:smallJumps}
\int_{\{\Vert\delta\Vert\leq u\}}\Vert\delta(s,x)\Vert^2\lambda(dx)\leq u^{2-r}\int_{\mathbb{R}^{p}}(\Vert\delta(s,x)\Vert^{r}\wedge1)\lambda(dx)\leq Ku^{2-r}.
\end{equation}
%By \cref{Eq:LevyTail} and \cref{As:Ito} (v), we obtain the first-moment bound:
\begin{equation}
\label{Eq:smallJumpsl1}
\begin{split}
\int_{\{\Vert\delta\Vert\leq u\}}\Vert\delta(s,x)\Vert\lambda(dx)&=\int_{\mathbb{R}^p}\Vert\delta(s,x)\Vert\mathbbm{1}_{\{\Vert\delta(s,x)\Vert\leq u\}}\lambda(dx)=\int_{\mathbb{R}^p}\left(\int_{0}^{\Vert\delta(s,x)\Vert}\mathbbm{1}_{\{y\leq u\}}dy\right)\lambda(dx)\\
&=\int_{0}^{u}\left(\int_{\mathbb{R}^p}\mathbbm{1}_{\{\Vert\delta(s,x)\Vert>y\}}\lambda(dx)\right)dy-u\int_{\mathbb{R}^p}\mathbbm{1}_{\{\Vert\delta(s,x)\Vert>u\}}\lambda(dx)\\
&\leq\int_{0}^{u}\left(\int_{\mathbb{R}^p}\mathbbm{1}_{\{\Vert\delta(s,x)\Vert>y\}}\lambda(dx)\right)dy\\
&\leq\int_{0}^{u} y^{-r}\,dy \int_{\mathbb{R}^p}(\Vert\delta(s,x)\Vert^{r}\wedge1)\lambda(dx)\\
&\leq\left(\int_{\mathbb{R}^p} f_{m}(x)\lambda(dx)\right)\int_{0}^{u} y^{-r}dy\leq Ku^{1-r}.
\end{split}
\end{equation}
Define the count of large jumps and the sum of small jumps with the cutoff level $u_{n}$:
\begin{align}
N_{i}^{n}(u_{n})&=\int_{(i-1)\Delta_n}^{i\Delta_n}\int_{\mathbb{R}^p}\mathbbm{1}_{\{\|\delta(s,x)\|>u_{n}\}}\mu(ds,dx), \label{Eq:N}\\
S_{i}^{n}(u_{n})&=\int_{(i-1)\Delta_n}^{i\Delta_n}\int_{\{\|\delta\|\le u_{n}\}}\delta(s,x)\mu(ds,dx), \label{Eq:S}
\end{align}
and consider the following event decomposition:
\begin{equation}
E_{n}=\left\{\Vert \Delta_{i}^{n}X\Vert>u_{n},\,\Vert\Delta_{i}^{n}X^{c}\Vert \leq\frac{u_{n}}{2},\,N_{i}^{n}\Bigl(\frac{u_{n}}{2}\Bigr)=0\right\},\qquad
E_{n}^{\complement}=\{\Vert \Delta_{i}^{n}X\Vert >u_{n}\}\backslash E_{n}.
\end{equation}
By the triangle inequality, on $E_{n}$,
\begin{equation}
\left\Vert S_{i}^{n}\Bigl(\frac{u_{n}}{2}\Bigr)\right\Vert=\Vert\Delta_{i}^{n}X-\Delta_{i}^{n}X^{c}\Vert \geq \Vert \Delta_{i}^{n}X\Vert-\Vert \Delta_{i}^{n}X^{c}\Vert>\frac{u_{n}}{2},
\end{equation}
so that $\{\Vert \Delta_{i}^{n}X\Vert >u_{n}\}\subseteq\{\Vert S_{i}^{n}(u_{n}/2)\Vert>u_{n}/2\}\cup E_{n}^{\complement}$. Therefore,
\begin{equation}
\label{Eq:3Prob}
\mathbb{P}(\Vert\Delta_{i}^{n}X\Vert>u_{n})\leq\mathbb{P}\left(\Vert\Delta_{i}^{n}X^{c}\Vert>\frac{u_{n}}{2}\right)+\mathbb{P}\left(N_{i}^{n}\Bigl(\frac{u_{n}}{2}\Bigr)\geq1\right)+\mathbb{P}\left(\left\Vert S_{i}^{n}\Bigl(\frac{u_{n}}{2}\Bigr)\right\Vert>\frac{u_{n}}{2}\right),
\end{equation}
so it suffices to bound all three probabilities on the right-hand side in \cref{Eq:3Prob}. 

Choose any $q\geq 2$ that satisfies $q>\frac{1-\varpi r}{1/2-\varpi}$, and it holds that
\begin{equation}
\label{Eq:Prob1}
\mathbb{P}\left(\Vert\Delta_{i}^{n}X^{c}\Vert>\frac{u_{n}}{2}\right)\leq\frac{\mathbb{E}[\Vert\Delta_{i}^{n}X^{c}\Vert^{q}]}{(u_{n}/2)^{q}}\leq K\Delta_{n}^{q/2}u_{n}^{-q}=o(\Delta_{n}u_{n}^{-r}).
\end{equation}
by Markov's inequality and the BDG inequality.  Then, by \cref{Eq:LevyTail},
\begin{equation}
\label{Eq:Prob2}
\mathbb{P}\left(N_{i}^{n}\Bigl(\frac{u_{n}}{2}\Bigr)\geq1\right)\leq\mathbb{E}\left[N_{i}^{n}\Bigl(\frac{u_{n}}{2}\Bigr)\right]\leq\mathbb{E}\left[\int_{(i-1)\Delta_n}^{i\Delta_n}ds\int_{\{\Vert\delta\Vert>u_{n}/2\}}\lambda(dx)\right]\leq K\Delta_{n}u_{n}^{-r}.
\end{equation}
By subadditivity of the Euclidean norm,
\begin{equation}
\left\Vert S_{i}^{n}\Bigl(\frac{u_{n}}{2}\Bigr)\right\Vert\leq\int_{(i-1)\Delta_{n}}^{i\Delta_{n}}\int_{\{\Vert\delta\Vert\leq u_{n}/2\}}\Vert\delta(s,x)\Vert\mu(ds,dx),
\end{equation}
and Markov's inequality implies that
\begin{equation}
\label{Eq:Prob3}
\begin{split}
\mathbb{P}\left(\left\Vert S_{i}^{n}\Bigl(\frac{u_{n}}{2}\Bigr)\right\Vert>\frac{u_{n}}{2}\right)&\leq\mathbb{P}\left(\int_{(i-1)\Delta_{n}}^{i\Delta_{n}}\int_{\{\Vert\delta\Vert\leq u_{n}/2\}}\Vert\delta(s,x)\Vert\mu(ds,dx)>\frac{u_{n}}{2}\right)\\
&\leq\frac{2}{u_{n}}\mathbb{E}\left[\int_{(i-1)\Delta_{n}}^{i\Delta_{n}}ds\int_{\{\Vert\delta\Vert\leq u_{n}/2\}}\Vert\delta(s,x)\Vert\lambda(dx)\right]\leq K\Delta_{n}u_{n}^{-r},
\end{split}
\end{equation}
by \cref{Eq:smallJumpsl1}. Combining \cref{Eq:Prob1,Eq:Prob2,Eq:Prob3} in \cref{Eq:3Prob}, we obtain
\begin{equation}
\mathbb{P}(\Vert\Delta_{i}^{n}X\Vert>u_{n})\leq K\Delta_{n}u_{n}^{-r}. 
\end{equation}

We write $\Delta_{i}^{n}Y$ into
\begin{equation}
\Delta_{i}^{n}Y=\,\underbrace{\int_{(i-1)\Delta_{n}}^{i\Delta_{n}}\beta_{s-}^{\top}b_{s}ds+\int_{(i-1)\Delta_{n}}^{i\Delta_{n}}\beta_{s-}^{\top}\sigma_{s}dW_{s}}_{\Delta_{i}^{n}Y^{(1)}}\,\,+\,\,\underbrace{\sum_{(i-1)\Delta_{n}\leq s\leq i\Delta_{n}}(\beta_{s-}^{J})^{\top}\Delta X_{s}}_{\Delta_{i}^{n}Y^{(2)}}\,\,+\,\,\Delta_{i}^{n}Z,
\end{equation}
and by the triangle inequality,
\begin{equation}
\label{Eq:Y_triangle}
\begin{split}
\mathbb{P}(|\Delta_{i}^{n}Y|>u_{n})&\leq\mathbb{P}\left(|\Delta_{i}^{n}Y^{(1)}|>\frac{u_{n}}{4}\right)+\mathbb{P}\left(|\Delta_{i}^{n}Y^{(2)}|>\frac{u_{n}}{4}\right)\\
&\qquad+\mathbb{P}\left(|\Delta_{i}^{n}Z^{c}|>\frac{u_{n}}{4}\right)+\mathbb{P}\left(|\Delta_{i}^{n}Z-\Delta_{i}^{n}Z^{c}|>\frac{u_{n}}{4}\right).
\end{split}
\end{equation}
By bounded $\beta$ under \cref{As:Ito_Iocal} and the BDG inequality, for any $q>2$,
\begin{equation}
\mathbb{E}[|\Delta_{i}^{n}Y^{(1)}|^{q}]\leq K\mathbb{E}\left[\left(\Delta_{n}^{2}+\int_{(i-1)\Delta_{n}}^{i\Delta_{n}}\Vert\beta_{s}^{\top}\sigma_{s}\Vert^{2}ds\right)^{q/2}\right]\leq K'\Delta_{n}^{q/2}.
\end{equation}
Then by Markov's inequality,
\begin{equation}
\label{Eq:Y_triangle_Prob1}
\mathbb{P}\left(|\Delta_{i}^{n}Y^{(1)}|>\frac{u_{n}}{4}\right)\leq\frac{\mathbb{E}[|\Delta_{i}^{n}Y^{(1)}|^{q}]}{(u_{n}/4)^{q}}\leq K\Delta_{n}^{q/2}u_{n}^{-q}=o(\Delta_{n}u_{n}^{-r}),\qquad\text{if }q>\frac{1-\varpi r}{1/2-\varpi},
\end{equation}
which is always possible because $0<\varpi<1/2$ and $0\leq r<1$. 

By bounded $\beta^{J}$ under \cref{As:Ito_Iocal}, there exists some deterministic $M>0$ such that
\begin{equation}
|\Delta_{i}^{n}Y^{(2)}|\leq\sum_{(i-1)\Delta_{n}\leq s\leq i\Delta_{n}}\Vert\beta_{s}^{J}\Vert\Vert\Delta X_{s}\Vert\leq M\sum_{(i-1)\Delta_{n}\leq s\leq i\Delta_{n}}\Vert\Delta X_{s}\Vert,
\end{equation}
and thus
\begin{equation}
\label{Eq:Y_triangle_Prob2}
\begin{split}
\mathbb{P}\left(|\Delta_{i}^{n}Y^{(2)}|>\frac{u_{n}}{4}\right)&\leq\mathbb{P}\left(N_{i}^{n}\Bigl(\frac{u_{n}}{8M}\Bigr)\geq1\right)+\mathbb{P}\left(\int_{(i-1)\Delta_{n}}^{i\Delta_{n}}\int_{\{\Vert\delta\Vert\leq u_{n}/8M\}}\Vert\delta(s,x)\Vert\mu(ds,dx)>\frac{u_{n}}{8M}\right)\\
&\leq K\Delta_{n}u_{n}^{-r},
\end{split}
\end{equation}
which follows the same steps as in \cref{Eq:Prob2} and the second inequality in \cref{Eq:Prob3}. 

For the last two terms in \cref{Eq:Y_triangle}, we apply the same arguments to the one-dimensional It{\^o} semimartingale $Z$ under \cref{As:Ito} (vi). Combining them, \cref{Eq:Y_triangle_Prob1,Eq:Y_triangle_Prob2} in \cref{Eq:Y_triangle} leads to
\begin{equation}
\mathbb{P}(|\Delta_{i}^{n}Y|>u_{n})\leq K\Delta_{n}u_{n}^{-r}. 
\end{equation}
This completes the proof of \cref{lemma:truncationEvents} (i).

(ii) We consider the following event decomposition:
\begin{equation}
\label{Eq:event_decomp1}
\begin{split}
\{\Vert\Delta_{i}^{n}X\Vert\leq u_{n}\}=\{\Vert\Delta_{i}^{n}X\Vert\leq u_{n},\Vert\Delta_{i}^{n}X^{c}\Vert\leq u_{n}\}\cup\{\Vert\Delta_{i}^{n}X\Vert\leq u_{n},\Vert\Delta_{i}^{n}X^{c}\Vert>u_{n}\}.
\end{split}
\end{equation}
For each $i$, on the event $\{\Vert\Delta_{i}^{n}X\Vert\leq u_{n},\Vert\Delta_{i}^{n}X^{c}\Vert\leq u_{n}\}$,
\begin{equation}
\label{Eq:event_decomp2}
\Vert\Delta_{i}^{n}X-\Delta_{i}^{n}X^{c}\Vert\leq\Vert\Delta_{i}^{n}X\Vert+\Vert\Delta_{i}^{n}X^{c}\Vert\leq 2u_{n},
\end{equation}
and thus
\begin{equation}
\label{Eq:V_decomp}
\begin{split}
\Vert\Delta_{i}^{n}X-\Delta_{i}^{n}X^{c}\Vert^{2}\mathbbm{1}_{\{\Vert\Delta_{i}^{n}X\Vert\leq u_{n}\}}&\leq\underbrace{\Vert\Delta_{i}^{n}X-\Delta_{i}^{n}X^{c}\Vert^{2}\mathbbm{1}_{\Vert\Delta_{i}^{n}X-\Delta_{i}^{n}X^{c}\Vert\leq2u_{n}}}_{V_{i}^{(1)}}\\
&\qquad+\underbrace{\Vert\Delta_{i}^{n}X-\Delta_{i}^{n}X^{c}\Vert^{2}\mathbbm{1}_{\Vert\Delta_{i}^{n}X\Vert\leq u_{n}}\mathbbm{1}_{\Vert\Delta_{i}^{n}X^{c}\Vert>u_{n}}}_{V_{i}^{(2)}}.
\end{split}
\end{equation}
For $V_{i}^{(1)}$, since the function $x\mapsto x\wedge a$ is subadditive,
\begin{equation}
\label{Eq:V1_start}
V_{i}^{(1)}\leq(\Vert\Delta_{i}^{n}X-\Delta_{i}^{n}X^{c}\Vert\wedge2u_{n})^{2}\leq\left(\int_{(i-1)\Delta_n}^{i\Delta_n}\int_{\mathbb{R}^p}(\|\delta(s,x)\|\wedge 2u_n)\mu(ds,dx)\right)^{2}.
\end{equation}
Hence, using $(a+b)^2\leq 2a^2+2b^2$ and $\mu=\nu+(\mu-\nu)$,
\begin{equation}
\label{Eq:V1_split}
\begin{split}
\mathbb E[V_i^{(1)}]&\leq2\mathbb{E}\left[\left(\int_{(i-1)\Delta_n}^{i\Delta_n}\int_{\mathbb{R}^p} (\|\delta(s,x)\|\wedge 2u_n)(\mu-\nu)(ds,dx)\right)^2\right]\\
&\qquad+2\mathbb{E}\left[\left(\int_{(i-1)\Delta_n}^{i\Delta_n}\int_{\mathbb{R}^p}(\|\delta(s,x)\|\wedge 2u_n)\nu(ds,dx)\right)^2\right]. 
\end{split}
\end{equation}
For the first term in \cref{Eq:V1_split}, by the It\^{o} isometry for compensated Poisson integrals,
\begin{equation}
\label{Eq:V1_1}
\begin{split}
&\mathbb{E}\left[\left(\int_{(i-1)\Delta_n}^{i\Delta_n}\int_{\mathbb{R}^p} (\|\delta(s,x)\|\wedge 2u_n)(\mu-\nu)(ds,dx)\right)^2\right]\\
&\qquad=\mathbb{E}\left[\int_{(i-1)\Delta_n}^{i\Delta_n}\int_{\mathbb{R}^p} (\|\delta(s,x)\|\wedge 2u_n)^{2}\nu(ds,dx)\right]\\
&\qquad\leq\mathbb{E}\left[\int_{(i-1)\Delta_n}^{i\Delta_n}ds\left(\int_{\{\|\delta\|\le 2u_n\}}\|\delta(s,x)\|^2\lambda(dx)+4u_n^2\int_{\{\|\delta\|>2u_n\}}\lambda(dx)\right)\right]\\
&\qquad\leq K\Delta_{n}u_{n}^{2-r},
\end{split}
\end{equation}
by \cref{Eq:smallJumps,Eq:LevyTail}. For the second term in \cref{Eq:V1_split}, by Jensen's inequality,
\begin{equation}
\label{Eq:V1_2}
\begin{split}
&\mathbb{E}\left[\left(\int_{(i-1)\Delta_n}^{i\Delta_n}\int_{\mathbb{R}^p} (\|\delta(s,x)\|\wedge 2u_n)\nu(ds,dx)\right)^2\right]\\
&\qquad\leq\Delta_{n}\int_{(i-1)\Delta_n}^{i\Delta_n}\mathbb{E}\left[\left(\int_{\mathbb{R}^p}(\|\delta(s,x)\|\wedge 2u_n)\lambda(dx)\right)^2\right] ds\\
&\qquad\leq\Delta_{n}\int_{(i-1)\Delta_n}^{i\Delta_n}\mathbb{E}\left[\left(\int_{\{\|\delta\|\le 2u_n\}}\|\delta(s,x)\|\lambda(dx)+2u_n\int_{\{\|\delta\|>2u_n\}}\lambda(dx)\right)^2\right] ds\\
&\qquad\leq K\Delta_{n}\int_{(i-1)\Delta_n}^{i\Delta_n}u_n^{2-2r} ds = K\Delta_{n}^{2}u_{n}^{2-2r},
\end{split}
\end{equation}
by \cref{Eq:LevyTail,Eq:smallJumpsl1}. Combining \cref{Eq:V1_split,Eq:V1_1,Eq:V1_2}, we obtain
\begin{equation}
\label{Eq:V1_final}
\mathbb{E}[V_i^{(1)}]\leq K\Delta_n u_n^{2-r}+K'\Delta_{n}^{2}u_n^{2-2r}\le K\Delta_n u_n^{2-r}.
\end{equation}

For $V_{i}^{(2)}$, on the event $\{\Vert\Delta_{i}^{n}X\Vert\leq u_{n},\Vert\Delta_{i}^{n}X^{c}\Vert>u_{n}\}$,
\begin{equation}
\Vert\Delta_{i}^{n}X-\Delta_{i}^{n}X^{c}\Vert\leq u_{n}+\Vert\Delta_{i}^{n}X^{c}\Vert\leq2\Vert\Delta_{i}^{n}X^{c}\Vert,
\end{equation}
and thus $\mathbb{E}[V_{i}^{(2)}]\leq4\mathbb{E}\left[\Vert\Delta_{i}^{n}X^{c}\Vert^{2}\mathbbm{1}_{\{\Vert\Delta_{i}^{n}X^{c}\Vert>u_{n}\}}\right]$. It holds that, for any $q>2$,
\begin{equation}
\label{Eq:moment_Markov}
\begin{split}
\mathbb{E}\left[\Vert\Delta_{i}^{n}X^{c}\Vert^{2}\mathbbm{1}_{\{\Vert\Delta_{i}^{n}X^{c}\Vert>u_{n}\}}\right]\leq\mathbb{E}\left[\Vert\Delta_{i}^{n}X^{c}\Vert^{2}\left(\frac{\Vert\Delta_{i}^{n}X^{c}\Vert}{u_{n}}\right)^{q-2}\right]\leq\frac{\mathbb{E}[\Vert\Delta_{i}^{n}X^{c}\Vert^{q}]}{u_{n}^{q-2}}\leq\Delta_{n}^{q/2}u_{n}^{2-q}.
\end{split}
\end{equation}
by the BDG inequality. Therefore,
\begin{equation}
\label{Eq:V2_final}
\mathbb{E}[V_{i}^{(2)}]\leq\Delta_{n}^{q/2}u_{n}^{2-q}=o(\Delta_{n}u_{n}^{2-r}),\qquad\text{if }q>\frac{1-\varpi r}{1/2-\varpi},
\end{equation}
which is always possible because $0<\varpi<1/2$ and $0\leq r<1$. 

Finally, by \cref{Eq:V_decomp,Eq:V1_final,Eq:V2_final},
\begin{equation}
\mathbb{E}\left[\Vert\Delta_{i}^{n}X-\Delta_{i}^{n}X^{c}\Vert^{2}\mathbbm{1}_{\{\Vert\Delta_{i}^{n}X\Vert\leq u_{n}\}}\right]=O(\Delta_{n}u_{n}^{2-r}),
\end{equation}
and thus, by Markov's inequality,
\begin{equation}
\Vert\Delta_{i}^{n}X-\Delta_{i}^{n}X^{c}\Vert^{2}\mathbbm{1}_{\{\Vert\Delta_{i}^{n}X\Vert\leq u_{n}\}}
=O_{p}(\Delta_{n}u_{n}^{2-r}).
\end{equation}
This completes the proof of \cref{lemma:truncationEvents} (ii).

\end{proof}

Next, we verify that $A_{1}$, $A_{2}$, $A_{3}$ and $A_{4}$ in \cref{Eq:normDecomposition} are all $o_{p}(1)$:

\subsubsection*{(i) $\bm{A_{1}=o_{p}(1)}$}
\label{AP:Proof_consistency_A1}

We consider a further decomposition $\widetilde{\mathbf{Y}}=\widetilde{\mathbf{Y}}^{(1)}+\widetilde{\mathbf{Y}}^{(2)}$:
\begin{align}
\widetilde{\mathbf{Y}}_{i}^{(1)}&%=\widetilde{\beta}_{(i-1)\Delta_{n}}^{\top}\left(\int_{(i-1)\Delta_{n}}^{i\Delta_{n}}dX_{s}^{c}\right)\mathbbm{1}_{\{|\Delta_{i}^{n}Y|\leq u_{n}\}}
=\widetilde{\beta}_{(i-1)\Delta_{n}}^{\top}\Delta_{i}^{n}X^{c}\mathbbm{1}_{\{|\Delta_{i}^{n}Y|\leq u_{n}\}}, \label{Eq:Y_decomp_1}\\
\widetilde{\mathbf{Y}}_{i}^{(2)}&=\left(\int_{(i-1)\Delta_{n}}^{i\Delta_{n}}(\widetilde{\beta}_{s}-\widetilde{\beta}_{(i-1)\Delta_{n}})^{\top}dX_{s}^{c}\right)\mathbbm{1}_{\{|\Delta_{i}^{n}Y|\leq u_{n}\}}. \label{Eq:Y_decomp_2}
\end{align}
%\begin{equation}
%\widetilde{\mathbf{Y}}_{i}^{(1)}=\widetilde{\beta}_{(i-1)\Delta_{n}}^{\top}\Delta_{i}^{n}X^{c}\mathbbm{1}_{\{|\Delta_{i}^{n}Y|\leq u_{n}\}},\quad
%\widetilde{\mathbf{Y}}_{i}^{(2)}=\left(\int_{(i-1)\Delta_{n}}^{i\Delta_{n}}(\widetilde{\beta}_{s}-\widetilde{\beta}_{(i-1)\Delta_{n}})^{\top}dX_{s}^{c}\right)\mathbbm{1}_{\{|\Delta_{i}^{n}Y|\leq u_{n}\}},
%\end{equation}
The $i$-th element in $\widetilde{\mathbf{Y}}^{(1)}-\mathbf{R}\bm{\gamma}$ is given by
\begin{equation}
\label{Eq:Y_decomp_3}
\begin{split}
&\widetilde{\beta}_{(i-1)\Delta_{n}}^{\top}\Delta_{i}^{n}X^{c}\mathbbm{1}_{\{|\Delta_{i}^{n}Y|\leq u_{n}\}}-\widetilde{\beta}_{(i-1)\Delta_{n}}^{\top}\Delta_{i}^{n}X\mathbbm{1}_{\{\Vert\Delta_{i}^{n}X\Vert\leq u_{n}\}}\\
=\,\,&\underbrace{\widetilde{\beta}_{(i-1)\Delta_{n}}^{\top}\Delta_{i}^{n}X^{c}\left(\mathbbm{1}_{\{|\Delta_{i}^{n}Y|\leq u_{n}\}}-\mathbbm{1}_{\{\Vert\Delta_{i}^{n}X\Vert\leq u_{n}\}}\right)}_{M_{i}^{(1)}}\,-\,\underbrace{\widetilde{\beta}_{(i-1)\Delta_{n}}^{\top}(\Delta_{i}^{n}X-\Delta_{i}^{n}X^{c})\mathbbm{1}_{\{\Vert\Delta_{i}^{n}X\Vert\leq u_{n}\}}}_{M_{i}^{(2)}},
\end{split}
\end{equation}
and thus
\begin{equation}
\label{Eq:A1_Eq1}
\Vert\widetilde{\mathbf{Y}}^{(1)}-\mathbf{R}\bm{\gamma}\Vert^{2}=\sum_{i=1}^{n}(M_{i}^{(1)}-M_{i}^{(2)})^{2}\leq2\sum_{i=1}^{n}(M_{i}^{(1)})^{2}+2\sum_{i=1}^{n}(M_{i}^{(2)})^{2}.
\end{equation} 
%For $M_{i}^{(1)}$ and $M_{i}^{(2)}$, it holds that, by the Cauchy-Schwarz inequality and \cref{lemma:truncationEvents} (i), 
%\begin{equation}
%\label{Eq:A1_Eq2}
%\begin{split}
%\sum_{i=1}^{n}\mathbb{E}[(M_{i}^{(1)})^{2}]&\leq K\sum_{i=1}^{n}\mathbb{E}\left[\Vert\Delta_{i}^{n}X^{c}\Vert^{2}\left(\mathbbm{1}_{\{|\Delta_{i}^{n}Y|\leq u_{n}\}}-\mathbbm{1}_{\{\Vert\Delta_{i}^{n}X\Vert\leq u_{n}\}}\right)\right]\\
%&\leq K\sum_{i=1}^{n}(\mathbb{E}[\Vert\Delta_{i}^{n}X^{c}\Vert^{4}])^{1/2}(\mathbb{P}(|\Delta_{i}^{n}Y|> u_{n})+\mathbb{P}(\Vert\Delta_{i}^{n}X\Vert>u_{n}))^{1/2}\\
%&\leq K'\sqrt{\Delta_{n}}u_{n}^{-r/2}=o(1),
%\end{split}
%\end{equation}
For $M_{i}^{(1)}$ and $M_{i}^{(2)}$, since $\Vert\widetilde{\beta}\Vert_{\infty}$ is bounded by de Boor's conjecture (see, e.g., \citealp{de1973quasi,shadrin2001norm}), it holds that, by the BDG inequality and \cref{lemma:truncationEvents} (i), 
\begin{equation}
\label{Eq:A1_Eq2}
\begin{split}
\sum_{i=1}^{n}\mathbb{E}[(M_{i}^{(1)})^{2}]&\leq K\sum_{i=1}^{n}\mathbb{E}\left[\Vert\Delta_{i}^{n}X^{c}\Vert^{2}\left(\mathbbm{1}_{\{|\Delta_{i}^{n}Y|\leq u_{n}\}}-\mathbbm{1}_{\{\Vert\Delta_{i}^{n}X\Vert\leq u_{n}\}}\right)\right]\\
&\leq K\sum_{i=1}^{n}\mathbb{E}\left[\mathbb{E}[\Vert\Delta_{i}^{n}X^{c}\Vert^{2}|\mathcal{F}_{(i-1)\Delta_{n}}]\left(\mathbbm{1}_{\{|\Delta_{i}^{n}Y|\leq u_{n}\}}-\mathbbm{1}_{\{\Vert\Delta_{i}^{n}X\Vert\leq u_{n}\}}\right)\right]\\
&\leq K'\Delta_{n}\sum_{i=1}^{n}(\mathbb{P}(|\Delta_{i}^{n}Y|> u_{n})+\mathbb{P}(\Vert\Delta_{i}^{n}X\Vert>u_{n}))\\
&\leq K''\Delta_{n}u_{n}^{-r}=o(1),
\end{split}
\end{equation}
and, by \cref{lemma:truncationEvents} (ii),
\begin{equation}
\label{Eq:A1_Eq3}
\sum_{i=1}^{n}\mathbb{E}[(M_{i}^{(2)})^{2}]\leq K\sum_{i=1}^{n}\mathbb{E}\left[\Vert\Delta_{i}^{n}X-\Delta_{i}^{n}X^{c}\Vert^{2}\mathbbm{1}_{\{\Vert\Delta_{i}^{n}X\Vert\leq u_{n}\}}\right]\leq Ku_{n}^{2-r}=o(1).
\end{equation}
Therefore, $\mathbb{E}[\Vert\widetilde{\mathbf{Y}}^{(1)}-\mathbf{R}\bm{\gamma}\Vert^{2}]=o(1)$, and thus $\Vert\widetilde{\mathbf{Y}}^{(1)}-\mathbf{R}\bm{\gamma}\Vert=o_{p}(1)$ by Markov's inequality.

Next, we verify that $\Vert\widetilde{\mathbf{Y}}^{(2)}\Vert=o_{p}(1)$. 
By the It{\^o} isometry (see \cref{remark:Predictability}) and bounded $c$,  %since $\widetilde{\beta}$ is Lipschitz continuous in $t$,
\begin{equation}
\label{Eq:Discretization-1}
\mathbb{E}[\Vert\widetilde{\mathbf{Y}}^{(2)}\Vert^{2}]\leq K\sum_{i=1}^{n}\int_{(i-1)\Delta_{n}}^{i\Delta_{n}}\mathbb{E}[\Vert\widetilde{\beta}_{s}-\widetilde{\beta}_{(i-1)\Delta_{n}}\Vert^{2}]ds.
\end{equation}
Since $\widetilde{\beta}$ is absolutely continuous and has bounded derivative on $[0,T]$,
\begin{equation}
\label{Eq:Discretization-2}
\Vert\widetilde{\beta}_{s}-\widetilde{\beta}_{(i-1)\Delta_{n}}\Vert\leq (s-(i-1)\Delta_{n})\Vert\widetilde{\beta}'\Vert_{\infty},\qquad\text{where }\Vert\widetilde{\beta}'\Vert_{\infty}=\sup_{s\in[0,T]}\Vert\widetilde{\beta}'_{s}\Vert,
\end{equation}
and $\widetilde{\beta}'_{s}$ stands for the derivative of $\widetilde{\beta}$ on $s$. By \cref{Eq:Discretization-1,Eq:Discretization-2},
\begin{equation}
\label{Eq:Discretization-3}
\begin{split}
\mathbb{E}[\Vert\widetilde{\mathbf{Y}}^{(2)}\Vert^{2}]&\leq K\mathbb{E}[\Vert\widetilde{\beta}'\Vert_{\infty}^{2}]\sum_{i=1}^{n}\int_{(i-1)\Delta_{n}}^{i\Delta_{n}}(s-(i-1)\Delta_{n})^{2}ds\\
&\leq K\mathbb{E}[\Vert\widetilde{\beta}'\Vert_{\infty}^{2}]\sum_{i=1}^{n}\int_{(i-1)\Delta_{n}}^{i\Delta_{n}}\Delta_{n}^{2}ds=K\mathbb{E}[\Vert\widetilde{\beta}'\Vert_{\infty}^{2}]\sum_{i=1}^{n}\frac{\Delta_{n}^{3}}{3}\\
&\leq K'\Delta_{n}^{2}\mathbb{E}[\Vert\widetilde{\beta}'\Vert_{\infty}^{2}]=O(\Delta_{n}^{2}K_{n}^{2})=o(1).
\end{split}
\end{equation}
because $\Vert\widetilde{\beta}'\Vert_{\infty}=O_{p}(K_{n})$ and $\Delta_{n}K_{n}\log K_{n}\to0$, and thus $\Vert\widetilde{\mathbf{Y}}^{(2)}\Vert=o_{p}(1)$ by Markov's inequality.

By \cref{lemma:norm}, it holds that
\begin{equation}
\begin{split}
A_{1}&\leq\Vert\mathbf{B}(\mathbf{R}^{\top}\mathbf{R})^{-1}\mathbf{R}^{\top}(\widetilde{\mathbf{Y}}^{(1)}-\mathbf{R}\bm{\gamma})\Vert_{L^{2}}+\Vert\mathbf{B}(\mathbf{R}^{\top}\mathbf{R})^{-1}\mathbf{R}^{\top}\widetilde{\mathbf{Y}}^{(2)}\Vert_{L^{2}}\\
&\leq K\Vert\widetilde{\mathbf{Y}}^{(1)}-\mathbf{R}\bm{\gamma}\Vert + K'\Vert\widetilde{\mathbf{Y}}^{(2)}\Vert=o_{p}(1). 
\end{split}
\end{equation}

%\begin{lemma}
%\label{lemma:A2}
%Under the conditions in \cref{Th:consistency}, $A_{2}=o_{p}(1)$ as $\Delta_{n}\to0$. 
%\end{lemma}

\subsubsection*{(ii) $\bm{A_{2}=o_{p}(1)}$}

%By \cref{Eq:ItoIsometryRevised}, we obtain $\mathbb{E}[\Vert\mathbf{U}\Vert^{2}]\leq \mathbb{E}[\Vert e\Vert_{L^{2}}^{2}]$ directly. By Chebyshev's inequality, for any $\varepsilon>0$,
%\begin{equation}
%\mathbb{P}(\Vert\mathbf{U}\Vert>\varepsilon)\leq\frac{\mathbb{E}[|A_{2}|^{2}]}{\varepsilon^{2}}\leq\frac{\mathbb{E}[\Vert e\Vert_{L^{2}}^{2}]}{\varepsilon^{2}}=o(1),
%\end{equation}
%and thus $\Vert\mathbf{U}\Vert=o_{p}(1)$. Therefore, $A_{2}\leq K\Vert\mathbf{U}\Vert=o_{p}(1)$ by \cref{lemma:norm}. 
%
%
%...
%
%Fix $\varepsilon>0$. For any $\eta>0$, it holds that by Chebyshev's inequality,
%\begin{equation}
%\begin{split}
%\mathbb{P}(\Vert\mathbf{U}\Vert>\varepsilon)&\leq\mathbb{P}(\Vert e\Vert_{L^{2}}>\eta)+\mathbb{P}(\Vert\mathbf{U}\Vert>\varepsilon,\Vert e\Vert_{L^{2}}\leq\eta)\\
%&\leq\mathbb{P}(\Vert e\Vert_{L^{2}}>\eta)+\frac{\mathbb{E}[\Vert\mathbf{U}\Vert^{2}\mathbbm{1}_{\{\Vert e\Vert_{L^{2}}\leq\eta\}}]}{\varepsilon^{2}}\\
%&\leq
%\end{split}
%\end{equation}
%
%
%%\begin{proof}
Conditional on a $\sigma$-field $\mathcal{G}=\sigma(e_{s}:0\leq s\leq T)$, it holds that $\mathbb{E}[\Vert\mathbf{U}\Vert^{2}|\mathcal{G}]\leq K\Vert e\Vert_{L^{2}}^{2}$ by the It{\^o} isometry and bounded $c$. Fix $\varepsilon>0$. For any $\eta>0$, it holds that by Chebyshev's inequality,
\begin{equation}
\begin{split}
\mathbb{P}(\Vert\mathbf{U}\Vert>\varepsilon)&\leq\mathbb{P}(\Vert e\Vert_{L^{2}}>\eta)+\mathbb{P}(\Vert\mathbf{U}\Vert>\varepsilon,\Vert e\Vert_{L^{2}}\leq\eta)\\
&\leq\mathbb{P}(\Vert e\Vert_{L^{2}}>\eta)+\frac{\mathbb{E}[\Vert\mathbf{U}\Vert^{2}\mathbbm{1}_{\{\Vert e\Vert_{L^{2}}\leq\eta\}}]}{\varepsilon^{2}}\\
&\leq\mathbb{P}(\Vert e\Vert_{L^{2}}>\eta)+\frac{\mathbb{E}[\mathbb{E}[\Vert\mathbf{U}\Vert^{2}|\mathcal{G}]\mathbbm{1}_{\{\Vert e\Vert_{L^{2}}\leq\eta\}}]}{\varepsilon^{2}}\\
&\leq\mathbb{P}(\Vert e\Vert_{L^{2}}>\eta)+K\frac{\mathbb{E}[\Vert e\Vert_{L^{2}}^{2}\mathbbm{1}_{\{\Vert e\Vert_{L^{2}}\leq\eta\}}]}{\varepsilon^{2}}\\
&\leq\mathbb{P}(\Vert e\Vert_{L^{2}}>\eta)+K\frac{\eta^{2}}{\varepsilon^{2}},
\end{split}
\end{equation}
where $\mathbb{P}(\Vert e\Vert_{L^{2}}>\eta)\to0$ since $\Vert e\Vert_{L^{2}}=o_{p}(1)$ by \cref{lemma:approximationError}. Let $\eta\downarrow0$, and then $\Vert\mathbf{U}\Vert=o_{p}(1)$. Therefore, $A_{2}\leq K\Vert\mathbf{U}\Vert=o_{p}(1)$ by \cref{lemma:norm}.
%\end{proof}

%\begin{lemma}
%\label{lemma:A3}
%Under the conditions in \cref{Th:consistency}, $A_{3}=o_{p}(1)$ as $\Delta_{n}\to0$. 
%\end{lemma}

\subsubsection*{(iii) $\bm{A_{3}=o_{p}(1)}$}
\label{AP:Proof_consistency_A3}

%\begin{proof}
We define a vector $\widetilde{\mathbf{J}}\in\mathbb{R}^{n}$ with the $i$-th element
\begin{equation}
\label{Eq:J_i_tilde}
\widetilde{\mathbf{J}}_{i}=\mathbf{J}_{i}\mathbbm{1}_{\{\Vert\Delta_{i}^{n}X\Vert\leq u_{n}\}}=\left(\sum_{(i-1)\Delta_{n}\leq s\leq i\Delta_{n}}(\beta_{s-}^{J})^{\top}\Delta X_{s}\right)\mathbbm{1}_{\{|\Delta_{i}^{n}Y|\leq u_{n}\}}\mathbbm{1}_{\{\Vert\Delta_{i}^{n}X\Vert\leq u_{n}\}}.
\end{equation}
Because the design matrix $\mathbf{R}$ already includes the same truncation $\mathbbm{1}_{\{\Vert\Delta_{i}^{n}X\Vert\leq u_{n}\}}$ for covariate increments, we have $\mathbf{R}^{\top}\mathbf{J}=\mathbf{R}^{\top}\widetilde{\mathbf{J}}$. Hence, by \cref{lemma:norm}, it holds that
\begin{equation}
A_{3}\leq K\sqrt{K_{n}}\Vert\mathbf{R}^{\top}\mathbf{J}\Vert = K\sqrt{K_{n}}\Vert\mathbf{R}^{\top}\widetilde{\mathbf{J}}\Vert\leq K'\Vert\widetilde{\mathbf{J}}\Vert,
\end{equation}
%By \cref{lemma:norm}, it holds that $A_{3}\leq K\sqrt{K_{n}}\Vert\mathbf{R}^{\top}\mathbf{J}\Vert\equiv K\sqrt{K_{n}}\Vert\mathbf{R}^{\top}\widetilde{\mathbf{J}}\Vert\leq K'\Vert\widetilde{\mathbf{J}}\Vert$, since each element in $\mathbf{R}^{\top}\mathbf{J}$ satisfies
%\begin{equation}
%(\mathbf{R}^{\top}\mathbf{J})_{i}=\mathbf{B}_{(i-1)\Delta_{n}}^{\top}\mathbf{X}_{i}\mathbf{J}_{i}\equiv\mathbf{B}_{(i-1)\Delta_{n}}^{\top}\mathbf{X}_{i}\mathbf{J}_{i}\mathbbm{1}_{\{\Vert\Delta_{i}^{n}X\Vert\leq u_{n}\}}, 
%\end{equation}
so that it suffices to show that $\Vert\widetilde{\mathbf{J}}\Vert=o_{p}(1)$. By bounded $\beta^{J}$ under \cref{As:Ito_Iocal}, it holds that
\begin{align}
|\widetilde{\mathbf{J}}_{i}|^{2}&\leq\left(\sum_{(i-1)\Delta_{n}\leq s\leq i\Delta_{n}}\Vert\beta_{s-}^{J}\Vert\Vert\Delta X_{s}\Vert\right)^{2}\mathbbm{1}_{\{\Vert\Delta_{i}^{n}X\Vert\leq u_{n}\}} \leq K\left(\sum_{(i-1)\Delta_{n}\leq s\leq i\Delta_{n}}\Vert\Delta X_{s}\Vert\right)^{2}\mathbbm{1}_{\{\Vert\Delta_{i}^{n}X\Vert\leq u_{n}\}} \notag\\
&\leq K\left(\int_{(i-1)\Delta_{n}}^{i\Delta_{n}}\int_{\mathbb{R}^{p}}\Vert\delta(s,x)\Vert\mu(ds,dx)\right)^{2}\mathbbm{1}_{\{\Vert\Delta_{i}^{n}X\Vert\leq u_{n}\}} \notag\\
&\leq K\left(\int_{(i-1)\Delta_{n}}^{i\Delta_{n}}\int_{\mathbb{R}^{p}}(\Vert\delta(s,x)\Vert\wedge 2u_{n})\mu(ds,dx)+\int_{(i-1)\Delta_{n}}^{i\Delta_{n}}\int_{\{\Vert\delta\Vert>2u_{n}\}}\Vert\delta(s,x)\Vert\mu(ds,dx)\right)^{2}\mathbbm{1}_{\{\Vert\Delta_{i}^{n}X\Vert\leq u_{n}\}} \notag\\
&\leq 2K\left(\int_{(i-1)\Delta_{n}}^{i\Delta_{n}}\int_{\mathbb{R}^{p}}(\Vert\delta(s,x)\Vert\wedge 2u_{n})\mu(ds,dx)\right)^{2}\mathbbm{1}_{\{\Vert\Delta_{i}^{n}X\Vert\leq u_{n}\}} \notag\\
&\qquad+ 2K\left(\int_{(i-1)\Delta_{n}}^{i\Delta_{n}}\int_{\{\Vert\delta\Vert>2u_{n}\}}\Vert\delta(s,x)\Vert\mu(ds,dx)\right)^{2}\mathbbm{1}_{\{\Vert\Delta_{i}^{n}X\Vert\leq u_{n}\}}, \label{Eq:J_i_2u}
\end{align}
and thus
\begin{equation}
\label{Eq:EJi^2}
\begin{split}
\mathbb{E}[|\widetilde{\mathbf{J}}_{i}|^{2}]&\leq K\mathbb{E}\left[\left(\int_{(i-1)\Delta_{n}}^{i\Delta_{n}}\int_{\mathbb{R}^{p}}(\Vert\delta(s,x)\Vert\wedge 2u_{n})\mu(ds,dx)\right)^{2}\right]\\
&\qquad+K'\mathbb{E}\left[\left(\int_{(i-1)\Delta_{n}}^{i\Delta_{n}}\int_{\{\Vert\delta\Vert>2u_{n}\}}\Vert\delta(s,x)\Vert\mu(ds,dx)\right)^{2}\mathbbm{1}_{\{\Vert\Delta_{i}^{n}X\Vert\leq u_{n}\}}\right].
\end{split}
\end{equation}
Following the same steps as in \cref{Eq:V1_split,Eq:V1_1,Eq:V1_2,Eq:V1_final}, we obtain
\begin{equation}
\label{Eq:EJi^2_1}
\mathbb{E}\left[\left(\int_{(i-1)\Delta_{n}}^{i\Delta_{n}}\int_{\mathbb{R}^{p}}(\Vert\delta(s,x)\Vert\wedge 2u_{n})\mu(ds,dx)\right)^{2}\right]\leq K\Delta_{n}u_{n}^{2-r}.
\end{equation}
%To calculate the second expectation in \cref{Eq:EJi^2}, 
%\begin{equation}
%\int_{(i-1)\Delta_{n}}^{i\Delta_{n}}\int_{\{\Vert\delta\Vert>2u_{n}\}}\Vert\delta(s,x)\Vert\mu(ds,dx)\leq KN_{i}^{n}(2u_{n})=K\int_{(i-1)\Delta_n}^{i\Delta_n}\int_{\mathbb{R}^p}\mathbbm{1}_{\{\|\delta(s,x)\|>2u_{n}\}}\mu(ds,dx),
%\end{equation}
%and thus
With bounded $\delta$ under \cref{As:Ito_Iocal}, it holds that
\begin{equation}
\label{Eq:largeJumpBound}
\left(\int_{(i-1)\Delta_{n}}^{i\Delta_{n}}\int_{\{\Vert\delta\Vert>2u_{n}\}}\Vert\delta(s,x)\Vert\mu(ds,dx)\right)^{2}\mathbbm{1}_{\{\Vert\Delta_{i}^{n}X\Vert\leq u_{n}\}}\leq K(N_{i}^{n}(2u_{n}))^{2}\mathbbm{1}_{\{\Vert\Delta_{i}^{n}X\Vert\leq u_{n}\}},
\end{equation}
where $N_{i}^{n}(2u_{n})$ is defined in \cref{Eq:N}. Therefore, the left-hand side in the above inequality must be zero unless on the event $\{N_{i}^{n}(2u_{n})\geq1,\,\Vert\Delta_{i}^{n}X\Vert\leq u_{n}\}$. We consider the following event relations:
%\begin{align}
%\{N_{i}^{n}(2u_{n})\geq1,\,\Vert\Delta_{i}^{n}X\Vert\leq u_{n}\}&\subseteq\left\{N_{i}^{n}\Bigl(\frac{u_{n}}{2}\Bigr)\geq1,\,\Vert\Delta_{i}^{n}X\Vert\leq u_{n}\right\} \notag\\
%&=\left\{N_{i}^{n}\Bigl(\frac{u_{n}}{2}\Bigr)=1,\,\Vert\Delta_{i}^{n}X\Vert\leq u_{n}\right\}\cup\left\{N_{i}^{n}\Bigl(\frac{u_{n}}{2}\Bigr)\geq2,\,\Vert\Delta_{i}^{n}X\Vert\leq u_{n}\right\},
%\end{align}
\begin{equation}
\begin{split}
\{N_{i}^{n}(2u_{n})\geq1,\,\Vert\Delta_{i}^{n}X\Vert\leq u_{n}\}&\subseteq\left\{N_{i}^{n}(2u_{n})\geq1,\,\Vert\Delta_{i}^{n}X\Vert\leq u_{n},\,N_{i}^{n}\Bigl(\frac{u_{n}}{2}\Bigr)=1\right\}\\
&\qquad\cup\left\{N_{i}^{n}(2u_{n})\geq1,\,\Vert\Delta_{i}^{n}X\Vert\leq u_{n},\,N_{i}^{n}\Bigl(\frac{u_{n}}{2}\Bigr)\geq2\right\}\\
&\subseteq\left\{N_{i}^{n}(2u_{n})=1,\,\Vert\Delta_{i}^{n}X\Vert\leq u_{n}\right\}\cup\left\{N_{i}^{n}\Bigl(\frac{u_{n}}{2}\Bigr)\geq2\right\}
\end{split}
\end{equation}
On the event $\{N_{i}^{n}(2u_{n})=1,\,\Vert\Delta_{i}^{n}X\Vert\leq u_{n}\}$, by the triangle inequality,
\begin{equation}
\left\Vert\Delta_{i}^{n}X^{c}+S_{i}^{n}\Bigl(\frac{u_{n}}{2}\Bigr)\right\Vert>2u_{n}-\Vert\Delta_{i}^{n}X\Vert>u_{n},
\end{equation}
and therefore
%where
%\begin{equation}
%\begin{split}
%\left\{N_{i}^{n}(2u_{n})\geq1,\,\Vert\Delta_{i}^{n}X\Vert\leq u_{n},\,N_{i}^{n}\Bigl(\frac{u_{n}}{2}\Bigr)=1\right\}&=\left\{N_{i}^{n}(2u_{n})=1,\,\Vert\Delta_{i}^{n}X\Vert\leq u_{n}\right\}\\
%&\subseteq\left\{N_{i}^{n}(2u_{n})=1,\,\Vert\Delta_{i}^{n}X-\Delta_{i}^{n}X^{c}\Vert\geq u_{n}\right\}
%\end{split}
%\end{equation}
%\begin{equation}
%\begin{split}
%\left\{N_{i}^{n}\Bigl(\frac{u_{n}}{2}\Bigr)=1,\,\Vert\Delta_{i}^{n}X\Vert\leq u_{n}\right\}&\subseteq\left\{N_{i}^{n}\Bigl(\frac{u_{n}}{2}\Bigr)=1,\,\Vert\Delta_{i}^{n}X^{c}\Vert>\frac{u_{n}}{2}\right\}\\
%&\qquad\cup\left\{N_{i}^{n}\Bigl(\frac{u_{n}}{2}\Bigr)=1,\,\Vert\Delta_{i}^{n}X^{c}\Vert\leq\frac{u_{n}}{2},\,\left\Vert S_{i}^{n}\Bigl(\frac{u_{n}}{2}\Bigr)\right\Vert>\frac{u_{n}}{2}\right\}\\
%&\subseteq\left\{\Vert\Delta_{i}^{n}X^{c}\Vert>\frac{u_{n}}{2}\right\}\cup\left\{\left\Vert S_{i}^{n}\Bigl(\frac{u_{n}}{2}\Bigr)\right\Vert>\frac{u_{n}}{2}\right\},
%\end{split}
%\end{equation}
%with $S_{i}^{n}(u_{n}/2)$ defined in \cref{Eq:S}, and
%\begin{equation}
%\begin{split}
%\left\{N_{i}^{n}\Bigl(\frac{u_{n}}{2}\Bigr)\geq2,\,\Vert\Delta_{i}^{n}X\Vert\leq u_{n}\right\}\subseteq\left\{N_{i}^{n}\Bigl(\frac{u_{n}}{2}\Bigr)\geq2\right\},
%\end{split}
%\end{equation}
%and therefore
\begin{equation}
\label{Eq:eventDecomp_J}
\{N_{i}^{n}(2u_{n})\geq1,\,\Vert\Delta_{i}^{n}X\Vert\leq u_{n}\}\,\subseteq\,\underbrace{\left\{\Vert\Delta_{i}^{n}X^{c}\Vert>\frac{u_{n}}{2}\right\}}_{E_{n}^{(1)}}\,\cup\,\underbrace{\left\{\left\Vert S_{i}^{n}\Bigl(\frac{u_{n}}{2}\Bigr)\right\Vert>\frac{u_{n}}{2}\right\}}_{E_{n}^{(2)}}\,\cup\,\underbrace{\left\{N_{i}^{n}\Bigl(\frac{u_{n}}{2}\Bigr)\geq2\right\}}_{E_{n}^{(3)}}. 
\end{equation}
Then, by \cref{Eq:largeJumpBound},
\begin{equation}
\label{Eq:EJi^2_2_decomp}
\begin{split}
&\mathbb{E}\left[\left(\int_{(i-1)\Delta_{n}}^{i\Delta_{n}}\int_{\{\Vert\delta\Vert>2u_{n}\}}\Vert\delta(s,x)\Vert\mu(ds,dx)\right)^{2}\mathbbm{1}_{\{\Vert\Delta_{i}^{n}X\Vert\leq u_{n}\}}\right]\\
&\qquad\leq K\mathbb{E}\left[(N_{i}^{n}(2u_{n}))^{2}\mathbbm{1}_{E_{n}^{(1)}}\right]+K'\mathbb{E}\left[(N_{i}^{n}(2u_{n}))^{2}\mathbbm{1}_{E_{n}^{(2)}}\right]+K''\mathbb{E}\left[(N_{i}^{n}(2u_{n}))^{2}\mathbbm{1}_{E_{n}^{(3)}}\right].
\end{split}
\end{equation}
By the factorial moment of Poisson random variable and \cref{Eq:LevyTail},
\begin{equation}
\label{Eq:EJi^2_2_decomp3}
\begin{split}
\mathbb{E}\left[(N_{i}^{n}(2u_{n}))^{2}\mathbbm{1}_{E_{n}^{(3)}}\right]&\leq\mathbb{E}\left[\left(N_{i}^{n}\Bigl(\frac{u_{n}}{2}\Bigr)\right)^{2}\mathbbm{1}_{E_{n}^{(3)}}\right]\leq 2\mathbb{E}\left[\left(N_{i}^{n}\Bigl(\frac{u_{n}}{2}\Bigr)\right)\left(N_{i}^{n}\Bigl(\frac{u_{n}}{2}-1\Bigr)\right)\mathbbm{1}_{E_{n}^{(3)}}\right]\\
&\leq 2\mathbb{E}\left[\left(\int_{(i-1)\Delta_n}^{i\Delta_n}ds\int_{\{\Vert\delta\Vert>u_{n}/2\}}\lambda(dx)\right)^{2}\right]\leq K'\Delta_{n}^{2}u_{n}^{-2r}.
\end{split}
\end{equation}
By the Cauchy-Schwarz inequality, \cref{Eq:LevyTail,Eq:Prob2}, take some $q\geq2$ and $q>\frac{6(1-\varpi r)}{1-2\varpi}$, then
\begin{equation}
\label{Eq:EJi^2_2_decomp1}
\begin{split}
\mathbb{E}\left[(N_{i}^{n}(2u_{n}))^{2}\mathbbm{1}_{E_{n}^{(1)}}\right]&\leq\left(\mathbb{E}[(N_{i}^{n}(2u_{n}))^{4}]\right)^{1/2}\left(\mathbb{P}\left(\Vert\Delta_{i}^{n}X^{c}\Vert>\frac{u_{n}}{2}\right)\right)^{1/2}\\
%&\leq \left(\mathbb{E}\left[\left(\int_{(i-1)\Delta_n}^{i\Delta_n}ds\int_{\{\Vert\delta\Vert>2u_{n}\}}\lambda(dx)\right)^{4}\right]\right)^{1/2}\left(\mathbb{P}\left(\Vert\Delta_{i}^{n}X^{c}\Vert>\frac{u_{n}}{2}\right)\right)^{1/2}\\
&\leq K(\Delta_{n}u_{n}^{-r})^{1/2}(\Delta_{n}^{q/2}u_{n}^{-q})^{1/2}=K\Delta_{n}^{1/2+q/4}u_{n}^{-(r+q)/2}=o(\Delta_{n}^{2}u_{n}^{-2r}).
\end{split}
\end{equation}
Following a similar approach to \cref{Eq:Prob3},
\begin{align}
\mathbb{E}\left[(N_{i}^{n}(2u_{n}))^{2}\mathbbm{1}_{E_{n}^{(2)}}\right]&\leq\mathbb{E}\left[(N_{i}^{n}(2u_{n}))^{2}\left(\frac{2}{u_{n}}\int_{(i-1)\Delta_{n}}^{i\Delta_{n}}\int_{\{\Vert\delta\Vert\leq u_{n}/2\}}\Vert\delta(s,x)\Vert\mu(ds,dx)\right)\right]\notag\\
&=\frac{2}{u_{n}}\mathbb{E}\left[N_{i}^{n}(2u_{n})(N_{i}^{n}(2u_{n})-1)\left(\int_{(i-1)\Delta_{n}}^{i\Delta_{n}}\int_{\{\Vert\delta\Vert\leq u_{n}/2\}}\Vert\delta(s,x)\Vert\mu(ds,dx)\right)\right]\notag\\
&\qquad+\frac{2}{u_{n}}\mathbb{E}\left[N_{i}^{n}(2u_{n})\left(\int_{(i-1)\Delta_{n}}^{i\Delta_{n}}\int_{\{\Vert\delta\Vert\leq u_{n}/2\}}\Vert\delta(s,x)\Vert\mu(ds,dx)\right)\right] \notag\\
&\leq \frac{K}{u_{n}}(\Delta_{n}u_{n}^{-r})^{2}(\Delta_{n}u_{n}^{1-r})+\frac{K'}{u_{n}}(\Delta_{n}u_{n}^{-r})(\Delta_{n}u_{n}^{1-r})\leq K'\Delta_{n}^{2}u_{n}^{-2r}. \label{Eq:EJi^2_2_decomp2}
\end{align}
by \cref{Eq:LevyTail,Eq:smallJumpsl1}, and the fact that $\{\Vert\delta(s,x)\Vert>2u_{n}\}$ and $\{\Vert\delta(s,x)\Vert\leq u_{n}/2\}$ are disjoint.

Combining \cref{Eq:EJi^2_2_decomp1,Eq:EJi^2_2_decomp2,Eq:EJi^2_2_decomp3} in \cref{Eq:EJi^2_2_decomp} leads to
\begin{equation}
\label{Eq:EJi^2_2}
\mathbb{E}\left[\left(\int_{(i-1)\Delta_{n}}^{i\Delta_{n}}\int_{\{\Vert\delta\Vert>2u_{n}\}}\Vert\delta(s,x)\Vert\mu(ds,dx)\right)^{2}\mathbbm{1}_{\{\Vert\Delta_{i}^{n}X\Vert\leq u_{n}\}}\right]\leq K\Delta_{n}^{2}u_{n}^{-2r}.
\end{equation}
Then, combining \cref{Eq:EJi^2_1,Eq:EJi^2_2} in \cref{Eq:EJi^2}, we have
\begin{equation}
\label{Eq:J_i_moment}
\mathbb{E}[|\widetilde{\mathbf{J}}_{i}|^{2}]\leq K\Delta_{n}u_{n}^{2-r}+K'\Delta_{n}^{2}u_{n}^{-2r},
\end{equation}
and therefore
\begin{equation}
\label{Eq:J}
\mathbb{E}[\Vert\widetilde{\mathbf{J}}\Vert^{2}]\leq\sum_{i=1}^{n}\mathbb{E}[|\widetilde{\mathbf{J}}_{i}|^{2}]\leq Ku_{n}^{2-r}+K'\Delta_{n}u_{n}^{-2r}=o(1).
\end{equation}
Therefore, we have $\Vert\widetilde{\mathbf{J}}\Vert=o_{p}(1)$ by Markov's inequality and thus $A_{3}=o_{p}(1)$.

\subsubsection*{(iv) $\bm{A_{4}=o_{p}(1)}$}
\label{AP:A4}

We denote the $i'$-th element in $\mathbf{R}^{\top}\mathbf{Z}\in\mathbb{R}^{pK_{n}}$ for $i'(j,k)=(j-1)K_{n}+k$ by
\begin{equation}
\label{Eq:covarianceXZ}
C_{j}^{(k)}=\sum_{i=1}^{n}B_{(i-1)\Delta_{n}}^{(k)}\Delta_{j,i}^{n}X\Delta_{i}^{n}Z\mathbbm{1}_{\{|\Delta_{i}^{n}Y|\leq u_{n}\}}\mathbbm{1}_{\{\Vert\Delta_{i}^{n}X\Vert\leq u_{n}\}}.
\end{equation}
By Appendix \ref{AP:Bspline} (iv),
\begin{equation}
\Delta_{n}\sum_{i=1}^{n}(B_{(i-1)\Delta_{n}}^{(k)})^{2}\overset{\mathbb{P}}{\longrightarrow}\int_{0}^{T}(B_{s}^{(k)})^{2}ds\asymp K_{n}^{-1}.
\end{equation}
Under the orthogonality condition in \cref{As:orthogonality}, $X$ and $Z$ have no co-jumps, and the (fixed-weighted) realized covariance $\sum_{i=1}^{n}B_{(i-1)\Delta_{n}}^{(k)}\Delta_{j,i}^{n}X\Delta_{i}^{n}Z$ has zero limit. By the standard results of realized covariance functionals (see, e.g., Theorem 10.3.2, \citealp{jacod2012discretization}), it holds that
\begin{equation}
\mathbb{E}\left[\left(\sum_{i=1}^{n}B_{(i-1)\Delta_{n}}^{(k)}\Delta_{j,i}^{n}X\Delta_{i}^{n}Z\right)^{2}\right]=O(K_{n}^{-1}\Delta_{n}),
\end{equation}
and thus $\mathbb{E}[(C_{j}^{(k)})^{2}]=O(K_{n}^{-1}\Delta_{n})$. Therefore,
\begin{equation}
\label{Eq:R'Z}
\mathbb{E}[\Vert\mathbf{R}^{\top}\mathbf{Z}\Vert^{2}]=\sum_{l=1}^{p}\sum_{k=1}^{K_{n}}\mathbb{E}[(C_{j}^{(k)})^{2}]=O(\Delta_{n}),
\end{equation}
and thus $\Vert\mathbf{R}^{\top}\mathbf{Z}\Vert=O_{p}(\sqrt{\Delta_{n}})$. By \cref{lemma:norm}, 
\begin{equation}
A_{4}\leq K\sqrt{K_{n}}\Vert\mathbf{R}^{\top}\mathbf{Z}\Vert=O_{p}(\sqrt{K_{n}\Delta_{n}})=o_{p}(1),
\end{equation}
when $\Delta_{n}K_{n}\log K_{n}\to0$.

Therefore, all components in \cref{Eq:normDecomposition} are $o_{p}(1)$, and thus $\Vert\widehat{\beta}-\beta\Vert_{L^{2}}=o_{p}(1)$. This completes the proof.

\subsection[Proof of Theorem 2]{Proof of \cref{Th:CLT}}
\label{AP:Proof_CLT}

%We start with an estimate for the $L_2$-norm of $e$ over $[0,t]$, which is used repeatedly in the proof. 
%By Markov's inequality, for any $\varepsilon>0$, 
%\begin{equation}
%\mathbb{P}\left(\frac{\Vert e\Vert_{L^{2}}}{\sqrt{\Delta_{n}}}>\varepsilon\right)\leq\frac{\mathbb{E}[\Vert e\Vert_{L^{2}}]}{\varepsilon\sqrt{\Delta_{n}}}=O\left(\frac{K_{n}^{-\alpha}}{\varepsilon\sqrt{\Delta_{n}}}\right).
%\end{equation}
%Consequently, as $K_{n}^{\alpha}\sqrt{\Delta_{n}}\to\infty$,  $$\Vert e\Vert_{L^{2}}=o_{p}(\sqrt{\Delta_{n}}).$$  %Therefore, the triangle inequality in \cref{Eq:consistency} implies that $\Vert\widehat{\beta}-\beta\Vert_{L^{2}}\leq O_{p}(\sqrt{\Delta_{n}})$. 

We start with the following decomposition:
%
%(current)
%\begin{equation}
%\label{Eq:decomp} 
%\begin{split}
%\frac{1}{\sqrt{\Delta_{n}}}(\widehat{I\beta}_{T}-I\beta_{T})&= \frac{1}{\sqrt{\Delta_{n}}}\left(\sum_{i=1}^{n}\int_{(i-1)\Delta_{n}}^{i\Delta_{n}}\widehat{\beta}_{s}ds - \int_{0}^{T} \mathbf{B}_s\bm{\gamma}\,ds - \int_{0}^{T} e_s ds \right)\\
%&= \frac{1}{\sqrt{\Delta_{n}}}\left(\int_{0}^{T} \mathbf{B}_sds\right)(\widehat{\bm{\gamma}} - \bm{\gamma})-\frac{1}{\sqrt{\Delta_{n}}} \int_{0}^{T} e_{s} d{s}.
%\end{split}
%\end{equation}
%(original)
%\begin{align}
%\frac{1}{\sqrt{\Delta_{n}}}(\widehat{I\beta}_{T}-I\beta_{T})&= \frac{1}{\sqrt{\Delta_{n}}}\left(\sum_{i=1}^{n}\widehat{\beta}_{(i-1)\Delta_{n}}\Delta_{n} - \int_{0}^{T} \mathbf{B}_s\bm{\gamma}\,ds - \int_{0}^{T} e_s ds \right) \notag\\
%&= \frac{1}{\sqrt{\Delta_{n}}}\left(\sum^n_{i=1 }\mathbf{B}_{(i-1)\Delta_n} \widehat{\bm{\gamma}}\Delta_n -  \int_{0}^{T} \mathbf{B}_s\bm{\gamma}ds\right) - \frac{1}{\sqrt{\Delta_{n}}} \int_{0}^{T} e_s ds \notag\\
%&= \frac{1}{\sqrt{\Delta_{n}}}\left(\int_{0}^{T} \mathbf{B}_sds\right)(\widehat{\bm{\gamma}} - \bm{\gamma}) - \frac{1}{\sqrt{\Delta_{n}}}\left(\int_{0}^{T} \mathbf{B}_sds-\sum^n_{i=1 }\mathbf{B}_{(i-1)\Delta_n}\Delta_n\right)\widehat{\bm{\gamma}} \notag\\
%&\qquad\qquad-\frac{1}{\sqrt{\Delta_{n}}} \int_{0}^{T} e_s ds. \label{Eq:decomp} 
%\end{align}
%(revised)
\begin{align}
\frac{1}{\sqrt{\Delta_{n}}}(\widehat{I\beta}_{T}-I\beta_{T})&= \frac{1}{\sqrt{\Delta_{n}}}\left(\sum_{i=1}^{n}\widehat{\beta}_{(i-1)\Delta_{n}}\Delta_{n} - \int_{0}^{T} \mathbf{B}_s\bm{\gamma}\,ds - \int_{0}^{T} e_s ds \right) \notag\\
&= \frac{1}{\sqrt{\Delta_{n}}}\left(\sum^n_{i=1 }\mathbf{B}_{(i-1)\Delta_n} \widehat{\bm{\gamma}}\Delta_n -  \int_{0}^{T} \mathbf{B}_s\bm{\gamma}ds\right) - \frac{1}{\sqrt{\Delta_{n}}} \int_{0}^{T} e_s ds \notag\\
&= \frac{1}{\sqrt{\Delta_{n}}}\left(\sum_{i=1}^{n}\mathbf{B}_{(i-1)\Delta_n}\Delta_{n}\right)(\widehat{\bm{\gamma}} - \bm{\gamma}) - \frac{1}{\sqrt{\Delta_{n}}}\left( \int_{0}^{T} \mathbf{B}_sds - \sum^n_{i=1 }\mathbf{B}_{(i-1)\Delta_n}\Delta_n\right)\bm{\gamma} \notag\\
&\qquad\qquad-\frac{1}{\sqrt{\Delta_{n}}} \int_{0}^{T} e_s ds. \label{Eq:decomp}
\end{align}
To derive the asymptotic distribution of the $p$-dimensional vector in the first term by the Cram{\'e}r-Wold device, we need to show that, for any fixed linear combination with $a\in\mathbb{R}^{p}$,
\begin{equation}
\label{Eq:Cramer-Wold}
\begin{split}
H_{n}&=\frac{1}{\sqrt{\Delta_{n}}}\,a^{\top}\left(\sum_{i=1}^{n}\mathbf{B}_{(i-1)\Delta_n}\Delta_{n}\right)(\widehat{\bm{\gamma}}-\bm{\gamma})\\
&\qquad\xrightarrow{\mathcal{L}-s}\int_{0}^{T}\sqrt{\widetilde{\sigma}_{s}^{2}}\left(a^{\top}\left(\int_{0}^{T} \mathbf{B}_sds\right)\left(\int_{0}^{T}\mathbf{B}_{s}^{\top}c_{s}\mathbf{B}_{s}ds\right)^{-1}\mathbf{B}_{s}^{\top}\right)^{\top}c_{s}^{1/2}dW'_{s},
\end{split}
\end{equation}
where $W'$ is a standard $p$-dimensional Brownian motion defined on an extension of the original probability space $(\Omega,\mathcal{F},(\mathcal{F}_{t})_{t\geq0},\mathbb{P})$. We also show that the second term (the bias induced by the discretization error) and the third term (the bias induced by the spline approximation error) in \cref{Eq:decomp} are both asymptotically negligible as $\Delta_{n}\to0$. 

%In this proof, we show that the first term converges in law by verifying
%\begin{equation}
%\label{Eq:CLT_gamma}
%\frac{1}{\sqrt{\Delta_{n}}}\left(\int_{0}^{T} \mathbf{B}_sds\right)(\widehat{\bm{\gamma}}-\bm{\gamma})\xrightarrow{\mathcal{L}-s}\left(\int_{0}^{T} \mathbf{B}_sds\right)\left(\int_{0}^{T}\mathbf{B}_{s}^{\top}c_{s}\mathbf{B}_{s}ds\right)^{-1}\int_{0}^{T}\sqrt{\widetilde{\sigma}_{s}^{2}}(\mathbf{B}_{s}^{\top}c_{s}\mathbf{B}_{s})^{1/2}dW'_{s},
%\end{equation}
%and the second term (the bias induced by the spline approximation error) is asymptotic negligible. 

For the fixed linear combination in \cref{Eq:Cramer-Wold}, we consider a further decomposition by \cref{Eq:gammaBias}:
\begin{align}
H_{n} &= \underbrace{\frac{1}{\sqrt{\Delta_{n}}}\,a^{\top}\left(\sum_{i=1}^{n}\mathbf{B}_{(i-1)\Delta_n}\Delta_{n}\right)(\mathbf{R}^{\top}\mathbf{R})^{-1}\mathbf{R}^{\top}(\widetilde{\mathbf{Y}}-\mathbf{R}\bm{\gamma})}_{H_{1,n}}\,+\,\underbrace{\frac{1}{\sqrt{\Delta_{n}}}\,a^{\top}\left(\sum_{i=1}^{n}\mathbf{B}_{(i-1)\Delta_n}\Delta_{n}\right)(\mathbf{R}^{\top}\mathbf{R})^{-1}\mathbf{R}^{\top}\mathbf{U}}_{H_{2,n}} \notag\\
&+\,\underbrace{\frac{1}{\sqrt{\Delta_{n}}}\,a^{\top}\left(\sum_{i=1}^{n}\mathbf{B}_{(i-1)\Delta_n}\Delta_{n}\right)(\mathbf{R}^{\top}\mathbf{R})^{-1}\mathbf{R}^{\top}\mathbf{J}}_{H_{3,n}} \,+\, \frac{1}{\sqrt{\Delta_{n}}}\,a^{\top}\left(\sum_{i=1}^{n}\mathbf{B}_{(i-1)\Delta_n}\Delta_{n}\right)(\mathbf{R}^{\top}\mathbf{R})^{-1}\mathbf{R}^{\top}\mathbf{Z}, \label{Eq:decom}
\end{align}
where $\mathbf{Z}=(\mathbf{Z}_{1},\dots,\mathbf{Z}_{n})^{\top}$ in the fourth term includes the truncation $\mathbbm{1}_{\{|\Delta_{i}^{n}Y|\leq u_{n}\}}$ on each increment of $Y$ rather than on increments of $Z$ (see \cref{Eq:decomp_Yi}). Therefore, we rewrite the fourth term into:
\begin{align}
\frac{1}{\sqrt{\Delta_{n}}}\,a^{\top}\left(\sum_{i=1}^{n}\mathbf{B}_{(i-1)\Delta_n}\Delta_{n}\right)(\mathbf{R}^{\top}\mathbf{R})^{-1}\mathbf{R}^{\top}\mathbf{Z}&=\underbrace{\frac{1}{\sqrt{\Delta_{n}}}\,a^{\top}\left(\sum_{i=1}^{n}\mathbf{B}_{(i-1)\Delta_n}\Delta_{n}\right)(\mathbf{R}^{\top}\mathbf{R})^{-1}\mathbf{R}^{\top}\widetilde{\mathbf{Z}}}_{H_{4,n}}\notag\\
&\qquad+\underbrace{\frac{1}{\sqrt{\Delta_{n}}}\,a^{\top}\left(\sum_{i=1}^{n}\mathbf{B}_{(i-1)\Delta_n}\Delta_{n}\right)(\mathbf{R}^{\top}\mathbf{R})^{-1}\mathbf{R}^{\top}(\mathbf{Z}-\widetilde{\mathbf{Z}})}_{H_{5,n}}, \label{Eq:decom2}
\end{align}
where $\widetilde{\mathbf{Z}}=(\widetilde{\mathbf{Z}}_{1},\dots,\widetilde{\mathbf{Z}}_{n})^{\top}$ with $\widetilde{\mathbf{Z}}_{i}=\Delta_{i}^{n}Z\mathbbm{1}_{\{|\Delta_{i}^{n}Z|\leq u_{n}\}}$. Next, we verify that the leading term for the stable central limit theorem (CLT) in \cref{Eq:Cramer-Wold} is $H_{4,n}$, and the other components $H_{1,n}$, $H_{2,n}$, $H_{3,n}$, and $H_{5,n}$ are asymptotically negligible.

\subsubsection*{(i) Asymptotic distribution of $\bm{H_{4,n}}$}
\label{AP:Proof_CLT_i}

%We define the following vector of random weights:
%\begin{equation}
%\theta_{t}^{n}=\mathbf{B}_{t}(\mathbf{R}^{\top}\mathbf{R})^{-1}\left(\int_{0}^{T}\mathbf{B}_sds\right)^{\top}a\in\mathbb{R}^{p},
%\end{equation}
%By the LLN result in \cref{Eq:LLN} and the continuous mapping theorem, 
%\begin{equation}
%\theta_{t}^{n}\overset{\mathbb{P}}{\longrightarrow}\theta_{t}=\mathbf{B}_{t}\left(\int_{0}^{T}\mathbf{B}_{s}^{\top}c_{s}\mathbf{B}_{s}ds\right)^{-1}\left(\int_{0}^{T}\mathbf{B}_sds\right)^{\top}a\in\mathbb{R}^{p}.
%\end{equation}
We define the following vectors:
\begin{equation}
\label{Eq:w_n}
\bm{w}_{n}=(w_{1,n}^{(1)},\dots,w_{1,n}^{(K_{n})},\dots,w_{p,n}^{(1)},\dots,w_{p,n}^{(K_{n})})^{\top}=(\mathbf{R}^{\top}\mathbf{R})^{-1}\left(\sum_{i=1}^{n}\mathbf{B}_{(i-1)\Delta_n}\Delta_{n}\right)^{\top}a\in\mathbb{R}^{pK_{n}},
\end{equation}
\begin{equation}
\label{Eq:theta_n}
\theta_{n,t}=\mathbf{B}_{t}\bm{w}_{n}=\sum_{k=1}^{K_{n}}B_{t}^{(k)}\begin{pmatrix}
w_{1,n}^{(k)}\\
\vdots\\
w_{p,n}^{(k)}\\
\end{pmatrix}\in\mathbb{R}^{p}.
\end{equation}
By the LLN result in \cref{Eq:LLN}, a standard Riemann approximation for the deterministic integral, and the continuous mapping theorem, $\theta_{n,t}\overset{\mathbb{P}}{\longrightarrow}\theta_{t}=\mathbf{B}_{t}\bm{w}\in\mathbb{R}^{p}$ for any $0\leq t\leq T$, where
\begin{equation}
\bm{w}=(w_{1}^{(1)},\dots,w_{p}^{(K_{n})})^{\top}=\left(\int_{0}^{T}\mathbf{B}_{s}^{\top}c_{s}\mathbf{B}_{s}ds\right)^{-1}\left(\int_{0}^{T}\mathbf{B}_sds\right)^{\top}a\in\mathbb{R}^{pK_{n}}.
\end{equation}
By definition, 
\begin{equation}
H_{4,n}=\frac{1}{\sqrt{\Delta_{n}}}\sum_{i=1}^{n}\theta_{n,(i-1)\Delta_{n}}^{\top}\Delta_{i}^{n}X\Delta_{i}^{n}Z\mathbbm{1}_{\{\Vert\Delta_{i}^{n}X\Vert\leq u_{n}\}}\mathbbm{1}_{\{|\Delta_{i}^{n}Z|\leq u_{n}\}}.
\end{equation}
Under \cref{As:orthogonality}, $X$ and $Z$ have zero continuous covariation and no co-jumps. By the standard CLT for truncated realized covariances (Theorem 13.2.1, \citealp{jacod2012discretization}), and stable Slutsky's theorem, it holds that
\begin{equation}
H_{4,n}\xrightarrow{\mathcal{L}-s}\int_{0}^{T}\sqrt{\widetilde{\sigma}_{s}^{2}}\theta_{s}^{\top}c_{s}^{1/2}dW'_{s}, %=\int_{0}^{T}\sqrt{\widetilde{\sigma}_{s}^{2}}\left(a^{\top}\left(\int_{0}^{T} \mathbf{B}_sds\right)\left(\int_{0}^{T}\mathbf{B}_{s}^{\top}c_{s}\mathbf{B}_{s}ds\right)^{-1}\mathbf{B}_{s}^{\top}\right)^{\top}c_{s}^{1/2}dW'_{s},
\end{equation}
where $W'$ is a standard $p$-dimensional Brownian motion defined on an extension of the original probability space. The conditional variance of the limit equals
\begin{equation}
\begin{split}
\int_{0}^{T}\widetilde{\sigma}_{s}^{2}\theta_{s}^{\top}c_{s}\theta_{s}ds&=a^{\top}\left(\int_{0}^{T}\mathbf{B}_sds\right)\left(\int_{0}^{T}\mathbf{B}_{s}^{\top}c_{s}\mathbf{B}_{s}ds\right)^{-1}\left(\int_{0}^{T}\widetilde{\sigma}_{s}^{2}\mathbf{B}_{s}^{\top}c_{s}\mathbf{B}_{s}ds\right)\\
&\qquad\times\left(\int_{0}^{T}\mathbf{B}_{s}^{\top}c_{s}\mathbf{B}_{s}ds\right)^{-1}\left(\int_{0}^{T}\mathbf{B}_sds\right)^{\top}a\\
&=a^{\top}\Sigma_{T}^{\beta}a. 
\end{split}
\end{equation}

\subsubsection*{(ii) $\bm{H_{1,n}=o_{p}(1)}$}

Decompose $H_{1,n}$ into:
\begin{equation}
\label{Eq:H_1_decomp}
\begin{split}
H_{1,n}&=\frac{1}{\sqrt{\Delta_{n}}}\sum_{i=1}^{n}\bm{w}_{n}^{\top}\mathbf{R}_{i}\mathcal{Y}_{i}\\
&\leq\underbrace{\frac{1}{\sqrt{\Delta_{n}}}\sum_{i=1}^{n}\bm{w}_{n}^{\top}\mathbf{R}_{i}M_{i}^{(1)}}_{H_{1,n}^{(1)}}\,\,-\, \underbrace{\frac{1}{\sqrt{\Delta_{n}}}\sum_{i=1}^{n}\bm{w}_{n}^{\top}\mathbf{R}_{i}M_{i}^{(2)}}_{H_{1,n}^{(2)}} \,\,+\, \underbrace{\frac{1}{\sqrt{\Delta_{n}}}\sum_{i=1}^{n}\bm{w}_{n}^{\top}\mathbf{R}_{i}\widetilde{\mathbf{Y}}_{i}^{(2)}}_{H_{1,n}^{(3)}},
\end{split}
\end{equation}
where $\mathcal{Y}_{i}$ represents the $i$-th element of $\widetilde{\mathbf{Y}}-\mathbf{R}\bm{\gamma}$, and we employ the decomposition of $\mathcal{Y}_{i}$ in \cref{Eq:Y_decomp_1,Eq:Y_decomp_2,Eq:Y_decomp_3}:
\begin{equation}
\mathcal{Y}_{i}=M_{i}^{(1)}-M_{i}^{(2)}+\widetilde{\mathbf{Y}}_{i}^{(2)},
\end{equation}
Next, we verify that all three components on the right-hand side of \cref{Eq:H_1_decomp} are $o_{p}(1)$.

\subsubsection*{(ii.1) $\bm{H_{1,n}^{(1)}=o_{p}(1)}$}

It holds for $H_{1,n}^{(1)}$ that, by Hölder's inequality,
\begin{equation}
\label{Eq:H_1_1_Holder}
|H_{1,n}^{(1)}|=\frac{1}{\sqrt{\Delta_{n}}}\left\vert\sum_{i=1}^{n}\bm{w}_{n}^{\top}\mathbf{R}_{i}M_{i}^{(1)}\right\vert\leq\frac{1}{\sqrt{\Delta_{n}}}\sum_{i=1}^{n}|\bm{w}_{n}^{\top}\mathbf{R}_{i}||M_{i}^{(1)}|\leq\frac{1}{\sqrt{\Delta_{n}}}\sum_{i=1}^{n}\Vert\bm{w}_{n}\Vert_{\infty}\Vert\mathbf{R}_{i}\Vert_{1}|M_{i}^{(1)}|.
\end{equation}
For any $i=1,\dots,n$, by the local support of B-splines (Appendix \ref{AP:Bspline} (iii)), each $\mathbf{R}_{i}$ has only a fixed number of nonzero components, and then the equivalence between $\ell_{1}$- and $\ell_{2}$-norms implies that
\begin{equation}
|H_{1,n}^{(1)}|\leq\frac{K}{\sqrt{\Delta_{n}}}\Vert\bm{w}_{n}\Vert_{\infty}\sum_{i=1}^{n}\Vert\mathbf{R}_{i}\Vert|M_{i}^{(1)}|.
\end{equation}
By the Cauchy-Schwarz inequality,  
\begin{equation}
\label{Eq:H_1_1_CS}
\begin{split}
|H_{1,n}^{(1)}|&\leq\frac{K}{\sqrt{\Delta_{n}}}\Vert\bm{w}_{n}\Vert_{\infty}\sum_{i=1}^{n}\Vert\mathbf{B}_{(i-1)\Delta_{n}}\Vert\Vert\Delta_{i}^{n}X\Vert|M_{i}^{(1)}|\mathbbm{1}_{\{\Vert\Delta_{i}^{n}X\Vert\leq u_{n}\}}\\
&\leq\frac{K}{\sqrt{\Delta_{n}}}\Vert\bm{w}_{n}\Vert_{\infty}\left(K'\max_{1\leq k\leq K_{n}}B_{(i-1)\Delta_{n}}^{(k)}\right)\sum_{i=1}^{n}\Vert\Delta_{i}^{n}X\Vert|M_{i}^{(1)}|\mathbbm{1}_{\{\Vert\Delta_{i}^{n}X\Vert\leq u_{n}\}}\\
&\leq\frac{K}{\sqrt{\Delta_{n}}}\Vert\bm{w}_{n}\Vert_{\infty}\sum_{i=1}^{n}\Vert\Delta_{i}^{n}X\Vert|M_{i}^{(1)}|\mathbbm{1}_{\{\Vert\Delta_{i}^{n}X\Vert\leq u_{n}\}}\\
&\leq\frac{Ku_{n}}{\sqrt{\Delta_{n}}}\Vert\bm{w}_{n}\Vert_{\infty}\sum_{i=1}^{n}|M_{i}^{(1)}|.
\end{split}
\end{equation}

Next, we show that $\Vert\bm{w}_{n}\Vert_{\infty}$ is bounded on an event $E_{n}$ with $\mathbb{P}(E_{n})\to1$. Firstly, the local support of B-splines imply that
\begin{equation}
\label{Eq:norm_Aa}
\begin{split}
\left\Vert\left(\sum_{i=1}^{n}\mathbf{B}_{(i-1)\Delta_n}\Delta_{n}\right)^{\top}a\right\Vert_{\infty}&\leq K\Vert a\Vert_{\infty}\max_{1\leq k\leq K_{n}}\sum_{i=1}^{n}B_{(i-1)\Delta_n}^{(k)}\Delta_{n}\\
&\leq K\Vert a\Vert_{\infty}\sup_{t\in[0,T]}B_{t}^{(k)}\left(\Bigl\lceil\frac{|\text{supp}(B^{(k)})|}{\Delta_{n}}\Bigr\rceil+1\right)\Delta_{n}\leq\frac{K'}{K_{n}},
\end{split}
\end{equation}
where $|\text{supp}(B^{(k)})|\asymp K_{n}^{-1}$; see Appendix \ref{AP:Bspline} (iii). We define a permutation matrix $\mathbf{P}$ that reorders elements in $\mathbf{R}$ (group the $p$ components associated with each spline basis index). Specifically, the $i$-th column in the row-permutated $\mathbf{P}\mathbf{R}^{\top}\in\mathbb{R}^{pK_{n}\times n}$ is given by
\begin{equation}
\begin{pmatrix}
B_{(i-1)\Delta_{n}}^{(1)}\Delta_{1,i}^{n}X\mathbbm{1}_{\{\Vert\Delta_{i}^{n}X\Vert\leq u_{n}\}}\\
\vdots\\
B_{(i-1)\Delta_{n}}^{(1)}\Delta_{p,i}^{n}X\mathbbm{1}_{\{\Vert\Delta_{i}^{n}X\Vert\leq u_{n}\}}\\
\vdots\\
\vdots\\
B_{(i-1)\Delta_{n}}^{(K_{n})}\Delta_{1,i}^{n}X\mathbbm{1}_{\{\Vert\Delta_{i}^{n}X\Vert\leq u_{n}\}}\\
\vdots\\
B_{(i-1)\Delta_{n}}^{(K_{n})}\Delta_{p,i}^{n}X\mathbbm{1}_{\{\Vert\Delta_{i}^{n}X\Vert\leq u_{n}\}}\\
\end{pmatrix}\in\mathbb{R}^{pK_{n}}.
\end{equation}
Then we consider a rescaled and permutated Gram matrix $\mathbf{G}=K_{n}\mathbf{P}(\mathbf{R}^{\top}\mathbf{R})\mathbf{P}^{\top}\in\mathbb{R}^{pK_{n}\times pK_{n}}$. By \cref{Eq:quotient}, the eigenvalues of $\mathbf{G}$ are bounded away from 0 and infinity with probability approaching one, i.e.,
\begin{equation}
\label{Eq:event}
\mathbb{P}(E_{n})\to1,\qquad\text{where }E_{n}=\{0<\underline{\kappa}\leq\lambda_{\text{min}}(\mathbf{G})\leq\lambda_{\text{max}}(\mathbf{G})\leq\overline{\kappa}<\infty\}.
\end{equation}
Importantly, the local support property of B-spline basis implies that our rescaled matrix $\mathbf{G}$ is banded with fixed bandwidth, i.e., we have $\mathbf{G}_{ij}=0$ when $|i-j|>L$ for some $L>0$.\footnote{More strictly speaking, the matrix $\mathbf{G}$ is block-banded with block size $p$. When $p$ is fixed, this block-bandedness automatically implies ordinary (scalar) bandedness by taking a half-bandwidth $L$ large enough to cover the finitely many nonzero diagonal blocks. } A standard result for symmetric positive definite banded matrices from \citet{demko1984decay} is that the inverse has exponentially decaying off-diagonal entries. Specifically in our case, on $E_{n}$, $|(\mathbf{G}^{-1})_{ij}|\leq K\rho^{|i-j|}$ for some $\rho\in(0,1)$, and therefore
\begin{equation}
\label{Eq:norm_G_inverse}
\Vert\mathbf{G}^{-1}\Vert_{\infty}\leq\max_{i}\sum_{j=1}^{pK_{n}}|(\mathbf{G}^{-1})_{ij}|\leq K\left(1+2\sum_{m\geq1}\rho^{m}\right)=K\left(1+\frac{2\rho}{1-\rho}\right)\leq \frac{K'}{1-\rho}.
\end{equation}
Therefore, with
\begin{equation}
\mathbf{P}\bm{w}_{n}=\mathbf{P}(\mathbf{R}^{\top}\mathbf{R})^{-1}\left(\sum_{i=1}^{n}\mathbf{B}_{(i-1)\Delta_n}\Delta_{n}\right)^{\top}a=(K_{n}\mathbf{P}(\mathbf{R}^{\top}\mathbf{R})\mathbf{P}^{\top})^{-1}K_{n}\mathbf{P}\left(\sum_{i=1}^{n}\mathbf{B}_{(i-1)\Delta_n}\Delta_{n}\right)^{\top}a,
\end{equation}
it holds that on $E_{n}$, by \cref{Eq:norm_Aa,Eq:norm_G_inverse}, 
\begin{equation}
\label{Eq:w_infty}
\Vert\bm{w}_{n}\Vert_{\infty}\leq\Vert\mathbf{P}\bm{w}_{n}\Vert_{\infty}\leq\Vert\mathbf{G}^{-1}\Vert_{\infty}\left\Vert K_{n}\mathbf{P}\left(\sum_{i=1}^{n}\mathbf{B}_{(i-1)\Delta_n}\Delta_{n}\right)^{\top}a\right\Vert_{\infty}\leq K. 
\end{equation}
By \cref{Eq:H_1_1_CS,Eq:w_infty}, we have
\begin{equation}
\label{Eq:H_1_square}
\begin{split}
\mathbb{E}[|H_{1,n}^{(1)}|^{2}\mathbbm{1}_{E_{n}}]\leq\frac{Ku_{n}^{2}}{\Delta_{n}}\mathbb{E}\left[\Vert\bm{w}_{n}\Vert_{\infty}^{2}\mathbbm{1}_{E_{n}}\left(\sum_{i=1}^{n}|M_{i}^{(1)}|\right)^{2}\right]\leq\frac{K'u_{n}^{2}}{\Delta_{n}}\mathbb{E}\left[\left(\sum_{i=1}^{n}|M_{i}^{(1)}|\right)^{2}\right],
\end{split}
\end{equation}
so it suffices to show
\begin{equation}
\label{Eq:Y_sum_expectation}
\mathbb{E}\left[\left(\sum_{i=1}^{n}|M_{i}^{(1)}|\right)^{2}\right]=o\left(\frac{\Delta_{n}}{u_{n}^{2}}\right),\qquad\text{such that}\quad\mathbb{E}[|H_{1,n}^{(1)}|^{2}\mathbbm{1}_{E_{n}}]=o(1).
\end{equation}
%We employ the decomposition of $\mathcal{Y}_{i}$ in \cref{Eq:Y_decomp_1,Eq:Y_decomp_2,Eq:Y_decomp_3}:
%\begin{equation}
%\mathcal{Y}_{i}=M_{i}^{(1)}-M_{i}^{(2)}+\widetilde{\mathbf{Y}}_{i}^{(2)},
%\end{equation}
%and thus the expectation in \cref{Eq:Y_sum_expectation} can be written into a sum of three $\ell_{1}$-squared terms:
%\begin{equation}
%\label{Eq:3norms}
%\left(\sum_{i=1}^{n}|\mathcal{Y}_{i}|\right)^{2}\leq3\left(\sum_{i=1}^{n}|M_{i}^{(1)}|\right)^{2}+3\left(\sum_{i=1}^{n}|M_{i}^{(2)}|\right)^{2}+3\left(\sum_{i=1}^{n}|\widetilde{\mathbf{Y}}_{i}^{(2)}|\right)^{2}.
%\end{equation}
%
%or the $\ell_{1}$-squared term of $M_{i}^{(2)}$, we define the subset of intervals,
We define the subset of intervals,
\begin{equation}
\mathcal{I}_{n}=\{i:|\Delta_{i}^{n}Y|> u_{n}\text{ or }\Vert\Delta_{i}^{n}X\Vert>u_{n}\},
\end{equation}
and apply the Cauchy-Schwarz inequality on that subset:
\begin{equation}
\begin{split}
\left(\sum_{i=1}^{n}|M_{i}^{(1)}|\right)^{2}=\left(\sum_{i\in\mathcal{I}_{n}}|M_{i}^{(1)}|\right)^{2}\leq |\mathcal{I}_{n}|\sum_{i\in \mathcal{I}_{n}}(M_{i}^{(1)})^{2}.
\end{split}
\end{equation}
Therefore, it holds that
\begin{equation}
\label{Eq:M_1_square}
\begin{split}
\mathbb{E}\left[\left(\sum_{i=1}^{n}|M_{i}^{(1)}|\right)^{2}\right]&\leq\mathbb{E}\left[|\mathcal{I}_{n}|\sum_{i\in\mathcal{I}_{n}}(M_{i}^{(1)})^{2}\right]\leq\mathbb{E}\left[|\mathcal{I}_{n}|\left(|\mathcal{I}_{n}|\sum_{i\in\mathcal{I}_{n}}(M_{i}^{(1)})^{4}\right)^{1/2}\right]\\
&\leq\mathbb{E}\left[|\mathcal{I}_{n}|^{3/2}\left(\sum_{i=1}^{n}(M_{i}^{(1)})^{4}\right)^{1/2}\right]\leq(\mathbb{E}[|\mathcal{I}_{n}|^{3}])^{1/2}\left(\sum_{i=1}^{n}\mathbb{E}[(M_{i}^{(1)})^{4}]\right)^{1/2}\\
&\leq K(u_{n}^{-3r})^{1/2}(\Delta_{n}^{2}u_{n}^{-r})^{1/2} = K\Delta_{n}u_{n}^{-2r} = o(\Delta_{n}u_{n}^{-2}),
\end{split}
\end{equation}
where $\mathbb{E}[|\mathcal{I}_{n}|^{3}]=O(u_{n}^{-3r})$ can be proved in a similar way to \cref{Eq:Prob2}, and $\mathbb{E}[(M_{i}^{(1)})^{4}]=\Delta_{n}^{3}u_{n}^{-r}$ can be verified by the same steps as in \cref{Eq:A1_Eq2} by the BDG inequality and \cref{lemma:truncationEvents} (i). 

Combining \cref{Eq:H_1_square,Eq:M_1_square} leads to
\begin{equation}
\mathbb{E}[|H_{1,n}^{(1)}|^{2}\mathbbm{1}_{E_{n}}]=o(1).
\end{equation}
By Markov's inequality, for any $\varepsilon>0$,
\begin{equation}
\label{Eq:Chebyshev}
\mathbb{P}(|H_{1,n}^{(1)}|>\varepsilon)\leq\mathbb{P}(|H_{1,n}^{(1)}|>\varepsilon,E_{n})+\mathbb{P}(E_{n}^{\complement})\leq\frac{\mathbb{E}[|H_{1,n}^{(1)}|^{2}\mathbbm{1}_{E_{n}}]}{\varepsilon^{2}}+\mathbb{P}(E_{n}^{\complement})=o(1),
\end{equation}
and therefore $H_{1,n}^{(1)}=o_{p}(1)$.

\subsubsection*{(ii.2) $\bm{H_{1,n}^{(2)}=o_{p}(1)}$}

We follow the same steps as in (ii.1) from \cref{Eq:H_1_1_Holder} to \cref{Eq:Y_sum_expectation}, and it remains to show 
\begin{equation}
\mathbb{E}\left[\left(\sum_{i=1}^{n}|M_{i}^{(2)}|\right)^{2}\right]=o\left(\frac{\Delta_{n}}{u_{n}^{2}}\right),\qquad\text{such that}\quad\mathbb{E}[|H_{1,n}^{(2)}|^{2}\mathbbm{1}_{E_{n}}]=o(1).
\end{equation}
We consider
\begin{equation}
|M_{i}^{(2)}|\leq K\underbrace{\Vert\Delta_{i}^{n}X-\Delta_{i}^{n}X^{c}\Vert\mathbbm{1}_{\{\Vert\Delta_{i}^{n}X\Vert\leq u_{n}\}}}_{D_{i}},
\end{equation}
so that it suffices to show
\begin{equation}
\mathbb{E}\left[\left(\sum_{i=1}^{n}D_{i}\right)^{2}\right]=o\left(\frac{\Delta_{n}}{u_{n}^{2}}\right).
\end{equation}
%We consider the following event decomposition:
We consider the same event decomposition as in \cref{Eq:event_decomp1,Eq:event_decomp2,Eq:V_decomp}:
\begin{equation}
\begin{split}
\{\Vert\Delta_{i}^{n}X\Vert\leq u_{n}\}=\{\Vert\Delta_{i}^{n}X\Vert\leq u_{n},\Vert\Delta_{i}^{n}X^{c}\Vert\leq u_{n}\}\cup\{\Vert\Delta_{i}^{n}X\Vert\leq u_{n},\Vert\Delta_{i}^{n}X^{c}\Vert>u_{n}\}.
\end{split}
\end{equation}
For each $i$, on the event $\{\Vert\Delta_{i}^{n}X\Vert\leq u_{n},\Vert\Delta_{i}^{n}X^{c}\Vert\leq u_{n}\}$,
\begin{equation}
\Vert\Delta_{i}^{n}X-\Delta_{i}^{n}X^{c}\Vert\leq\Vert\Delta_{i}^{n}X\Vert+\Vert\Delta_{i}^{n}X^{c}\Vert\leq 2u_{n}. 
\end{equation}
Therefore,
\begin{equation}
D_{i}\,\leq\,\underbrace{\Vert\Delta_{i}^{n}X-\Delta_{i}^{n}X^{c}\Vert\mathbbm{1}_{\{\Vert\Delta_{i}^{n}X-\Delta_{i}^{n}X^{c}\Vert\leq 2u_{n}\}}}_{D_{i}^{(1)}}\,+\,\underbrace{\Vert\Delta_{i}^{n}X-\Delta_{i}^{n}X^{c}\Vert\mathbbm{1}_{\{\Vert\Delta_{i}^{n}X\Vert\leq u_{n}\}}\mathbbm{1}_{\{\Vert\Delta_{i}^{n}X^{c}\Vert>u_{n}\}}}_{D_{i}^{(2)}},
\end{equation}
and also
\begin{equation}
\label{Eq:Eq:D_1_D_2}
\left(\sum_{i=1}^{n}D_{i}\right)^{2}\leq2\left(\sum_{i=1}^{n}D_{i}^{(1)}\right)^{2}+2\left(\sum_{i=1}^{n}D_{i}^{(2)}\right)^{2}.
\end{equation}
For $D_{i}^{(1)}$, we follow the same steps in the proof of \cref{lemma:truncationEvents} (ii) (see Eqs.~(\ref{Eq:V1_start}) to (\ref{Eq:V1_final})):
\begin{equation}
\begin{split}
D_{i}^{(1)}&\leq\Vert\Delta_{i}^{n}X-\Delta_{i}^{n}X^{c}\Vert\wedge2u_{n}\leq\int_{(i-1)\Delta_n}^{i\Delta_n}\int_{\mathbb{R}^p}(\|\delta(s,x)\|\wedge 2u_n)\mu(ds,dx).
%&\leq\int_{(i-1)\Delta_{n}}^{i\Delta_{n}}ds\int_{\{\Vert\delta\Vert\leq 2u_{n}\}}\Vert\delta(s,x)\Vert\lambda(dx)+2u_{n}\int_{(i-1)\Delta_{n}}^{i\Delta_{n}}ds\int_{\{\Vert\delta\Vert>2u_{n}\}}\lambda(dx).
%%&\leq\int_{(i-1)\Delta_{n}}^{i\Delta_{n}}ds\int_{\{\Vert\delta\Vert\leq 2u_{n}\}}\Vert\delta(s,x)\Vert\lambda(dx)+2u_{n}\int_{(i-1)\Delta_{n}}^{i\Delta_{n}}ds\int_{\{f_{m}>2u_{n}\}}\lambda(dx).
\end{split}
\end{equation}
%By the standard Lévy tail bound in \cref{Eq:LevyTail} and \cref{As:Ito} (v),
%\begin{equation}
%\label{Eq:smallJumpsl1}
%\begin{split}
%\int_{\{\Vert\delta\Vert\leq u\}}\Vert\delta(s,x)\Vert\lambda(dx)&=\int_{\mathbb{R}^p}\Vert\delta(s,x)\Vert\mathbbm{1}_{\{\Vert\delta(s,x)\Vert\leq u\}}\lambda(dx)=\int_{\mathbb{R}^p}\left(\int_{0}^{\Vert\delta(s,x)\Vert}\mathbbm{1}_{\{y\leq u\}}dy\right)\lambda(dx)\\
%&=\int_{0}^{u}\left(\int_{\mathbb{R}^p}\mathbbm{1}_{\{\Vert\delta(s,x)\Vert>y\}}\lambda(dx)\right)dy-u\int_{\mathbb{R}^p}\mathbbm{1}_{\{\Vert\delta(s,x)\Vert>u\}}\lambda(dx)\\
%&\leq\int_{0}^{u}\left(\int_{\mathbb{R}^p}\mathbbm{1}_{\{\Vert\delta(s,x)\Vert>y\}}\lambda(dx)\right)dy\\
%&\leq\int_{0}^{u} y^{-r}\,dy \int_{\mathbb{R}^p}(\Vert\delta(s,x)\Vert^{r}\wedge1)\lambda(dx)\\
%&\leq\left(\int_{\mathbb{R}^p} f_{m}(x)\lambda(dx)\right)\int_{0}^{u} y^{-r}dy\leq Ku^{1-r}.
%\end{split}
%\end{equation}
Then we have
\begin{equation}
\label{Eq:D1_split}
\begin{split}
\mathbb{E}\left[\left(\sum_{i=1}^{n}D_{i}^{(1)}\right)^{2}\right]&\leq\mathbb{E}\left[\left(\int_{0}^{T}\int_{\mathbb{R}^p}(\|\delta(s,x)\|\wedge 2u_n)\mu(ds,dx)\right)^{2}\right]\\
&\leq2\mathbb{E}\left[\left(\int_{0}^{T}\int_{\mathbb{R}^p} (\|\delta(s,x)\|\wedge 2u_n)(\mu-\nu)(ds,dx)\right)^2\right]\\
&\qquad+2\mathbb{E}\left[\left(\int_{0}^{T}\int_{\mathbb{R}^p}(\|\delta(s,x)\|\wedge 2u_n)\nu(ds,dx)\right)^2\right].
\end{split}
\end{equation}
For the first term in \cref{Eq:D1_split}, by the It\^{o} isometry for compensated Poisson integrals,
\begin{equation}
\label{Eq:D1_1}
\begin{split}
&\mathbb{E}\left[\left(\int_{0}^{T}\int_{\mathbb{R}^p} (\|\delta(s,x)\|\wedge 2u_n)(\mu-\nu)(ds,dx)\right)^2\right]=\mathbb{E}\left[\int_{0}^{T}\int_{\mathbb{R}^p} (\|\delta(s,x)\|\wedge 2u_n)^{2}\nu(ds,dx)\right]\\
&\qquad\leq\mathbb{E}\left[\int_{0}^{T}ds\left(\int_{\{\|\delta\|\le 2u_n\}}\|\delta(s,x)\|^2\lambda(dx)+4u_n^2\int_{\{\|\delta\|>2u_n\}}\lambda(dx)\right)\right]\leq Ku_{n}^{2-r},
\end{split}
\end{equation}
by \cref{Eq:smallJumps,Eq:LevyTail}. For the second term in \cref{Eq:D1_split}, by Jensen's inequality,
\begin{equation}
\label{Eq:D1_2}
\begin{split}
&\mathbb{E}\left[\left(\int_{0}^{T}\int_{\mathbb{R}^p} (\|\delta(s,x)\|\wedge 2u_n)\nu(ds,dx)\right)^2\right]\leq T\int_{0}^{T}\mathbb{E}\left[\left(\int_{\mathbb{R}^p}(\|\delta(s,x)\|\wedge 2u_n)\lambda(dx)\right)^2\right] ds\\
&\qquad\leq T\int_{0}^{T}\mathbb{E}\left[\left(\int_{\{\|\delta\|\le 2u_n\}}\|\delta(s,x)\|\lambda(dx)+2u_n\int_{\{\|\delta\|>2u_n\}}\lambda(dx)\right)^2\right]ds\leq Ku_{n}^{2-2r},
\end{split}
\end{equation}
by \cref{Eq:LevyTail,Eq:smallJumpsl1}. Combining \cref{Eq:D1_split,Eq:D1_1,Eq:D1_2}, we obtain
\begin{equation}
\label{Eq:D_1_square}
\mathbb{E}\left[\left(\sum_{i=1}^{n}D_{i}^{(1)}\right)^{2}\right]\leq Ku_n^{2-r}+K'u_n^{2-2r}\le K u_n^{2-2r}.
\end{equation}
%Therefore, by \cref{Eq:LevyTail,Eq:smallJumpsl1},
%\begin{align}
%\mathbb{E}\left[\left(\sum_{i=1}^{n}D_{i}^{(1)}\right)^{2}\right]&\leq\mathbb{E}\left[\left(\int_{0}^{T}ds\int_{\{\Vert\delta\Vert\leq 2u_{n}\}}\Vert\delta(s,x)\Vert\lambda(dx)+2u_{n}\int_{0}^{T}ds\int_{\{f_{m}>2u_{n}\}}\lambda(dx)\right)^{2}\right] \notag\\
%&\leq \mathbb{E}[(Ku_{n}^{1-r}+K'u_{n}^{1-r})^{2}]\leq Ku_{n}^{2-2r}=o\left(\frac{\Delta_{n}}{u_{n}^{2}}\right). \label{Eq:D_1_square}
%\end{align}
For $D_{i}^{(2)}$, on the event $\{\Vert\Delta_{i}^{n}X\Vert\leq u_{n},\Vert\Delta_{i}^{n}X^{c}\Vert>u_{n}\}$,
\begin{equation}
\Vert\Delta_{i}^{n}X-\Delta_{i}^{n}X^{c}\Vert\leq u_{n}+\Vert\Delta_{i}^{n}X^{c}\Vert\leq2\Vert\Delta_{i}^{n}X^{c}\Vert,
\end{equation}
and thus it holds that
\begin{align}
\mathbb{E}\left[\left(\sum_{i=1}^{n}D_{i}^{(2)}\right)^{2}\right]&\leq\mathbb{E}\left[\left(2\sum_{i=1}^{n}\Vert\Delta_{i}^{n}X^{c}\Vert\mathbbm{1}_{\{\Vert\Delta_{i}^{n}X^{c}\Vert>u_{n}\}}\right)^{2}\right]\leq\mathbb{E}\left[4n\sum_{i=1}^{n}\Vert\Delta_{i}^{n}X^{c}\Vert^{2}\mathbbm{1}_{\{\Vert\Delta_{i}^{n}X^{c}\Vert>u_{n}\}}\right] \notag\\
&=4n\sum_{i=1}^{n}\mathbb{E}\left[\Vert\Delta_{i}^{n}X^{c}\Vert^{2}\mathbbm{1}_{\{\Vert\Delta_{i}^{n}X^{c}\Vert>u_{n}\}}\right]. \label{Eq:D_2_bound}
\end{align}
For any $q>2$,
\begin{equation}
\mathbb{E}\left[\Vert\Delta_{i}^{n}X^{c}\Vert^{2}\mathbbm{1}_{\{\Vert\Delta_{i}^{n}X^{c}\Vert>u_{n}\}}\right]\leq\mathbb{E}\left[\Vert\Delta_{i}^{n}X^{c}\Vert^{2}\left(\frac{\Vert\Delta_{i}^{n}X^{c}\Vert}{u_{n}}\right)^{q-2}\right]\leq\frac{\mathbb{E}[\Vert\Delta_{i}^{n}X^{c}\Vert^{q}]}{u_{n}^{q-2}}\leq\Delta_{n}^{q/2}u_{n}^{2-q},
\end{equation}
by the BDG inequality. Therefore, \cref{Eq:D_2_bound} turns into
\begin{equation}
 \label{Eq:D_2_square}
\mathbb{E}\left[\left(\sum_{i=1}^{n}D_{i}^{(2)}\right)^{2}\right]\leq\Delta_{n}^{q/2-2}u_{n}^{2-q}=o\left(\frac{\Delta_{n}}{u_{n}^{2}}\right),\qquad\text{if }q>\frac{3-4\varpi}{1/2-\varpi},
\end{equation}
which is always possible because $\varpi<1/2$. By \cref{Eq:Eq:D_1_D_2,Eq:D_1_square,Eq:D_2_square},
\begin{equation}
\mathbb{E}\left[\left(\sum_{i=1}^{n}D_{i}\right)^{2}\right]\leq2\mathbb{E}\left[\left(\sum_{i=1}^{n}D_{i}^{(1)}\right)^{2}\right]+2\mathbb{E}\left[\left(\sum_{i=1}^{n}D_{i}^{(2)}\right)^{2}\right]=o\left(\frac{\Delta_{n}}{u_{n}^{2}}\right),
\end{equation}
and therefore 
\begin{equation}
\label{Eq:M_2_square}
\mathbb{E}\left[\left(\sum_{i=1}^{n}|M_{i}^{(2)}|\right)^{2}\right]=o\left(\frac{\Delta_{n}}{u_{n}^{2}}\right),
\end{equation}
so that $\mathbb{E}[|H_{1,n}^{(1)}|^{2}\mathbbm{1}_{E_{n}}]=o(1)$. By the same inequality as in \cref{Eq:Chebyshev}, we conclude $H_{1,n}^{(2)}=o_{p}(1)$.

\subsubsection*{(ii.3) $\bm{H_{1,n}^{(3)}=o_{p}(1)}$}

Write $H_{1,n}^{(3)}$ into
\begin{equation}
H_{1,n}^{(3)}=\frac{1}{\sqrt{\Delta_{n}}}\sum_{i=1}^{n}\bm{w}_{n}^{\top}\mathbf{R}_{i}\left(\int_{(i-1)\Delta_{n}}^{i\Delta_{n}}(\widetilde{\beta}_{s}-\widetilde{\beta}_{(i-1)\Delta_{n}})^{\top}dX_{s}^{c}\right)\mathbbm{1}_{\{|\Delta_{i}^{n}Y|\leq u_{n}\}}.
\end{equation}
%For $H_{1,n}^{(3)}$, it holds that
%\begin{equation}
%\begin{split}
%\label{Eq:H_1_3_0}
%H_{1,n}^{(3)}&=\frac{1}{\sqrt{\Delta_{n}}}\sum_{i=1}^{n}\bm{w}_{n}^{\top}\mathbf{R}_{i}\left(\int_{(i-1)\Delta_{n}}^{i\Delta_{n}}(\widetilde{\beta}_{s}-\widetilde{\beta}_{(i-1)\Delta_{n}})^{\top}dX_{s}^{c}\right)\mathbbm{1}_{\{|\Delta_{i}^{n}Y|\leq u_{n}\}}\\
%&\leq\frac{1}{\sqrt{\Delta_{n}}}\sum_{i=1}^{n}\bm{w}_{n}^{\top}\mathbf{R}_{i}\int_{(i-1)\Delta_{n}}^{i\Delta_{n}}(\widetilde{\beta}_{s}-\widetilde{\beta}_{(i-1)\Delta_{n}})^{\top}dX_{s}^{c}.
%\end{split}
%\end{equation}
By the Cauchy-Schwarz inequality, %$|a^{\top}b|\leq\Vert a\Vert\Vert b\Vert$, 
\begin{equation}
\label{Eq:H_1_3}
|H_{1,n}^{(3)}|\leq\frac{1}{\sqrt{\Delta_{n}}}\left(\sum_{i=1}^{n}(\bm{w}_{n}^{\top}\mathbf{R}_{i})^{2}\right)^{1/2}\left(\sum_{i=1}^{n}\left(\int_{(i-1)\Delta_{n}}^{i\Delta_{n}}(\widetilde{\beta}_{s}-\widetilde{\beta}_{(i-1)\Delta_{n}})^{\top}dX_{s}^{c}\right)^{2}\right)^{1/2}.
\end{equation}
For the first multiplier on the right-hand side of \cref{Eq:H_1_3},
\begin{equation}
\label{Eq:H_1_3_1}
\begin{split}
\sum_{i=1}^{n}(\bm{w}_{n}^{\top}\mathbf{R}_{i})^{2}&=\bm{w}_{n}^{\top}(\mathbf{R}^{\top}\mathbf{R})\bm{w}_{n}=a^{\top}\left(\sum^n_{i=1 }\mathbf{B}_{(i-1)\Delta_n}\Delta_{n}\right)(\mathbf{R}^{\top}\mathbf{R})^{-1}\left(\sum^n_{i=1 }\mathbf{B}_{(i-1)\Delta_n}\Delta_{n}\right)^{\top}a\\
&\leq K\Vert a\Vert^{2}\underbrace{\left\Vert\sum^n_{i=1}\mathbf{B}_{(i-1)\Delta_n}\Delta_{n}\right\Vert^{2}}_{O(K_{n}^{-1})}\Vert(\mathbf{R}^{\top}\mathbf{R})^{-1}\Vert=O_{p}(1),
\end{split}
\end{equation}
because each basis component $\sum^n_{i=1}B_{(i-1)\Delta_n}^{(k)}\Delta_{n}\leq\sup_{t\in[0,T]}B_{t}^{(k)}\left(\Bigl\lceil\frac{|\text{supp}(B^{(k)})|}{\Delta_{n}}\Bigr\rceil+1\right)\Delta_{n}=O(K_{n}^{-1})$, and there are $K_{n}$ of them in the Euclidean norm. For the other term, by \cref{Eq:ItoIsometryRevised},
\begin{equation}
\label{Eq:ItoIsometry}
\mathbb{E}\left[\left(\left.\int_{(i-1)\Delta_{n}}^{i\Delta_{n}}(\widetilde{\beta}_{s}-\widetilde{\beta}_{(i-1)\Delta_{n}})^{\top}dX_{s}^{c}\right)^{2}\right\vert\mathcal{F}_{(i-1)\Delta_{n}}\right]\leq K\int_{(i-1)\Delta_{n}}^{i\Delta_{n}}\mathbb{E}[\Vert\widetilde{\beta}_{s}-\widetilde{\beta}_{(i-1)\Delta_{n}}\Vert^{2}]ds.
\end{equation}
By the Cauchy-Schwarz inequality, 
\begin{equation}
\Vert\widetilde{\beta}_{s}-\widetilde{\beta}_{(i-1)\Delta_{n}}\Vert^{2}\leq (s-(i-1)\Delta_{n})\int_{(i-1)\Delta_{n}}^{s}\Vert\widetilde{\beta}'_{u}\Vert^{2}du\leq\Delta_{n}\int_{(i-1)\Delta_{n}}^{i\Delta_{n}}\Vert\widetilde{\beta}'_{u}\Vert^{2}du.
\end{equation}
Note that here we adopt a different approach from \cref{Eq:Discretization-3}, and avoid using the probabilistic order bound for $\Vert\widetilde{\beta}'\Vert_{\infty}$. Then,
\begin{equation}
\label{Eq:DiscretizationBound}
\int_{(i-1)\Delta_{n}}^{i\Delta_{n}}\Vert\widetilde{\beta}_{s}-\widetilde{\beta}_{(i-1)\Delta_{n}}\Vert^{2}ds\leq\Delta_{n}^{2}\int_{(i-1)\Delta_{n}}^{i\Delta_{n}}\Vert\widetilde{\beta}'_{u}\Vert^{2}du.
\end{equation}
Combining \cref{Eq:ItoIsometry,Eq:DiscretizationBound} leads to
\begin{equation}
\mathbb{E}\left[\left(\left.\int_{(i-1)\Delta_{n}}^{i\Delta_{n}}(\widetilde{\beta}_{s}-\widetilde{\beta}_{(i-1)\Delta_{n}})^{\top}dX_{s}^{c}\right)^{2}\right\vert\mathcal{F}_{(i-1)\Delta_{n}}\right]\leq K\Delta_{n}^{2}\int_{(i-1)\Delta_{n}}^{i\Delta_{n}}\mathbb{E}[\Vert\widetilde{\beta}'_{u}\Vert^{2}]du,
\end{equation}
and thus
\begin{equation}
\mathbb{E}\left[\sum_{i=1}^{n}\left(\int_{(i-1)\Delta_{n}}^{i\Delta_{n}}(\widetilde{\beta}_{s}-\widetilde{\beta}_{(i-1)\Delta_{n}})^{\top}dX_{s}^{c}\right)^{2}\right]\leq K\Delta_{n}^{2}\int_{0}^{T}\mathbb{E}[\Vert\widetilde{\beta}'_{u}\Vert^{2}]du=K\Delta_{n}^{2}\mathbb{E}[\Vert\widetilde{\beta}'\Vert_{L^{2}}^{2}].
\end{equation}
Therefore, by Markov's inequality,
\begin{equation}
\label{Eq:H_1_3_2}
\sum_{i=1}^{n}\left(\int_{(i-1)\Delta_{n}}^{i\Delta_{n}}(\widetilde{\beta}_{s}-\widetilde{\beta}_{(i-1)\Delta_{n}})^{\top}dX_{s}^{c}\right)^{2}=O_{p}(\Delta_{n}^{2}\Vert\widetilde{\beta}'\Vert_{L^{2}}^{2}).
\end{equation}
Combining \cref{Eq:H_1_3_1,Eq:H_1_3_2} in \cref{Eq:H_1_3}, we obtain
\begin{equation}
\label{Eq:H_1_3_bound}
|H_{1,n}^{(3)}|=O_{p}(\sqrt{\Delta_{n}}\Vert\widetilde{\beta}'\Vert_{L^{2}}),
\end{equation}
and it remains to derive the probabilistic order bound for $\Vert\widetilde{\beta}'\Vert_{L^{2}}$. 

\begin{lemma}
\label{lemma:derivative}
Under \cref{As:Ito_beta}, $\Vert\widetilde{\beta}'\Vert_{L^{2}}=O_{p}(\sqrt{K_{n}\log K_{n}})$. 
\end{lemma}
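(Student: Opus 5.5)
Recall that $\widetilde{\beta}=\mathbf{B}\bm{\gamma}=\Pi_{n}\beta$ is the $L^{2}(c)$-projection of $\beta$ onto $\mathbb{G}_{n}$. The plan is to compare $\widetilde{\beta}$ with the quasi-interpolant $\overline{\beta}\in\mathbb{G}_{n}$ built in the proof of \cref{lemma:approximationError}, and to bound the derivatives of spline functions using two ingredients: a Markov-type inverse inequality and the coefficient representation of derivatives of B-splines. First, write
\begin{equation*}
\Vert\widetilde{\beta}'\Vert_{L^{2}}\leq\Vert\overline{\beta}'\Vert_{L^{2}}+\Vert(\widetilde{\beta}-\overline{\beta})'\Vert_{L^{2}}.
\end{equation*}
Since $\widetilde{\beta}-\overline{\beta}\in\mathbb{G}_{n}$ and the partition is quasi-uniform, the standard inverse inequality for polynomial splines (applied on each knot interval, which has length $\asymp K_{n}^{-1}$, and then summed) gives
\begin{equation*}
\Vert g'\Vert_{L^{2}}\leq KK_{n}\Vert g\Vert_{L^{2}}\qquad\text{for all }g\in\mathbb{G}_{n}.
\end{equation*}
Moreover, \cref{Eq:ArrpxError_L2norm_inequalities} together with the triangle inequality yields $\Vert\widetilde{\beta}-\overline{\beta}\Vert_{L^{2}}\leq K\Vert\beta-\overline{\beta}\Vert_{L^{2}}=O_{p}(\sqrt{\log K_{n}/K_{n}})$. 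Hence the second term is $O_{p}(K_{n}\sqrt{\log K_{n}/K_{n}})=O_{p}(\sqrt{K_{n}\log K_{n}})$, which is exactly the target rate.

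For the first term, we use the derivative formula for B-splines. We have $\overline{\beta}'_{t}=\sum_{k}d\,(\beta_{\tau^{(k)}}-\beta_{\tau^{(k-1)}})B_{d-1,t}^{(k)}/(\bar{v}_{k+d-1}-\bar{v}_{k-1})$. The denominators are $\asymp K_{n}^{-1}$ for interior indices, so
\begin{equation*}
\Vert\overline{\beta}'\Vert_{L^{2}}^{2}\leq KK_{n}^{2}\cdot K_{n}^{-1}\sum_{k}\Vert\beta_{\tau^{(k)}}-\beta_{\tau^{(k-1)}}\Vert^{2},
\end{equation*}
where we used the $L^{2}$-stability of the degree-$(d-1)$ B-spline basis (Appendix \ref{AP:Bspline} (iv)). We then split $\beta=\beta^{c}+\beta^{j}$ as in \cref{Eq:beta_decomp}. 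For the continuous part, each increment is bounded by $\omega_{\beta^{c}}(KK_{n}^{-1})=O_{p}(\sqrt{\log K_{n}/K_{n}})$ by \cref{Eq:modulus_beta_c}, so the sum over $K_{n}$ terms is $O_{p}(\log K_{n})$. This gives a contribution of $O_{p}(K_{n}\log K_{n})$ to $\Vert\overline{\beta}'\Vert_{L^{2}}^{2}$, as desired. For the jump part, the increments satisfy
\begin{equation*}
\sum_{k}\Vert\beta^{j}_{\tau^{(k)}}-\beta^{j}_{\tau^{(k-1)}}\Vert^{2}\leq\Bigl(\sum_{k}V_{\beta^{j}}([\tau^{(k-1)},\tau^{(k)}])\Bigr)^{2}\leq KV_{\beta^{j}}^{2}([0,T])=O_{p}(1),
\end{equation*}
exactly as in the proof of \cref{lemma:approximationError}. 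This contributes only $O_{p}(K_{n})$. To make the telescoping work, I would choose the $\tau^{(k)}$ nondecreasing in $k$, which is allowed since $\tau^{(k)}\in\text{supp}(B^{(k)})$ is arbitrary.

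The main obstacle is the boundary and knot bookkeeping in the derivative formula. The repeated end knots make some denominators vanish, and these terms are dropped by convention. The remaining denominators must be bounded below by $K_{n}^{-1}$ via quasi-uniformity, which requires $\bar{v}_{k+d-1}-\bar{v}_{k-1}\geq v_{1}-v_{0}\asymp K_{n}^{-1}$ whenever the difference is nonzero. A second delicate point is justifying the inverse inequality with constants uniform in $n$. I would do this by scaling a fixed finite-dimensional polynomial inequality on $[0,1]$ to each knot interval. Combining the two terms then gives $\Vert\widetilde{\beta}'\Vert_{L^{2}}=O_{p}(\sqrt{K_{n}\log K_{n}})$.
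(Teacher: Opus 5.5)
Your proposal is correct and follows essentially the same route as the paper. Both arguments use the triangle inequality against the quasi-interpolant $\overline{\beta}$, apply the inverse inequality $\Vert g'\Vert_{L^{2}}\leq KK_{n}\Vert g\Vert_{L^{2}}$ on $\mathbb{G}_{n}$ to $\widetilde{\beta}-\overline{\beta}$, and treat $\overline{\beta}'$ through the B-spline derivative formula with a continuous/jump split. The differences are minor.
\begin{itemize}
\item You justify the inverse inequality by scaling a polynomial inequality on each knot interval, whereas the paper goes through the coefficient representation and B-spline norm equivalence.
\item You bound the continuous increments via the modulus of continuity, which gives $O_{p}(\log K_{n})$ for $\sum_{k}\Vert\beta^{c}_{\tau^{(k)}}-\beta^{c}_{\tau^{(k-1)}}\Vert^{2}$. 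The paper's BDG/Markov argument gives $O_{p}(1)$, so your bound is cruder but still within the target rate.
\end{itemize}
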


\begin{proof}[Proof of \cref{lemma:derivative}]
	
Instead of working directly on $\widetilde{\beta}'$, we consider the quasi-interpolant $\overline{\beta}$ defined in \cref{Eq:quasiInterpolant} in the \hyperref[AP:Proof_lemma1]{proof} of \cref{lemma:approximationError}, and we have
\begin{equation}
\widetilde{\beta}'=\overline{\beta}' + (\widetilde{\beta}-\overline{\beta})'.
\end{equation} 
with all $\widetilde{\beta}$, $\overline{\beta}$, $\widetilde{\beta}-\overline{\beta}\in\mathbb{G}_{n}$. By the triangle inequality, 
\begin{equation}
\label{Eq:derivative}
\Vert\widetilde{\beta}'\Vert_{L^{2}}\leq\Vert\overline{\beta}'\Vert_{L^{2}}+\Vert(\widetilde{\beta}-\overline{\beta})'\Vert_{L^{2}}.
\end{equation}
We start with a general linear combination of B-spline basis functions of degree $d$: Let $g_{t}=\sum_{k=1}^{K_{n}}B_{d,t}^{(k)}r^{(k)}\in\mathbb{G}_{n}$ for any $r^{(k)}\in\mathbb{R}^{p}$. By the Cox-de Boor recursion identity in \cref{Eq:Cox-deBoorRecursion},
\begin{equation}
\label{Eq:dB}
(B_{d,t}^{(k)})'=\frac{d}{\bar{v}_{k+d-1}-\bar{v}_{k-1}}B_{d-1,t}^{(k)}-\frac{d}{\bar{v}_{k+d}-\bar{v}_{k}}B_{d-1,t}^{(k+1)}.
\end{equation}
Differentiate $g_{t}$ on $t$ with \cref{Eq:dB} plugged in, we obtain
\begin{align}
g'_{t}&=\sum_{k=1}^{K_{n}}\frac{d}{\bar{v}_{k+d-1}-\bar{v}_{k-1}}B_{d-1,t}^{(k)}r^{(k)}-\sum_{k=1}^{K_{n}}\frac{d}{\bar{v}_{k+d}-\bar{v}_{k}}B_{d-1,t}^{(k+1)}r^{(k)} \notag\\
&=\sum_{k=1}^{K_{n}}\frac{d}{\bar{v}_{k+d-1}-\bar{v}_{k-1}}B_{d-1,t}^{(k)}r^{(k)}-\sum_{k=2}^{K_{n}+1}\frac{d}{\bar{v}_{k+d-1}-\bar{v}_{k-1}}B_{d-1,t}^{(k)}r^{(k)} \notag\\
&=\frac{d}{\bar{v}_{d}-\bar{v}_{0}}r^{(1)}+\sum_{k=2}^{K_{n}}\left(\frac{d}{\bar{v}_{k+d-1}-\bar{v}_{k-1}}r^{(k)}-\frac{d}{\bar{v}_{k+d-1}-\bar{v}_{k-1}}r^{(k-1)}\right)B_{d-1,t}^{(k)}-\frac{d}{\bar{v}_{K_{n}+d}-\bar{v}_{K_{n}}}r^{(K_{n})},
\end{align}
where the first and last terms are both 0 by definition; see \cref{Eq:Cox-deBoorRecursion-boundries}. 
Therefore,
\begin{equation}
g'_{t}=\sum_{k=2}^{K_{n}}\frac{dB_{d-1,t}^{(k)}}{\bar{v}_{k+d-1}-\bar{v}_{k-1}}(r^{(k)}-r^{(k-1)})\leq K\sum_{k=1}^{K_{n}-1}(r^{(k+1)}-r^{(k)}),
\end{equation}
and thus
\begin{equation}
\label{Eq:g_order1}
\Vert g'\Vert_{L^{2}}^{2}=\int_{0}^{T}\Vert g'_{t}\Vert^{2}dt\leq KK_{n}\sum_{k=1}^{K_{n}-1}\Vert r^{(k+1)}-r^{(k)}\Vert^{2}\leq K'K_{n}\sum_{k=1}^{K_{n}}\Vert r^{(k)}\Vert^{2}=K'K_{n}\Vert \bm{r}\Vert^{2},
\end{equation}
where $\bm{r}=((r^{(1)})^{\top},\dots,(r^{(K_{n})})^{\top})^{\top}\in\mathbb{R}^{pK_{n}}$. On the other hand, by the B-spline norm equivalence in Appendix \ref{AP:Bspline} (iv),
\begin{equation}
\label{Eq:g_order2}
\Vert g\Vert_{L^{2}}^{2}\asymp\frac{\Vert\bm{r}\Vert^{2}}{K_{n}}.
\end{equation}
Combining \cref{Eq:g_order1,Eq:g_order2} leads to
\begin{equation}
\Vert g'\Vert_{L^{2}}\leq KK_{n}\Vert g\Vert_{L^{2}}. 
\end{equation}
Therefore, if $g=\widetilde{\beta}-\overline{\beta}$, 
\begin{equation}
\Vert(\widetilde{\beta}-\overline{\beta})'\Vert_{L^{2}}\leq KK_{n}\Vert \widetilde{\beta}-\overline{\beta}\Vert_{L^{2}}.
\end{equation}
By the triangle inequality, \cref{Eq:ArrpxError_L2norm_inequalities,Eq:lemma1-0},
\begin{equation}
\Vert\widetilde{\beta}-\overline{\beta}\Vert_{L^{2}}\leq\Vert\beta-\overline{\beta}\Vert_{L^{2}}+\Vert\beta-\widetilde{\beta}\Vert_{L^{2}}\leq K\Vert\beta-\overline{\beta}\Vert_{L^{2}}=O_{p}\left(\sqrt{\frac{\log K_{n}}{K_{n}}}\right).
\end{equation}
Therefore, by the above two inequalities, 
\begin{equation}
\label{Eq:derivative-1}
\Vert(\widetilde{\beta}-\overline{\beta})'\Vert_{L^{2}}=O_{p}(\sqrt{K_{n}\log K_{n}}).
\end{equation}
Then it remains to derive the probabilistic order bound for $\Vert\overline{\beta}'\Vert_{L^{2}}$.

Let $g=\overline{\beta}$, by the first inequality in \cref{Eq:g_order1},
\begin{equation}
\label{Eq:quasi_derivative_L2bound}
\Vert\overline{\beta}'\Vert_{L^{2}}^{2}\leq KK_{n}\sum_{k=1}^{K_{n}-1}\Vert\beta_{\tau^{(k+1)}}-\beta_{\tau^{(k)}}\Vert^{2}.
\end{equation}
Consider the same decomposition as in \cref{Eq:beta_decomp}.  Then we have
\begin{equation}
\label{Eq:bound_beta_decomp}
\sum_{k=1}^{K_{n}-1}\Vert\beta_{\tau^{(k+1)}}-\beta_{\tau^{(k)}}\Vert^{2}\leq 2\sum_{k=1}^{K_{n}-1}\Vert\beta_{\tau^{(k+1)}}^{c}-\beta_{\tau^{(k)}}^{c}\Vert^{2}+2\sum_{k=1}^{K_{n}-1}\Vert\beta_{\tau^{(k+1)}}^{j}-\beta_{\tau^{(k)}}^{j}\Vert^{2}.
\end{equation}
Under \cref{As:Ito_Iocal}, the BDG inequality implies that
\begin{equation}
\mathbb{E}\left[\sum_{k=1}^{K_{n}-1}\Vert\beta_{\tau^{(k+1)}}^{c}-\beta_{\tau^{(k)}}^{c}\Vert^{2}\right]\leq K\sum_{k=1}^{K_{n}-1}(\tau^{(k+1)}-\tau^{(k)})\leq KT,
\end{equation}
and thus, by Markov's inequality,
\begin{equation}
\label{Eq:bound_beta_c}
\sum_{k=1}^{K_{n}-1}\Vert\beta_{\tau^{(k+1)}}^{c}-\beta_{\tau^{(k)}}^{c}\Vert^{2}=O_{p}(1).
\end{equation}
Under \cref{As:Ito_beta} with finite-variation jumps,
\begin{equation}
\label{Eq:bound_beta_j}
\sum_{k=1}^{K_{n}-1}\Vert\beta_{\tau^{(k+1)}}^{j}-\beta_{\tau^{(k)}}^{j}\Vert^{2}\leq\left(\sum_{k=1}^{K_{n}-1}\Vert\beta_{\tau^{(k+1)}}^{j}-\beta_{\tau^{(k)}}^{j}\Vert\right)^{2}\leq V_{\beta^{j}}([0,T])^{2}=O_{p}(1),
\end{equation}
where $V_{\beta^{j}}([a,b])$ is defined in \cref{Eq:TV_Jumps}. Combining \cref{Eq:bound_beta_c,Eq:bound_beta_j} in \cref{Eq:bound_beta_decomp}, we obtain
\begin{equation}
\sum_{k=1}^{K_{n}-1}\Vert\beta_{\tau^{(k+1)}}-\beta_{\tau^{(k)}}\Vert^{2}=O_{p}(1),
\end{equation}
and thus, by \cref{Eq:quasi_derivative_L2bound},
\begin{equation}
\label{Eq:derivative-2}
\Vert\overline{\beta}'\Vert_{L^{2}}=O_{p}(\sqrt{K_{n}}).
\end{equation}
Therefore, combining \cref{Eq:derivative-1,Eq:derivative-2} in \cref{Eq:derivative}, we obtain
\begin{equation}
\Vert\widetilde{\beta}'\Vert_{L^{2}}\leq O_{p}(\sqrt{K_{n}})+O_{p}(\sqrt{K_{n}\log K_{n}})=O_{p}(\sqrt{K_{n}\log K_{n}}).
\end{equation}
This completes the proof of \cref{lemma:derivative}. 

\end{proof}

Therefore, by \cref{Eq:H_1_3_bound} and \cref{lemma:derivative},
\begin{equation}
|H_{1,n}^{(3)}|=O_{p}(\sqrt{\Delta_{n}K_{n}\log K_{n}})=o_{p}(1),
\end{equation}
since $\Delta_{n}K_{n}\log K_{n}\to 0$ as required, and thus $H_{1,n}^{(3)}=o_{p}(1)$.

By the results in (ii.1), (ii.2), and (ii.3), we conclude that $H_{1,n}=o_{p}(1)$.

\subsubsection*{(iii) $\bm{H_{2,n}=o_{p}(1)}$}

Recall that, with $\theta_{n,t}$ defined in \cref{Eq:theta_n},
\begin{equation}
H_{2,n}=\frac{1}{\sqrt{\Delta_{n}}}\sum_{i=1}^{n}\theta_{n,(i-1)\Delta_{n}}^{\top}\Delta_{i}^{n}X\left(\int_{(i-1)\Delta_{n}}^{i\Delta_{n}}e_{s}^{\top}dX_{s}^{c}\right)\mathbbm{1}_{\{\Vert\Delta_{i}^{n}X\Vert\leq u_{n}\}}\mathbbm{1}_{\{|\Delta_{i}^{n}Y|\leq u_{n}\}}.
\end{equation}
On the high-probability event $E_{n}$ in \cref{Eq:event}, it holds that $\Vert\bm{w}_{n}\Vert_{\infty}\leq K$ by \cref{Eq:w_infty}. By local support and boundedness of B-splines, on $E_{n}$,
\begin{equation}
\sup_{t\in[0,T]}\Vert\theta_{n,t}\Vert=\sup_{t\in[0,T]}\left\Vert\sum_{k=1}^{K_{n}}B_{t}^{(k)}\bm{w}_{n}\right\Vert\leq K\Vert\bm{w}_{n}\Vert_{\infty}\leq K',
\end{equation}
and $\Vert\theta'_{n}\Vert_{\infty}=O_{p}(K_{n})$ by \cref{Eq:Discretization-3}. Consider $H_{2,n}=H_{2,n}^{(1)}+H_{2,n}^{(2)}$ with
\begin{align}
H_{2,n}^{(1)}&=\frac{1}{\sqrt{\Delta_{n}}}\sum_{i=1}^{n}\theta_{n,(i-1)\Delta_{n}}^{\top}\Delta_{i}^{n}X^{c}\left(\int_{(i-1)\Delta_{n}}^{i\Delta_{n}}e_{s}^{\top}dX_{s}^{c}\right)\mathbbm{1}_{\{\Vert\Delta_{i}^{n}X\Vert\leq u_{n}\}}\mathbbm{1}_{\{|\Delta_{i}^{n}Y|\leq u_{n}\}},\\
H_{2,n}^{(2)}&=\frac{1}{\sqrt{\Delta_{n}}}\sum_{i=1}^{n}\theta_{n,(i-1)\Delta_{n}}^{\top}(\Delta_{i}^{n}X-\Delta_{i}^{n}X^{c})\left(\int_{(i-1)\Delta_{n}}^{i\Delta_{n}}e_{s}^{\top}dX_{s}^{c}\right)\mathbbm{1}_{\{\Vert\Delta_{i}^{n}X\Vert\leq u_{n}\}}\mathbbm{1}_{\{|\Delta_{i}^{n}Y|\leq u_{n}\}}.
\end{align}
Next, we show both $H_{2,n}^{(1)}$ and $H_{2,n}^{(2)}$ are $o_{p}(1)$. 

\subsubsection*{(iii.1) $\bm{H_{2,n}^{(1)}=o_{p}(1)}$}

Instead of working on $H_{2,n}^{(1)}$ directly, we consider $H_{2,n}^{(1)}=\widetilde{H}_{2,n}^{(1)}-\overline{H}_{2,n}^{(1)}$ with
\begin{align}
\widetilde{H}_{2,n}^{(1)}&=\frac{1}{\sqrt{\Delta_{n}}}\sum_{i=1}^{n}\theta_{n,(i-1)\Delta_{n}}^{\top}\Delta_{i}^{n}X^{c}\left(\int_{(i-1)\Delta_{n}}^{i\Delta_{n}}e_{s}^{\top}dX_{s}^{c}\right), \label{Eq:H_2_1_tilde}\\
\overline{H}_{2,n}^{(1)}&=\frac{1}{\sqrt{\Delta_{n}}}\sum_{i=1}^{n}\theta_{n,(i-1)\Delta_{n}}^{\top}\Delta_{i}^{n}X^{c}\left(\int_{(i-1)\Delta_{n}}^{i\Delta_{n}}e_{s}^{\top}dX_{s}^{c}\right)\left(1-\mathbbm{1}_{\{\Vert\Delta_{i}^{n}X\Vert\leq u_{n}\}}\mathbbm{1}_{\{|\Delta_{i}^{n}Y|\leq u_{n}\}}\right),
\end{align}
and show that both $\widetilde{H}_{2,n}^{(1)}$ and $\overline{H}_{2,n}^{(1)}$ are $o_{p}(1)$. 

We start with $\widetilde{H}_{2,n}^{(1)}$. Rewrite \cref{Eq:H_2_1_tilde} into $\widetilde{H}_{2,n}^{(1)}=\frac{1}{\sqrt{\Delta_{n}}}\sum_{i=1}^{n}U_{i}$ with 
\begin{equation}
U_{i}=\theta_{n,(i-1)\Delta_{n}}^{\top}\Delta_{i}^{n}X^{c}\left(\int_{(i-1)\Delta_{n}}^{i\Delta_{n}}e_{s}^{\top}dX_{s}^{c}\right). 
\end{equation}
By the conditional covariance for continuous martingale increments (with the predictability justified in \cref{remark:Predictability} via predictable projection),
\begin{equation}
\mathbb{E}[U_{i}|\mathcal{F}_{(i-1)\Delta_{n}}]=\theta_{n,(i-1)\Delta_{n}}^{\top}\int_{(i-1)\Delta_{n}}^{i\Delta_{n}}c_{s}e_{s}ds. 
\end{equation}
Consider
\begin{equation}
\begin{split}
\widetilde{H}_{2,n}^{(1)}=\underbrace{\frac{1}{\sqrt{\Delta_{n}}}\sum_{i=1}^{n}\mathbb{E}[U_{i}|\mathcal{F}_{(i-1)\Delta_{n}}]}_{A_{2,n}^{(1)}}\,+\,\underbrace{\frac{1}{\sqrt{\Delta_{n}}}\sum_{i=1}^{n}(U_{i}-\mathbb{E}[U_{i}|\mathcal{F}_{(i-1)\Delta_{n}}])}_{M_{2,n}^{(1)}}.
\end{split}
\end{equation}
For $A_{2,n}^{(1)}$, we have
\begin{equation}
A_{2,n}^{(1)}=\frac{1}{\sqrt{\Delta_{n}}}\int_{0}^{T}\theta_{n,s}^{\top}c_{s}e_{s}ds-\frac{1}{\sqrt{\Delta_{n}}}\sum_{i=1}^{n}\int_{(i-1)\Delta_{n}}^{i\Delta_{n}}(\theta_{n,s}-\theta_{n,(i-1)\Delta_{n}})^{\top}c_{s}e_{s}ds,
\end{equation}
where the first term is 0 by $L^{2}(c)$-orthogonality between $e$ and $\mathbb{G}_{n}$. For the second term, on $E_{n}$,
\begin{equation}
\sup_{(i-1)\Delta_{n}\leq s\leq i\Delta_{n}}\Vert\theta_{n,s}-\theta_{n,(i-1)\Delta_{n}}\Vert\leq\Delta_{n}\Vert\theta'_{n}\Vert_{\infty}. 
\end{equation}
Therefore, by the Cauchy-Schwarz inequality and bounded $c$ under \cref{As:Ito_Iocal},
\begin{equation}
\begin{split}
\mathbb{E}[|A_{2,n}^{(1)}|\mathbbm{1}_{E_{n}}]&\leq \frac{K}{\sqrt{\Delta_{n}}}\mathbb{E}\left[\sum_{i=1}^{n}\int_{(i-1)\Delta_{n}}^{i\Delta_{n}}\Vert\theta_{n,s}-\theta_{n,(i-1)\Delta_{n}}\Vert\Vert e_{s}\Vert ds\right]\\
&\leq K\sqrt{\Delta_{n}}\mathbb{E}[\Vert\theta'_{n}\Vert_{\infty}\Vert e\Vert_{L^{2}}]=O(\sqrt{\Delta_{n}K_{n}\log K_{n}})=o(1),
\end{split}
\end{equation}
and thus, by Markov's inequality analogous to \cref{Eq:Chebyshev} (but with first-order moment), we obtain $A_{2,n}^{(1)}=o_{p}(1)$. Moreover, for $M_{2,n}^{(1)}$, by the same steps as in \cref{Eq:A1_Eq2}, together with \cref{Eq:ItoIsometryRevised},
\begin{equation}
\begin{split}
\mathbb{E}[(M_{2,n}^{(1)})^{2}\mathbbm{1}_{E_{n}}]&\leq\frac{K}{\Delta_{n}}\sum_{i=1}^{n}\mathbb{E}[U_{i}^{2}]\leq\frac{K}{\Delta_{n}}\sum_{i=1}^{n}\mathbb{E}\left[\Vert\Delta_{i}^{n}X^{c}\Vert^{2}\left(\int_{(i-1)\Delta_{n}}^{i\Delta_{n}}e_{s}^{\top}dX_{s}^{c}\right)^{2}\right]\\
&\leq\frac{K}{\Delta_{n}}\sum_{i=1}^{n}\mathbb{E}\left[\mathbb{E}[\Vert\Delta_{i}^{n}X^{c}\Vert^{2}|\mathcal{F}_{(i-1)\Delta_{n}}]\left(\int_{(i-1)\Delta_{n}}^{i\Delta_{n}}e_{s}^{\top}dX_{s}^{c}\right)^{2}\right]\\
&\leq K\mathbb{E}[\Vert e\Vert_{L^{2}}^{2}]=o(1),
\end{split}
\end{equation}
and thus $M_{2,n}^{(1)}=o_{p}(1)$ by the same Markov's inequality as in \cref{Eq:Chebyshev}. Therefore, we conclude that $\widetilde{H}_{2,n}^{(1)}=A_{2,n}^{(1)}+M_{2,n}^{(1)}=o_{p}(1)$.

For $\overline{H}_{2,n}^{(1)}$, by the Cauchy-Schwarz inequality,
\begin{equation}
\begin{split}
\label{Eq:H_2_1_bar}
\mathbb{E}\left[(\overline{H}_{2,n}^{(1)})^{2}\mathbbm{1}_{E_{n}}\right]&\leq\frac{K}{\Delta_{n}}\left(\sum_{i=1}^{n}\mathbb{E}\left[\Vert\Delta_{i}^{n}X^{c}\Vert^{2}\left(1-\mathbbm{1}_{\{\Vert\Delta_{i}^{n}X\Vert\leq u_{n}\}}\mathbbm{1}_{\{|\Delta_{i}^{n}Y|\leq u_{n}\}}\right)\right]\right)\\
&\qquad\times\left(\sum_{i=1}^{n}\mathbb{E}\left[\left(\int_{(i-1)\Delta_{n}}^{i\Delta_{n}}e_{s}^{\top}dX_{s}^{c}\right)^{2}\left(1-\mathbbm{1}_{\{\Vert\Delta_{i}^{n}X\Vert\leq u_{n}\}}\mathbbm{1}_{\{|\Delta_{i}^{n}Y|\leq u_{n}\}}\right)\right]\right).
\end{split}
\end{equation}
By the same steps as in \cref{Eq:A1_Eq2}, we obtain
\begin{equation}
\label{Eq:H_2_1_bar-1}
\sum_{i=1}^{n}\mathbb{E}\left[\Vert\Delta_{i}^{n}X^{c}\Vert^{2}\left(1-\mathbbm{1}_{\{\Vert\Delta_{i}^{n}X\Vert\leq u_{n}\}}\mathbbm{1}_{\{|\Delta_{i}^{n}Y|\leq u_{n}\}}\right)\right]\leq K\Delta_{n}u_{n}^{-r}.
\end{equation}
By \cref{Eq:ItoIsometryRevised},
\begin{equation}
\mathbb{E}\left[\left.\left(\int_{(i-1)\Delta_{n}}^{i\Delta_{n}}e_{s}^{\top}dX_{s}^{c}\right)^{2}\right\vert\mathcal{F}_{(i-1)\Delta_{n}}\right]\leq K\int_{(i-1)\Delta_{n}}^{i\Delta_{n}}\Vert e_{s}\Vert^{2}ds,
\end{equation}
and thus, by \cref{lemma:truncationEvents} (i),
\begin{equation}
\label{Eq:H_2_1_bar-2}
\begin{split}
&\sum_{i=1}^{n}\mathbb{E}\left[\left(\int_{(i-1)\Delta_{n}}^{i\Delta_{n}}e_{s}^{\top}dX_{s}^{c}\right)^{2}\left(1-\mathbbm{1}_{\{\Vert\Delta_{i}^{n}X\Vert\leq u_{n}\}}\mathbbm{1}_{\{|\Delta_{i}^{n}Y|\leq u_{n}\}}\right)\right]\\
&\qquad\leq K\mathbb{E}\left[\sum_{i=1}^{n}\left(\int_{(i-1)\Delta_{n}}^{i\Delta_{n}}\Vert e_{s}\Vert^{2}ds\right)\left(1-\mathbbm{1}_{\{\Vert\Delta_{i}^{n}X\Vert\leq u_{n}\}}\mathbbm{1}_{\{|\Delta_{i}^{n}Y|\leq u_{n}\}}\right)\right]\\
&\qquad\leq K\mathbb{E}[\Vert e\Vert_{L^{2}}^{2}]\max_{1\leq i\leq n}(\mathbb{P}(|\Delta_{i}^{n}Y|> u_{n})+\mathbb{P}(\Vert\Delta_{i}^{n}X\Vert>u_{n}))\\
&\qquad\leq K'\Delta_{n}u_{n}^{-r}\mathbb{E}[\Vert e\Vert_{L^{2}}^{2}].
\end{split}
\end{equation}
Combining \cref{Eq:H_2_1_bar-1,Eq:H_2_1_bar-2} in \cref{Eq:H_2_1_bar} leads to
\begin{equation}
\mathbb{E}\left[(\overline{H}_{2,n}^{(1)})^{2}\mathbbm{1}_{E_{n}}\right]\leq K\Delta_{n}u_{n}^{-2r}\mathbb{E}[\Vert e\Vert_{L^{2}}^{2}]=o(1). 
\end{equation}
By the same Markov's inequality as in \cref{Eq:Chebyshev}, we conclude $\overline{H}_{2,n}^{(1)}=o_{p}(1)$, and therefore
\begin{equation}
H_{2,n}^{(1)}=\widetilde{H}_{2,n}^{(1)}-\overline{H}_{2,n}^{(1)}=o_{p}(1).
\end{equation}

\subsubsection*{(iii.2) $\bm{H_{2,n}^{(2)}=o_{p}(1)}$}

On $E_{n}$, by the Cauchy-Schwarz inequality,
\begin{equation}
\label{Eq:H_2_1}
\mathbb{E}\left[(H_{2,n}^{(2)})^{2}\mathbbm{1}_{E_{n}}\right]\leq\frac{K}{\Delta_{n}}\left(\sum_{i=1}^{n}\mathbb{E}\left[\Vert\Delta_{i}^{n}X-\Delta_{i}^{n}X^{c}\Vert^{2}\mathbbm{1}_{\{\Vert\Delta_{i}^{n}X\Vert\leq u_{n}\}}\right]\right)\left(\sum_{i=1}^{n}\mathbb{E}\left[\left(\int_{(i-1)\Delta_{n}}^{i\Delta_{n}}e_{s}^{\top}dX_{s}^{c}\right)^{2}\right]\right).
\end{equation}
By \cref{lemma:truncationEvents} (ii),
\begin{equation}
\label{Eq:H_2_1-1}
\sum_{i=1}^{n}\mathbb{E}\left[\Vert\Delta_{i}^{n}X-\Delta_{i}^{n}X^{c}\Vert^{2}\mathbbm{1}_{\{\Vert\Delta_{i}^{n}X\Vert\leq u_{n}\}}\right]\leq Ku_{n}^{2-r},
\end{equation}
and, by \cref{Eq:ItoIsometryRevised},
\begin{equation}
\label{Eq:H_2_1-2}
\sum_{i=1}^{n}\mathbb{E}\left[\left(\int_{(i-1)\Delta_{n}}^{i\Delta_{n}}e_{s}^{\top}dX_{s}^{c}\right)^{2}\right]\leq K\mathbb{E}[\Vert e\Vert_{L^{2}}^{2}].
\end{equation}
Combining \cref{Eq:H_2_1-1,Eq:H_2_1-2} in \cref{Eq:H_2_1}, and by \cref{lemma:approximationError},
\begin{equation}
\mathbb{E}\left[(H_{2,n}^{(2)})^{2}\mathbbm{1}_{E_{n}}\right]\leq \frac{K}{\Delta_{n}}u_{n}^{2-r}\mathbb{E}[\Vert e\Vert_{L^{2}}^{2}] = o(1),
\end{equation}
which holds for all $u_{n}\asymp\Delta_{n}^{\varpi}$ with $\varpi\in\left(\frac{1}{2(2-r)},\,\frac{1}{2}\right)$ because 
\begin{equation}
\frac{K}{\Delta_{n}}u_{n}^{2-r}\mathbb{E}[\Vert e\Vert_{L^{2}}^{2}] \asymp \Delta_{n}^{\varpi(2-r)-1}\frac{\log K_{n}}{K_{n}}=\Delta_{n}^{\varpi(2-r)-1/2}\frac{\log K_{n}}{K_{n}\sqrt{\Delta_{n}}}=o(1),
\end{equation}
where $\log K_{n}\big/(K_{n}\sqrt{\Delta_{n}})\to0$ as required, and $\Delta_{n}^{\varpi(2-r)-1/2}\to0$ because $\varpi(2-r)-1/2>0$. Therefore, by the same Markov's inequality as in \cref{Eq:Chebyshev}, $H_{2,n}^{(2)}=o_{p}(1)$.
%\begin{equation}
%\frac{K}{\Delta_{n}}u_{n}^{2-r}\mathbb{E}[\Vert e\Vert_{L^{2}}^{2}] \asymp \Delta_{n}^{\varpi(2-r)-1}\frac{\log K_{n}}{K_{n}}=\left(\Delta_{n}^{\varpi(2-r)-1/2}\sqrt{\log K_{n}}\right)\left(\frac{\sqrt{\log K_{n}}}{K_{n}\sqrt{\Delta_{n}}}\right),
%\end{equation}
%where $\sqrt{\log K_{n}}\big/(K_{n}\sqrt{\Delta_{n}})\to0$ as required, and $\Delta_{n}^{\varpi(2-r)-1/2}\sqrt{\log K_{n}}\to0$ because $\Delta_{n}\ll K_{n}^{-1}$ and $\varpi(2-r)-1/2>0$. Therefore, by the same Markov's inequality as in \cref{Eq:Chebyshev}, $H_{2,n}^{(2)}=o_{p}(1)$.

By the results in (iii.1) and (iii.2), we conclude that $H_{2,n}=o_{p}(1)$.

\subsubsection*{(iv) $\bm{H_{3,n}=o_{p}(1)}$}

By the same steps as in \cref{Eq:H_1_1_CS} and the safe substitution in \cref{Eq:J_i_tilde},
\begin{equation}
|H_{3,n}|\leq\frac{Ku_{n}}{\sqrt{\Delta_{n}}}\Vert\bm{w}_{n}\Vert_{\infty}\sum_{i=1}^{n}|\mathbf{J}_{i}|\mathbbm{1}_{\{\Vert\Delta_{i}^{n}X\Vert\leq u_{n}\}}
=\frac{Ku_{n}}{\sqrt{\Delta_{n}}}\Vert\bm{w}_{n}\Vert_{\infty}\sum_{i=1}^{n}|\widetilde{\mathbf{J}}_{i}|,
\end{equation}
and
\begin{equation}
\mathbb{E}[|H_{3,n}|\mathbbm{1}_{E_{n}}]\leq\frac{Ku_{n}}{\sqrt{\Delta_{n}}}\mathbb{E}\left[\sum_{i=1}^{n}|\widetilde{\mathbf{J}}_{i}|\right],
\end{equation}
so it suffices to show
\begin{equation}
\mathbb{E}\left[\sum_{i=1}^{n}|\widetilde{\mathbf{J}}_{i}|\right]=O(u_{n}^{1-r})=o\left(\frac{\sqrt{\Delta_{n}}}{u_{n}}\right),\qquad\text{such that}\quad\mathbb{E}[|H_{3,n}|\mathbbm{1}_{E_{n}}]=o(1).
\end{equation}
Same as in \cref{Eq:J_i_2u}, we have
\begin{equation}
\label{Eq:H_3_n-0}
\begin{split}
\sum_{i=1}^{n}|\widetilde{\mathbf{J}}_{i}|&\leq K\sum_{i=1}^{n}\left(\int_{(i-1)\Delta_{n}}^{i\Delta_{n}}\int_{\mathbb{R}^{p}}(\Vert\delta(s,x)\Vert\wedge 2u_{n})\mu(ds,dx)\right.\\
&\qquad+\left.\int_{(i-1)\Delta_{n}}^{i\Delta_{n}}\int_{\{\Vert\delta\Vert>2u_{n}\}}\Vert\delta(s,x)\Vert\mu(ds,dx)\right)\mathbbm{1}_{\{\Vert\Delta_{i}^{n}X\Vert\leq u_{n}\}} \\
&=K\int_{0}^{T}\int_{\mathbb{R}^{p}}(\Vert\delta(s,x)\Vert\wedge 2u_{n})\mu(ds,dx)\\
&\qquad+K\sum_{i=1}^{n}\left(\int_{(i-1)\Delta_{n}}^{i\Delta_{n}}\int_{\{\Vert\delta\Vert>2u_{n}\}}\Vert\delta(s,x)\Vert\mu(ds,dx)\right)\mathbbm{1}_{\{\Vert\Delta_{i}^{n}X\Vert\leq u_{n}\}}.
\end{split}
\end{equation}
Following the same steps as in \cref{Eq:V1_2}, 
\begin{equation}
\label{Eq:H_3_n-1}
\begin{split}
&\mathbb{E}\left[\int_{0}^{T}\int_{\mathbb{R}^{p}}(\Vert\delta(s,x)\Vert\wedge 2u_{n})\mu(ds,dx)\right]\leq K\mathbb{E}\left[\int_{\mathbb{R}^p} (\|\delta(s,x)\|\wedge 2u_n)\lambda(dx)\right]\\
&\qquad\leq K\mathbb{E}\left[\int_{\{\|\delta\|\le 2u_n\}}\|\delta(s,x)\|\lambda(dx)+2u_n\int_{\{\|\delta\|>2u_n\}}\lambda(dx)\right]\leq K'u_{n}^{1-r},
\end{split}
\end{equation}
by \cref{Eq:LevyTail,Eq:smallJumpsl1}. For the other term, by \cref{Eq:largeJumpBound,Eq:eventDecomp_J,Eq:EJi^2_2_decomp3,Eq:EJi^2_2_decomp1,Eq:EJi^2_2_decomp2},
\begin{equation}
\label{Eq:H_3_n-2}
\begin{split}
&\sum_{i=1}^{n}\mathbb{E}\left[\left(\int_{(i-1)\Delta_{n}}^{i\Delta_{n}}\int_{\{\Vert\delta\Vert>2u_{n}\}}\Vert\delta(s,x)\Vert\mu(ds,dx)\right)\mathbbm{1}_{\{\Vert\Delta_{i}^{n}X\Vert\leq u_{n}\}}\right]\\
&\qquad\leq K\sum_{i=1}^{n}\mathbb{E}\left[N_{i}^{n}(2u_{n})\mathbbm{1}_{\{\Vert\Delta_{i}^{n}X\Vert\leq u_{n}\}}\right]\leq K\sum_{i=1}^{n}\mathbb{E}\left[(N_{i}^{n}(2u_{n}))^{2}\mathbbm{1}_{\{\Vert\Delta_{i}^{n}X\Vert\leq u_{n}\}}\right]\\
&\qquad\leq K\Delta_{n}u_{n}^{-2r}=o(u_{n}^{1-r}). 
\end{split}
\end{equation}
Combining \cref{Eq:H_3_n-1,Eq:H_3_n-2} in \cref{Eq:H_3_n-0} leads to $\mathbb{E}[\sum_{i=1}^{n}|\widetilde{\mathbf{J}}_{i}|]=o(\sqrt{\Delta_{n}}u_{n}^{-1})$, and thus $\mathbb{E}[|H_{3,n}|\mathbbm{1}_{E_{n}}]=o(1)$. Therefore, by the same Markov's inequality as in \cref{Eq:Chebyshev}, $H_{3,n}=o_{p}(1)$.

\subsubsection*{(v) $\bm{H_{5,n}=o_{p}(1)}$}
\label{AP:Proof_CLT_v}

By the same steps as in \cref{Eq:H_1_1_CS},
\begin{align}
|H_{5,n}|&\leq\frac{Ku_{n}}{\sqrt{\Delta_{n}}}\Vert\bm{w}_{n}\Vert_{\infty}\sum_{i=1}^{n}|\Delta_{i}^{n}Z|\left\vert\mathbbm{1}_{\{|\Delta_{i}^{n}Y|\leq u_{n}\}}-\mathbbm{1}_{\{|\Delta_{i}^{n}Z|\leq u_{n}\}}\right\vert\mathbbm{1}_{\{\Vert\Delta_{i}^{n}X\Vert\leq u_{n}\}} \notag\\
&\leq\frac{Ku_{n}}{\sqrt{\Delta_{n}}}\Vert\bm{w}_{n}\Vert_{\infty}\sum_{i=1}^{n}|\Delta_{i}^{n}Z|\left(\mathbbm{1}_{\{|\Delta_{i}^{n}Y|>u_{n},\,|\Delta_{i}^{n}Z|\leq u_{n}\}}+\mathbbm{1}_{\{|\Delta_{i}^{n}Y|\leq u_{n},\,|\Delta_{i}^{n}Z|>u_{n}\}}\right)\mathbbm{1}_{\{\Vert\Delta_{i}^{n}X\Vert\leq u_{n}\}}\notag \\
&\leq\frac{Ku_{n}}{\sqrt{\Delta_{n}}}\Vert\bm{w}_{n}\Vert_{\infty}\sum_{i=1}^{n}\left(u_{n}\mathbbm{1}_{\{|\Delta_{i}^{n}Y|>u_{n}\}}+|\Delta_{i}^{n}Z|\mathbbm{1}_{\{|\Delta_{i}^{n}Y|\leq u_{n},\,|\Delta_{i}^{n}Z|>u_{n},\,\Vert\Delta_{i}^{n}X\Vert\leq u_{n}\}}\right). \label{Eq:E_H_5-0}
\end{align}
Note that the additional truncation $\mathbbm{1}_{\{\Vert\Delta_{i}^{n}X\Vert\leq u_{n}\}}$ is harmlessly included because the design matrix $\mathbf{R}$ already includes that for covariate increments. Then,
\begin{equation}
\label{Eq:E_H_5}
\mathbb{E}[|H_{5,n}|\mathbbm{1}_{E_{n}}]\leq\frac{Ku_{n}^{2}}{\sqrt{\Delta_{n}}}\sum_{i=1}^{n}\mathbb{P}(|\Delta_{i}^{n}Y|>u_{n})+\frac{K'u_{n}}{\sqrt{\Delta_{n}}}\sum_{i=1}^{n}\mathbb{E}\left[|\Delta_{i}^{n}Z|\mathbbm{1}_{\{|\Delta_{i}^{n}Y|\leq u_{n},\,|\Delta_{i}^{n}Z|>u_{n},\,\Vert\Delta_{i}^{n}X\Vert\leq u_{n}\}}\right].
\end{equation}
For the first term, by \cref{lemma:truncationEvents} (i),
\begin{equation}
\label{Eq:E_H_5-1}
\sum_{i=1}^{n}\mathbb{P}(|\Delta_{i}^{n}Y|>u_{n})=O(u_{n}^{-r}),\quad\text{and thus}\quad
\frac{Ku_{n}^{2}}{\sqrt{\Delta_{n}}}\sum_{i=1}^{n}\mathbb{P}(|\Delta_{i}^{n}Y|>u_{n})=O\left(\frac{u_{n}^{2-r}}{\sqrt{\Delta_{n}}}\right)=o(1),
\end{equation}
because $\varpi(2-r)-1/2>0$. 

Now we turn to the second term in \cref{Eq:E_H_5}. Intuitively, on the event $\{|\Delta_{i}^{n}Y|\leq u_{n},\,|\Delta_{i}^{n}Z|>u_{n},\,\Vert\Delta_{i}^{n}X\Vert\leq u_{n}\}$, 
%\begin{equation}
%A_{i,n}=\{|\Delta_{i}^{n}Y|\leq u_{n},\,|\Delta_{i}^{n}Z|>u_{n},\,\Vert\Delta_{i}^{n}X\Vert\leq u_{n}\},
%\end{equation}
the cancellation between $\Delta_{i}^{n}Y$, $\Delta_{i}^{n}X$ and $\Delta_{i}^{n}Z$ imposes a restriction on the size of $\Delta_{i}^{n}Z$. To formally establish this, we write $\Delta_{i}^{n}Y$ into
\begin{equation}
\label{Eq:unmatch_start}
\Delta_{i}^{n}Y=\,\underbrace{\int_{(i-1)\Delta_{n}}^{i\Delta_{n}}\beta_{s-}^{\top}dX_{s}^{c}}_{\Delta_{i}^{n}Y^{(1)}}\,\,+\,\,\underbrace{\sum_{(i-1)\Delta_{n}\leq s\leq i\Delta_{n}}(\beta_{s-}^{J})^{\top}\Delta X_{s}}_{\Delta_{i}^{n}Y^{(2)}}\,\,+\,\,\Delta_{i}^{n}Z.
\end{equation}
On $\{|\Delta_{i}^{n}Y|\leq u_{n},\,|\Delta_{i}^{n}Z|>u_{n},\,\Vert\Delta_{i}^{n}X\Vert\leq u_{n}\}$, it holds that
\begin{equation}
\label{Eq:unmatch_inequality}
\begin{split}
u_{n}<|\Delta_{i}^{n}Z|&=|\Delta_{i}^{n}Y-\Delta_{i}^{n}Y^{(1)}-\Delta_{i}^{n}Y^{(2)}|\\
&\leq |\Delta_{i}^{n}Y| + |\Delta_{i}^{n}Y^{(1)}| + |\Delta_{i}^{n}Y^{(2)}|\\
&=|\Delta_{i}^{n}Y| + |\Delta_{i}^{n}Y^{(1)}| + |\Delta_{i}^{n}Y^{(2)}\mathbbm{1}_{\{|\Delta_{i}^{n}Y|\leq u_{n},\,\Vert\Delta_{i}^{n}X\Vert\leq u_{n}\}}|\\
&=|\Delta_{i}^{n}Y| + |\Delta_{i}^{n}Y^{(1)}| + |\widetilde{\mathbf{J}}_{i}|\\
&\leq u_{n} + |\Delta_{i}^{n}Y^{(1)}| + |\widetilde{\mathbf{J}}_{i}|,
\end{split}
\end{equation}
where $\widetilde{\mathbf{J}}_{i}$ is defined in \cref{Eq:J_i_tilde}. Based on the above result, we consider the event decomposition:
\begin{equation}
\label{Eq:unmatch_eventInequality}
\begin{split}
&\{|\Delta_{i}^{n}Y|\leq u_{n},\,|\Delta_{i}^{n}Z|>u_{n},\,\Vert\Delta_{i}^{n}X\Vert\leq u_{n}\}\\
&\qquad\subseteq\{u_{n}<|\Delta_{i}^{n}Z|\leq 3u_{n}\}\,\cup\,\{|\Delta_{i}^{n}Y^{(1)}| + |\widetilde{\mathbf{J}}_{i}|>2u_{n}\}.
\end{split}
\end{equation}
Then, we have
\begin{align}
&\mathbb{E}\left[|\Delta_{i}^{n}Z|\mathbbm{1}_{\{|\Delta_{i}^{n}Y|\leq u_{n},\,|\Delta_{i}^{n}Z|>u_{n},\,\Vert\Delta_{i}^{n}X\Vert\leq u_{n}\}}\right] \notag\\
&\qquad\leq\mathbb{E}\left[|\Delta_{i}^{n}Z|\mathbbm{1}_{\{u_{n}<|\Delta_{i}^{n}Z|\leq 3u_{n}\}}\right]+\mathbb{E}\left[|\Delta_{i}^{n}Z|\mathbbm{1}_{\{|\Delta_{i}^{n}Y^{(1)}| + |\widetilde{\mathbf{J}}_{i}|>2u_{n}\}}\right] \notag\\
&\qquad\leq \underbrace{3u_{n}\mathbb{P}(|\Delta_{i}^{n}Z|>u_{n})}_{(1)}\,+\,\mathbb{E}\left[|\Delta_{i}^{n}Z|\mathbbm{1}_{\{|\Delta_{i}^{n}Y^{(1)}|> u_{n}\}}\right]+\mathbb{E}\left[|\Delta_{i}^{n}Z|\mathbbm{1}_{\{|\widetilde{\mathbf{J}}_{i}|>u_{n}\}}\right] \notag\\
&\qquad\leq (1)\,+\, \underbrace{u_{n}\mathbb{P}\left(|\Delta_{i}^{n}Y^{(1)}|>u_{n}\right)}_{(2)}\,+\,\underbrace{\mathbb{E}\left[|\Delta_{i}^{n}Y^{(1)}|\mathbbm{1}_{\{|\Delta_{i}^{n}Y^{(1)}|> u_{n}\}}\right]}_{(3)}\,+\,\underbrace{\mathbb{E}\left[|\widetilde{\mathbf{J}}_{i}|\mathbbm{1}_{\{|\Delta_{i}^{n}Y^{(1)}|>u_{n}\}}\right]}_{(4)} \notag\\
&\qquad\qquad+\,\underbrace{u_{n}\mathbb{P}\left(|\widetilde{\mathbf{J}}_{i}|>u_{n}\right)}_{(5)}\,+\,\underbrace{\mathbb{E}\left[|\Delta_{i}^{n}Y^{(1)}|\mathbbm{1}_{\{|\widetilde{\mathbf{J}}_{i}|> u_{n}\}}\right]}_{(6)}\,+\,\underbrace{\mathbb{E}\left[|\widetilde{\mathbf{J}}_{i}|\mathbbm{1}_{\{|\widetilde{\mathbf{J}}_{i}|>u_{n}\}}\right]}_{(7)}. \label{Eq:7terms}
\end{align}
Next, we verify that the above terms (1)--(7) are all $O(\Delta_{n}u_{n}^{1-r})$. 

Term (1): By the same result as in \cref{lemma:truncationEvents} (i) for the one-dimensional It{\^o} semimartingale $Z$,
\begin{equation}
3u_{n}\mathbb{P}(|\Delta_{i}^{n}Z|>u_{n})\leq K\Delta_{n}u_{n}^{1-r}.
\end{equation}

Term (2): By \cref{Eq:Y_triangle_Prob1},
\begin{equation}
u_{n}\mathbb{P}\left(|\Delta_{i}^{n}Y^{(1)}|>u_{n}\right)\leq K\Delta_{n}u_{n}^{1-r}.
\end{equation}

Term (3): By the same steps as in \cref{Eq:moment_Markov} and the BDG inequality, for any $q\geq2$,
\begin{equation}
\begin{split}
\mathbb{E}\left[|\Delta_{i}^{n}Y^{(1)}|\mathbbm{1}_{\{|\Delta_{i}^{n}Y^{(1)}|> u_{n}\}}\right]\leq\mathbb{E}\left[|\Delta_{i}^{n}Y^{(1)}|\left(\frac{|\Delta_{i}^{n}Y^{(1)}|}{u_{n}}\right)^{q-1}\right]\leq \frac{\mathbb{E}[|\Delta_{i}^{n}Y^{(1)}|^{q}]}{u_{n}^{q-1}}\leq K\Delta_{n}^{q/2}u_{n}^{1-q},
\end{split}
\end{equation}
and it is $O(\Delta_{n}u_{n}^{1-r})$ if $q\geq 2(1-\varpi r)/(1-2\varpi)$, which is always possible for $1/2(2-r)<\varpi<1/2$ and $0\leq r<1$. 

Term (4): By the Cauchy-Schwarz inequality, \cref{Eq:J_i_moment,Eq:Y_triangle_Prob1}, for any $q\geq 2$,
\begin{equation}
\begin{split}
\mathbb{E}\left[|\widetilde{\mathbf{J}}_{i}|\mathbbm{1}_{\{|\Delta_{i}^{n}Y^{(1)}|> u_{n}\}}\right]&\leq\left(\mathbb{E}[|\widetilde{\mathbf{J}}_{i}|^{2}]\right)^{1/2}\left(\mathbb{P}\left(|\Delta_{i}^{n}Y^{(1)}|>u_{n}\right)\right)^{1/2}\\
&\leq K(\Delta_{n}u_{n}^{2-r}\vee\Delta_{n}^{2}u_{n}^{-2r})^{1/2}(\Delta_{n}^{q/2}u_{n}^{-q})^{1/2}\\
&\leq K\Delta_{n}^{1/2+q/4}u_{n}^{1-r/2-q/2}\vee K\Delta_{n}^{1+q/4}u_{n}^{-r-q/2}\\
&=O(\Delta_{n}u_{n}^{1-r}),
\end{split}
\end{equation}
if we choose some
\begin{equation}
q\geq \frac{2(1-\varpi r)}{1-2\varpi}\bigvee\frac{4\varpi}{1-2\varpi},
\end{equation}
which is always possible for $1/2(2-r)<\varpi<1/2$ and $0\leq r<1$. 

Term (5): By Markov's inequality,
\begin{equation}
\label{Eq:E_J_i_tilde}
u_{n}\mathbb{P}\left(|\widetilde{\mathbf{J}}_{i}|>u_{n}\right)\leq u_{n}\frac{\mathbb{E}[|\widetilde{\mathbf{J}}_{i}|]}{u_{n}}\leq \mathbb{E}[|\widetilde{\mathbf{J}}_{i}|]\leq K\Delta_{n}u_{n}^{1-r},
\end{equation}
which can be calculated following the same steps as in \cref{Eq:H_3_n-1,Eq:H_3_n-2}.

Term (6): By Hölder's inequality, for some $q\geq 2$,
\begin{equation}
\begin{split}
\mathbb{E}\left[|\Delta_{i}^{n}Y^{(1)}|\mathbbm{1}_{\{|\widetilde{\mathbf{J}}_{i}|> u_{n}\}}\right]&\leq\left(\mathbb{E}[|\Delta_{i}^{n}Y^{(1)}|^{q}]\right)^{1/q}\left(\mathbb{P}\left(|\widetilde{\mathbf{J}}_{i}|>u_{n}\right)\right)^{1-1/q}\\
&\leq K\sqrt{\Delta_{n}}(\Delta_{n}u_{n}^{-r})^{1-1/q}=K\Delta_{n}^{3/2-1/q}u_{n}^{r/q-r},
\end{split}
\end{equation}
by the BDG inequality, and it is $O(\Delta_{n}u_{n}^{1-r})$ if $q\geq 2(1-\varpi r)/(1-2\varpi)$.

Term (7): Same as in \cref{Eq:J_i_2u}, we have
\begin{equation}
|\widetilde{\mathbf{J}}_{i}|\leq K(A_{i}^{(1)}+A_{i}^{(2)})\mathbbm{1}_{\{\Vert\Delta_{i}^{n}X\Vert\leq u_{n}\}},
\end{equation}
where
\begin{align}
A_{i}^{(1)}&=\int_{(i-1)\Delta_{n}}^{i\Delta_{n}}\int_{\mathbb{R}^{p}}(\Vert\delta(s,x)\Vert\wedge 2u_{n})\mu(ds,dx),\\
A_{i}^{(2)}&=\int_{(i-1)\Delta_{n}}^{i\Delta_{n}}\int_{\{\Vert\delta\Vert>2u_{n}\}}\Vert\delta(s,x)\Vert\mu(ds,dx),
\end{align}
and thus, 
\begin{equation}
\mathbb{E}\left[|\widetilde{\mathbf{J}}_{i}|\mathbbm{1}_{\{|\widetilde{\mathbf{J}}_{i}|>u_{n}\}}\right]\leq K\mathbb{E}\left[A_{i}^{(1)}\mathbbm{1}_{\{A_{i}^{(1)}>\frac{u_{n}}{2K}\}}\right]+K\mathbb{E}\left[A_{i}^{(2)}\mathbbm{1}_{\{A_{i}^{(2)}>\frac{u_{n}}{2K}\}}\mathbbm{1}_{\{\Vert\Delta_{i}^{n}X\Vert\leq u_{n}\}}\right].
\end{equation}
For the first expectation on the right-hand side, by \cref{Eq:EJi^2_1},
\begin{equation}
\mathbb{E}\left[A_{i}^{(1)}\mathbbm{1}_{\{A_{i}^{(1)}>\frac{u_{n}}{2K}\}}\right]\leq \frac{2K}{u_{n}}\mathbb{E}[(A_{i}^{(1)})^{2}]\leq K'\Delta_{n}u_{n}^{1-r}.
\end{equation}
For the second expectation, we follow the same steps as in \cref{Eq:H_3_n-2},
\begin{equation}
\begin{split}
\mathbb{E}\left[A_{i}^{(2)}\mathbbm{1}_{\{A_{i}^{(2)}>\frac{u_{n}}{2K}\}}\mathbbm{1}_{\{\Vert\Delta_{i}^{n}X\Vert\leq u_{n}\}}\right]&\leq K'\mathbb{E}\left[N_{i}^{n}(2u_{n})\mathbbm{1}_{\{N_{i}^{n}(2u_{n})\geq 1\}}\mathbbm{1}_{\{\Vert\Delta_{i}^{n}X\Vert\leq u_{n}\}}\right]\\
&\leq K'\mathbb{E}\left[(N_{i}^{n}(2u_{n}))^{2}\mathbbm{1}_{\{N_{i}^{n}(2u_{n})\geq 1\}}\mathbbm{1}_{\{\Vert\Delta_{i}^{n}X\Vert\leq u_{n}\}}\right]\\
&\leq K''\Delta_{n}^{2}u_{n}^{-2r}=o(\Delta_{n}u_{n}^{1-r}). 
\end{split}
\end{equation}
Therefore, it holds that for the term (7):
\begin{equation}
\mathbb{E}\left[|\widetilde{\mathbf{J}}_{i}|\mathbbm{1}_{\{|\widetilde{\mathbf{J}}_{i}|>u_{n}\}}\right]\leq K\Delta_{n}u_{n}^{1-r}.
\end{equation}

Therefore, combining all terms (1)--(7) in \cref{Eq:7terms} leads to
\begin{equation}
\label{Eq:E_H_5-2}
\mathbb{E}\left[|\Delta_{i}^{n}Z|\mathbbm{1}_{\{|\Delta_{i}^{n}Y|\leq u_{n},\,|\Delta_{i}^{n}Z|>u_{n},\,\Vert\Delta_{i}^{n}X\Vert\leq u_{n}\}}\right]\leq K\Delta_{n}u_{n}^{1-r}.
\end{equation}
By \cref{Eq:E_H_5,Eq:E_H_5-1,Eq:E_H_5-2}, $\mathbb{E}[|H_{5,n}|\mathbbm{1}_{E_{n}}]=o(1)$. By the same Markov's inequality as in \cref{Eq:Chebyshev}, we conclude that $H_{5,n}=o_{p}(1)$.\\

By results in (\hyperref[AP:Proof_CLT_i]{i}) to (\hyperref[AP:Proof_CLT_v]{v}), we complete the proof of \cref{Eq:Cramer-Wold}. Next, we show in (\hyperref[AP:Proof_CLT_vi]{vi}) and (\hyperref[AP:Proof_CLT_vii]{vii}) that the second term (the bias induced by the discretization error) and the third term (the bias induced by the spline approximation error) in \cref{Eq:decomp} are both asymptotically negligible and thus have no impact on the CLT. 

\subsubsection*{(vi) Discretization error}
\label{AP:Proof_CLT_vi}

Consider the discretization error term in \cref{Eq:decomp}:
\begin{equation}
D_{n}=\frac{1}{\sqrt{\Delta_{n}}}\left( \int_{0}^{T} \mathbf{B}_sds - \sum^n_{i=1 }\mathbf{B}_{(i-1)\Delta_n}\Delta_n\right)\bm{\gamma}=\frac{1}{\sqrt{\Delta_{n}}}\sum_{i=1}^{n}\int_{(i-1)\Delta_{n}}^{i\Delta_{n}}(\widetilde{\beta}_{s}-\widetilde{\beta}_{(i-1)\Delta_{n}})ds.
\end{equation}
By the Cram{\'e}r-Wold device, we need to show that, for any fixed linear combination with $a\in\mathbb{R}^{p}$, it holds that $|a^{\top}D_{n}|=o_{p}(1)$. By the Cauchy-Schwarz inequality, since $\widetilde{\beta}$ is absolutely continuous and has bounded derivative on $[0,T]$ (same as in \cref{Eq:Discretization-2}),
\begin{equation}
\begin{split}
|a^{\top}D_{n}|&\leq\Vert a\Vert\Vert D_{n}\Vert\leq \frac{K}{\sqrt{\Delta_{n}}}\left\Vert\sum_{i=1}^{n}\int_{(i-1)\Delta_{n}}^{i\Delta_{n}}(\widetilde{\beta}_{s}-\widetilde{\beta}_{(i-1)\Delta_{n}})ds\right\Vert\\
&\leq\frac{K}{\sqrt{\Delta_{n}}}\sum_{i=1}^{n}\int_{(i-1)\Delta_{n}}^{i\Delta_{n}}\left(\int_{(i-1)\Delta_{n}}^{s}\Vert\widetilde{\beta}_{u}'\Vert du\right)ds\leq \frac{K}{\sqrt{\Delta_{n}}}\sum_{i=1}^{n}\int_{(i-1)\Delta_{n}}^{i\Delta_{n}}\Delta_{n}\Vert\widetilde{\beta}_{u}'\Vert du \\
&\leq K\sqrt{\Delta_{n}}\int_{0}^{T}\Vert\widetilde{\beta}_{u}'\Vert du\leq K\sqrt{\Delta_{n}}\sqrt{T}\Vert\widetilde{\beta}'\Vert_{L^{2}}\leq K'\sqrt{\Delta_{n}}\Vert\widetilde{\beta}'\Vert_{L^{2}}.
\end{split}
\end{equation}
By \cref{lemma:derivative},
\begin{equation}
|a^{\top}D_{n}|=O_{p}(\sqrt{\Delta_{n}K_{n}\log K_{n}})=o_{p}(1).
\end{equation}

\subsubsection*{(vii) Approximation error}
\label{AP:Proof_CLT_vii}

Define the matrix-valued process $\psi=c^{-1}\mathbf{I}_{p}$, where $\mathbf{I}_{p}$ is an identity matrix of dimension $p$. Under \cref{As:Ito} (iii), consider the same $L^{2}(c)$-orthogonal projection $\Pi_{n}\psi\in\mathbb{G}_{n}$, and then
\begin{equation}
\int_{0}^{T}e_{s}ds=\int_{0}^{T}e_{s}c_{s}\psi_{s}ds=\langle e,\psi\rangle_{c}=\langle e,\psi-\Pi_{n}\psi\rangle_{c}.
\end{equation}
By the Cauchy-Schwarz inequality and the norm equivalence in \cref{Eq:norm_equiv},
\begin{equation}
\begin{split}
\left\Vert\frac{1}{\sqrt{\Delta_{n}}}\int_{0}^{T}e_{s}ds\right\Vert&\leq\frac{1}{\sqrt{\Delta_{n}}}\Vert\langle e,\psi-\Pi_{n}\psi\rangle_{c}\Vert\leq\frac{1}{\sqrt{\Delta_{n}}}\Vert e\Vert_{L^{2}(c)}\Vert\psi-\Pi_{n}\psi\Vert_{L^{2}(c)}\\
&\leq\frac{K}{\sqrt{\Delta_{n}}}\Vert e\Vert_{L^{2}}\Vert\psi-\Pi_{n}\psi\Vert_{L^{2}}.
\end{split}
\end{equation}
By \cref{lemma:approximationError}, $\Vert e\Vert_{L^{2}}=O_{p}(\sqrt{\log K_{n}/K_{n}})$. Moreover, $\psi$ is also an It{\^o} semimartingale with finite-variation jumps (because $c$ is bounded, nonsingular, and an It{\^o} semimartingale under \cref{As:Ito} (iii), so is $c^{-1}$ by It{\^o}'s formula for semimartingales with summable jumps; see, e.g., Theorem 3.2.2, \citealp{jacod2012discretization}), and then the result in \cref{lemma:approximationError} also applies to $\psi-\Pi_{n}\psi$. Therefore,
\begin{equation}
\Vert\psi-\Pi_{n}\psi\Vert_{L^{2}}=O_{p}\left(\sqrt{\frac{\log K_{n}}{K_{n}}}\right),
\end{equation}
and then
\begin{equation}
\left\Vert\frac{1}{\sqrt{\Delta_{n}}}\int_{0}^{T}e_{s}ds\right\Vert=O_{p}\left(\frac{\log K_{n}}{K_{n}\sqrt{\Delta_{n}}}\right)=o_{p}(1). 
\end{equation}
This completes the proof.

\subsection[Proof of Theorem 3]{Proof of \cref{Th:feasiblility}}

By the standard Riemann approximation and the LLN for the empirical Gram matrix in \cref{Eq:LLN},
\begin{equation}
\label{Eq:sandwich1}
\sum_{i=1}^{n}\mathbf{B}_{(i-1)\Delta_{n}}\Delta_{n}\overset{\mathbb{P}}{\longrightarrow}\int_{0}^{T}\mathbf{B}_{s}ds,\qquad
(\mathbf{R}^{\top}\mathbf{R})^{-1}\overset{\mathbb{P}}{\longrightarrow}\left(\int_{0}^{T}\mathbf{B}_{s}^{\top}c_{s}\mathbf{B}_{s}ds\right)^{-1}.
\end{equation}
Hence  it suffices to show
\begin{equation}
\mathbf{R}^{\top}\mathbf{D}\mathbf{R}\overset{\mathbb{P}}{\longrightarrow}\int_{0}^{T}\widetilde{\sigma}_{s}^{2}\mathbf{B}_{s}^{\top}c_{s}\mathbf{B}_{s}ds
\end{equation}
We define the infeasible and feasible residuals:
\begin{equation}
\varepsilon_{i}=\Delta_{i}^{n}Y-(\Delta_{i}^{n}X)^{\top}\beta_{(i-1)\Delta_{n}-},\qquad
\widehat{\varepsilon}_{i}=\Delta_{i}^{n}Y-(\Delta_{i}^{n}X)^{\top}\widehat{\beta}_{(i-1)\Delta_{n}}^{(-i)},
\end{equation}
and $\widetilde{\mathbf{D}}=\text{diag}(\Delta_{n}^{-1}\varepsilon_{i}^{2}\mathbbm{1}_{\{|\Delta_{i}^{n}Y|\leq u_{n},\,\Vert\Delta_{i}^{n}X\Vert\leq u_{n}\}})$ as the infeasible counterpart of $\mathbf{D}$. We first show that the difference between $\mathbf{R}^{\top}\mathbf{D}\mathbf{R}$ and $\mathbf{R}^{\top}\widetilde{\mathbf{D}}\mathbf{R}$ are asymptotically negligible, i.e.,
\begin{equation}
\Vert\mathbf{R}^{\top}(\mathbf{D}-\widetilde{\mathbf{D}})\mathbf{R}\Vert=o_{p}(1).
\end{equation}
Firstly, we obtain
\begin{align}
&\widehat{\varepsilon}_{i}^{2}-\varepsilon_{i}^{2}=(\Delta_{i}^{n}Y-(\Delta_{i}^{n}X)^{\top}\widehat{\beta}_{(i-1)\Delta_{n}}^{(-i)})^{2}-(\Delta_{i}^{n}Y-(\Delta_{i}^{n}X)^{\top}\beta_{(i-1)\Delta_{n}-})^{2}\notag\\
&=(2\Delta_{i}^{n}Y-(\Delta_{i}^{n}X)^{\top}\widehat{\beta}_{(i-1)\Delta_{n}}^{(-i)}-(\Delta_{i}^{n}Y-(\Delta_{i}^{n}X)^{\top}\beta_{(i-1)\Delta_{n}-})((\Delta_{i}^{n}X)^{\top}(\beta_{(i-1)\Delta_{n}-}-\widehat{\beta}_{(i-1)\Delta_{n}}^{(-i)}))\notag\\
&=(2\Delta_{i}^{n}Y-(\Delta_{i}^{n}X)^{\top}\beta_{(i-1)\Delta_{n}-}+(\Delta_{i}^{n}X)^{\top}(\beta_{(i-1)\Delta_{n}-}-\widehat{\beta}_{(i-1)\Delta_{n}}^{(-i)}))((\Delta_{i}^{n}X)^{\top}(\beta_{(i-1)\Delta_{n}-}-\widehat{\beta}_{(i-1)\Delta_{n}}^{(-i)}))\notag\\
&=2\varepsilon_{i}(\Delta_{i}^{n}X)^{\top}(\beta_{(i-1)\Delta_{n}-}-\widehat{\beta}_{(i-1)\Delta_{n}}^{(-i)})+((\Delta_{i}^{n}X)^{\top}(\beta_{(i-1)\Delta_{n}-}-\widehat{\beta}_{(i-1)\Delta_{n}}^{(-i)}))^{2}, \label{Eq:eps-eps}
\end{align}
and, by the BDG inequality and \cref{Eq:smallJumps},
\begin{equation}
\label{Eq:epsilon^4}
\begin{split}
&\mathbb{E}\left[(\varepsilon_{i})^{4}\mathbbm{1}_{\{|\Delta_{i}^{n}Y|\leq u_{n},\,\Vert\Delta_{i}^{n}X\Vert\leq u_{n}\}}\right]\\
&\leq K\mathbb{E}\left[\left(\int_{(i-1)\Delta_{n}}^{i\Delta_{n}}(\beta_{s}-\beta_{(i-1)\Delta_{n}-})^{\top}dX_{s}^{c}+\Delta_{i}^{n}Z^{c}\right)^{4}\right]\\
&\quad +K'\mathbb{E}\left[\left(\sum_{(i-1)\Delta_{n}\leq s\leq i\Delta_{n}}\left(\Delta Y_{s}+(\beta_{s}^{J}-\beta_{(i-1)\Delta_{n}-})^{\top}\Delta X_{s}\right)\right)^{4}\mathbbm{1}_{\{|\Delta_{i}^{n}Y|\leq u_{n},\,\Vert\Delta_{i}^{n}X\Vert\leq u_{n}\}}\right]\\
&\leq K\Delta_{n}^{2}+K'\Delta_{n}u_{n}^{4-r}. 
\end{split}
\end{equation}
By the triangle inequality,
\begin{equation}
\Vert\mathbf{R}^{\top}(\mathbf{D}-\widetilde{\mathbf{D}})\mathbf{R}\Vert\leq S_{1}+S_{2},
\end{equation}
where, by \cref{Eq:eps-eps} and the boundedness of B-spline basis functions,
\begin{align}
S_{1}&=\frac{K}{\Delta_{n}}\sum_{i=1}^{n}|\varepsilon_{i}|\Vert\Delta_{i}^{n}X\Vert^{3}\Vert\beta_{(i-1)\Delta_{n}-}-\widehat{\beta}_{(i-1)\Delta_{n}}^{(-i)}\Vert\mathbbm{1}_{\{|\Delta_{i}^{n}Y|\leq u_{n},\,\Vert\Delta_{i}^{n}X\Vert\leq u_{n}\}},\\
S_{2}&=\frac{K}{\Delta_{n}}\sum_{i=1}^{n}\Vert\Delta_{i}^{n}X\Vert^{4}\Vert\beta_{(i-1)\Delta_{n}-}-\widehat{\beta}_{(i-1)\Delta_{n}}^{(-i)}\Vert^{2}\mathbbm{1}_{\{\Vert\Delta_{i}^{n}X\Vert\leq u_{n}\}},
\end{align}
and it suffices to show that both $S_{1}$ and $S_{2}$ are $o_{p}(1)$. 

Fix $i$ and define the enlarged $\sigma$-field
\begin{equation}
\widetilde{\mathcal{F}}_{(i-1)\Delta_{n}}=\mathcal{F}_{(i-1)\Delta_{n}}\vee\sigma\{(\Delta_{\ell}^{n}X,\Delta_{\ell}^{n}Y):\ell\neq i\}.
\end{equation}
By construction, $\widehat{\beta}_{(i-1)\Delta_{n}}^{(-i)}$ is $\widetilde{\mathcal{F}}_{(i-1)\Delta_{n}}$-measurable. By \cref{lemma:truncationEvents} (ii),
\begin{equation}
\label{Eq:E_Delta_i_X^4_truncated}
\begin{split}
\mathbb{E}\left[\left.\Vert\Delta_{i}^{n}X\Vert^{4}\mathbbm{1}_{\{\Vert\Delta_{i}^{n}X\Vert\leq u_{n}\}}\right\vert\widetilde{\mathcal{F}}_{(i-1)\Delta_{n}}\right]&\leq K\mathbb{E}\left[\Vert\Delta_{i}^{n}X^{c}\Vert^{4}+\Vert\Delta_{i}^{n}X-\Delta_{i}^{n}X^{c}\Vert^{4}\mathbbm{1}_{\{\Vert\Delta_{i}^{n}X\Vert\leq u_{n}\}}\right]\\
&\leq K\Delta_{n}^{2}+Ku_{n}^{2}\mathbb{E}\left[\Vert\Delta_{i}^{n}X-\Delta_{i}^{n}X^{c}\Vert^{2}\mathbbm{1}_{\{\Vert\Delta_{i}^{n}X\Vert\leq u_{n}\}}\right]\\
&\leq K\Delta_{n}^{2}+K'\Delta_{n}u_{n}^{4-r}\leq K''\Delta_{n}^{2},
\end{split}
\end{equation}
and thus
\begin{equation}
\label{Eq:ES2}
\begin{split}
\mathbb{E}[S_{2}]&=\frac{K}{\Delta_{n}}\sum_{i=1}^{n}\mathbb{E}\left[\mathbb{E}\left[\left.\Vert\Delta_{i}^{n}X\Vert^{4}\mathbbm{1}_{\{\Vert\Delta_{i}^{n}X\Vert\leq u_{n}\}}\right\vert\widetilde{\mathcal{F}}_{(i-1)\Delta_{n}}\right]\Vert\beta_{(i-1)\Delta_{n}-}-\widehat{\beta}_{(i-1)\Delta_{n}}^{(-i)}\Vert^{2}\right]\\
&\leq K\Delta_{n}\sum_{i=1}^{n}\mathbb{E}\left[\Vert\beta_{(i-1)\Delta_{n}-}-\widehat{\beta}_{(i-1)\Delta_{n}}^{(-i)}\Vert^{2}\right]\asymp \mathbb{E}\left[\Vert\widehat{\beta}^{-}-\beta\Vert_{L^{2}}^{2}\right]=o(1),
\end{split}
\end{equation}
where $\Delta_{n}\sum_{i=1}^{n}\Vert\beta_{(i-1)\Delta_{n}}-\widehat{\beta}_{(i-1)\Delta_{n}}^{(-i)}\Vert^{2}$ is a Riemann sum for $\Vert\widehat{\beta}^{-}-\beta\Vert_{L^{2}}^{2}$ of the piecewise constant process $\widehat{\beta}^{-}$ that equals $\widehat{\beta}_{(i-1)\Delta_{n}}^{(-i)}$ on $((i-1)\Delta_{n},i\Delta_{n}]$, and the proof of \cref{Th:consistency} applies to each leave-one-out OLS fit. Therefore, $S_{2}=o_{p}(1)$ by Markov's inequality. 

By the Cauchy-Schwarz inequality,
\begin{equation}
\mathbb{E}[S_{1}]\leq K(\mathbb{E}[S_{2}])^{1/2}\left(\Delta_{n}^{-1}\sum_{i=1}^{n}\mathbb{E}\left[\varepsilon_{i}^{2}\Vert\Delta_{i}^{n}X\Vert^{2}\mathbbm{1}_{\{|\Delta_{i}^{n}Y|\leq u_{n},\,\Vert\Delta_{i}^{n}X\Vert\leq u_{n}\}}\right]\right)^{1/2},
\end{equation}
%\begin{equation}
%\mathbb{E}[S_{1}\mathbbm{1}_{E_{n}}]\leq K(\mathbb{E}[S_{2}\mathbbm{1}_{E_{n}}])^{1/2}\left(\Delta_{n}^{-1}\sum_{i=1}^{n}\mathbb{E}\left[\varepsilon_{i}^{2}\Vert\Delta_{i}^{n}X\Vert^{2}\mathbbm{1}_{\{|\Delta_{i}^{n}Y|\leq u_{n},\,\Vert\Delta_{i}^{n}X\Vert\leq u_{n}\}}\right]\right)^{1/2},
%\end{equation}
where, by \cref{Eq:epsilon^4,Eq:E_Delta_i_X^4_truncated},
\begin{equation}
\label{Eq:S1}
\begin{split}
&\mathbb{E}\left[\varepsilon_{i}^{2}\Vert\Delta_{i}^{n}X\Vert^{2}\mathbbm{1}_{\{|\Delta_{i}^{n}Y|\leq u_{n},\,\Vert\Delta_{i}^{n}X\Vert\leq u_{n}\}}\right]\\
&\qquad\leq\left(\mathbb{E}\left[\varepsilon_{i}^{4}\mathbbm{1}_{\{|\Delta_{i}^{n}Y|\leq u_{n},\,\Vert\Delta_{i}^{n}X\Vert\leq u_{n}\}}\right]\right)^{1/2}\left(\mathbb{E}\left[\Vert\Delta_{i}^{n}X\Vert^{4}\mathbbm{1}_{\{\Vert\Delta_{i}^{n}X\Vert\leq u_{n}\}}\right]\right)^{1/2}\\
&\qquad\leq K\Delta_{n}^{2}+K'\Delta_{n}^{3/2}u_{n}^{2-r/2}\leq K''\Delta_{n}^{2},
\end{split}
\end{equation}
and thus $S_{1}=o_{p}(1)$ by Markov's inequality. Therefore, it holds that $\Vert\mathbf{R}^{\top}(\mathbf{D}-\widetilde{\mathbf{D}})\mathbf{R}\Vert=o_{p}(1)$, and thus
\begin{equation}
\label{Eq:sandwichEquivalence}
\mathbf{R}^{\top}\mathbf{D}\mathbf{R}=\mathbf{R}^{\top}\widetilde{\mathbf{D}}\mathbf{R}+o_{p}(1).
\end{equation}
Based on the same arguments on negligible remainders as in the proofs of \cref{Th:consistency,Th:CLT}, we have $\mathbf{R}^{\top}\widetilde{\mathbf{D}}\mathbf{R}=\mathbf{R}^{\top}\text{diag}(\Delta_{n}^{-1}(\Delta_{i}^{n}Z)^{2}\mathbbm{1}_{\{|\Delta_{i}^{n}Z|\leq u_{n}\}})\mathbf{R}+o_{p}(1)$, and thus, by the standard LLN result of truncated covariances, 
\begin{equation}
\label{Eq:sandwich2}
\mathbf{R}^{\top}\widetilde{\mathbf{D}}\mathbf{R}\overset{\mathbb{P}}{\longrightarrow}\int_{0}^{T}\widetilde{\sigma}_{s}^{2}\mathbf{B}_{s}^{\top}c_{s}\mathbf{B}_{s}ds.
\end{equation}
Combining \cref{Eq:sandwich1,Eq:sandwichEquivalence,Eq:sandwich2} leads to the consistency of $\widehat{\Sigma}_{T}^{\beta}$. This completes the proof.

\subsection[Proof of Theorem 4]{Proof of \cref{Th:oracle_local}}
\label{AP:Proof_oracle_local}

Define the block-wise score $c_{j}(\bm{\gamma})=\frac{\partial Q_{n}(\bm{\gamma})}{\partial\gamma_{j}}=-(\mathbf{R}_{j}^{*})^{\top}(\mathbf{Y}-\mathbf{R}\bm{\gamma})$. The standard KKT conditions (e.g., \citealp{xue2012variable}, Eqs.~(19)–(20)) imply that any local minimizer of $Q_{n}^{*}(\bm{\gamma})$ must satisfy:
\begin{align}
c_{j}(\widehat{\bm{\gamma}})=0,\qquad\sqrt{\widehat{\gamma}_{j}^{\top}\mathbf{W}_{j}\widehat{\gamma}_{j}}>\tau_{n},\qquad&\text{for }j=1,\dots,q. \label{Eq:KKT1}\\
\sqrt{(c_{j}(\widehat{\bm{\gamma}}))^{\top}\mathbf{W}_{j}c_{j}(\widehat{\bm{\gamma}})}\leq\frac{\lambda_{n}}{\tau_{n}},\qquad\sqrt{\widehat{\gamma}_{j}^{\top}\mathbf{W}_{j}\widehat{\gamma}_{j}}=0,\qquad&\text{for }j=q+1,\dots,p. \label{Eq:KKT2}
\end{align}
Then it suffices to show that $\widehat{\bm{\gamma}}^{\circ}$ satisfies both \cref{Eq:KKT1,Eq:KKT2} with probability approaching one. 

We denote by $\widehat{\bm{\gamma}}_{q}^{\circ}$ (resp.~$\bm{\gamma}_{q}^{\circ}$) the subvector of $\widehat{\bm{\gamma}}^{\circ}=((\widehat{\bm{\gamma}}_{q}^{\circ})^{\top},(\widehat{\bm{\gamma}}_{-q}^{\circ})^{\top})^{\top}$ (resp.~$\bm{\gamma}^{\circ}$) that includes only the first $qK_{n}$ elements, and by $\mathbf{R}_{q}=(\mathbf{R}_{1}^{*},\dots,\mathbf{R}_{q}^{*})^{\top}\in\mathbb{R}^{n\times qK_{n}}$ the submatrix of $\mathbf{R}$ corresponding to the first $q$ regressors. Define the oracle least-squares spline estimator,
\begin{equation}
\widehat{\bm{\gamma}}_{q}^{\circ}=\argmin_{\bm{\gamma}:\,\gamma_{j}\equiv0,\,\forall q<j\leq p}Q_{n}(\bm{\gamma}),\qquad
\widehat{\bm{\gamma}}_{-q}^{\circ}\equiv 0,\qquad
\widehat{\bm{\gamma}}_{q}^{\circ}=(\mathbf{R}_{q}^{\top}\mathbf{R}_{q})^{-1}\mathbf{R}_{q}^{\top}\mathbf{Y}.
\end{equation}
Define two standard projection matrices: the orthogonal projection matrix $\mathbf{H}_{q}=\mathbf{R}_{q}(\mathbf{R}_{q}^{\top}\mathbf{R}_{q})^{-1}\mathbf{R}_{q}^{\top}\in\mathbb{R}^{n\times n}$ onto span($\mathbf{R}_{q}$), and the residualizer $\mathbf{M}_{q}=\mathbf{I}_{n}-\mathbf{H}_{q}$, where $\mathbf{I}_{n}$ is an identity matrix of dimension $n$. Both $\mathbf{H}_{q}$ and $\mathbf{M}_{q}$ are symmetric and idempotent. Then the oracle residual is given by
\begin{equation}
\label{Eq:oracleResidual}
\mathbf{Y}-\mathbf{R}\widehat{\bm{\gamma}}^{\circ}=\mathbf{Y}-\mathbf{R}_{q}\widehat{\bm{\gamma}}_{q}^{\circ}=\mathbf{M}_{q}\mathbf{Y}.
\end{equation}

\subsubsection*{Active blocks}

By definition,  $\widehat{\bm{\gamma}}^{\circ}$ minimizes $Q_{n}(\bm{\gamma})$ over the restricted model $\{\bm{\gamma}:\bm{\gamma}_{-q}\equiv0\}$. Hence $c_{j}(\widehat{\bm{\gamma}}^{\circ})=0$ holds for all $1\leq j\leq q$. %It remains to verify $\sqrt{(\widehat{\gamma}_{j}^{\circ})^{\top}\mathbf{W}_{j}\widehat{\gamma}_{j}^{\circ}}>\tau_{n}$ for all $1\leq j\leq q$ with probablity approaching 1.  
Same as \cref{Eq:Gram}, it holds that
%\begin{equation}
%\label{Eq:KKT_inequality_active}
%\sqrt{(\widehat{\gamma}_{j}^{\circ})^{\top}\mathbf{W}_{j}\widehat{\gamma}_{j}^{\circ}}\asymp\Vert\mathbf{B}_{j}\widehat{\gamma}_{j}^{\circ}\Vert_{L^{2}}\overset{\mathbb{P}}{\longrightarrow}\Vert\beta_{j}\Vert_{L^{2}}>0,
%\end{equation}
\begin{equation}
\label{Eq:KKT_inequality_active}
(\widehat{\gamma}_{j}^{\circ})^{\top}\mathbf{W}_{j}\widehat{\gamma}_{j}^{\circ}\overset{\mathbb{P}}{\longrightarrow}\int_{0}^{T}(\widehat{\beta}_{j,s}^{\circ})^{\top}c_{s}\widehat{\beta}_{j,s}^{\circ}ds\asymp\Vert\widehat{\beta}_{j}^{\circ}\Vert_{L^{2}}^{2},\qquad\text{where}\quad\widehat{\beta}_{j}^{\circ}=\mathbf{B}_{j}\widehat{\gamma}_{j}^{\circ}.
\end{equation}
%For any active block $1\leq j\leq q$, the oracle estimator $\widehat{\beta}_{j}^{\circ}$ is the unpenalized fit on the true model with a fixed number of regressors. 
By \cref{Th:consistency} and the triangle inequality, for any active block $1\leq j\leq q$, 
\begin{equation}
\label{Eq:KKT_inequality_active2}
\left\vert\Vert\widehat{\beta}_{j}^{\circ}\Vert_{L^{2}}-\Vert\beta_{j}\Vert_{L^{2}}\right\vert\leq\Vert\widehat{\beta}_{j}^{\circ}-\beta_{j}\Vert_{L^{2}}\overset{\mathbb{P}}{\longrightarrow}0.
\end{equation}
so $\sqrt{(\widehat{\gamma}_{j}^{\circ})^{\top}\mathbf{W}_{j}\widehat{\gamma}_{j}^{\circ}}\overset{\mathbb{P}}{\longrightarrow}\Vert\beta_{j}\Vert_{L^{2}}>0$, and thus it satisfies $\sqrt{(\widehat{\gamma}_{j}^{\circ})^{\top}\mathbf{W}_{j}\widehat{\gamma}_{j}^{\circ}}>\tau_{n}$ for any $\tau_{n}\to0$.  Hence the active-block KKT conditions hold. 

\subsubsection*{Inactive blocks}
 
For  $q+1\leq j\leq p$, $\widehat{\gamma}_{j}^{\circ}\equiv 0$ by construction. Under the condition
\begin{equation}
\frac{\lambda_{n}}{\tau_{n}}\bigg/\left(\sqrt{\frac{\Delta_{n}\log pK_{n}}{K_{n}}}+u_{n}\log pK_{n}+u_{n}^{1-r/2}\right)\to\infty,
%\frac{\tau_{n}}{\lambda_{n}}\left(\sqrt{\frac{\Delta_{n}\log pK_{n}}{K_{n}}}+\Delta_{n}^{\varpi}\log pK_{n}\right)\to0 \quad\Leftrightarrow\quad
%\sqrt{\frac{\Delta_{n}\log pK_{n}}{K_{n}}}+\Delta_{n}^{\varpi}\log pK_{n}\ll\frac{\lambda_{n}}{\tau_{n}},
\end{equation}
it suffices to show
\begin{equation}
\max_{q+1\leq j\leq p}\sqrt{(c_{j}(\widehat{\bm{\gamma}}^{\circ}))^{\top}\mathbf{W}_{j}c_{j}(\widehat{\bm{\gamma}}^{\circ})}=O_{p}\left(\sqrt{\frac{\Delta_{n}\log pK_{n}}{K_{n}}}+u_{n}\log pK_{n}+u_{n}^{1-r/2}\right),
\end{equation}
which implies  \cref{Eq:KKT2}. By \cref{Eq:oracleResidual}, we have the residualized score for any inactive blocks,
\begin{equation}
c_{j}(\widehat{\bm{\gamma}}^{\circ})=-(\mathbf{R}_{j}^{*})^{\top}\mathbf{M}_{q}\mathbf{Y}=-(\widetilde{\mathbf{R}}_{j}^{*})^{\top}\mathbf{Y},\qquad
\text{where}\quad\widetilde{\mathbf{R}}_{j}^{*}=\mathbf{M}_{q}\mathbf{R}_{j}^{*},
\end{equation}
for any $q+1\leq j\leq p$. By the Cauchy-Schwarz inequality,
\begin{equation}
\label{Eq:oracle_score}
\sqrt{(c_{j}(\widehat{\bm{\gamma}}^{\circ}))^{\top}\mathbf{W}_{j}c_{j}(\widehat{\bm{\gamma}}^{\circ})}=\Vert\mathbf{W}_{j}^{1/2}(\widetilde{\mathbf{R}}_{j}^{*})^{\top}\mathbf{Y}\Vert\leq\Vert\mathbf{W}_{j}^{1/2}\Vert\Vert(\widetilde{\mathbf{R}}_{j}^{*})^{\top}\mathbf{Y}\Vert,
%\sqrt{(c_{j}(\widehat{\bm{\gamma}}^{\circ}))^{\top}\mathbf{W}_{j}c_{j}(\widehat{\bm{\gamma}}^{\circ})}=\Vert\mathbf{W}_{j}^{1/2}(\mathbf{R}_{j}^{*})^{\top}(\mathbf{Y}-\mathbf{R}\widehat{\bm{\gamma}}^{\circ})\Vert\leq\Vert\mathbf{W}_{j}^{1/2}\Vert\Vert(\mathbf{R}_{j}^{*})^{\top}(\mathbf{Y}-\mathbf{R}\widehat{\bm{\gamma}}^{\circ})\Vert,
\end{equation}
where $\Vert\mathbf{W}_{j}^{1/2}\Vert=\sqrt{\lambda_{\text{max}}(\mathbf{W}_{j})}=O_{p}(K_{n}^{-1/2})$ as shown following \cref{Eq:quotient}. It suffices to show
\begin{equation}
\max_{q+1\leq j\leq p}\Vert(\widetilde{\mathbf{R}}_{j}^{*})^{\top}\mathbf{Y}\Vert=O_{p}\left(\sqrt{\Delta_{n}\log pK_{n}}+\sqrt{K_{n}}u_{n}\log pK_{n}+\sqrt{K_{n}}u_{n}^{1-r/2}\right).
%\max_{q+1\leq j\leq p}\Vert(\mathbf{R}_{j}^{*})^{\top}(\mathbf{Y}-\mathbf{R}\widehat{\bm{\gamma}}^{\circ})\Vert=O_{p}\left(\sqrt{\Delta_{n}\log pK_{n}}+\sqrt{K_{n}}u_{n}\log pK_{n}+\sqrt{K_{n}}u_{n}^{1-r/2}\right).
\end{equation}
We consider the decomposition of $\mathbf{Y}$ as in \cref{Eq:decomp_Yi} (with the first two components combined):
\begin{equation}
\label{Eq:Y_decomp}
\begin{split}
\mathbf{Y}_{i}&=\underbrace{\left(\int_{(i-1)\Delta_{n}}^{i\Delta_{n}}\beta_{s-}^{\top}dX_{s}^{c}\right)\mathbbm{1}_{\{|\Delta_{i}^{n}Y|\leq u_{n}\}}}_{\mathbf{C}_{i}}+\,\underbrace{\left(\sum_{(i-1)\Delta_{n}\leq s\leq i\Delta_{n}}(\beta_{s-}^{J})^{\top}\Delta X_{s}\right)\mathbbm{1}_{\{|\Delta_{i}^{n}Y|\leq u_{n}\}}}_{\mathbf{J}_{i}}\\&\qquad\quad+\,\underbrace{\Delta_{i}^{n}Z\mathbbm{1}_{\{|\Delta_{i}^{n}Y|\leq u_{n}\}}}_{\mathbf{Z}_{i}}.
\end{split}
\end{equation}
Therefore, it suffices to verify, for each $\Xi\in\{\mathbf{C},\,\mathbf{J},\,\mathbf{Z}\}$,
\begin{equation}
\label{Eq:Xi}
\max_{q+1\leq j\leq p}\Vert(\widetilde{\mathbf{R}}_{j}^{*})^{\top}\Xi\Vert=O_{p}\left(\sqrt{\Delta_{n}\log pK_{n}}+\sqrt{K_{n}}u_{n}\log pK_{n}+\sqrt{K_{n}}u_{n}^{1-r/2}\right).
\end{equation}
We firstly show $\Vert\widetilde{\mathbf{R}}_{j}^{*}\Vert_{\infty}$ is bounded for any $q+1\leq j\leq p$.

\subsubsection*{(i) Bounded $\bm{\Vert\widetilde{\mathbf{R}}_{j}^{*}\Vert_{\infty}}$}

%We start with a deterministic inequality for the $\ell_{\infty}$-norm of $\widetilde{\mathbf{R}}_{j}^{*}$:
Let $h_{ii}=(\mathbf{H}_{q})_{ii}$ and $h_{\text{max}}=\max_{1\leq i\leq n}h_{ii}$. For any $\bm{r}\in\mathbb{R}^{n}$, $|(\mathbf{H}_{q}\bm{r})_{i}|\leq\sqrt{h_{ii}}\Vert\bm{r}\Vert$, and thus $\Vert\mathbf{H}_{q}\bm{r}\Vert_{\infty}\leq\sqrt{h_{\text{max}}}\Vert\bm{r}\Vert$. By definition,
\begin{equation}
\label{Eq:stability}
\Vert\mathbf{M}_{q}\bm{r}\Vert_{\infty}\leq\Vert(\mathbf{I}_{n}-\mathbf{H}_{q})\bm{r}\Vert_{\infty}\leq\Vert\bm{r}\Vert_{\infty}+\sqrt{h_{\text{max}}}\Vert\bm{r}\Vert.
\end{equation}
Moreover, it holds that
\begin{equation}
\label{Eq:stability2}
\begin{split}
h_{\text{max}}&\leq\left(\max_{1\leq i\leq n}\Vert\mathbf{R}_{i}\Vert^{2}\right)\Vert(\mathbf{R}_{q}^{\top}\mathbf{R}_{q})^{-1}\Vert\\
&\leq\left(\max_{1\leq i\leq n}\sum_{j=1}^{q}\sum_{k=1}^{K_{n}}(B_{(i-1)\Delta_{n}}^{(k)}\Delta_{j,i}^{n}X)^{2}\mathbbm{1}_{\{|\Delta_{j,i}^{n}X|\leq u_{n}\}}\right)\Vert(\mathbf{R}_{q}^{\top}\mathbf{R}_{q})^{-1}\Vert\\
&\leq\left(\max_{1\leq i\leq n}qu_{n}^{2}\sum_{k=1}^{K_{n}}(B_{(i-1)\Delta_{n}}^{(k)})^{2}\right)\Vert(\mathbf{R}_{q}^{\top}\mathbf{R}_{q})^{-1}\Vert=O_{p}(K_{n}u_{n}^{2}),
\end{split}
\end{equation}
by the boundedness and local support of B-splines, and $\Vert(\mathbf{R}_{q}^{\top}\mathbf{R}_{q})^{-1}\Vert=O_{p}(K_{n})$. Therefore, by \cref{Eq:stability,Eq:stability2},
\begin{equation}
\label{Eq:stability_bound}
\Vert\widetilde{\mathbf{R}}_{j}^{*}\Vert_{\infty}\leq\Vert\mathbf{R}_{j}^{*}\Vert_{\infty}+\sqrt{h_{\text{max}}}\Vert\mathbf{R}_{j}^{*}\Vert\leq Ku_{n}+K'\sqrt{K_{n}}u_{n}\frac{1}{\sqrt{K_{n}}}\leq K''u_{n}.
\end{equation}

\subsubsection*{(ii) Maximal inequality for martingale difference array}

We start with a high-dimensional martingale maximal inequality from standard Freedman’s inequality, as an analogue to Lemma A.1 in \citet{van2008high} from Bernstein’s inequality:
\begin{lemma}
\label{lemma:BernsteinMax}
Let $(\mathcal{F}_{i})_{i=0}^{n}$ be a filtration and $(M_{i}^{(\ell)})$ be a triangular array indexed by $1\leq\ell\leq m$. Assume there exist $a_{n}$, $b_{n}>0$ such that for all $1\leq\ell\leq m$,
\begin{itemize}
\item[(i)] $\mathbb{E}[M_{i}^{(\ell)}|\mathcal{F}_{i-1}]=0$ for all $i$;
\item[(ii)] $|M_{i}^{(\ell)}|\leq b_{n}$ almost surely for all $i$;
\item[(iii)] $\sum_{i=1}^{n}\mathbb{E}[(M_{i}^{(\ell)})^{2}|\mathcal{F}_{i-1}]\leq a_{n}$.
\end{itemize}
Then, 
%for any $\varepsilon>0$,
%\begin{equation}
%\mathbb{P}\left(\max_{1\leq\ell\leq m}\left\vert\sum_{i=1}^{n}M_{i}^{(\ell)}\right\vert\geq\varepsilon\right)\leq2m\exp\left(-\frac{\varepsilon^{2}}{2(a_{n}+b_{n}\varepsilon/3)}\right),
%\end{equation}
%and thus
\begin{equation}
\max_{1\leq\ell\leq m}\left\vert\sum_{i=1}^{n}M_{i}^{(\ell)}\right\vert=O_{p}\left(\sqrt{a_{n}\log m}+b_{n}\log m\right).
\end{equation}
\end{lemma}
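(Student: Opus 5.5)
The plan is to apply Freedman's inequality to each index $\ell$ separately, take a union bound over $\ell=1,\dots,m$, and then read off the stochastic order by tuning the deviation level. Freedman's inequality applies to a martingale difference sequence with increments bounded by $b$ and predictable quadratic variation bounded by $a$. Write $S_{n}^{(\ell)}=\sum_{i=1}^{n}M_{i}^{(\ell)}$. Conditions (i) and (ii) make $(S_{k}^{(\ell)})_{k\le n}$ a martingale with increments bounded by $b_{n}$. Condition (iii) bounds its predictable quadratic variation $V_{n}^{(\ell)}=\sum_{i}\mathbb{E}[(M_{i}^{(\ell)})^{2}|\mathcal{F}_{i-1}]$ by $a_{n}$ almost surely. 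Freedman's inequality then gives, for every $\varepsilon>0$,
\begin{equation*}
\mathbb{P}\bigl(|S_{n}^{(\ell)}|\geq\varepsilon,\ V_{n}^{(\ell)}\leq a_{n}\bigr)\leq 2\exp\left(-\frac{\varepsilon^{2}}{2(a_{n}+b_{n}\varepsilon/3)}\right).
\end{equation*}
Because $V_{n}^{(\ell)}\le a_n$ holds almost surely, the restriction to that event in the left-hand side can be dropped. The factor $2$ comes from applying the inequality to both $S^{(\ell)}$ and $-S^{(\ell)}$.

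A union bound over $\ell$ then yields
\begin{equation*}
\mathbb{P}\bigl(\max_{\ell}|S_{n}^{(\ell)}|\geq\varepsilon\bigr)\leq 2m\exp\left(-\frac{\varepsilon^{2}}{2(a_{n}+b_{n}\varepsilon/3)}\right).
\end{equation*}
Fix $C\ge 1$ and take $\varepsilon=C(\sqrt{a_{n}\log m}+b_{n}\log m)$; we assume $m\geq2$, since otherwise the statement is trivial after replacing $\log m$ by $\log(m\vee e)$. The exponent is then bounded by splitting on which term dominates the denominator. If $a_{n}\geq b_{n}\varepsilon/3$, the exponent is at least $\varepsilon^{2}/(4a_{n})\geq C^{2}\log m/4$. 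Otherwise, it is at least $3\varepsilon/(4b_{n})\geq 3C\log m/4$. In both cases the bound is $2m\cdot m^{-\kappa(C)}$ with $\kappa(C)=\min(C^{2}/4,3C/4)$. This is at most $2\cdot 2^{1-\kappa(C)}$, which tends to $0$ as $C\to\infty$ uniformly in $n$. That uniformity is exactly the claimed $O_{p}$ statement.

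No step here is a genuine obstacle; this is a routine concentration argument. The only point needing care is that Freedman's inequality is stated on the event $\{V_{n}\le a\}$, and here (iii) makes that event certain. If (iii) were only assumed to hold with high probability, one would intersect with that event and add its complement probability. Bounding the exponent by the two-case split avoids solving the quadratic in $\varepsilon$ explicitly.
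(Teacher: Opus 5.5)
Your proof is correct and follows the paper's argument. Both apply Freedman's inequality for each fixed $\ell$ (the paper cites Tropp's Theorem 1.1), take a union bound to get the factor $2m$, choose $\varepsilon=C(\sqrt{a_{n}\log m}+b_{n}\log m)$, and bound the exponent in two cases. Your split on which term dominates $a_{n}+b_{n}\varepsilon/3$, and the resulting bound $2\cdot 2^{1-\kappa(C)}$ uniform in $n$ (given $m\ge 2$), is a slightly tidier version of the paper's last step, which instead relies on $2m^{1-KC^{2}}\to0$ as $m\to\infty$.
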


\begin{proof}[Proof of \cref{lemma:BernsteinMax}]
By Freedman’s inequality (see, e.g., Theorem 1.1, \citealp{tropp2011freedman}), for any $\varepsilon>0$,
\begin{equation}
\mathbb{P}\left(\left\vert\sum_{i=1}^{n}M_{i}^{(\ell)}\right\vert\geq\varepsilon\right)\leq2\exp\left(-\frac{\varepsilon^{2}}{2(a_{n}+b_{n}\varepsilon/3)}\right).
\end{equation}
By the union bound over all $1\leq\ell\leq m$,
\begin{equation}
\label{Eq:union}
\mathbb{P}\left(\max_{1\leq\ell\leq m}\left\vert\sum_{i=1}^{n}M_{i}^{(\ell)}\right\vert\geq\varepsilon\right)\leq 2m\exp\left(-\frac{\varepsilon^{2}}{2(a_{n}+b_{n}\varepsilon/3)}\right).
\end{equation}
Choose $\varepsilon=C(\sqrt{a_{n}\log m}+b_{n}\log m)$ with a sufficiently large constant $C>0$. 

(a) If $\sqrt{a_{n}\log m}\geq b_{n}\log m$, we have $\varepsilon\asymp C\sqrt{a_{n}\log m}$, $\sqrt{a_{n}}\geq b_{n}\sqrt{\log m}$, and $a_{n}+b_{n}\varepsilon/3\asymp a_{n}$, so the right-hand side of \cref{Eq:union} satisfies
\begin{equation}
\label{Eq:maximalResult}
2m\exp\left(-\frac{\varepsilon^{2}}{2(a_{n}+b_{n}\varepsilon/3)}\right)\leq 2m\exp(-KC^{2}\log m)=2m^{1-KC^{2}}\to 0,
\end{equation}
for $C$ large enough.

(b) If $\sqrt{a_{n}\log m}\leq b_{n}\log m$, we have $\varepsilon\asymp Cb_{n}\log m$, $a_{n}\leq b_{n}^{2}\log m$, and $a_{n}+b_{n}\varepsilon/3\asymp b_{n}^{2}\log m$, so the right-hand side of \cref{Eq:union} has the same bound as in \cref{Eq:maximalResult}. 

This completes the proof of \cref{lemma:BernsteinMax}.

\end{proof}

Next, we use \cref{lemma:BernsteinMax} to prove \cref{Eq:Xi} for each $\Xi$. We start with some notation: For the $j$-th block design submatrix $\mathbf{R}_{j}^{*}$ in \cref{Eq:R_j^*}, we write each element as
\begin{equation}
R_{j,k,i}=B_{(i-1)\Delta_{n}}^{(k)}\Delta_{j,i}^{n}X\mathbbm{1}_{\{\vert\Delta_{j,i}^{n}X\vert\leq u_{n}\}},
\end{equation}
and the $((j-1)K_{n}+k)$-column $R_{j,k}=(R_{j,k,i})_{i=1}^{n}\in\mathbb{R}^{n}$. For $\widetilde{\mathbf{R}}_{j}^{*}=\mathbf{M}_{q}\mathbf{R}_{j}^{*}$, we write the corresponding ``residualized'' column as
\begin{equation}
\widetilde{R}_{j,k}=\mathbf{M}_{q}R_{j,k}=(\widetilde{R}_{j,k,i})_{i=1}^{n}\in\mathbb{R}^{n},
\end{equation}
where each element $|\widetilde{R}_{j,k,i}|\leq Ku_{n}$ by \cref{Eq:stability_bound}.

\subsubsection*{(ii.1) $\bm{\Xi=\mathbf{Z}}$}
\label{AP:Xi_1}

Define 
\begin{equation}
\widetilde{M}_{i}^{(j,k)}=\widetilde{R}_{j,k,i}\Delta_{i}^{n}Z\mathbbm{1}_{\{|\Delta_{i}^{n}Y|\leq u_{n}\}},
\end{equation}
and
\begin{equation}
\label{Eq:Xi_Z}
M_{i}^{(j,k)}=\widetilde{M}_{i}^{(j,k)}-\mathbb{E}\left[\left.\widetilde{M}_{i}^{(j,k)}\right\vert\mathcal{F}_{(i-1)\Delta_{n}}\right],\quad\text{so that}\quad
\mathbb{E}\left[\left.M_{i}^{(j,k)}\right\vert\mathcal{F}_{(i-1)\Delta_{n}}\right]=0.
\end{equation}
With finite-variation jumps in $Z$ and bounded $\widetilde{\delta}$ under \cref{As:Ito_Iocal}, it holds that $|\widetilde{M}_{i}^{(j,k)}|\leq Ku_{n}$, and thus $|M_{i}^{(j,k)}|\leq 2Ku_{n}$. For the conditional variance, by It{\^o}'s isometry and Markov's inequality, $\mathbb{E}[(\Delta_{i}^{n}Z)^{2}|\breve{\mathcal{F}}_{(i-1)\Delta_{n}}]=O_{p}(\Delta_{n})$ with an enlarged $\sigma$-field $\breve{\mathcal{F}}_{(i-1)\Delta_{n}}=\mathcal{F}_{(i-1)\Delta_{n}}\cup\sigma(\mathbf{M}_{q},\Delta_{i}^{n}X)$. Then, since $\widetilde{R}_{j,k,i}$ is $\breve{\mathcal{F}}_{(i-1)\Delta_{n}}$-measurable, by the law of iterated conditional expectations,
\begin{equation}
\label{Eq:conditionalVar}
\begin{split}
\sum_{i=1}^{n}\mathbb{E}\left[\left.(M_{i}^{(j,k)})^{2}\right\vert\mathcal{F}_{(i-1)\Delta_{n}}\right]&\leq\sum_{i=1}^{n}\mathbb{E}\left[\left.(\widetilde{M}_{i}^{(j,k)})^{2}\right\vert\mathcal{F}_{(i-1)\Delta_{n}}\right]\\
&\leq\sum_{i=1}^{n}\mathbb{E}\left[\left.\mathbb{E}\left[\left.(\Delta_{i}^{n}Z)^{2}\right\vert\breve{\mathcal{F}}_{(i-1)\Delta_{n}}\right]\widetilde{R}_{j,k,i}^{2}\right\vert\mathcal{F}_{(i-1)\Delta_{n}}\right]\\
&\leq K\Delta_{n}\sum_{i=1}^{n}\mathbb{E}\left[\left.\widetilde{R}_{j,k,i}^{2}\right\vert\mathcal{F}_{(i-1)\Delta_{n}}\right],
\end{split}
\end{equation}
where, since $\mathbf{M}_{q}$ is an orthogonal projection,
\begin{equation}
\label{Eq:conditionalVar2}
\mathbb{E}\left[\sum_{i=1}^{n}\mathbb{E}\left[\left.\widetilde{R}_{j,k,i}^{2}\right\vert\mathcal{F}_{(i-1)\Delta_{n}}\right]\right]=\sum_{i=1}^{n}\mathbb{E}[\widetilde{R}_{j,k,i}^{2}]=\mathbb{E}[\Vert\widetilde{R}_{j,k}\Vert^{2}]\leq\mathbb{E}[\Vert R_{j,k}\Vert^{2}]=O(K_{n}^{-1}),
\end{equation}
and thus $\sum_{i=1}^{n}\mathbb{E}[(M_{i}^{(j,k)})^{2}|\mathcal{F}_{(i-1)\Delta_{n}}]=O_{p}(K_{n}^{-1}\Delta_{n})$ by Markov's inequality. 
%As shown in Appendix \ref{AP:consistency} (\hyperref[AP:A4]{iv}), $\sum_{i=1}^{n}\mathbb{E}[(M_{i}^{(j,k)})^{2}|\mathcal{F}_{(i-1)\Delta_{n}}]\asymp\mathbb{E}[(C_{j}^{(k)})^{2}]=O(K_{n}^{-1}\Delta_{n})$ by the standard results of realized covariances.
Therefore, by \cref{lemma:BernsteinMax},
\begin{equation}
\label{Eq:Xi_Z_1}
\max_{q+1\leq j\leq p}\max_{1\leq k\leq K_{n}}\left\vert\sum_{i=1}^{n}M_{i}^{(j,k)}\right\vert=O_{p}\left(\sqrt{\frac{\Delta_{n}\log pK_{n}}{K_{n}}}+u_{n}\log pK_{n}\right).
\end{equation}
Next, we bound the predictable compensator uniformly. Write
\begin{equation}
\sum_{i=1}^{n}\mathbb{E}\left[\left.\widetilde{M}_{i}^{(j,k)}\right\vert\mathcal{F}_{(i-1)\Delta_{n}}\right]=\sum_{i=1}^{n}\mathbb{E}\left[\left.\widetilde{R}_{j,k,i}\Delta_{i}^{n}Z\mathbbm{1}_{\{|\Delta_{i}^{n}Y|\leq u_{n}\}}\right\vert\mathcal{F}_{(i-1)\Delta_{n}}\right]
\end{equation}
By the exogeneity condition under \cref{As:orthogonality}, the untruncated product has zero conditional expectation: $\mathbb{E}[\widetilde{R}_{j,k,i}\Delta_{i}^{n}Z|\mathcal{F}_{(i-1)\Delta_{n}}]=0$. Hence, we flip the indicator function
\begin{equation}
\begin{split}
\mathbb{E}\left[\left.\widetilde{M}_{i}^{(j,k)}\right\vert\mathcal{F}_{(i-1)\Delta_{n}}\right]=-\mathbb{E}\left[\left.\widetilde{R}_{j,k,i}\Delta_{i}^{n}Z\mathbbm{1}_{\{|\Delta_{i}^{n}Y|>u_{n}\}}\right\vert\mathcal{F}_{(i-1)\Delta_{n}}\right],
\end{split}
\end{equation}
and then, by the Cauchy-Schwarz inequality, \cref{lemma:truncationEvents} (i), and Markov's inequality,
\begin{align}
\sum_{i=1}^{n}\left\vert\mathbb{E}\left[\left.\widetilde{M}_{i}^{(j,k)}\right\vert\mathcal{F}_{(i-1)\Delta_{n}}\right]\right\vert&\leq Ku_{n}\sum_{i=1}^{n}\mathbb{E}\left[\left.|\Delta_{i}^{n}Z|\mathbbm{1}_{\{|\Delta_{i}^{n}Y|>u_{n}\}}\right\vert\mathcal{F}_{(i-1)\Delta_{n}}\right] \notag\\
&\leq Ku_{n}\left(\sum_{i=1}^{n}\mathbb{E}[(\Delta_{i}^{n}Z)^{2}|\mathcal{F}_{(i-1)\Delta_{n}}]\right)^{1/2}\left(\sum_{i=1}^{n}\mathbb{P}(|\Delta_{i}^{n}Y|>u_{n}|\mathcal{F}_{(i-1)\Delta_{n}})\right)^{1/2}\notag\\
&\leq K'u_{n}\left(\sum_{i=1}^{n}O_{p}(\Delta_{n})\right)^{1/2}\left(\sum_{i=1}^{n}O_{p}(\Delta_{n}u_{n}^{-r})\right)^{1/2}=O_{p}(u_{n}^{1-r/2}), \label{Eq:Xi_Z_2}
\end{align}
for any $q+1\leq j\leq p$ and $1\leq k\leq K_{n}$. Combining \cref{Eq:Xi_Z_1,Eq:Xi_Z_2} in \cref{Eq:Xi_Z}, we obtain
\begin{equation}
\label{Eq:Xi_Z_final}
\begin{split}
\max_{q+1\leq j\leq p}\Vert(\widetilde{\mathbf{R}}_{j}^{*})^{\top}\mathbf{Z}\Vert&\leq\sqrt{K_{n}}\max_{q+1\leq j\leq p}\max_{1\leq k\leq K_{n}}\left\vert\sum_{i=1}^{n}\widetilde{M}_{i}^{(j,k)}\right\vert\\
&\leq \sqrt{K_{n}}\max_{q+1\leq j\leq p}\max_{1\leq k\leq K_{n}}\left(\left\vert\sum_{i=1}^{n}M_{i}^{(j,k)}\right\vert+\left\vert\sum_{i=1}^{n}\mathbb{E}\left[\left.\widetilde{M}_{i}^{(j,k)}\right\vert\mathcal{F}_{(i-1)\Delta_{n}}\right]\right\vert\right)\\
&=O_{p}\left(\sqrt{\Delta_{n}\log pK_{n}}+\sqrt{K_{n}}u_{n}\log pK_{n}+\sqrt{K_{n}}u_{n}^{1-r/2}\right).
\end{split}
\end{equation}

\subsubsection*{(ii.2) $\bm{\Xi=\mathbf{C}}$}
\label{AP:Xi_2}

Define 
\begin{equation}
\widetilde{M}_{i}^{(j,k)}=\widetilde{R}_{j,k,i}\left(\int_{(i-1)\Delta_{n}}^{i\Delta_{n}}\beta_{s-}^{\top}dX_{s}^{c}\right)\mathbbm{1}_{\{|\Delta_{i}^{n}Y|\leq u_{n}\}},
\end{equation}
and $M_{i}^{(j,k)}=\widetilde{M}_{i}^{(j,k)}-\mathbb{E}[\widetilde{M}_{i}^{(j,k)}\vert\mathcal{F}_{(i-1)\Delta_{n}}]$.
%\begin{equation}
%\label{Eq:Xi_C}
%M_{i}^{(j,k)}=\widetilde{M}_{i}^{(j,k)}-\mathbb{E}\left[\left.\widetilde{M}_{i}^{(j,k)}\right\vert\mathcal{F}_{(i-1)\Delta_{n}}\right],\quad\text{so that}\quad
%\mathbb{E}\left[\left.M_{i}^{(j,k)}\right\vert\mathcal{F}_{(i-1)\Delta_{n}}\right]=0.
%\end{equation}
By bounded $\beta$ under \cref{As:Ito_Iocal}, $|M_{i}^{(j,k)}|\leq Ku_{n}$. The remaining steps follow exactly as in (\hyperref[AP:Xi_1]{ii.1}), and we conclude
\begin{equation}
\max_{q+1\leq j\leq p}\Vert(\widetilde{\mathbf{R}}_{j}^{*})^{\top}\mathbf{C}\Vert=O_{p}\left(\sqrt{\Delta_{n}\log pK_{n}}+\sqrt{K_{n}}u_{n}\log pK_{n}+\sqrt{K_{n}}u_{n}^{1-r/2}\right).
\end{equation}

\subsubsection*{(ii.3) $\bm{\Xi=\mathbf{J}}$}
\label{AP:Xi_3}

Using the safe replacement $\widetilde{\mathbf{J}}$ for $\mathbf{J}$ in \cref{Eq:J_i_tilde},
\begin{equation}
\begin{split}
\widetilde{M}_{i}^{(j,k)}&=\widetilde{R}_{j,k,i}\left(\sum_{(i-1)\Delta_{n}\leq s\leq i\Delta_{n}}(\beta_{s-}^{J})^{\top}\Delta X_{s}\right)\mathbbm{1}_{\{|\Delta_{i}^{n}Y|\leq u_{n}\}}\\
&=\widetilde{R}_{j,k,i}\left(\sum_{(i-1)\Delta_{n}\leq s\leq i\Delta_{n}}(\beta_{s-}^{J})^{\top}\Delta X_{s}\right)\mathbbm{1}_{\{|\Delta_{j,i}^{n}X|\leq u_{n},\,|\Delta_{i}^{n}Y|\leq u_{n}\}},
\end{split}
\end{equation}
since $\widetilde{R}_{j,k,i}$ already includes the truncation $\mathbbm{1}_{\{|\Delta_{j,i}^{n}X|\leq u_{n}\}}$, and $M_{i}^{(j,k)}=\widetilde{M}_{i}^{(j,k)}-\mathbb{E}[\widetilde{M}_{i}^{(j,k)}\vert\mathcal{F}_{(i-1)\Delta_{n}}]$. With bounded $\beta^{J}$ and $\delta$ under \cref{As:Ito_Iocal} and finite-variation jumps in $X$, $|M_{i}^{(j,k)}|\leq Ku_{n}$. For the conditional variance, we follow the same steps as in (\hyperref[AP:Xi_1]{ii.1}). By the result in \cref{Eq:J_i_moment} and Markov's inequality,
\begin{equation}
\mathbb{E}\left[\left.\left(\sum_{(i-1)\Delta_{n}\leq s\leq i\Delta_{n}}(\beta_{s-}^{J})^{\top}\Delta X_{s}\right)^{2}\mathbbm{1}_{\{|\Delta_{j,i}^{n}X|\leq u_{n},\,|\Delta_{i}^{n}Y|\leq u_{n}\}}\right\vert\breve{\mathcal{F}}_{(i-1)\Delta_{n}}\right]=O_{p}(\Delta_{n}u_{n}^{2-r}+\Delta_{n}^{2}u_{n}^{-2r}),
\end{equation}
and thus $\sum_{i=1}^{n}\mathbb{E}[(M_{i}^{(j,k)})^{2}|\mathcal{F}_{(i-1)\Delta_{n}}]=O_{p}(K_{n}^{-1}(\Delta_{n}u_{n}^{2-r}+\Delta_{n}^{2}u_{n}^{-2r}))$. Therefore, by \cref{lemma:BernsteinMax},
\begin{equation}
\begin{split}
\max_{q+1\leq j\leq p}\max_{1\leq k\leq K_{n}}\left\vert\sum_{i=1}^{n}M_{i}^{(j,k)}\right\vert&=O_{p}\left(\sqrt{\frac{(\Delta_{n}u_{n}^{2-r}+\Delta_{n}^{2}u_{n}^{-2r})\log pK_{n}}{K_{n}}}+u_{n}\log pK_{n}\right)\\
&=O_{p}\left(\sqrt{\frac{\Delta_{n}\log pK_{n}}{K_{n}}}+u_{n}\log pK_{n}\right). \label{Eq:Xi_J_1}
\end{split}
\end{equation}
We bound the predictable compensator uniformly with the flipped indicator function as in \cref{Eq:Xi_Z_2},
\begin{align}
&\sum_{i=1}^{n}\left\vert\mathbb{E}\left[\left.\widetilde{M}_{i}^{(j,k)}\right\vert\mathcal{F}_{(i-1)\Delta_{n}}\right]\right\vert \notag\\
&\qquad\leq Ku_{n}\sum_{i=1}^{n}\mathbb{E}\left[\left.\left\vert\sum_{(i-1)\Delta_{n}\leq s\leq i\Delta_{n}}(\beta_{s-}^{J})^{\top}\Delta X_{s}\right\vert\mathbbm{1}_{\{|\Delta_{j,i}^{n}X|\leq u_{n},\,|\Delta_{i}^{n}Y|> u_{n}\}}\right\vert\mathcal{F}_{(i-1)\Delta_{n}}\right] \notag\\
&\qquad\leq K'u_{n}\left(\sum_{i=1}^{n}\mathbb{E}\left[\left.\left(\sum_{(i-1)\Delta_{n}\leq s\leq i\Delta_{n}}\Delta X_{s}\right)^{2}\right\vert\mathcal{F}_{(i-1)\Delta_{n}}\right]\right)^{1/2}\left(\sum_{i=1}^{n}\mathbb{P}(|\Delta_{i}^{n}Y|>u_{n}|\mathcal{F}_{(i-1)\Delta_{n}})\right)^{1/2}\notag\\
&\qquad=O_{p}(u_{n}^{1-r/2}). \label{Eq:Xi_J_2}
\end{align}
Combining \cref{Eq:Xi_J_1,Eq:Xi_J_2}, as in \cref{Eq:Xi_Z_final}, leads to
\begin{equation}
\max_{q+1\leq j\leq p}\Vert(\widetilde{\mathbf{R}}_{j}^{*})^{\top}\mathbf{J}\Vert=O_{p}\left(\sqrt{\Delta_{n}\log pK_{n}}+\sqrt{K_{n}}u_{n}\log pK_{n}+\sqrt{K_{n}}u_{n}^{1-r/2}\right).
\end{equation}
Finally, \cref{Eq:Xi} holds by all results in (\hyperref[AP:Xi_1]{ii.1}), (\hyperref[AP:Xi_2]{ii.2}) and (\hyperref[AP:Xi_3]{ii.3}), and this completes the proof.

\subsection[Proof of Theorem 5]{Proof of \cref{Th:oracle_global}}

We define the oracle index set  $A^{\circ}=\{1,\dots,q\}$. Let $\Gamma_{1}=\{A:A\subseteq\{1,\dots,p\}, A^{\circ}\subset A, A\neq A^{\circ}\}$ collect all index sets of overfitting models, $\Gamma_{2}=\{A:A\subseteq\{1,\dots,p\}, A\subset A^{\circ}, A\neq A^{\circ}\}$ collect those of underfitting models, and $\Gamma_{3}=\{A:A\subseteq\{1,\dots,p\}, A^{\circ}\not\subset A, A\not\subset A^{\circ}, A\neq A^{\circ}\}$ collect those of other misspecified models. For any $\bm{\gamma}$, $A(\bm{\gamma})=\{j:\sqrt{\gamma_{j}^{\top}\mathbf{W}_{j}\gamma_{j}}\geq\tau_{n}\}$ must fall into one of $\Gamma_{\ell}$ for $\ell\in\{1,2,3\}$ if $A(\bm{\gamma})\neq A^{\circ}$. %We denote by $A^{\complement}(\bm{\gamma})=\{j:\sqrt{\gamma_{j}^{\top}\mathbf{W}_{j}\gamma_{j}}<\tau_{n}\}=\{1,\dots,p\}\backslash A(\bm{\gamma})$, and by $|A|$ the cardinality of $A$. 
$|A|$ denotes the cardinality of $A$. We aim to show that, with probability approaching one, 
\begin{equation}
\min_{\bm{\gamma}: A(\bm{\gamma})\neq A^{\circ}}Q_{n}^{*}(\bm{\gamma})>Q_{n}^{*}(\widehat{\bm{\gamma}}^{\circ}),
\end{equation}
which ensures that the global minimizer $\widehat{\bm{\gamma}}^{*}$ of $Q_{n}^{*}(\bm{\gamma})$ coincides with the oracle estimator $\widehat{\bm{\gamma}}^{\circ}$. It suffices to show that, for $\ell\in\{1,2,3\}$,
\begin{equation}
\label{Eq:Gamma123}
\sum_{A\in\Gamma_{\ell}}\mathbb{P}\left(\min_{\bm{\gamma}: A(\bm{\gamma})=A}Q_{n}^{*}(\bm{\gamma})-Q_{n}^{*}(\widehat{\bm{\gamma}}^{\circ})\leq 0\right)\to0,
\end{equation}
such that, combining with \cref{Th:oracle_local}, it holds that the global minimizer $\widehat{\bm{\gamma}}^{*}$ exists and satisfies $\widehat{\bm{\gamma}}^{*}=\widehat{\bm{\gamma}}^{\circ}$ with probability approaching one.  Firstly, we have
\begin{equation}
%\min_{\bm{\gamma}: A(\bm{\gamma})=A}Q_{n}^{*}(\bm{\gamma})=\frac{1}{2}\mathbf{Y}^{\top}(\mathbf{I}_{n}-\mathbf{P}_{A})\mathbf{Y}+\lambda_{n}\sum_{j\in A}\rho_{n}\left(\sqrt{\gamma_{j}^{\top}\mathbf{W}_{j}\gamma_{j}}\right),
\begin{split}
\min_{\bm{\gamma}: A(\bm{\gamma})=A}Q_{n}^{*}(\bm{\gamma})&=\min_{\bm{\gamma}: A(\bm{\gamma})=A}\left(\frac{1}{2}Q_{n}(\bm{\gamma})+\lambda_{n}\sum_{j\in A(\bm{\gamma})}\rho_{n}\left(\sqrt{\gamma_{j}^{\top}\mathbf{W}_{j}\gamma_{j}}\right)+\lambda_{n}\sum_{j\in \{1,\dots,p\}\backslash A(\bm{\gamma})}\rho_{n}\left(\sqrt{\gamma_{j}^{\top}\mathbf{W}_{j}\gamma_{j}}\right)\right)\\
&=\frac{1}{2}\min_{\bm{\gamma}: A(\bm{\gamma})=A}Q_{n}(\bm{\gamma})+\lambda_{n}|A|,
\end{split}
\end{equation}
and thus comparing any model $A$ to the oracle model $A^{\circ}$ reduces to comparing unpenalized losses and constant penalties:
\begin{equation}
\label{Eq:DeltaQstar}
\min_{\bm{\gamma}: A(\bm{\gamma})=A}Q_{n}^{*}(\bm{\gamma})-Q_{n}^{*}(\widehat{\bm{\gamma}}^{\circ})=\frac{1}{2}\left(\min_{\bm{\gamma}: A(\bm{\gamma})=A}Q_{n}(\bm{\gamma})-Q_{n}(\widehat{\bm{\gamma}}^{\circ})\right)+\lambda_{n}(|A|-q). 
\end{equation}
For any index set $A$, let $\mathbf{R}_{A}=(\mathbf{R}_{j}^{*})_{j\in A}\in\mathbb{R}^{n\times K_{n}|A|}$ be the submatrix of $\mathbf{R}$ formed by blocks $j\in A$. We define the orthogonal projection matrix $\mathbf{H}_{A}=\mathbf{R}_{A}(\mathbf{R}_{A}^{\top}\mathbf{R}_{A})^{-1}\mathbf{R}_{A}^{\top}\in\mathbb{R}^{n\times n}$ onto span($\mathbf{R}_{A}$), and the residualizer $\mathbf{M}_{A}=\mathbf{I}_{n}-\mathbf{H}_{A}$, where $\mathbf{I}_{n}$ is an identity matrix of dimension $n$.  Both $\mathbf{H}_{A}$ and $\mathbf{M}_{A}$ are symmetric and idempotent.

\subsubsection*{Overfitting models}

For any overfitting models with $A\in\Gamma_{1}$, the added blocks in the design matrix $\mathbf{R}_{A}$ corresponding to the index set $A\backslash A^{\circ}$ leads to a decrease in the OLS residual sums of squares. We denote the residual vector by $\widehat{\bm{\varepsilon}}=\mathbf{M}_{A^{\circ}}\mathbf{Y}=\mathbf{Y}-\mathbf{R}\widehat{\bm{\gamma}}^{\circ}=\mathbf{Y}-\mathbf{R}_{A^{\circ}}\bm{\gamma}_{A^{\circ}}\in\mathbb{R}^{n}$. For any irrelevant group $j>q$, we define the spurious improvement in the unpenalized loss from adding block $j$ as
\begin{equation}
\label{Eq:Delta_Q_j}
\Delta Q_{n,j}=Q_{n}(\widehat{\bm{\gamma}}^{\circ})-\min_{\bm{\gamma}: A(\bm{\gamma})\subseteq A^{\circ}\cup\{j\}}Q_{n}(\bm{\gamma})=\mathbf{Y}^{\top}(\mathbf{M}_{A^{\circ}}-\mathbf{M}_{A^{\circ}\cup\{j\}})\mathbf{Y}.
\end{equation}
The following \cref{lemma:overfitOneBlock} provides a uniform bound for $\Delta Q_{n,j}$:

\begin{lemma}
\label{lemma:overfitOneBlock}
Under the conditions of \cref{Th:oracle_global}, there exists a constant $K>0$ independent of $n$ and $p$ such that, as $\Delta_{n}\to0$,
\begin{equation}
\label{Eq:E_n}
\mathbb{P}(E_{n})=\mathbb{P}\left(\max_{q+1\leq j\leq p}\Delta Q_{n,j}\leq K\Delta_{n}(K_{n}+\log p)\right)\to 1.
\end{equation}
\end{lemma}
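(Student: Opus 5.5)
We plan to write the spurious improvement from adding block $j$ as a quadratic form in the oracle residual and bound it uniformly in $j$ using the machinery from the proof of \cref{Th:oracle_local}. Let $\widetilde{\mathbf{R}}_{j}^{*}=\mathbf{M}_{A^{\circ}}\mathbf{R}_{j}^{*}$. By the Frisch--Waugh--Lovell identity, $\mathbf{M}_{A^{\circ}}-\mathbf{M}_{A^{\circ}\cup\{j\}}$ is the projection onto $\mathrm{span}(\widetilde{\mathbf{R}}_{j}^{*})$. Hence
\begin{equation*}
\Delta Q_{n,j}=\mathbf{Y}^{\top}\widetilde{\mathbf{R}}_{j}^{*}\bigl((\widetilde{\mathbf{R}}_{j}^{*})^{\top}\widetilde{\mathbf{R}}_{j}^{*}\bigr)^{-1}(\widetilde{\mathbf{R}}_{j}^{*})^{\top}\mathbf{Y}\leq\lambda_{\min}\bigl((\widetilde{\mathbf{R}}_{j}^{*})^{\top}\widetilde{\mathbf{R}}_{j}^{*}\bigr)^{-1}\Vert(\widetilde{\mathbf{R}}_{j}^{*})^{\top}\mathbf{Y}\Vert^{2}.
\end{equation*}
The first step is to show that $\min_{j>q}\lambda_{\min}((\widetilde{\mathbf{R}}_{j}^{*})^{\top}\widetilde{\mathbf{R}}_{j}^{*})\geq K/K_{n}$ with probability approaching one. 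The Schur complement $(\widetilde{\mathbf{R}}_{j}^{*})^{\top}\widetilde{\mathbf{R}}_{j}^{*}$ has smallest eigenvalue at least $\lambda_{\min}(\mathbf{R}_{A^{\circ}\cup\{j\}}^{\top}\mathbf{R}_{A^{\circ}\cup\{j\}})$. Since $|A^{\circ}\cup\{j\}|=q+1$ is fixed, \cref{Eq:quotient} applies to each such submodel. Uniformity over $j$ requires a uniform version of the Gram-matrix closeness bound \cref{Eq:closenessGrams}, i.e.\ Lemma A.2 of \citet{huang2004polynomial} with an extra $\log p$ in the union bound. We expect this to hold under $\Delta_{n}(K_{n}+\log p)\to0$, which is implied by \cref{Eq:penaltyCondition2}; alternatively we would impose it as a maintained uniform eigenvalue condition. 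The upshot is $\Delta Q_{n,j}\leq KK_{n}\Vert(\widetilde{\mathbf{R}}_{j}^{*})^{\top}\mathbf{Y}\Vert^{2}$ uniformly.

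The main step bounds $\max_{j>q}\Vert(\widetilde{\mathbf{R}}_{j}^{*})^{\top}\mathbf{Y}\Vert^{2}$ by $O_p(\Delta_{n}(1+\log p/K_{n}))$. The coordinatewise bound from \cref{Th:oracle_local}, $\max_{j,k}|\sum_i \widetilde{R}_{j,k,i}\mathbf{Y}_i|$, would lose a factor $K_{n}$ when summed over $k$, so we work with the full block norm instead. We decompose $\mathbf{Y}=\mathbf{C}+\mathbf{J}+\mathbf{Z}$ as in \cref{Eq:Y_decomp}. For the martingale part, $\mathbb{E}\Vert(\widetilde{\mathbf{R}}_{j}^{*})^{\top}\mathbf{Z}\Vert^{2}\leq K\Delta_{n}\,\mathrm{tr}((\widetilde{\mathbf{R}}_{j}^{*})^{\top}\widetilde{\mathbf{R}}_{j}^{*})\leq K\Delta_{n}$, by the argument of \cref{Eq:conditionalVar,Eq:conditionalVar2} summed over the $K_{n}$ columns (each of order $K_{n}^{-1}$). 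To pass to the maximum over $j$ with only an additive $\log p$, we would apply a Hilbert-space (vector-valued) Freedman/Pinelis inequality to $\sum_i \widetilde{R}_{j,\cdot,i}M_i\in\mathbb{R}^{K_{n}}$. Its increments have norm $\leq Ku_{n}$ because of local support: only $d+1$ nonzero basis entries per row, together with \cref{Eq:stability_bound}. This gives tail $\exp(-x^{2}/(a_n+b_nx))$ around the mean $\sqrt{\Delta_{n}}$, with $a_n\asymp\Delta_{n}/K_{n}$ per direction. The union bound then yields $\max_j\Vert\cdot\Vert\lesssim\sqrt{\Delta_{n}}+\sqrt{\Delta_{n}\log p/K_{n}}+u_{n}\log p$. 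The rate condition $\Delta_{n}^{(1-\varpi(2-r))\vee 2\varpi r}(K_{n}+\log p)\to\infty$ is used here to absorb the $u_{n}$-driven terms: $K_{n}u_{n}^{2}\log^{2}p$ and the compensator term $K_{n}\cdot K_{n}u_{n}^{2-r}$ from \cref{Eq:Xi_Z_2}. The compensator term is handled with the flipped-indicator argument, but it is now bounded in block $\ell_2$ norm directly by Cauchy--Schwarz over $i$, which gives $O_p(u_{n}^{1-r/2})$ without the $\sqrt{K_{n}}$ loss. The pieces for $\mathbf{C}$ and $\mathbf{J}$ follow the same template as (ii.2)--(ii.3), using \cref{Eq:J_i_moment}.

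Combining the two steps gives $\max_j\Delta Q_{n,j}\leq KK_{n}\cdot\Delta_{n}(1+\log p/K_{n})=K\Delta_{n}(K_{n}+\log p)$ on an event of probability tending to one. We expect the main obstacle to be the vector-valued maximal inequality and the verification that the truncation and jump remainders are dominated by $\Delta_{n}(K_{n}+\log p)$ under the stated exponent condition. If the Hilbert-space Freedman bound proves awkward, the fallback is a chaining or $\epsilon$-net argument over the unit sphere of $\mathbb{R}^{K_{n}}$. That argument costs $K_{n}$ in the entropy, which is exactly the $K_{n}$ allowed in the target rate.
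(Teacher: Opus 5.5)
Your proposal has a genuine gap. It sits in the very first reduction, $\Delta Q_{n,j}\le\lambda_{\min}\bigl((\widetilde{\mathbf{R}}_{j}^{*})^{\top}\widetilde{\mathbf{R}}_{j}^{*}\bigr)^{-1}\Vert(\widetilde{\mathbf{R}}_{j}^{*})^{\top}\mathbf{Y}\Vert^{2}$. Neither of the two bounds this reduction then needs is available under the conditions of \cref{Th:oracle_global}.

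\textbf{The uniform eigenvalue bound fails.} The bound $\min_{j>q}K_{n}\lambda_{\min}\bigl((\widetilde{\mathbf{R}}_{j}^{*})^{\top}\widetilde{\mathbf{R}}_{j}^{*}\bigr)\geq K$ does not follow from $\Delta_{n}(K_{n}+\log p)\to0$. Each B-spline support contains only about $(\Delta_{n}K_{n})^{-1}$ increments, so Gram entries fluctuate by a relative amount of order $\sqrt{\Delta_{n}K_{n}}$. A union bound over $p$ blocks therefore needs $\Delta_{n}K_{n}\log p\to0$, which is a product, not a sum. This is not merely a proof artifact. Take $r=0$, $\varpi=0.45$, $K_{n}\asymp\log p\asymp\Delta_{n}^{-0.6}$, $\lambda_{n}=\Delta_{n}^{0.3}$ and $\tau_{n}=\Delta_{n}^{0.5}$. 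All hypotheses of \cref{Th:CLT,Th:oracle_local,Th:oracle_global} hold, yet $\log p\gg(\Delta_{n}K_{n})^{-1}$. Among $p$ independent irrelevant factors, some $j$ then has all its increments on some basis support of size $o(\sqrt{\Delta_{n}})$. That makes a diagonal entry of $(\mathbf{R}_{j}^{*})^{\top}\mathbf{R}_{j}^{*}$ of order $o(1/K_{n})$. Since $(\widetilde{\mathbf{R}}_{j}^{*})^{\top}\widetilde{\mathbf{R}}_{j}^{*}\preceq(\mathbf{R}_{j}^{*})^{\top}\mathbf{R}_{j}^{*}$, the smallest eigenvalue of the Schur complement is also $o_{p}(1/K_{n})$. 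Imposing the bound as a maintained assumption would prove a different statement.

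\textbf{The factor $K_{n}$ cannot be absorbed.} Even granting the eigenvalue bound, the factor $K_{n}$ it introduces multiplies every remainder.
\begin{itemize}
\item With increments bounded by $Ku_{n}$, the Freedman/Pinelis term $b_{n}\log p$ contributes $K_{n}u_{n}^{2}\log^{2}p$. Its ratio to $\Delta_{n}(K_{n}+\log p)$ is of order at least $\Delta_{n}^{2\varpi-1}\min(\log^{2}p,K_{n}\log p)\to\infty$, always.
\item Even with your improved block bound $O_{p}(u_{n}^{1-r/2})$, the compensator contributes $K_{n}u_{n}^{2-r}$. This exceeds $\Delta_{n}(K_{n}+\log p)$ by a factor of order $\Delta_{n}^{\varpi(2-r)-1}\to\infty$ whenever $\log p=O(K_{n})$.
\item The condition $\Delta_{n}^{(1-\varpi(2-r))\vee2\varpi r}(K_{n}+\log p)\to\infty$ is a \emph{lower} bound on $K_{n}+\log p$. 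It makes $j$-free remainders such as $u_{n}^{2-r}$ negligible, but it pushes the wrong way for terms that themselves grow with $K_{n}$ or $\log p$.
\item Pinelis' Hilbert-space inequality is driven by the total conditional variance $\sum_{i}\mathbb{E}_{i-1}\Vert d_{i}\Vert^{2}\asymp\Delta_{n}$. It does not give concentration around the mean with per-direction variance $\Delta_{n}/K_{n}$. So it only yields $\max_{j}\Vert\cdot\Vert^{2}\lesssim\Delta_{n}\log p$, which becomes $\Delta_{n}K_{n}\log p$ after the factor $K_{n}$.
\item The $\epsilon$-net fallback inherits the same problem through its $b_{n}$ term, which becomes $K_{n}u_{n}^{2}(K_{n}+\log p)^{2}$.
\end{itemize}

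\textbf{What the paper does instead.} The missing idea is to keep the projection intact. The paper writes $\Delta Q_{n,j}=\widehat{\bm{\varepsilon}}^{\top}\mathbf{H}_{j}^{*}\widehat{\bm{\varepsilon}}$, where $\Vert\mathbf{H}_{j}^{*}\Vert=1$ and $\mathrm{rank}(\mathbf{H}_{j}^{*})=K_{n}$ no matter how ill-conditioned $\widetilde{\mathbf{R}}_{j}^{*}$ is.
\begin{itemize}
\item Every remainder $\Xi$ is bounded by $\Xi^{\top}\mathbf{M}_{A^{\circ}}\mathbf{H}_{j}^{*}\mathbf{M}_{A^{\circ}}\Xi\leq\Vert\Xi\Vert^{2}$. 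This holds uniformly in $j$ with no factor $K_{n}$, and it is exactly where the rate condition applies.
\item The continuous-martingale part $\Theta=\mathbf{U}^{(1)}+\mathbf{Z}^{(1)}$ is conditionally Gaussian with covariance $\preceq\overline{K}\Delta_{n}\mathbf{I}_{n}$. It is handled by Hanson--Wright with $\Vert\bm{\Lambda}_{j}\Vert\leq\overline{K}\Delta_{n}$, $\mathrm{tr}(\bm{\Lambda}_{j})\leq\overline{K}K_{n}\Delta_{n}$ and $\Vert\bm{\Lambda}_{j}\Vert_{F}^{2}\leq\overline{K}^{2}K_{n}\Delta_{n}^{2}$. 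After a union bound this gives $\Delta_{n}(K_{n}+\sqrt{K_{n}\log p}+\log p)$.
\item The additive $\log p$ you were after comes from the operator-norm term. Its sub-exponential scale is $\Delta_{n}$, not the truncation level $u_{n}$.
\end{itemize}

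Finally, the signal part $\mathbf{C}$ cannot follow the (ii.2) template. The weight $\widetilde{R}_{j,k,i}$ involves $\Delta_{j,i}^{n}X$ and, through $\mathbf{M}_{A^{\circ}}$, the whole sample, so it is not predictable. The signal should instead be reduced via $\mathbf{M}_{A^{\circ}}\mathbf{R}_{A^{\circ}}=0$ to $\widetilde{\mathbf{Y}}-\mathbf{R}_{A^{\circ}}\bm{\gamma}_{A^{\circ}}$ and the spline-error part $\mathbf{U}$, as the paper does.
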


\begin{proof}[Proof of \cref{lemma:overfitOneBlock}]
%It holds by definition that 
%\begin{equation}
%\Delta Q_{n,j}\leq\mathbf{Y}^{\top}\mathbf{M}_{A^{\circ}}\mathbf{Y}-\mathbf{Y}^{\top}\mathbf{M}_{A^{\circ}\cup\{j\}}\mathbf{Y}.
%\end{equation}
Define the residualized block as $\widetilde{\mathbf{R}}_{j}^{*}=\mathbf{M}_{A^{\circ}}\mathbf{R}_{j}^{*}\in\mathbb{R}^{n\times K_{n}}$, the projection matrix $\mathbf{H}_{j}^{*}=\widetilde{\mathbf{R}}_{j}^{*}((\widetilde{\mathbf{R}}_{j}^{*})^{\top}\widetilde{\mathbf{R}}_{j}^{*})^{-1}(\widetilde{\mathbf{R}}_{j}^{*})^{\top}\in\mathbb{R}^{n\times n}$ onto span$(\widetilde{\mathbf{R}}_{j}^{*})$, and the corresponding residualizer $\mathbf{M}_{j}^{*}=\mathbf{I}_{n}-\mathbf{H}_{j}^{*}$. By the Frisch-Waugh-Lovell theorem (Theorem 2.1, \citealp{davidson2004econometric}), the OLS residual sums of squares from the overfitted model with the irrelevant block $j$ satisfies
\begin{equation}
\label{Eq:FWL}
\mathbf{Y}^{\top}\mathbf{M}_{A^{\circ}\cup\{j\}}\mathbf{Y}=\widehat{\bm{\varepsilon}}^{\top}\mathbf{M}_{j}^{*}\widehat{\bm{\varepsilon}},
\end{equation}
so that
\begin{equation}
\label{Eq:FWL2}
\Delta Q_{n,j}= \widehat{\bm{\varepsilon}}^{\top}\widehat{\bm{\varepsilon}}-\widehat{\bm{\varepsilon}}^{\top}\mathbf{M}_{j}^{*}\widehat{\bm{\varepsilon}}=\widehat{\bm{\varepsilon}}^{\top}\mathbf{H}_{j}^{*}\widehat{\bm{\varepsilon}}=\Vert(\mathbf{P}_{j}^{*})^{\top}\widehat{\bm{\varepsilon}}\Vert^{2},%\qquad
%\text{where}\quad\mathbf{P}_{j}^{*}=\widetilde{\mathbf{R}}_{j}^{*}((\widetilde{\mathbf{R}}_{j}^{*})^{\top}\widetilde{\mathbf{R}}_{j}^{*})^{-1/2},
\end{equation}
where $\mathbf{P}_{j}^{*}=\widetilde{\mathbf{R}}_{j}^{*}((\widetilde{\mathbf{R}}_{j}^{*})^{\top}\widetilde{\mathbf{R}}_{j}^{*})^{-1/2}$, and it suffices to show
\begin{equation}
\label{Eq:FWL3}
\max_{q+1\leq j\leq p}\Vert(\mathbf{P}_{j}^{*})^{\top}\widehat{\bm{\varepsilon}}\Vert^{2}\leq K\Delta_{n}(K_{n}+\log p),
\end{equation}
with probability approaching one.

Consider a further decomposition based on \cref{Eq:decomp_Yi}:
\begin{equation}
\begin{split}
\mathbf{Y}_{i}&=\underbrace{\left(\int_{(i-1)\Delta_{n}}^{i\Delta_{n}}\widetilde{\beta}_{s}^{\top}dX_{s}^{c}\right)\mathbbm{1}_{\{|\Delta_{i}^{n}Y|\leq u_{n}\}}}_{\widetilde{\mathbf{Y}}_{i}}\,+\,\underbrace{\int_{(i-1)\Delta_{n}}^{i\Delta_{n}}e_{s}^{\top}dX_{s}^{c}}_{\mathbf{U}_{i}^{(1)}}\\
&\qquad-\,\underbrace{\left(\int_{(i-1)\Delta_{n}}^{i\Delta_{n}}e_{s}^{\top}dX_{s}^{c}\right)\mathbbm{1}_{\{|\Delta_{i}^{n}Y|> u_{n}\}}}_{\mathbf{U}_{i}^{(2)}}\,+\,\underbrace{\left(\sum_{(i-1)\Delta_{n}\leq s\leq i\Delta_{n}}(\beta_{s-}^{J})^{\top}\Delta X_{s}\right)\mathbbm{1}_{\{|\Delta_{i}^{n}Y|\leq u_{n}\}}}_{\mathbf{J}_{i}}\\
&\qquad\quad+\,\underbrace{\Delta_{i}^{n}Z^{c}}_{\mathbf{Z}_{i}^{(1)}}\,-\,\underbrace{\Delta_{i}^{n}Z^{c}\mathbbm{1}_{\{|\Delta_{i}^{n}Y|> u_{n}\}}}_{\mathbf{Z}_{i}^{(2)}}\,+\,\underbrace{(\Delta_{i}^{n}Z-\Delta_{i}^{n}Z^{c})\mathbbm{1}_{\{|\Delta_{i}^{n}Y|\leq u_{n}\}}}_{\mathbf{Z}_{i}^{(3)}}.
\end{split}
\end{equation}
We denote the leading continuous-martingale term by $\Theta=\mathbf{U}^{(1)}+\mathbf{Z}^{(1)}$, and all remainders by $\Xi\in\{\widetilde{\mathbf{Y}}-\mathbf{R}_{A^{\circ}}\bm{\gamma}_{A^{\circ}},\,\mathbf{U}^{(2)},\,\mathbf{J},\,\mathbf{Z}^{(2)},\,\mathbf{Z}^{(3)}\}$. Then, it suffices to show
\begin{equation}
\label{Eq:spuriousLossImprove1}
\max_{q+1\leq j\leq p}\Vert(\mathbf{P}_{j}^{*})^{\top}\mathbf{M}_{A^{\circ}}\Theta\Vert^{2}\leq K\Delta_{n}(K_{n}+\log p),
\end{equation}
with probability approaching one, and, for each $\Xi$,
\begin{equation}
\label{Eq:spuriousLossImprove2}
\max_{q+1\leq j\leq p}\Vert(\mathbf{P}_{j}^{*})^{\top}\mathbf{M}_{A^{\circ}}\Xi\Vert^{2}=o_{p}(\Delta_{n}(K_{n}+\log p)).
\end{equation}

We firstly verify \cref{Eq:spuriousLossImprove1}. Since $\Theta$ contains only continuous semimartingale increments, $\Vert(\mathbf{P}_{j}^{*})^{\top}\mathbf{M}_{A^{\circ}}\Theta\Vert^{2}$ can be written into a quadratic form of (conditional) Gaussian random variables, and \cref{Eq:spuriousLossImprove1} can be proved with the Hanson-Wright inequality (\citealp{hanson1971bound}; Theorem 6.2.1, \citealp{vershynin2018high}). Specifically, we assume $b,\widetilde{b}\equiv0$ without loss of generality as the drift only contributes a negligible order of $\Delta_{n}$, and thus, conditional on a $\sigma$-field $\mathcal{G}=\sigma(c_{s},\widetilde{\sigma}_{s},e_{s}:0\leq s\leq T)$, $\Theta_{i}$ are independent Gaussian variables with mean 0 and variances
\begin{equation}
\text{Var}(\Theta_{i}|\mathcal{G})=\int_{(i-1)\Delta_{n}}^{i\Delta_{n}}e_{s}^{\top}c_{s}e_{s}ds+\int_{(i-1)\Delta_{n}}^{i\Delta_{n}}\widetilde{\sigma}_{s}^{2}ds=\bar{v}_{i}\Delta_{n},
\end{equation}
for all $i=1,\dots,n$, with $0\leq \bar{v}_{i}\leq \overline{K}$ for some $\overline{K}>0$ under \cref{As:Ito_Iocal}. Therefore, we denote the conditional covariance matrix of $\Theta$ by
\begin{equation}
\bm{\Sigma}=\text{Cov}(\Theta|\mathcal{G})=\text{diag}(\bar{v}_{1}\Delta_{n},\dots,\bar{v}_{n}\Delta_{n}),
%=\Delta_{n} \begin{pmatrix}
%\bar{v}_{1} & 0 & \dots & 0\\
%0 & \bar{v}_{2} & \dots & 0\\
%\vdots & \vdots & \ddots & \vdots\\
%0 & 0 & \dots & \bar{v}_{n}\\
%\end{pmatrix},
\end{equation}
and thus $\Theta=\mathbf{\Sigma}^{1/2}\Gamma$, where $\Gamma$ is a $n\times 1$ vector of standard Gaussian random variables and is independent of the conditioning. Then, conditional on the same $\sigma$-field, $\Vert(\mathbf{P}_{j}^{*})^{\top}\mathbf{M}_{A^{\circ}}\Theta\Vert^{2}$ can be written into a quadratic form of Gaussian random variables: With symmetric and idempotent $\mathbf{M}_{A^{\circ}}$,
\begin{equation}
\label{Eq:QuadraticGaussian}
\Vert(\mathbf{P}_{j}^{*})^{\top}\mathbf{M}_{A^{\circ}}\Theta\Vert^{2}=\Gamma^{\top}\mathbf{\Sigma}^{1/2}\mathbf{M}_{A^{\circ}}\mathbf{H}_{j}^{*}\mathbf{M}_{A^{\circ}}\mathbf{\Sigma}^{1/2}\Gamma=\Gamma^{\top}\bm{\Lambda}_{j}\Gamma,\quad\text{with}\quad\bm{\Lambda}_{j}=\mathbf{\Sigma}^{1/2}\mathbf{H}_{j}^{*}\mathbf{\Sigma}^{1/2}.
\end{equation}
By definition of the block-wise orthogonal projection matrix, $\Vert\mathbf{H}_{j}^{*}\Vert=1$ and rank$(\mathbf{H}_{j}^{*})=K_{n}$. By the submultiplicativity of operator norms,
\begin{equation}
\Vert\bm{\Lambda}_{j}\Vert\leq\Vert\bm{\Sigma}^{1/2}\Vert\Vert\mathbf{H}_{j}^{*}\Vert\Vert\bm{\Sigma}^{1/2}\Vert=\lambda_{\text{max}}(\bm{\Sigma})=\max_{1\leq i\leq n}\bar{v}_{i}\Delta_{n}\leq \overline{K}\Delta_{n},
\end{equation}
and for symmetric $\bm{\Lambda}_{j}$, with $\text{rank}(\bm{\Lambda}_{j})=\text{rank}(\mathbf{H}_{j}^{*})=K_{n}$,
\begin{equation}
\Vert\bm{\Lambda}_{j}\Vert_{F}^{2}\leq\text{rank}(\bm{\Lambda}_{j})\Vert\bm{\Lambda}_{j}\Vert^{2}\leq \overline{K}^{2}K_{n}\Delta_{n}^{2}.
\end{equation}
By the Hanson-Wright inequality, for any $t>0$ and some constant $k>0$,
\begin{equation}
\label{Eq:HW}
\mathbb{P}(\vert\Gamma^{\top}\bm{\Lambda}_{j}\Gamma-\mathbb{E}[\Gamma^{\top}\bm{\Lambda}_{j}\Gamma]\vert>t\,\vert\,\mathcal{G})\leq2\exp\left(-k\min\left\{\frac{t^{2}}{\Vert\bm{\Lambda}_{j}\Vert_{F}^{2}},\,\frac{t}{\Vert\bm{\Lambda}_{j}\Vert}\right\}\right),
\end{equation}
where
\begin{equation}
\label{Eq:HW1}
\mathbb{E}[\Gamma^{\top}\bm{\Lambda}_{j}\Gamma]=\text{tr}(\bm{\Lambda}_{j})=\text{tr}(\mathbf{\Sigma}^{1/2}\mathbf{H}_{j}^{*}\mathbf{\Sigma}^{1/2})=\text{tr}(\mathbf{H}_{j}^{*}\mathbf{\Sigma})\leq\text{tr}(\mathbf{H}_{j}^{*})\Vert\bm{\Sigma}\Vert\leq K_{n}\Vert\bm{\Sigma}\Vert\leq \overline{K}K_{n}\Delta_{n}.
\end{equation}
Let $t=\overline{K}\Delta_{n}(\sqrt{K_{n}c\log p}+c\log p)$ for some constant $c>0$ satisfying $kc>1$. Then,
\begin{align}
\frac{t^{2}}{\Vert\bm{\Lambda}_{j}\Vert_{F}^{2}}&\geq\frac{\overline{K}^{2}\Delta_{n}^{2}(\sqrt{K_{n}c\log p}+c\log p)^{2}}{\overline{K}^{2}K_{n}\Delta_{n}^{2}}\geq\frac{(\sqrt{K_{n}c\log p}+c\log p)^{2}}{K_{n}}\geq c\log p, \label{Eq:HW2}\\
\frac{t}{\Vert\bm{\Lambda}_{j}\Vert}&\geq\frac{\overline{K}\Delta_{n}(\sqrt{K_{n}c\log p}+c\log p)}{\overline{K}\Delta_{n}}\geq c\log p. \label{Eq:HW3}
\end{align}
Combining \cref{Eq:HW1,Eq:HW2,Eq:HW3} in \cref{Eq:HW} leads to
\begin{equation}
\mathbb{P}\left(\left.\Gamma^{\top}\bm{\Lambda}_{j}\Gamma>\overline{K}\Delta_{n}(K_{n}+\sqrt{K_{n}c\log p}+c\log p)\right\vert\mathcal{G}\right)\leq 2e^{-kc\log p}=2p^{-kc}.
\end{equation}
Applying a union bound over all irrelevant blocks, we obtain
\begin{equation}
\mathbb{P}\left(\left.\max_{q+1\leq j\leq p}\Gamma^{\top}\bm{\Lambda}_{j}\Gamma>\overline{K}\Delta_{n}(K_{n}+\sqrt{K_{n}c\log p}+c\log p)\right\vert\mathcal{G}\right)\leq 2p^{1-kc}\to0,
\end{equation}
when $p\to\infty$ and $1-kc<0$. Taking expectations on both sides implies the same union bound holds unconditionally. By \cref{Eq:QuadraticGaussian}, it holds that with probability approaching one,
\begin{equation}
\begin{split}
\max_{q+1\leq j\leq p}\Vert(\mathbf{P}_{j}^{*})^{\top}\mathbf{M}_{A^{\circ}}\Theta\Vert^{2}\leq\max_{q+1\leq j\leq p}\Gamma^{\top}\bm{\Lambda}_{j}\Gamma\leq\overline{K}\Delta_{n}(K_{n}+\sqrt{K_{n}c\log p}+c\log p).
\end{split}
\end{equation}
Furthermore, with $\sqrt{K_{n}\log p}\leq(K_{n}+\log p)/2$,
\begin{equation}
\max_{q+1\leq j\leq p}\Vert(\mathbf{P}_{j}^{*})^{\top}\mathbf{M}_{A^{\circ}}\Theta\Vert^{2}\leq K\Delta_{n}(K_{n}+\log p),
\end{equation}
with probability approaching one, i.e., \cref{Eq:spuriousLossImprove1} holds.

Next, we show \cref{Eq:spuriousLossImprove2} holds for each $\Xi\in\{\widetilde{\mathbf{Y}},\,\mathbf{U}^{(2)},\,\mathbf{J},\,\mathbf{Z}^{(2)},\,\mathbf{Z}^{(3)}\}$.  By definition,
\begin{equation}
\label{Eq:PMXi}
\Vert(\mathbf{P}_{j}^{*})^{\top}\mathbf{M}_{A^{\circ}}\Xi\Vert^{2}=\Xi^{\top}\mathbf{M}_{A^{\circ}}\mathbf{H}_{j}^{*}\mathbf{M}_{A^{\circ}}\Xi\leq\Vert\mathbf{M}_{A^{\circ}}\Xi\Vert^{2}\leq\Vert\Xi\Vert^{2}.
\end{equation}
Next, we show $\|\Xi\|^2=o_p(\Delta_n(K_n+\log p))$ for each $\Xi$.

%For $\Xi=\mathbf{Z}^{(2)},\mathbf{Z}^{(3)}$, we will show directly $\|\Xi\|^2=o_p(\Delta_n(K_n+\log p))$. For $\Xi=\mathbf{C},\,\mathbf{J}$, the Euclidean norm is not small, and we must exploit oracle-span cancellation: the leading parts of $\mathbf{C}$ and $\mathbf{J}$ lie in $\mathrm{span}(\mathbf{R}_{A^\circ})$ and are annihilated by $\mathbf{M}_{A^\circ}$, leaving only asymptotically negligible remainders.

\subsubsection*{(i) $\bm{\Xi=\mathbf{Z}^{(2)}}$}
\label{AP:Proof_oracle1}

By definition,
\begin{equation}
\Vert\mathbf{Z}^{(2)}\Vert^{2}=\sum_{i=1}^{n}(\Delta_{i}^{n}Z^{c})^{2}\mathbbm{1}_{\{|\Delta_{i}^{n}Y|> u_{n}\}}. 
\end{equation}
By Hölder's and BDG inequalities, as well as \cref{lemma:truncationEvents} (i), for any $q\geq 2$,
\begin{equation}
\label{Eq:Holder}
\begin{split}
\mathbb{E}\left[(\Delta_{i}^{n}Z^{c})^{2}\mathbbm{1}_{\{|\Delta_{i}^{n}Y|>u_{n}\}}\right]&\leq\left(\mathbb{E}[(\Delta_{i}^{n}Z^{c})^{2q}]\right)^{1/q}\left(\mathbb{P}(|\Delta_{i}^{n}Y|>u_{n})\right)^{1-1/q}\\
&\leq K\Delta_{n}(\Delta_{n}u_{n}^{-r})^{1-1/q}=K\Delta_{n}^{2-1/q}u_{n}^{r/q-r},
\end{split}
\end{equation}
and it is $O(\Delta_{n}u_{n}^{2-r})$ if we choose $q\geq(1-\varpi r)/(1-2\varpi)$, which is always possible for $1/2(2-r)<\varpi<1/2$ and $0\leq r<1$. By Markov's inequality,
\begin{equation}
\label{Eq:Z^2}
\Vert\mathbf{Z}^{(2)}\Vert^{2}=O_{p}(u_{n}^{2-r})=o_{p}(\Delta_{n}(K_{n}+\log p)),
\end{equation}
when $\Delta_{n}^{1-\varpi(2-r)}(K_{n}+\log p)\to\infty$ as required.

%We bound its expectation by H{\"o}lder's inequality. For any $q>2$,
%\begin{equation}
%\mathbb{E}\left[(\Delta_{i}^{n}Z^{c})^{2}\mathbbm{1}_{\{|\Delta_{i}^{n}Y|> u_{n}\}}\right]\leq\left(\mathbb{E}[(\Delta_{i}^{n}Z^{c})^{q}]\right)^{2/q}\left(\mathbb{P}(|\Delta_{i}^{n}Y|> u_{n})\right)^{1-2/q}\leq K\Delta_{n}^{2-2/q}u_{n}^{-r(1-2/q)},
%\end{equation}
%by the BDG inequality and \cref{lemma:truncationEvents} (i). By Markov's inequality,
%\begin{equation}
%\Vert\mathbf{Z}^{(2)}\Vert^{2}=O_{p}\left(\Delta_{n}^{1-2/q}u_{n}^{-r(1-2/q)}\right).
%\end{equation}
%Therefore, it holds that $\Vert\mathbf{Z}^{(2)}\Vert^{2}=o_{p}(\Delta_{n}(K_{n}+\log p))$ if
%\begin{equation}
%K_{n}+\log p\gg \Delta_{n}^{-2/q}u_{n}^{-r(1-2/q)}=\Delta_{n}^{-\varpi r-\frac{2}{q}(1-\varpi r)}.
%\end{equation}
%With $\varpi r<1/2$, choosing $q$ sufficiently large makes the above condition arbitrarily close to $(K_n+\log p)\Delta_n^{\varpi r}\to\infty$, which is implied by $K_n\sqrt{\Delta_n}/\log K_n\to\infty$.

\subsubsection*{(ii) $\bm{\Xi=\mathbf{Z}^{(3)}}$}
\label{AP:Proof_oracle2}

By definition,
\begin{equation}
\label{Eq:Z^3}
\Vert\mathbf{Z}^{(3)}\Vert^{2}=\sum_{i=1}^{n}(\Delta_{i}^{n}Z-\Delta_{i}^{n}Z^{c})^{2}\left(\mathbbm{1}_{\{|\Delta_{i}^{n}Z|\leq u_{n}\}}+\left\vert\mathbbm{1}_{\{|\Delta_{i}^{n}Y|\leq u_{n}\}}-\mathbbm{1}_{\{|\Delta_{i}^{n}Z|\leq u_{n}\}}\right\vert\right),
%\begin{split}
%\Vert\mathbf{Z}^{(3)}\Vert^{2}&=\sum_{i=1}^{n}(\Delta_{i}^{n}Z-\Delta_{i}^{n}Z^{c})^{2}\mathbbm{1}_{\{|\Delta_{i}^{n}Y|\leq u_{n}\}}\\
%&=\sum_{i=1}^{n}(\Delta_{i}^{n}Z-\Delta_{i}^{n}Z^{c})^{2}\mathbbm{1}_{\{|\Delta_{i}^{n}Z|\leq u_{n}\}}\\
%&\qquad+\sum_{i=1}^{n}(\Delta_{i}^{n}Z-\Delta_{i}^{n}Z^{c})^{2}\left\vert\mathbbm{1}_{\{|\Delta_{i}^{n}Y|\leq u_{n}\}}-\mathbbm{1}_{\{|\Delta_{i}^{n}Z|\leq u_{n}\}}\right\vert,
%\end{split}
\end{equation}
where, by \cref{lemma:truncationEvents} (ii) and Markov's inequality,
\begin{equation}
\label{Eq:Z^3_1}
\sum_{i=1}^{n}(\Delta_{i}^{n}Z-\Delta_{i}^{n}Z^{c})^{2}\mathbbm{1}_{\{|\Delta_{i}^{n}Z|\leq u_{n}\}}=O_{p}(u_{n}^{2-r}),
\end{equation}
and, following the same steps as in Appendix \ref{AP:Proof_CLT} (\hyperref[AP:Proof_CLT_v]{v}) for truncation mismatch, 
\begin{equation}
\label{Eq:Z^3_2}
\sum_{i=1}^{n}(\Delta_{i}^{n}Z-\Delta_{i}^{n}Z^{c})^{2}\left\vert\mathbbm{1}_{\{|\Delta_{i}^{n}Y|\leq u_{n}\}}-\mathbbm{1}_{\{|\Delta_{i}^{n}Z|\leq u_{n}\}}\right\vert=O_{p}(u_{n}^{2-r}\vee\Delta_{n}u_{n}^{-2r}).
\end{equation}
Specifically, we denote $X^{\circ}=(X_{1,t},\dots,X_{q,t})^{\top}_{t\geq0}$, and follow \cref{Eq:E_H_5-0}:
\begin{equation}
\label{Eq:2sum}
\begin{split}
&\sum_{i=1}^{n}(\Delta_{i}^{n}Z-\Delta_{i}^{n}Z^{c})^{2}\left\vert\mathbbm{1}_{\{|\Delta_{i}^{n}Y|\leq u_{n}\}}-\mathbbm{1}_{\{|\Delta_{i}^{n}Z|\leq u_{n}\}}\right\vert\\
&\quad\leq\sum_{i=1}^{n}(u_{n}-\Delta_{i}^{n}Z^{c})^{2}\mathbbm{1}_{\{|\Delta_{i}^{n}Y|>u_{n}\}}+\sum_{i=1}^{n}(\Delta_{i}^{n}Z-\Delta_{i}^{n}Z^{c})^{2}\mathbbm{1}_{\{|\Delta_{i}^{n}Y|\leq u_{n},\,|\Delta_{i}^{n}Z|>u_{n},\,\Vert\Delta_{i}^{n}X^{\circ}\Vert\leq u_{n}\}}.
\end{split}
\end{equation}
Note that $X^{\circ}$ has only a finite dimension of nonzero components (relevant regressors), so the fixed dimensional result in Appendix \ref{AP:Proof_CLT} (\hyperref[AP:Proof_CLT_v]{v}) still holds. As in the previous proofs, the OLS regression in \cref{Eq:objective} already incorporates the truncation indicator $\mathbbm{1}_{\{\Vert\Delta_{i}^{n}X^{\circ}\Vert\leq u_{n}\}}$, so it does no harm to include it in any $\Xi$. For the first sum in \cref{Eq:2sum},
\begin{equation}
\sum_{i=1}^{n}(u_{n}-\Delta_{i}^{n}Z^{c})^{2}\mathbbm{1}_{\{|\Delta_{i}^{n}Y|>u_{n}\}}\leq 2\sum_{i=1}^{n}u_{n}^{2}\mathbbm{1}_{\{|\Delta_{i}^{n}Y|>u_{n}\}}+2\underbrace{\sum_{i=1}^{n}(\Delta_{i}^{n}Z^{c})^{2}\mathbbm{1}_{\{|\Delta_{i}^{n}Y|>u_{n}\}}}_{\Vert\mathbf{Z}^{(2)}\Vert^{2}}.
\end{equation}
By \cref{lemma:truncationEvents} (i) and Markov's inequality, 
\begin{equation}
\sum_{i=1}^{n}u_{n}^{2}\mathbb{P}(|\Delta_{i}^{n}Y|>u_{n})=O(u_{n}^{2-r})\quad\Rightarrow\quad\sum_{i=1}^{n}u_{n}^{2}\mathbbm{1}_{\{|\Delta_{i}^{n}Y|>u_{n}\}}=O_{p}(u_{n}^{2-r}).
\end{equation}
%By Hölder's and BDG inequalities, as well as \cref{lemma:truncationEvents} (i), for any $q\geq 2$,
%\begin{equation}
%\label{Eq:Holder}
%\begin{split}
%\mathbb{E}\left[(\Delta_{i}^{n}Z^{c})^{2}\mathbbm{1}_{\{|\Delta_{i}^{n}Y|>u_{n}\}}\right]&\leq\left(\mathbb{E}[(\Delta_{i}^{n}Z^{c})^{2q}]\right)^{1/q}\left(\mathbb{P}(|\Delta_{i}^{n}Y|>u_{n})\right)^{1-1/q}\\
%&\leq K\Delta_{n}(\Delta_{n}u_{n}^{-r})^{1-1/q}=K\Delta_{n}^{2-1/q}u_{n}^{r/q-r},
%\end{split}
%\end{equation}
%and it is $O(\Delta_{n}u_{n}^{2-r})$ if we choose $q\geq(1-\varpi r)/(1-2\varpi)$, which is always possible for $1/2(2-r)<\varpi<1/2$ and $0\leq r<1$. By Markov's inequality,
%\begin{equation}
%\sum_{i=1}^{n}(\Delta_{i}^{n}Z^{c})^{2}\mathbbm{1}_{\{|\Delta_{i}^{n}Y|>u_{n}\}}=O_{p}(u_{n}^{2-r}),
%\end{equation}
and, with the result for $\Vert\mathbf{Z}^{(2)}\Vert^{2}$ in \cref{Eq:Z^2}, the first sum in \cref{Eq:2sum} is $O_{p}(u_{n}^{2-r})$. 

For the second sum, we follow the same steps as from \cref{Eq:unmatch_start,Eq:unmatch_inequality,Eq:unmatch_eventInequality,Eq:7terms}. On $\{|\Delta_{i}^{n}Y|\leq u_{n},\,|\Delta_{i}^{n}Z|>u_{n},\,\Vert\Delta_{i}^{n}X^{\circ}\Vert\leq u_{n}\}$, it holds that, by \cref{Eq:unmatch_inequality},
\begin{equation}
u_{n}^{2}\leq |\Delta_{i}^{n}Z|^{2}\leq (u_{n} + |\Delta_{i}^{n}Y^{(1)}| + |\widetilde{\mathbf{J}}_{i}|)^{2}\leq 3(u_{n}^{2}+|\Delta_{i}^{n}Y^{(1)}|^{2}+|\widetilde{\mathbf{J}}_{i}|^{2}),
\end{equation}
and thus
\begin{equation}
|\Delta_{i}^{n}Z-\Delta_{i}^{n}Z^{c}|^{2}\leq K(u_{n}^{2}+|\Delta_{i}^{n}Y^{(1)}|^{2}+|\widetilde{\mathbf{J}}_{i}|^{2}+|\Delta_{i}^{n}Z^{c}|^{2}). 
\end{equation}
By the same event decomposition as in \cref{Eq:unmatch_eventInequality},
\begin{align}
&\mathbb{E}\left[|\Delta_{i}^{n}Z-\Delta_{i}^{n}Z^{c}|^{2}\mathbbm{1}_{\{|\Delta_{i}^{n}Y|\leq u_{n},\,|\Delta_{i}^{n}Z|>u_{n},\,\Vert\Delta_{i}^{n}X^{\circ}\Vert\leq u_{n}\}}\right] \notag\\
&\qquad\leq\mathbb{E}\left[|\Delta_{i}^{n}Z-\Delta_{i}^{n}Z^{c}|^{2}\mathbbm{1}_{\{u_{n}<|\Delta_{i}^{n}Z|\leq 3u_{n}\}}\right]+\mathbb{E}\left[|\Delta_{i}^{n}Z-\Delta_{i}^{n}Z^{c}|^{2}\mathbbm{1}_{\{|\Delta_{i}^{n}Y^{(1)}| + |\widetilde{\mathbf{J}}_{i}|>2u_{n}\}}\right] \notag\\
&\qquad\leq 2\underbrace{(3u_{n})^{2}\mathbb{P}(|\Delta_{i}^{n}Z|>u_{n})}_{(1)}\,+\,2\underbrace{\mathbb{E}\left[|\Delta_{i}^{n}Z^{c}|\mathbbm{1}_{\{|\Delta_{i}^{n}Z|>u_{n}\}}\right]}_{(2)}\notag\\
&\qquad\qquad+\mathbb{E}\left[|\Delta_{i}^{n}Z-\Delta_{i}^{n}Z^{c}|^{2}\mathbbm{1}_{\{|\Delta_{i}^{n}Y^{(1)}|> u_{n}\}}\right]+\mathbb{E}\left[|\Delta_{i}^{n}Z-\Delta_{i}^{n}Z^{c}|^{2}\mathbbm{1}_{\{|\widetilde{\mathbf{J}}_{i}|>u_{n}\}}\right]\notag\\
&\qquad \leq(1)+(2)\,+\, K\underbrace{u_{n}^{2}\mathbb{P}\left(|\Delta_{i}^{n}Y^{(1)}|>u_{n}\right)}_{(3)}\,+\,K\underbrace{\mathbb{E}\left[|\Delta_{i}^{n}Y^{(1)}|^{2}\mathbbm{1}_{\{|\Delta_{i}^{n}Y^{(1)}|> u_{n}\}}\right]}_{(4)} \notag\\
&\qquad\qquad+K\underbrace{\mathbb{E}\left[|\widetilde{\mathbf{J}}_{i}|^{2}\mathbbm{1}_{\{|\Delta_{i}^{n}Y^{(1)}|>u_{n}\}}\right]}_{(5)}\,+\,K\underbrace{\mathbb{E}\left[|\Delta_{i}^{n}Z^{c}|^{2}\mathbbm{1}_{\{|\Delta_{i}^{n}Y^{(1)}|>u_{n}\}}\right]}_{(6)} + K\underbrace{u_{n}^{2}\mathbb{P}\left(|\widetilde{\mathbf{J}}_{i}|>u_{n}\right)}_{(7)}\notag\\
&\qquad\qquad+K\underbrace{\mathbb{E}\left[|\Delta_{i}^{n}Y^{(1)}|^{2}\mathbbm{1}_{\{|\widetilde{\mathbf{J}}_{i}|> u_{n}\}}\right]}_{(8)}\,+\,K\underbrace{\mathbb{E}\left[|\widetilde{\mathbf{J}}_{i}|^{2}\mathbbm{1}_{\{|\widetilde{\mathbf{J}}_{i}|>u_{n}\}}\right]}_{(9)}\,+\,K\underbrace{\mathbb{E}\left[|\Delta_{i}^{n}Z^{c}|^{2}\mathbbm{1}_{\{|\widetilde{\mathbf{J}}_{i}|>u_{n}\}}\right]}_{(10)}. \label{Eq:10terms}
\end{align}
Next, we verify that the above terms (1)--(10) are all $O(\Delta_{n}u_{n}^{2-r}\vee\Delta_{n}^{2}u_{n}^{-2r})$. 

Term (1): By the same result as in \cref{lemma:truncationEvents} (i) for the one-dimensional It{\^o} semimartingale $Z$,
\begin{equation}
u_{n}^{2}\mathbb{P}(|\Delta_{i}^{n}Z|>u_{n})\leq K\Delta_{n}u_{n}^{2-r}.
\end{equation}

Term (2): Following the same steps as in \cref{Eq:Holder},
\begin{equation}
\mathbb{E}\left[(\Delta_{i}^{n}Z^{c})^{2}\mathbbm{1}_{\{|\Delta_{i}^{n}Z|>u_{n}\}}\right]=O(\Delta_{n}u_{n}^{2-r}).
\end{equation}

%By Hölder's and BDG inequalities, as well as \cref{lemma:truncationEvents} (i), for any $q\geq 2$,
%\begin{equation}
%\begin{split}
%\mathbb{E}\left[(\Delta_{i}^{n}Z^{c})^{2}\mathbbm{1}_{\{|\Delta_{i}^{n}Z|>u_{n}\}}\right]&\leq\left(\mathbb{E}[(\Delta_{i}^{n}Z^{c})^{2q}]\right)^{1/q}\left(\mathbb{P}(|\Delta_{i}^{n}Z|>u_{n})\right)^{1-1/q}\\
%&\leq K\Delta_{n}(\Delta_{n}u_{n}^{-r})^{1-1/q}=K\Delta_{n}^{2-1/q}u_{n}^{r/q-r},
%\end{split}
%\end{equation}
%and it is $O(\Delta_{n}u_{n}^{2-r})$ if we choose $q\geq(1-\varpi r)/(1-2\varpi)$, which is always possible for $1/2(2-r)<\varpi<1/2$ and $0\leq r<1$. 

Term (3): By \cref{Eq:Y_triangle_Prob1},
\begin{equation}
u_{n}^{2}\mathbb{P}\left(|\Delta_{i}^{n}Y^{(1)}|>u_{n}\right)\leq K\Delta_{n}u_{n}^{2-r}.
\end{equation}

%Term (3): By \cref{Eq:Y_triangle_Prob1}, the BDG and Cauchy-Schwarz inequalities,
%\begin{equation}
%\begin{split}
%\mathbb{E}\left[|\Delta_{i}^{n}Y^{(1)}|^{2}\mathbbm{1}_{\{|\Delta_{i}^{n}Y^{(1)}|> u_{n}\}}\right]&\leq\left(\mathbb{E}[|\Delta_{i}^{n}Y^{(1)}|^{4}]\right)^{1/2}\left(\mathbb{P}(|\Delta_{i}^{n}Y^{(1)}|> u_{n})\right)^{1/2}\\
%&\leq K\Delta_{n}^{3/2}u_{n}^{1/2-r/2}=o(\Delta_{n}u_{n}^{2-r}).
%\end{split}
%\end{equation}

Term (4): By the same steps as in \cref{Eq:moment_Markov} and the BDG inequality, for any $q\geq2$,
\begin{equation}
\begin{split}
\mathbb{E}\left[|\Delta_{i}^{n}Y^{(1)}|^{2}\mathbbm{1}_{\{|\Delta_{i}^{n}Y^{(1)}|> u_{n}\}}\right]\leq\mathbb{E}\left[|\Delta_{i}^{n}Y^{(1)}|^{2}\left(\frac{|\Delta_{i}^{n}Y^{(1)}|}{u_{n}}\right)^{q-2}\right]\leq \frac{\mathbb{E}[|\Delta_{i}^{n}Y^{(1)}|^{q}]}{u_{n}^{q-2}}\leq K\Delta_{n}^{q/2}u_{n}^{2-q},
\end{split}
\end{equation}
and it is $O(\Delta_{n}u_{n}^{2-r})$ if $q\geq 2(1-\varpi r)/(1-2\varpi)$, which is always possible for $1/2(2-r)<\varpi<1/2$ and $0\leq r<1$. 

Term (5): By \cref{Eq:J_i_moment},
\begin{equation}
\mathbb{E}\left[|\widetilde{\mathbf{J}}_{i}|^{2}\mathbbm{1}_{\{|\Delta_{i}^{n}Y^{(1)}|> u_{n}\}}\right]\leq\mathbb{E}[|\widetilde{\mathbf{J}}_{i}|^{2}]\leq K\Delta_{n}u_{n}^{2-r}+K'\Delta_{n}^{2}u_{n}^{-2r}.
\end{equation}

Term (6): Following the same steps as in \cref{Eq:Holder}, together with \cref{Eq:Y_triangle_Prob1}, 
\begin{equation}
\mathbb{E}\left[|\Delta_{i}^{n}Z^{c}|^{2}\mathbbm{1}_{\{|\Delta_{i}^{n}Y^{(1)}|>u_{n}\}}\right]=O(\Delta_{n}u_{n}^{2-r}).
\end{equation}

Term (7): By \cref{Eq:E_J_i_tilde},
\begin{equation}
u_{n}^{2}\mathbb{P}\left(|\widetilde{\mathbf{J}}_{i}|>u_{n}\right)\leq K\Delta_{n}u_{n}^{2-r}.
\end{equation}

Term (8): By Hölder's and BDG inequalities, together with \cref{Eq:E_J_i_tilde}, for some $q\geq 2$,
\begin{equation}
\begin{split}
\mathbb{E}\left[|\Delta_{i}^{n}Y^{(1)}|^{2}\mathbbm{1}_{\{|\widetilde{\mathbf{J}}_{i}|> u_{n}\}}\right]&\leq\left(\mathbb{E}[|\Delta_{i}^{n}Y^{(1)}|^{2q}]\right)^{1/q}\left(\mathbb{P}\left(|\widetilde{\mathbf{J}}_{i}|>u_{n}\right)\right)^{1-1/q}\\
&\leq K\Delta_{n}(\Delta_{n}u_{n}^{-r})^{1-1/q}=K\Delta_{n}^{2-1/q}u_{n}^{r/q-r},
\end{split}
\end{equation}
and it is $O(\Delta_{n}u_{n}^{2-r})$ if we choose $q\geq(1-\varpi r)/(1-2\varpi)$, which is always possible for $1/2(2-r)<\varpi<1/2$ and $0\leq r<1$. 

Term (9): Same as term (5), we have
\begin{equation}
\mathbb{E}\left[|\widetilde{\mathbf{J}}_{i}|^{2}\mathbbm{1}_{\{|\widetilde{\mathbf{J}}_{i}|> u_{n}\}}\right]\leq\mathbb{E}[|\widetilde{\mathbf{J}}_{i}|^{2}]\leq K\Delta_{n}u_{n}^{2-r}+K'\Delta_{n}^{2}u_{n}^{-2r}.
\end{equation}

Term (10): Following the same steps as in \cref{Eq:Holder}, together with \cref{Eq:E_J_i_tilde},
\begin{equation}
\mathbb{E}\left[|\Delta_{i}^{n}Z^{c}|^{2}\mathbbm{1}_{\{|\widetilde{\mathbf{J}}_{i}|>u_{n}\}}\right]=O(\Delta_{n}u_{n}^{2-r}).
\end{equation}

Combining all terms (1)--(10) in \cref{Eq:10terms} leads to
\begin{equation}
\mathbb{E}\left[|\Delta_{i}^{n}Z-\Delta_{i}^{n}Z^{c}|^{2}\mathbbm{1}_{\{|\Delta_{i}^{n}Y|\leq u_{n},\,|\Delta_{i}^{n}Z|>u_{n},\,\Vert\Delta_{i}^{n}X^{\circ}\Vert\leq u_{n}\}}\right]=O(\Delta_{n}u_{n}^{2-r}\vee\Delta_{n}^{2}u_{n}^{-2r}),
\end{equation}
and thus \cref{Eq:Z^3_2} holds. Combining \cref{Eq:Z^3_1,Eq:Z^3_2} in Eqs.~(\ref{Eq:Z^3}) and (\ref{Eq:PMXi}), we obtain
\begin{equation}
\Vert(\mathbf{P}_{j}^{*})^{\top}\mathbf{M}_{A^{\circ}}\mathbf{Z}^{(3)}\Vert^{2}=O_{p}(u_{n}^{2-r}\vee\Delta_{n}u_{n}^{-2r})=o_{p}(\Delta_{n}(K_{n}+\log p)),
\end{equation}
when $\Delta_{n}^{(1-\varpi(2-r))\vee2\varpi r}(K_{n}+\log p)\to\infty$ as required.

\subsubsection*{(iii) $\bm{\Xi=\widetilde{\mathbf{Y}}-\mathbf{R}_{A^{\circ}}\bm{\gamma}_{A^{\circ}}}$}

Consider the same decomposition as in Appendix \ref{AP:consistency} (\hyperref[AP:Proof_consistency_A1]{i}), i.e., \cref{Eq:Y_decomp_1,Eq:Y_decomp_2}, 
\begin{equation}
\label{Eq:Y}
\widetilde{\mathbf{Y}}-\mathbf{R}_{A^{\circ}}\bm{\gamma}_{A^{\circ}}=(\widetilde{\mathbf{Y}}^{(1)}-\mathbf{R}_{A^{\circ}}\bm{\gamma}_{A^{\circ}})+\widetilde{\mathbf{Y}}^{(2)}.
\end{equation}
Since the oracle index set $A^{\circ}$ has fixed cardinality $q$, the arguments used in the proofs of \cref{Th:consistency,Th:CLT} apply verbatim to the restricted oracle regression. By \cref{Eq:A1_Eq1,Eq:A1_Eq2,Eq:A1_Eq3} and Markov's inequality, 
\begin{equation}
\label{Eq:Y1}
\Vert\widetilde{\mathbf{Y}}^{(1)}-\mathbf{R}_{A^{\circ}}\bm{\gamma}_{A^{\circ}}\Vert^{2}=O_{p}(\Delta_{n}u_{n}^{-r}\vee u_{n}^{2-r})=O_{p}(u_{n}^{2-r}).
\end{equation}
and, by \cref{Eq:Discretization-3},
\begin{equation}
\label{Eq:Y2}
\Vert\widetilde{\mathbf{Y}}^{(2)}\Vert^{2}=O_{p}(\Delta_{n}^{2}K_{n}^{2}).
\end{equation}
Combining \cref{Eq:Y1,Eq:Y2} in \cref{Eq:Y} leads to
\begin{equation}
\Vert\widetilde{\mathbf{Y}}-\mathbf{R}_{A^{\circ}}\bm{\gamma}_{A^{\circ}}\Vert^{2}=O_{p}(u_{n}^{2-r}\vee \Delta_{n}^{2}K_{n}^{2})=o_{p}(\Delta_{n}(K_{n}+\log p)),
\end{equation}
when $\Delta_{n}^{1-\varpi(2-r)}(K_{n}+\log p)\to\infty$, and since $\Delta_{n}K_{n}\to0$,
\begin{equation}
\Delta_{n}^{2}K_{n}^{2}=o(\Delta_{n}K_{n})=o(\Delta_{n}(K_{n}+\log p)).
\end{equation}

\subsubsection*{(iv) $\bm{\Xi=\mathbf{U}^{(2)}}$}

By Hölder's and BDG inequalities, as well as \cref{lemma:truncationEvents} (i), for any $q\geq 2$,
\begin{equation}
\label{Eq:C2}
\begin{split}
\mathbb{E}[\mathbf{U}_{i}^{2}]&=\mathbb{E}\left[\left(\int_{(i-1)\Delta_{n}}^{i\Delta_{n}}e_{s}^{\top}dX_{s}^{c}\right)^{2}\mathbbm{1}_{\{|\Delta_{i}^{n}Y|>u_{n}\}}\right]\\
&\leq\left(\mathbb{E}\left[\left(\int_{(i-1)\Delta_{n}}^{i\Delta_{n}}e_{s}^{\top}dX_{s}^{c}\right)^{2q}\right]\right)^{1/q}\left(\mathbb{P}(|\Delta_{i}^{n}Y|>u_{n})\right)^{1-1/q}\\
&\leq K\Delta_{n}(\Delta_{n}u_{n}^{-r})^{1-1/q}=K\Delta_{n}^{2-1/q}u_{n}^{r/q-r},
\end{split}
\end{equation}
and it is $O(\Delta_{n}u_{n}^{2-r})$ if we choose $q\geq(1-\varpi r)/(1-2\varpi)$, which is always possible for $1/2(2-r)<\varpi<1/2$ and $0\leq r<1$. Therefore,
\begin{equation}
\label{Eq:C3}
\Vert\mathbf{U}\Vert^{2}=O_{p}(u_{n}^{2-r})=o_{p}(\Delta_{n}(K_{n}+\log p)),
\end{equation}
when $\Delta_{n}^{1-\varpi(2-r)}(K_{n}+\log p)\to\infty$.

\subsubsection*{(v) $\bm{\Xi=\mathbf{J}}$}
\label{AP:Proof_oracle5}

Same as in (\hyperref[AP:Proof_oracle2]{ii}), we employ the safe replacement $\widetilde{\mathbf{J}}$ with the additional truncation indicator $\mathbbm{1}_{\{\Vert\Delta_{i}^{n}X^{\circ}\Vert\leq u_{n}\}}$. By \cref{Eq:J} and Markov's inequality, 
\begin{equation}
\Vert\widetilde{\mathbf{J}}\Vert^{2}=O_{p}(u_{n}^{2-r}\vee\Delta_{n}u_{n}^{-2r})=o_{p}(\Delta_{n}(K_{n}+\log p)),
\end{equation}
when $\Delta_{n}^{(1-\varpi(2-r))\vee2\varpi r}(K_{n}+\log p)\to\infty$.

Combining results in (\hyperref[AP:Proof_oracle1]{i})--(\hyperref[AP:Proof_oracle5]{v}) in \cref{Eq:PMXi} completes the proof of \cref{lemma:overfitOneBlock}.

%Notes for the unfinished proof: When $Z$ has no jumps, $\widehat{\bm{\varepsilon}}^{\top}\mathbf{H}_{j}^{*}\widehat{\bm{\varepsilon}}$ can be written into a quadratic form of Gaussian random variables, and \cref{lemma:overfitOneBlock} can be proved with the Hanson-Wright inequality (\citealp{hanson1971bound}; Theorem 6.2.1, \citealp{vershynin2018high}). For the general case with jumps in $Z$, it suffices to show the additional terms from the decomposition are negligible. \textcolor{purple}{(to be finished)}

\end{proof}

For any index set $A$, we define the total loss reduction from adding all blocks in $A\backslash A^{\circ}$ as
\begin{equation}
\Delta Q_{n}(A)=Q_{n}(\widehat{\bm{\gamma}}^{\circ})-\min_{\bm{\gamma}: A(\bm{\gamma})=A}Q_{n}(\bm{\gamma}). 
\end{equation}

\begin{lemma}
\label{lemma:extraSSR}
Under the conditions of \cref{Th:oracle_global}, it holds that for any $A\in\Gamma_{1}$,
\begin{equation}
\Delta Q_{n}(A)\leq(|A|-q)\max_{q+1\leq j\leq p}\Delta Q_{n,j}
\end{equation}
\end{lemma}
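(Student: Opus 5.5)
The plan is to reduce $\Delta Q_{n}(A)$ to a projection of the oracle residual $\widehat{\bm{\varepsilon}}=\mathbf{M}_{A^{\circ}}\mathbf{Y}$, as in the proof of \cref{lemma:overfitOneBlock}, and then split that projection into the $|A|-q$ single-block projections $\Delta Q_{n,j}=\Vert(\mathbf{P}_{j}^{*})^{\top}\widehat{\bm{\varepsilon}}\Vert^{2}$ from \cref{Eq:FWL2}.

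\textbf{Step 1 (relax the constraint).} The set $\{\bm{\gamma}:A(\bm{\gamma})=A\}$ is contained in $\{\bm{\gamma}:\gamma_{j}=0\text{ only for some }j\notin A\}$. Hence $\min_{A(\bm{\gamma})=A}Q_{n}(\bm{\gamma})\geq\mathbf{Y}^{\top}\mathbf{M}_{A}\mathbf{Y}$, up to the contribution of sub-threshold blocks outside $A$. I would handle those blocks by noting that they can only be set to zero without loss, or by absorbing them into $A$. This gives $\Delta Q_{n}(A)\leq\mathbf{Y}^{\top}(\mathbf{M}_{A^{\circ}}-\mathbf{M}_{A})\mathbf{Y}$.

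\textbf{Step 2 (Frisch--Waugh--Lovell).} Let $S=A\backslash A^{\circ}$, and let $\widetilde{\mathbf{R}}_{S}=\mathbf{M}_{A^{\circ}}\mathbf{R}_{S}=(\widetilde{\mathbf{R}}_{j}^{*})_{j\in S}$. Exactly as in \cref{Eq:FWL,Eq:FWL2}, we have $\mathbf{Y}^{\top}(\mathbf{M}_{A^{\circ}}-\mathbf{M}_{A})\mathbf{Y}=\widehat{\bm{\varepsilon}}^{\top}\mathbf{H}_{S}^{*}\widehat{\bm{\varepsilon}}$, where $\mathbf{H}_{S}^{*}$ is the orthogonal projection onto $\mathrm{span}(\widetilde{\mathbf{R}}_{S})$.

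\textbf{Step 3 (block decomposition).} Write $\mathbf{P}_{S}=(\mathbf{P}_{j}^{*})_{j\in S}$, so that $\mathrm{span}(\mathbf{P}_{S})=\mathrm{span}(\widetilde{\mathbf{R}}_{S})$. Each $\mathbf{P}_{j}^{*}$ has orthonormal columns, so the normalized Gram matrix $\mathbf{G}_{S}=\mathbf{P}_{S}^{\top}\mathbf{P}_{S}$ has identity diagonal blocks. Then
\[
\widehat{\bm{\varepsilon}}^{\top}\mathbf{H}_{S}^{*}\widehat{\bm{\varepsilon}}=(\mathbf{P}_{S}^{\top}\widehat{\bm{\varepsilon}})^{\top}\mathbf{G}_{S}^{-1}(\mathbf{P}_{S}^{\top}\widehat{\bm{\varepsilon}})\leq\lambda_{\min}(\mathbf{G}_{S})^{-1}\sum_{j\in S}\Vert(\mathbf{P}_{j}^{*})^{\top}\widehat{\bm{\varepsilon}}\Vert^{2}.
\]
The sum on the right is at most $(|A|-q)\max_{q+1\leq j\leq p}\Delta Q_{n,j}$. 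If the residualized irrelevant blocks are mutually orthogonal, then $\mathbf{G}_{S}$ is the identity and the stated inequality holds exactly.

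\textbf{Main obstacle.} This is the cross-block term, namely controlling $\lambda_{\min}(\mathbf{G}_{S})$. In general, OLS loss reductions are not subadditive across correlated blocks, so the factor $\lambda_{\min}(\mathbf{G}_{S})^{-1}$ cannot be dropped for free.

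I would first try to show $\lambda_{\min}(\mathbf{G}_{S})\geq\kappa>0$ with probability approaching one. The argument would repeat the Gram-matrix comparison of \cref{Eq:Gram,Eq:closenessGrams,Eq:quotient} on the residualized design, using the nondegeneracy of $c$ in \cref{As:Ito} (iii). The catch is that this must hold uniformly over all $S$ that are relevant for the global comparison.

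If uniformity is not available, I would read the statement with a constant $K$ in front of the right-hand side. That version suffices for \cref{Th:oracle_global}: combined with \cref{lemma:overfitOneBlock}, the constant is absorbed by the condition $\lambda_{n}/(\Delta_{n}(K_{n}+\log p))\to\infty$ in \cref{Eq:penaltyCondition2}, because \cref{Eq:DeltaQstar} then stays positive for every $A\in\Gamma_{1}$.
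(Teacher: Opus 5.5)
Your Steps 2 and 3 are correct. The obstacle you flag is not a technicality: it is exactly where the statement breaks, and it is also where the paper's proof goes wrong.

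The paper telescopes along $\overline{A}_{k}=A^{\circ}\cup\{j_{1},\dots,j_{k}\}$. It writes each increment via Frisch--Waugh--Lovell as $\mathbf{Y}^{\top}\breve{\mathbf{H}}_{j_{k}}^{*}\mathbf{Y}$, where $\breve{\mathbf{H}}_{j_{k}}^{*}$ projects onto $\mathrm{span}(\mathbf{M}_{\overline{A}_{k-1}}\mathbf{R}_{j_{k}}^{*})$. It then asserts $\mathrm{span}(\mathbf{M}_{\overline{A}_{k-1}}\mathbf{R}_{j_{k}}^{*})\subseteq\mathrm{span}(\mathbf{M}_{A^{\circ}}\mathbf{R}_{j_{k}}^{*})$ in order to bound each increment by $\Delta Q_{n,j_{k}}$. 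That inclusion is false. Since $\mathbf{M}_{\overline{A}_{k-1}}\mathbf{R}_{j_{k}}^{*}=\mathbf{M}_{\overline{A}_{k-1}}(\mathbf{M}_{A^{\circ}}\mathbf{R}_{j_{k}}^{*})$, its span is the image of that subspace under a further projection, which is in general a different subspace.

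The constant-one inequality fails as soon as two residualized redundant blocks are non-orthogonal. Take $K_{n}=1$, let $\mathbf{P}_{j_{1}}^{*}=r_{1}$ and $\mathbf{P}_{j_{2}}^{*}=r_{2}$ be unit vectors with $r_{1}^{\top}r_{2}=\rho\in(0,1)$, and pick $\mathbf{Y}$ so that $\widehat{\bm{\varepsilon}}=r_{1}-r_{2}$. Then $\Delta Q_{n,j_{1}}=\Delta Q_{n,j_{2}}=(1-\rho)^{2}$, while $\Delta Q_{n}(A)=\Vert r_{1}-r_{2}\Vert^{2}=2(1-\rho)>2(1-\rho)^{2}$. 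The paper's second telescoping increment equals $1-\rho^{2}>\Delta Q_{n,j_{2}}$. Here $\lambda_{\min}(\mathbf{G}_{S})=1-\rho$, so your Step 3 bound $\lambda_{\min}(\mathbf{G}_{S})^{-1}\sum_{j\in S}\Delta Q_{n,j}=2(1-\rho)$ holds with equality. Read as the deterministic linear-algebra fact that both you and the paper set out to prove, the lemma is false, and your factor $\lambda_{\min}(\mathbf{G}_{S})^{-1}$ cannot be dropped.

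Your fallback does not close the gap either. A constant-$K$ version needs $\lambda_{\min}(\mathbf{G}_{S})\geq\kappa$ uniformly over all $S\subseteq\{q+1,\dots,p\}$. That is impossible once $|S|K_{n}>n-qK_{n}$: the residualized columns then lie in the $(n-qK_{n})$-dimensional range of $\mathbf{M}_{A^{\circ}}$, so they are linearly dependent and $\mathbf{G}_{S}$ is singular. This is exactly the $pK_{n}>n$ regime that \cref{Th:oracle_global} targets. A fixed-dimension Gram comparison such as \cref{Eq:closenessGrams} gives nothing uniform over exponentially many $S$.

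What \cref{Th:oracle_global} actually uses is only \cref{Eq:overfit_SSR_improvement}, and that follows without any subadditivity if you bound $\widehat{\bm{\varepsilon}}^{\top}\mathbf{H}_{S}^{*}\widehat{\bm{\varepsilon}}$ directly. Run the Hanson--Wright step of \cref{lemma:overfitOneBlock}, under the same conditioning, with $\mathbf{H}_{j}^{*}$ replaced by $\mathbf{H}_{S}^{*}$ for $|S|=s$.
\begin{itemize}
\item The projection $\mathbf{H}_{S}^{*}$ has rank at most $sK_{n}$. Hence the trace is at most $\overline{K}sK_{n}\Delta_{n}$, the squared Frobenius norm is at most $\overline{K}^{2}sK_{n}\Delta_{n}^{2}$, and the operator norm is at most $\overline{K}\Delta_{n}$.
\item Taking $t=\overline{K}\Delta_{n}(s\sqrt{cK_{n}\log p}+cs\log p)$ gives a tail of $2p^{-kcs}$ for each $S$.
\item A union bound over the $\binom{p-q}{s}\leq p^{s}$ sets of each size, and over $s\geq1$, costs at most $\sum_{s\geq1}2p^{s(1-kc)}\to0$.
\item The remainder vectors $\Xi$ cost at most a constant times $\sum_{\Xi}\Vert\Xi\Vert^{2}=o_{p}(\Delta_{n}(K_{n}+\log p))$, uniformly in $S$, because $\Vert\mathbf{H}_{S}^{*}\mathbf{M}_{A^{\circ}}\Xi\Vert\leq\Vert\Xi\Vert$.
\end{itemize}
This yields $\Delta Q_{n}(A)\leq K(|A|-q)\Delta_{n}(K_{n}+\log p)$ for all $A\in\Gamma_{1}$ simultaneously, with probability approaching one. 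That is all that is needed downstream.

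Finally, your Step 1 justification runs the wrong way. The set $\{\bm{\gamma}:A(\bm{\gamma})=A\}$ allows the blocks outside $A$ to be nonzero just below $\tau_{n}$, so its minimal loss can be strictly smaller than $\mathbf{Y}^{\top}\mathbf{M}_{A}\mathbf{Y}$. Zeroing those blocks moves the loss in the wrong direction, and absorbing them into $A$ changes $|A|$. The paper's proof simply starts from $\Delta Q_{n}(A)=\mathbf{Y}^{\top}(\mathbf{M}_{A^{\circ}}-\mathbf{M}_{A})\mathbf{Y}$. You should adopt that identification explicitly as the working definition rather than claim to derive it.
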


\begin{proof}[Proof of \cref{lemma:extraSSR}]
We denote by $Q_{n}(A)=\min_{\bm{\gamma}: A(\bm{\gamma})=A}Q_{n}(\bm{\gamma})$, where $Q_{n}(A^{\circ})=Q_{n}(\widehat{\bm{\gamma}}^{\circ})$. Fix $A\in\Gamma_{1}$, and enumerate the redundant groups as $A\backslash A^{\circ}=\{j_{1},\dots,j_{|A|-q}\}$. Define the nested sequence of index sets 
\begin{equation}
\overline{A}_{0}=A^{\circ},\qquad
\overline{A}_{k}=A^{\circ}\cup\{j_{1},\dots,j_{k}\},\qquad\text{where }k=1,\dots,|A|-q.
\end{equation}
The total improvement from $A^{\circ}$ to $A$ can be written as a sum of incremental improvements:
\begin{equation}
\Delta Q_{n}(A)=\mathbf{Y}^{\top}(\mathbf{M}_{A^{\circ}}-\mathbf{M}_{A})\mathbf{Y}=\sum_{k=1}^{|A|-q}\mathbf{Y}^{\top}(\mathbf{M}_{\overline{A}_{k-1}}-\mathbf{M}_{\overline{A}_{k}})\mathbf{Y}.
\end{equation}
We define the residualized block $\breve{\mathbf{R}}_{j_{k}}^{*}=\mathbf{M}_{\overline{A}_{k-1}}\mathbf{R}_{j_{k}}^{*}\in\mathbb{R}^{n\times K_{n}}$, the orthogonal projection matrix $\breve{\mathbf{H}}_{j_{k}}^{*}=\breve{\mathbf{R}}_{j_{k}}^{*}((\breve{\mathbf{R}}_{j_{k}}^{*})^{\top}\breve{\mathbf{R}}_{j_{k}}^{*})^{-1}(\breve{\mathbf{R}}_{j_{k}}^{*})^{\top}\in\mathbb{R}^{n\times n}$. We follow the same steps as in \cref{Eq:FWL,Eq:FWL2}, by the Frisch-Waugh-Lovell theorem,
\begin{equation}
\label{Eq:FWL_A}
\mathbf{Y}^{\top}(\mathbf{M}_{\overline{A}_{k-1}}-\mathbf{M}_{\overline{A}_{k}})\mathbf{Y}=(\mathbf{M}_{\overline{A}_{k-1}}\mathbf{Y})^{\top}\breve{\mathbf{H}}_{j_{k}}^{*}(\mathbf{M}_{\overline{A}_{k-1}}\mathbf{Y})=\mathbf{Y}^{\top}\breve{\mathbf{H}}_{j_{k}}^{*}\mathbf{Y}.
\end{equation}
Since $A^{\circ}\subseteq\overline{A}_{k-1}$ for all $k=1,\dots,|A|-q$, $\text{span}(\breve{\mathbf{R}}_{j_{k}}^{*})=\text{span}(\mathbf{M}_{\overline{A}_{k-1}}\mathbf{R}_{j_{k}}^{*})\subseteq\text{span}(\mathbf{M}_{A^{\circ}}\mathbf{R}_{j_{k}}^{*})$, and thus the projector $\breve{\mathbf{H}}_{j_{k}}^{*}$ is onto a smaller subspace than $\mathbf{H}_{j_{k}}^{*}$, so that
\begin{equation}
\mathbf{Y}^{\top}\breve{\mathbf{H}}_{j_{k}}^{*}\mathbf{Y}\leq\mathbf{Y}^{\top}\mathbf{H}_{j_{k}}^{*}\mathbf{Y},\qquad\text{where}\quad
\mathbf{H}_{j_{k}}^{*}=(\mathbf{M}_{A^{\circ}}\mathbf{R}_{j_{k}}^{*})((\mathbf{M}_{A^{\circ}}\mathbf{R}_{j_{k}}^{*})^{\top}\mathbf{M}_{A^{\circ}}\mathbf{R}_{j_{k}}^{*})^{-1}\mathbf{M}_{A^{\circ}}\mathbf{R}_{j_{k}}^{*}.
\end{equation}
Therefore, by the definition of $\Delta Q_{n,j}$ in \cref{Eq:Delta_Q_j} and the same steps as in \cref{Eq:FWL_A},
\begin{equation}
\Delta Q_{n}(A)\leq\sum_{k=1}^{|A|-q}\mathbf{Y}^{\top}\mathbf{H}_{j_{k}}^{*}\mathbf{Y}\leq(|A|-q)\max_{q+1\leq j\leq p}\Delta Q_{n,j}.
\end{equation}
This completes the proof of \cref{lemma:extraSSR}. 

%...
%
%...
%
%we define the residualized block as $\widetilde{\mathbf{R}}_{j}^{*}=\mathbf{M}_{A^{\circ}}\mathbf{R}_{j}^{*}\in\mathbb{R}^{n\times K_{n}}$, the projection matrix $\mathbf{H}_{j}^{*}=\widetilde{\mathbf{R}}_{j}^{*}((\widetilde{\mathbf{R}}_{j}^{*})^{\top}\widetilde{\mathbf{R}}_{j}^{*})^{-1}(\widetilde{\mathbf{R}}_{j}^{*})^{\top}\in\mathbb{R}^{n\times n}$ onto span$(\widetilde{\mathbf{R}}_{j}^{*})$, and the corresponding residualizer $\mathbf{M}_{j}^{*}=\mathbf{I}_{n}-\mathbf{H}_{j}^{*}$. 
%
%
%\begin{equation}
%\Delta Q_{n}(A)=Q_{n}(\overline{A}_{0})-Q_{n}(\overline{A}_{m})=\sum_{k=1}^{m}(Q_{n}(\overline{A}_{k-1})-Q_{n}(\overline{A}_{k})). 
%\end{equation}
%We claim that for each $k=1,\dots,m$ and each $j\notin \overline{A}_{k-1}$,
%\begin{equation}
%\label{Eq:marginalDecrease}
%Q_{n}(S_{k-1})-Q_{n}(S_{k-1}\cup\{j\})\leq Q_{n}(S_{0})-Q_{n}(S_{0}\cup\{j\})=\Delta Q_{n,j}.
%\end{equation}
%Adding a given block $j$ can only improve the fit less when we start from a larger model. %To verify \cref{Eq:marginalDecrease} formally,  ...  \textcolor{purple}{(with projection matrices, to be finished)}
%
%Therefore, we have
%\begin{equation}
%\Delta Q_{n}(A)=\sum_{k=1}^{m}(Q_{n}(S_{k-1})-Q_{n}(S_{k}))\leq(|A|-q)\max_{q+1\leq j\leq p}\Delta Q_{n,j}.
%\end{equation}
%This completes the proof of \cref{lemma:extraSSR}. 
\end{proof}

By \cref{lemma:overfitOneBlock,lemma:extraSSR}, it holds for any $A\in\Gamma_{1}$ on the event $E_{n}$ in \cref{Eq:E_n} that
\begin{equation}
\label{Eq:overfit_SSR_improvement}
\Delta Q_{n}(A)\leq K(|A|-q)\Delta_{n}(K_{n}+\log p).
\end{equation}
By \cref{Eq:DeltaQstar,Eq:overfit_SSR_improvement}, on $E_{n}$,
\begin{equation}
\min_{\bm{\gamma}: A(\bm{\gamma})=A}Q_{n}^{*}(\bm{\gamma})-Q_{n}^{*}(\widehat{\bm{\gamma}}^{\circ})%\geq\frac{1}{2}K(|A|-q)\Delta_{n}(K_{n}+\log p)+\lambda_{n}(|A|-q)
\geq (|A|-q)\left(-\frac{K}{2}\Delta_{n}(K_{n}+\log p)+\lambda_{n}\right)>0,
\end{equation}
since $|A|-q\geq1$ and $\lambda_{n}\gg \Delta_{n}(K_{n}+\log p)$ as required. Consequently, by \cref{lemma:overfitOneBlock},
\begin{equation}
\label{Eq:Gamma1}
\mathbb{P}\left(\exists A\in\Gamma_{1}:\min_{\bm{\gamma}: A(\bm{\gamma})=A}Q_{n}^{*}(\bm{\gamma})-Q_{n}^{*}(\widehat{\bm{\gamma}}^{\circ})\leq 0\right)\leq\mathbb{P}(E_{n}^{\complement})\to0.
\end{equation}

\subsubsection*{Underfitting models}

For any underfitting models with $A\in\Gamma_{2}$, the difference between unpenalized losses in \cref{Eq:DeltaQstar} satisfies
\begin{equation}
Q_{n}(\widehat{\bm{\gamma}}^{\circ})-\min_{\bm{\gamma}: A(\bm{\gamma})=A}Q_{n}(\bm{\gamma})=\mathbf{Y}^{\top}(\mathbf{M}_{A^{\circ}}-\mathbf{M}_{A})\mathbf{Y},
\end{equation}
where the right-hand side corresponds to the decrease of OLS residual sums of squares from adding missed relevant regressors in $A''=A^{\circ}\backslash A\neq\varnothing$ to $A$. 

Consider the decomposition of the design matrix $\mathbf{R}=(\mathbf{R}_{A}^{\top},\mathbf{R}_{A''}^{\top},\mathbf{R}_{\{1,\dots,p\}\backslash A^{\circ}}^{\top})^{\top}$ and the oracle estimator $\widehat{\bm{\gamma}}^{\circ}=((\widehat{\bm{\gamma}}_{A}^{\circ})^{\top},(\widehat{\bm{\gamma}}_{A''}^{\circ})^{\top},0^{\top})^{\top}$, and thus
\begin{equation}
\mathbf{M}_{A}\mathbf{Y}=\mathbf{M}_{A}(\mathbf{R}_{A}\widehat{\bm{\gamma}}_{A}^{\circ}+\mathbf{R}_{A''}\widehat{\bm{\gamma}}_{A''}^{\circ}+\widehat{\bm{\varepsilon}})=\mathbf{M}_{A}\mathbf{R}_{A''}\widehat{\bm{\gamma}}_{A''}^{\circ}+\widehat{\bm{\varepsilon}},
\end{equation}
because $\mathbf{M}_{A}\mathbf{R}_{A}=0$ and $\mathbf{M}_{A}\widehat{\bm{\varepsilon}}=\widehat{\bm{\varepsilon}}$ by definition. With similar applications of the Frisch-Waugh-Lovell theorem as in \cref{Eq:FWL}, it holds that
\begin{align}
\mathbf{Y}^{\top}(\mathbf{M}_{A}-\mathbf{M}_{A^{\circ}})\mathbf{Y}&=\mathbf{Y}^{\top}(\mathbf{M}_{A}-\mathbf{M}_{A\cup A''})\mathbf{Y} \notag\\
%&=(\mathbf{M}_{A}\mathbf{Y})^{\top}\mathbf{M}_{A}\mathbf{Y}\\
%&\qquad-(\mathbf{M}_{A}\mathbf{Y})^{\top}(\mathbf{I}_{n}-\mathbf{M}_{A}\mathbf{R}_{A''}((\mathbf{M}_{A}\mathbf{R}_{A''})^{\top}\mathbf{M}_{A}\mathbf{R}_{A''})^{-1}(\mathbf{M}_{A}\mathbf{R}_{A''})^{\top})\mathbf{M}_{A}\mathbf{Y}\\
&=(\mathbf{M}_{A}\mathbf{Y})^{\top}\mathbf{M}_{A}\mathbf{R}_{A''}((\mathbf{M}_{A}\mathbf{R}_{A''})^{\top}\mathbf{M}_{A}\mathbf{R}_{A''})^{-1}(\mathbf{M}_{A}\mathbf{R}_{A''})^{\top}\mathbf{M}_{A}\mathbf{Y} \notag\\
&=(\widehat{\bm{\gamma}}_{A''}^{\circ})^{\top}\mathbf{R}_{A''}^{\top}\mathbf{M}_{A}\mathbf{R}_{A''}\widehat{\bm{\gamma}}_{A''}^{\circ}+2(\widehat{\bm{\gamma}}_{A''}^{\circ})^{\top}\mathbf{R}_{A''}^{\top}\widehat{\bm{\varepsilon}}+\widehat{\bm{\varepsilon}}^{\top}\mathbf{R}_{A''}(\mathbf{R}_{A''}^{\top}\mathbf{M}_{A}\mathbf{R}_{A''})^{-1}\mathbf{R}_{A''}^{\top}\widehat{\bm{\varepsilon}} \notag\\
&\geq (\widehat{\bm{\gamma}}_{A''}^{\circ})^{\top}\mathbf{R}_{A''}^{\top}\mathbf{M}_{A}\mathbf{R}_{A''}\widehat{\bm{\gamma}}_{A''}^{\circ}+2(\widehat{\bm{\gamma}}_{A''}^{\circ})^{\top}\mathbf{R}_{A''}^{\top}\widehat{\bm{\varepsilon}},
\end{align}
where the nonnegative quadratic term of $\widehat{\bm{\varepsilon}}$ is dropped for a lower bound. Since $\widehat{\bm{\varepsilon}}$ is orthogonal to the span of $\mathbf{R}_{A^{\circ}}$, we have $\mathbf{R}_{A''}^{\top}\widehat{\bm{\varepsilon}}=\bm{0}$ and thus the cross term vanishes. Therefore, it holds that
\begin{equation}
\label{Eq:underfit_SSR}
\mathbf{Y}^{\top}(\mathbf{M}_{A}-\mathbf{M}_{A^{\circ}})\mathbf{Y}\geq(\widehat{\bm{\gamma}}_{A''}^{\circ})^{\top}\mathbf{R}_{A''}^{\top}\mathbf{M}_{A}\mathbf{R}_{A''}\widehat{\bm{\gamma}}_{A''}^{\circ}\geq K\sum_{j\in A''}(\widehat{\gamma}_{j}^{\circ})^{\top}\mathbf{W}_{j}\widehat{\gamma}_{j}^{\circ}\geq K(q-|A|)\min_{j\in A''}(\widehat{\gamma}_{j}^{\circ})^{\top}\mathbf{W}_{j}\widehat{\gamma}_{j}^{\circ}. 
\end{equation}
By \cref{Eq:KKT_inequality_active,Eq:KKT_inequality_active2},  $\sqrt{(\widehat{\gamma}_{j}^{\circ})^{\top}\mathbf{W}_{j}\widehat{\gamma}_{j}^{\circ}}\overset{\mathbb{P}}{\longrightarrow}\Vert\beta_{j}\Vert_{L^{2}}>0$ for all $j\in A^{\circ}$.
%\begin{equation}
%(\widehat{\gamma}_{j}^{\circ})^{\top}\mathbf{W}_{j}\widehat{\gamma}_{j}^{\circ}\overset{\mathbb{P}}{\longrightarrow}\int_{0}^{T}(\widehat{\beta}_{j,s}^{\circ})^{\top}c_{s}\widehat{\beta}_{j,s}^{\circ}ds\asymp\Vert\widehat{\beta}_{j}^{\circ}\Vert_{L^{2}}^{2}>0.
%\end{equation}
Therefore, for some $b_{\text{min}}=K\min_{j\in A^{\circ}}\Vert\beta_{j}\Vert_{L^{2}}^{2}>0$, it holds that
\begin{equation}
\label{Eq:underfit_SSR_unionBound}
\mathbb{P}\left(\min_{j\in A^{\circ}}(\widehat{\gamma}_{j}^{\circ})^{\top}\mathbf{W}_{j}\widehat{\gamma}_{j}^{\circ}\geq b_{\text{min}}\right)\to 1.
\end{equation}
By \cref{Eq:DeltaQstar,Eq:underfit_SSR}, 
\begin{equation}
\min_{\bm{\gamma}: A(\bm{\gamma})=A}Q_{n}^{*}(\bm{\gamma})-Q_{n}^{*}(\widehat{\bm{\gamma}}^{\circ})\geq \frac{K}{2}|A''|\min_{j\in A''}(\widehat{\gamma}_{j}^{\circ})^{\top}\mathbf{W}_{j}\widehat{\gamma}_{j}^{\circ}-\lambda_{n}|A''|.  
\end{equation}
With the union bound in \cref{Eq:underfit_SSR_unionBound} and $\lambda_{n}\to0$, it holds that
\begin{equation}
\label{Eq:Gamma2}
\mathbb{P}\left(\exists A\in\Gamma_{2}:\min_{\bm{\gamma}: A(\bm{\gamma})=A}Q_{n}^{*}(\bm{\gamma})-Q_{n}^{*}(\widehat{\bm{\gamma}}^{\circ})\leq 0\right)\leq1-\mathbb{P}\left(\min_{j\in A^{\circ}}(\widehat{\gamma}_{j}^{\circ})^{\top}\mathbf{W}_{j}\widehat{\gamma}_{j}^{\circ}\geq b_{\text{min}}\right)\to0.
\end{equation}

\subsubsection*{Misspecified models}

For any misspecified models with $A\in\Gamma_{3}$, we denote by $A'=A\cap A^{\circ}$ the index set of included relevant regressors, by $A''=A^{\circ}\backslash A\neq\varnothing$ the index set of missed relevant regressors, and by $A'''=A\backslash A^{\circ}\neq\varnothing$ the index set of spurious regressors, and thus $A^{\circ}=A'\cup A''$ and $A=A'\cup A'''$. It holds that
\begin{equation}
\begin{split}
Q_{n}(\widehat{\bm{\gamma}}^{\circ})-\min_{\bm{\gamma}: A(\bm{\gamma})=A}Q_{n}(\bm{\gamma})&=\mathbf{Y}^{\top}(\mathbf{M}_{A^{\circ}}-\mathbf{M}_{A})\mathbf{Y}\\
&=\mathbf{Y}^{\top}(\mathbf{M}_{A'\cup A''}-\mathbf{M}_{A'})\mathbf{Y}-\mathbf{Y}^{\top}(\mathbf{M}_{A'\cup A'''}-\mathbf{M}_{A'})\mathbf{Y}\\
&= \mathbf{Y}^{\top}(\mathbf{M}_{A'\cup A''}-\mathbf{M}_{A'})\mathbf{Y}+\mathbf{Y}^{\top}(\mathbf{M}_{A'}-\mathbf{M}_{A'\cup A'''})\mathbf{Y},
\end{split}
\end{equation}
where the first term is the decrease of OLS residual sums of squares from adding missed relevant regressors, and the second term is the increase from excluding spurious regressors. For the second term, we have
\begin{align}
\mathbf{Y}^{\top}(\mathbf{M}_{A'}-\mathbf{M}_{A'\cup A'''})\mathbf{Y}&=\mathbf{Y}^{\top}(\mathbf{M}_{A'}-\mathbf{M}_{A'\cup A''}+\mathbf{M}_{A'\cup A''}-\mathbf{M}_{A'\cup A''\cup A'''}+\mathbf{M}_{A'\cup A''\cup A'''}-\mathbf{M}_{A'\cup A'''})\mathbf{Y} \notag\\
&\leq \mathbf{Y}^{\top}(\mathbf{M}_{A'}-\mathbf{M}_{A'\cup A''})\mathbf{Y}+\mathbf{Y}^{\top}(\mathbf{M}_{A'\cup A''}-\mathbf{M}_{A'\cup A''\cup A'''})\mathbf{Y},
\end{align}
since $\mathbf{Y}^{\top}(\mathbf{M}_{A'\cup A''\cup A'''}-\mathbf{M}_{A'\cup A'''})\mathbf{Y}\leq 0$, and therefore
\begin{equation}
Q_{n}(\widehat{\bm{\gamma}}^{\circ})-\min_{\bm{\gamma}: A(\bm{\gamma})=A}Q_{n}(\bm{\gamma})\leq\mathbf{Y}^{\top}(\mathbf{M}_{A'\cup A''}-\mathbf{M}_{A'\cup A''\cup A'''})\mathbf{Y}=\mathbf{Y}^{\top}(\mathbf{M}_{A^{\circ}}-\mathbf{M}_{A^{\circ}\cup A'''})\mathbf{Y}. 
\end{equation}
By \cref{lemma:overfitOneBlock,lemma:extraSSR}, it holds that, on the event $E_{n}$,
\begin{equation}
Q_{n}(\widehat{\bm{\gamma}}^{\circ})-\min_{\bm{\gamma}: A(\bm{\gamma})=A}Q_{n}(\bm{\gamma})%\leq |A'''|\max_{q+1\leq j\leq p}\widehat{\bm{\varepsilon}}^{\top}\mathbf{H}_{j}^{*}\widehat{\bm{\varepsilon}}\\
\leq K|A'''|\Delta_{n}(K_{n}+\log p).
\end{equation}
Bring it back to \cref{Eq:DeltaQstar}, on the event $E_{n}$,
\begin{equation}
\begin{split}
\min_{\bm{\gamma}: A(\bm{\gamma})=A}Q_{n}^{*}(\bm{\gamma})-Q_{n}^{*}(\widehat{\bm{\gamma}}^{\circ})%\geq\frac{1}{2}K(|A|-q)\Delta_{n}(K_{n}+\log p)+\lambda_{n}(|A|-q)
&\geq -\frac{K}{2}|A'''|\Delta_{n}(K_{n}+\log p)+\lambda_{n}(|A'| + |A'''| -q)\\
&\geq |A'''|\left(-\frac{K}{2}\Delta_{n}(K_{n}+\log p)+\lambda_{n}\right)-\lambda_{n}|A''|> 0,
\end{split}
\end{equation}
since $|A'''|\geq1$, $\lambda_{n}\gg \Delta_{n}(K_{n}+\log p)$, and $\lambda_{n}\to0$. Consequently, by \cref{lemma:overfitOneBlock},
\begin{equation}
\label{Eq:Gamma3}
\mathbb{P}\left(\exists A\in\Gamma_{3}:\min_{\bm{\gamma}: A(\bm{\gamma})=A}Q_{n}^{*}(\bm{\gamma})-Q_{n}^{*}(\widehat{\bm{\gamma}}^{\circ})\leq0\right)\leq\mathbb{P}(E_{n}^{\complement})\to0.
\end{equation}
In conclusion, \cref{Eq:Gamma1,Eq:Gamma2,Eq:Gamma3} are sufficient for \cref{Eq:Gamma123}. This completes the proof.

\section{Supplementary Materials}
\label{AP:Supp}

\subsection{Supplementary Simulation Results}
\label{AP:Supp-simulation}

In addition to the simulation results in \cref{Sec:simulation} (\cref{Table:simulation-IBeta,Table:simulation-selection}), this appendix reports supplementary Monte Carlo evidence on the sensitivity of the proposed estimation procedure. %We consider (i) a less aggressive truncation threshold, (ii) discontinuous coefficient paths, and (iii) imperfect multicollinearity among factors. 
Unless stated otherwise, the data-generating process and tuning choices follow the baseline design in \cref{Sec:simulation_design}, and results are reported in \cref{Table:simulation-IBeta-threshold4}--\ref{Table:simulation-selection-multicollinearity}.

\subsubsection*{Alternative truncation threshold}

\cref{Table:simulation-IBeta-threshold4,Table:simulation-selection-threshold4} report the model estimation and selection results, respectively, with a less aggressive truncation threshold $4\Delta_{n}^{0.47}\sqrt{\text{MedRV}_{T}}$ for the increments of $Y$ and each component of $X$, with the same simulation design used in \cref{Sec:simulation}.

\subsubsection*{Imperfect multicollinearity}

%We introduce imperfect multicollinearity by letting the correlation matrix of the redundant factors to be random: the off-diagonal entries of their correlation block are drawn uniformly from $[-0.80,0.80]$. In contrast, the first three relevant factors follow the same low-correlation Toeplitz structure used in \cref{Sec:simulation_design}, while the entire correlation matrix is adjusted to be positive definite as needed. This design keeps the relevant factors relatively independent—and only weakly correlated with the redundant factors—such that the relevant-factor dependence remains comparable to the baseline design in the main text, while it allows strong collinearity among the redundant factors. 

%We introduce imperfect multicollinearity by drawing the pairwise correlations among redundant factors, and their correlations with the relevant factors, independently from $U[-0.8,0.8]$. The correlations within the set of relevant factors follow the same low-correlation Toeplitz structure used in \cref{Sec:simulation_design}, and the entire correlation matrix is adjusted to be positive definite. Therefore, the relevant-factor dependence remains comparable to the baseline design in the main text, while it allows strong dependence among the redundant factors and nontrivial correlations between relevant and redundant factors. The model estimation and selection results are reported in Tables B.7 and B.8, respectively. 

To introduce imperfect multicollinearity while preserving the baseline dependence among the relevant factors, we construct the $p\times p$ correlation matrix in block form. We fix the relevant block to have the same Toeplitz structure as in \cref{Sec:simulation_design}, i.e., 
\begin{equation}
\bm{\rho}_{11}=\left(0.15^{|i-j|}\right)_{1\le i,j\le q}.
\end{equation}
We then generate the redundant-factor block $\bm{\rho}_{22}$ from a simple latent-factor construction, which is positive definite by construction and can exhibit relatively strong within-block dependence. Specifically, we draw a loading matrix $\textbf{L}\in\mathbb{R}^{(p-q)\times k}$, where $k$ is the latent-factor dimension, and set
\begin{equation}
\bm{\rho}_{22}=\textbf{L}\textbf{L}^{\top}+\mathrm{diag}(1-\Vert \textbf{L}_{i\cdot}\Vert^{2}),
\end{equation}
where $\Vert \textbf{L}_{i\cdot}\Vert$ is the Euclidean norm of the $i$-th row of $\textbf{L}$. Next, we draw a candidate cross-correlation block $\bm{\rho}_{12}^{*}\in\mathbb{R}^{q\times(p-q)}$ with entries i.i.d.~$U[-1,1]$. Since an arbitrary choice of $\bm{\rho}_{12}$ may violate positive definiteness of the full matrix, we scale the cross-block by a single factor and set $\bm{\rho}_{12}=\kappa \bm{\rho}_{12}^{*}$, where $\kappa\in(0,1]$ is chosen as large as possible subject to the Schur-complement condition, i.e.,
\begin{equation}
\bm{\rho}_{22}-\kappa^{2}(\bm{\rho}_{12}^{*})^{\top}\bm{\rho}_{11}^{-1}\bm{\rho}_{12}^{*}
\end{equation}
is positive definite. The final correlation matrix
\begin{equation}
\bm{\rho}=\begin{pmatrix}\bm{\rho}_{11} & \bm{\rho}_{12}\\ \bm{\rho}_{12}^{\top} & \bm{\rho}_{22}\end{pmatrix}
\end{equation}
is symmetric with unit diagonal and positive definite. Hence, the relevant-factor dependence remains comparable to the baseline design in the main text, while it allows stronger dependence among the redundant factors and nontrivial correlations between relevant and redundant factors. \cref{Fig:simulation_cross_corr} plots the histogram of all off-diagonal entries of $\bm{\rho}$ for the case $p=500$. The model estimation and selection results are reported in \cref{Table:simulation-IBeta-multicollinearity,Table:simulation-selection-multicollinearity}, respectively.

\begin{figure}[!ht]
\centering
\addtolength{\leftskip} {-2cm}
\addtolength{\rightskip}{-2cm}
\includegraphics[width=0.85\textwidth]{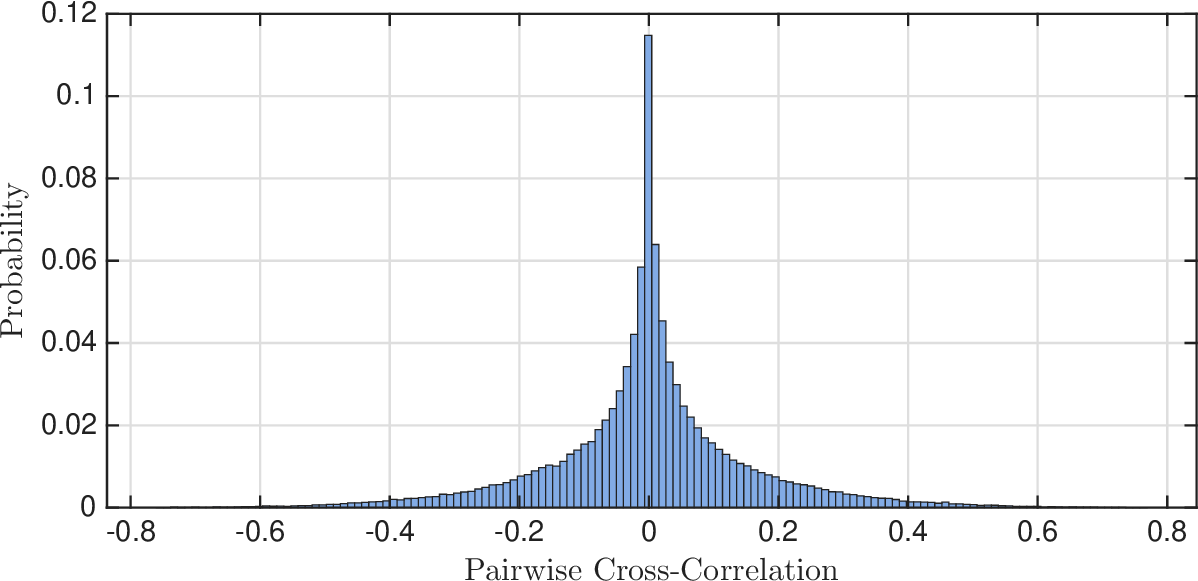}
\caption{Histogram of all off-diagonal entries of the correlation matrix $\bm{\rho}$ for the case $p=500$. }
\label{Fig:simulation_cross_corr}
\end{figure}

\newpage

\begin{table}%[h!]
	\centering
	\onehalfspacing
	\addtolength{\leftskip} {-2cm}
	\addtolength{\rightskip}{-2cm}
	\setlength{\tabcolsep}{6pt}
	\begin{threeparttable}
		\caption{Simulation results for model estimation with alternative truncation threshold}
		\scriptsize
		\begin{tabular}{lllrrrrrrrrrrrr}
			\hline\hline
			\multicolumn{15}{l}{Panel A: $p=3$}\\	
			& & & & \multicolumn{3}{c}{$\widehat{I\beta}_{1,T}$} & & \multicolumn{3}{c}{$\widehat{I\beta}_{2,T}$} & & \multicolumn{3}{c}{$\widehat{I\beta}_{3,T}$}\\
			\cline{5-7}\cline{9-11}\cline{13-15}
			Estimator & & Tuning & & Bias & Stdev & RMSE & & Bias & Stdev & RMSE & & Bias & Stdev & RMSE\\
			\hline
			Spline      & &                      & &  0.023 & 2.793 & 2.792 & & -0.024 & 2.537 & 2.536 & & -0.022 & 2.331 & 2.330 \\
			Spline TLP  & & $\alpha_{\tau}=0.05$ & &  0.023 & 2.793 & 2.792 & & -0.024 & 2.537 & 2.536 & & -0.022 & 2.331 & 2.330 \\
			Spline TLP  & & $\alpha_{\tau}=0.01$ & &  0.023 & 2.793 & 2.792 & & -0.025 & 2.537 & 2.536 & & -0.021 & 2.331 & 2.330 \\
			AKX         & & $k_{n}=78$           & &  0.000 & 2.871 & 2.869 & &  0.000 & 2.632 & 2.631 & & -0.010 & 2.451 & 2.450 \\
			AKX         & & $k_{n}=91$           & & -0.018 & 2.887 & 2.885 & & -0.021 & 2.621 & 2.619 & & -0.027 & 2.475 & 2.474 \\
			AKX         & & $k_{n}=117$          & & -0.048 & 2.894 & 2.893 & & -0.003 & 2.640 & 2.638 & &  0.003 & 2.495 & 2.494 \\
			\hline
			\multicolumn{15}{l}{Panel B: $p=10$}\\	
			& & & & \multicolumn{3}{c}{$\widehat{I\beta}_{1,T}$} & & \multicolumn{3}{c}{$\widehat{I\beta}_{2,T}$} & & \multicolumn{3}{c}{$\widehat{I\beta}_{3,T}$}\\
			\cline{5-7}\cline{9-11}\cline{13-15}
			Estimator & & Tuning & & Bias & Stdev & RMSE & & Bias & Stdev & RMSE & & Bias & Stdev & RMSE\\
			\hline
			Spline      & &                      & & -0.144 & 3.032 & 3.034 & & -0.026 & 2.806 & 2.805 & & -0.095 & 2.507 & 2.508 \\
			Spline TLP  & & $\alpha_{\tau}=0.05$ & & -0.152 & 3.004 & 3.006 & & -0.024 & 2.776 & 2.774 & & -0.085 & 2.450 & 2.451 \\
			Spline TLP  & & $\alpha_{\tau}=0.01$ & & -0.152 & 3.004 & 3.006 & & -0.024 & 2.775 & 2.774 & & -0.085 & 2.451 & 2.451 \\
			AKX         & & $k_{n}=78$           & & -0.122 & 3.325 & 3.326 & & -0.003 & 3.022 & 3.020 & & -0.165 & 2.771 & 2.775 \\
			AKX         & & $k_{n}=91$           & & -0.187 & 3.221 & 3.225 & &  0.018 & 3.102 & 3.101 & & -0.094 & 2.663 & 2.663 \\
			AKX         & & $k_{n}=117$          & & -0.123 & 3.233 & 3.234 & & -0.004 & 3.088 & 3.086 & & -0.123 & 2.688 & 2.690 \\
			\hline
			\multicolumn{15}{l}{Panel C: $p=50$}\\	
			& & & & \multicolumn{3}{c}{$\widehat{I\beta}_{1,T}$} & & \multicolumn{3}{c}{$\widehat{I\beta}_{2,T}$} & & \multicolumn{3}{c}{$\widehat{I\beta}_{3,T}$}\\
			\cline{5-7}\cline{9-11}\cline{13-15}
			Estimator & & Tuning & & Bias & Stdev & RMSE & & Bias & Stdev & RMSE & & Bias & Stdev & RMSE\\
			\hline
			Spline      & &                      & & -0.184 & 2.898 & 2.902 & & -0.019 & 3.158 & 3.156 & & -0.070 & 2.941 & 2.940 \\
			Spline TLP  & & $\alpha_{\tau}=0.05$ & & -0.199 & 2.667 & 2.673 & &  0.014 & 3.066 & 3.065 & & -0.182 & 2.993 & 2.997 \\
			Spline TLP  & & $\alpha_{\tau}=0.01$ & & -0.201 & 2.669 & 2.675 & & -0.004 & 3.053 & 3.051 & & -0.092 & 2.647 & 2.647 \\
			AKX         & & $k_{n}=78$           & & -0.151 & 5.135 & 5.135 & &  0.009 & 5.095 & 5.092 & & -0.314 & 4.887 & 4.895 \\
			AKX         & & $k_{n}=91$           & & -0.046 & 4.671 & 4.669 & & -0.030 & 4.691 & 4.689 & & -0.263 & 4.235 & 4.241 \\
			AKX         & & $k_{n}=117$          & & -0.127 & 3.894 & 3.895 & & -0.042 & 4.140 & 4.138 & & -0.124 & 3.866 & 3.866 \\
			\hline
			\multicolumn{15}{l}{Panel D: $p=100$}\\	
			& & & & \multicolumn{3}{c}{$\widehat{I\beta}_{1,T}$} & & \multicolumn{3}{c}{$\widehat{I\beta}_{2,T}$} & & \multicolumn{3}{c}{$\widehat{I\beta}_{3,T}$}\\
			\cline{5-7}\cline{9-11}\cline{13-15}
			Estimator & & Tuning & & Bias & Stdev & RMSE & & Bias & Stdev & RMSE & & Bias & Stdev & RMSE\\
			\hline
			Spline      & &                      & & -0.108 & 3.351 & 3.351 & &  0.261 & 2.964 & 2.974 & & -0.241 & 3.487 & 3.494 \\
			Spline TLP  & & $\alpha_{\tau}=0.05$ & & -0.082 & 2.933 & 2.932 & &  0.280 & 2.601 & 2.615 & & -0.342 & 3.380 & 3.396 \\
			Spline TLP  & & $\alpha_{\tau}=0.01$ & & -0.085 & 2.928 & 2.928 & &  0.270 & 2.597 & 2.610 & & -0.282 & 3.348 & 3.358 \\
			AKX         & & $k_{n}=78$           & &   --   &  --   &  --   & &   --   &  --   &  --   & &   --   &  --   &  --   \\
			AKX         & & $k_{n}=91$           & &   --   &  --   &  --   & &   --   &  --   &  --   & &   --   &  --   &  --   \\
			AKX         & & $k_{n}=117$          & & -0.112 & 8.294 & 8.291 & &  0.530 & 7.133 & 7.149 & & -0.346 & 9.036 & 9.038 \\
			\hline
			\multicolumn{15}{l}{Panel E: $p=500$}\\	
			& & & & \multicolumn{3}{c}{$\widehat{I\beta}_{1,T}$} & & \multicolumn{3}{c}{$\widehat{I\beta}_{2,T}$} & & \multicolumn{3}{c}{$\widehat{I\beta}_{3,T}$}\\
			\cline{5-7}\cline{9-11}\cline{13-15}
			Estimator & & Tuning & & Bias & Stdev & RMSE & & Bias & Stdev & RMSE & & Bias & Stdev & RMSE\\
			\hline
			Spline      & &                      & & -45.012 & 1.572 & 45.039 & & 34.732 & 1.564 & 34.768 & & -20.000 & 1.538 & 20.059 \\
			Spline TLP  & & $\alpha_{\tau}=0.05$ & &  -0.065 & 2.545 &  2.544 & &  0.105 & 2.626 &  2.626 & &  -0.058 & 2.555 &  2.555 \\
			Spline TLP  & & $\alpha_{\tau}=0.01$ & &  -0.065 & 2.543 &  2.542 & &  0.103 & 2.625 &  2.626 & &  -0.041 & 2.470 &  2.469 \\
			AKX         & & $k_{n}=78$           & &   --   &  --   &  --   & &   --   &  --   &  --   & &   --   &  --   &  --   \\
			AKX         & & $k_{n}=91$           & &   --   &  --   &  --   & &   --   &  --   &  --   & &   --   &  --   &  --   \\
			AKX         & & $k_{n}=117$          & &   --   &  --   &  --   & &   --   &  --   &  --   & &   --   &  --   &  --   \\
			\hline\hline
		\end{tabular} 
		\label{Table:simulation-IBeta-threshold4}
		All these quantities are multiplied by 100. The cutoff level in the TLP function is $\tau_{n}=\alpha_{\tau}\sqrt{\text{MedRV}_{T}}$ for each simulated path of $Y$. The number of B-spline basis functions $K_{n}$ and the effective penalty level $\lambda_{n}/\tau_{n}$ are selected via 5-fold cross-validation. AKX stands for the estimator of \citet{ait2020high}. For both the spline-based and AKX estimators, the truncation thresholds for the increments of $Y$ and each component of $X$ are given by $4\Delta_{n}^{0.47}\sqrt{\text{MedRV}_{T}}$. All results are based on 1,000 Monte Carlo replications.	
	\end{threeparttable}
\end{table}

\begin{table}[h!]
	\centering
	\onehalfspacing
	\addtolength{\leftskip} {-2cm}
	\addtolength{\rightskip}{-2cm}
	\setlength{\tabcolsep}{10pt}
	\begin{threeparttable}
			\caption{Simulation results for model selection with alternative truncation threshold}
			\scriptsize 
			\begin{tabularx}{0.9\textwidth}{Xc *{3}{Y} c *{3}{Y}}
					\hline\hline
					& & \multicolumn{3}{c}{$\alpha_{\tau}=0.05$} & & \multicolumn{3}{c}{$\alpha_{\tau}=0.01$}\\
					\cline{3-5} \cline{7-9}
					& & Relevant & Irrelevant & Correct & & Relevant & Irrelevant & Correct\\
					\hline
					$p=3$   & & 1.000 & --    & 1.000 & & 1.000 & --    & 1.000 \\
					$p=10$  & & 1.000 & 0.041 & 0.745 & & 1.000 & 0.000 & 1.000\\
					$p=50$  & & 1.000 & 0.011 & 0.604  & & 1.000 & 0.000 & 1.000\\
					$p=100$ & & 1.000 & 0.038 & 0.019 & & 1.000 & 0.000 & 1.000\\
					$p=500$ & & 1.000 & 0.010 & 0.014 & & 1.000 & 0.000 & 0.788\\
					\hline\hline
				\end{tabularx} 
			\label{Table:simulation-selection-threshold4}
			Average selection frequencies for the relevant and irrelevant factors, and frequencies of correct model specification. The cutoff level in the TLP function is $\tau_{n}=\alpha_{\tau}\sqrt{\text{MedRV}_{T}}$ for each simulated path of $Y$. The number of B-spline basis functions $K_{n}$ and the effective penalty level $\lambda_{n}/\tau_{n}$ are selected via 5-fold cross-validation. The truncation thresholds for the increments of $Y$ and each component of $X$ are given by $4\Delta_{n}^{0.47}\sqrt{\text{MedRV}_{T}}$.  All results are based on 1,000 Monte Carlo replications.
		\end{threeparttable}
\end{table}

\begin{table}%[h!]
	\centering
	\onehalfspacing
	\addtolength{\leftskip} {-2cm}
	\addtolength{\rightskip}{-2cm}
	\setlength{\tabcolsep}{5.5pt}
	\begin{threeparttable}
		\caption{Simulation results for model estimation with imperfect multicollinearity}
		\scriptsize
		\begin{tabular}{lllrrrrrrrrrrrr}
			\hline\hline
			\multicolumn{15}{l}{Panel A: $p=3$}\\	
			& & & & \multicolumn{3}{c}{$\widehat{I\beta}_{1,T}$} & & \multicolumn{3}{c}{$\widehat{I\beta}_{2,T}$} & & \multicolumn{3}{c}{$\widehat{I\beta}_{3,T}$}\\
			\cline{5-7}\cline{9-11}\cline{13-15}
			Estimator & & Tuning & & Bias & Stdev & RMSE & & Bias & Stdev & RMSE & & Bias & Stdev & RMSE\\
			\hline
			Spline      & &                      & & -0.249 &  2.829 &  2.839 & &  0.161 &  2.527 &  2.531 & & -0.112 &  2.345 &  2.347 \\
			Spline TLP  & & $\alpha_{\tau}=0.05$ & & -0.250 &  2.829 &  2.839 & &  0.162 &  2.527 &  2.531 & & -0.112 &  2.345 &  2.347 \\
			Spline TLP  & & $\alpha_{\tau}=0.01$ & & -0.250 &  2.829 &  2.839 & &  0.162 &  2.527 &  2.531 & & -0.112 &  2.345 &  2.347 \\
			AKX         & & $k_{n}=78$           & & -0.257 &  2.900 &  2.910 & &  0.170 &  2.622 &  2.626 & & -0.103 &  2.447 &  2.448 \\
			AKX         & & $k_{n}=91$           & & -0.279 &  2.920 &  2.931 & &  0.165 &  2.619 &  2.623 & & -0.127 &  2.486 &  2.488 \\
			AKX         & & $k_{n}=117$          & & -0.314 &  2.929 &  2.945 & &  0.177 &  2.631 &  2.636 & & -0.091 &  2.497 &  2.498 \\
			\hline
			\multicolumn{15}{l}{Panel B: $p=10$}\\	
			& & & & \multicolumn{3}{c}{$\widehat{I\beta}_{1,T}$} & & \multicolumn{3}{c}{$\widehat{I\beta}_{2,T}$} & & \multicolumn{3}{c}{$\widehat{I\beta}_{3,T}$}\\
			\cline{5-7}\cline{9-11}\cline{13-15}
			Estimator & & Tuning & & Bias & Stdev & RMSE & & Bias & Stdev & RMSE & & Bias & Stdev & RMSE\\
			\hline
			Spline      & &                      & & -0.337 &  5.401 &  5.409 & &  0.288 &  9.212 &  9.212 & & -0.165 & 52.775 & 52.748 \\
			Spline TLP  & & $\alpha_{\tau}=0.05$ & & -0.302 &  3.033 &  3.046 & &  0.257 &  2.889 &  2.899 & & -0.081 &  2.552 &  2.552 \\
			Spline TLP  & & $\alpha_{\tau}=0.01$ & & -0.302 &  3.032 &  3.046 & &  0.258 &  2.889 &  2.899 & & -0.089 &  2.529 &  2.530 \\
			AKX         & & $k_{n}=78$           & &  0.038 & 11.424 & 11.418 & & -0.439 & 22.124 & 22.117 & &  4.002 &129.714 &129.711 \\
			AKX         & & $k_{n}=91$           & & -0.006 & 10.902 & 10.897 & & -0.393 & 20.901 & 20.895 & &  3.525 &122.475 &122.465 \\
			AKX         & & $k_{n}=117$          & & -0.412 & 10.325 & 10.328 & &  0.405 & 19.823 & 19.817 & & -0.967 &115.436 &115.382 \\
			\hline
			\multicolumn{15}{l}{Panel C: $p=50$}\\	
			& & & & \multicolumn{3}{c}{$\widehat{I\beta}_{1,T}$} & & \multicolumn{3}{c}{$\widehat{I\beta}_{2,T}$} & & \multicolumn{3}{c}{$\widehat{I\beta}_{3,T}$}\\
			\cline{5-7}\cline{9-11}\cline{13-15}
			Estimator & & Tuning & & Bias & Stdev & RMSE & & Bias & Stdev & RMSE & & Bias & Stdev & RMSE\\
			\hline
			Spline      & &                      & & -1.297 &  7.651 &  7.756 & &  1.573 &  8.529 &  8.668 & & -2.245 & 13.414 & 13.594 \\
			Spline TLP  & & $\alpha_{\tau}=0.05$ & & -0.226 &  2.845 &  2.852 & &  0.225 &  3.217 &  3.224 & & -0.196 &  2.670 &  2.676 \\
			Spline TLP  & & $\alpha_{\tau}=0.01$ & & -0.224 &  2.845 &  2.852 & &  0.224 &  3.231 &  3.237 & & -0.193 &  2.757 &  2.762 \\
			AKX         & & $k_{n}=78$           & &  0.124 & 44.185 & 44.163 & &  0.216 & 48.798 & 48.774 & &  0.928 & 92.308 & 92.267 \\
			AKX         & & $k_{n}=91$           & &  0.714 & 37.624 & 37.612 & & -0.801 & 41.825 & 41.812 & &  2.677 & 79.908 & 79.913 \\
			AKX         & & $k_{n}=117$          & & -0.465 & 27.295 & 27.285 & &  0.548 & 30.044 & 30.034 & & -0.405 & 57.546 & 57.518 \\
			\hline
			\multicolumn{15}{l}{Panel D: $p=100$}\\	
			& & & & \multicolumn{3}{c}{$\widehat{I\beta}_{1,T}$} & & \multicolumn{3}{c}{$\widehat{I\beta}_{2,T}$} & & \multicolumn{3}{c}{$\widehat{I\beta}_{3,T}$}\\
			\cline{5-7}\cline{9-11}\cline{13-15}
			Estimator & & Tuning & & Bias & Stdev & RMSE & & Bias & Stdev & RMSE & & Bias & Stdev & RMSE\\
			\hline
			Spline      & &                      & & -6.568 &  8.022 & 10.364 & &  5.921 &  6.179 &  8.556 & & -3.756 &  6.944 &  7.892 \\
			Spline TLP  & & $\alpha_{\tau}=0.05$ & & -0.531 &  3.448 &  3.487 & &  0.599 &  2.771 &  2.834 & & -0.277 &  3.048 &  3.059 \\
			Spline TLP  & & $\alpha_{\tau}=0.01$ & & -0.532 &  3.449 &  3.488 & &  0.587 &  2.767 &  2.828 & & -0.233 &  2.921 &  2.929 \\
			AKX         & & $k_{n}=78$           & &   --   &   --   &   --   & &   --   &   --   &   --   & &   --   &   --   &   --   \\
			AKX         & & $k_{n}=91$           & &   --   &   --   &   --   & &   --   &   --   &   --   & &   --   &   --   &   --   \\
			AKX         & & $k_{n}=117$          & &  3.499 &100.218 &100.229 & & -3.375 & 92.011 & 92.027 & &  2.155 & 58.581 & 58.591 \\
			\hline
			\multicolumn{15}{l}{Panel E: $p=500$}\\	
			& & & & \multicolumn{3}{c}{$\widehat{I\beta}_{1,T}$} & & \multicolumn{3}{c}{$\widehat{I\beta}_{2,T}$} & & \multicolumn{3}{c}{$\widehat{I\beta}_{3,T}$}\\
			\cline{5-7}\cline{9-11}\cline{13-15}
			Estimator & & Tuning & & Bias & Stdev & RMSE & & Bias & Stdev & RMSE & & Bias & Stdev & RMSE\\
			\hline
			Spline      & &                      & & -45.035 &  1.801 & 45.071 & & 35.018 &  1.902 & 35.069 & & -20.468 &  2.029 & 20.568 \\
			Spline TLP  & & $\alpha_{\tau}=0.05$ & &  -0.676 &  3.072 &  3.144 & &  0.836 &  3.398 &  3.498 & &  -1.109 &  5.477 &  5.585 \\
			Spline TLP  & & $\alpha_{\tau}=0.01$ & &  -0.668 &  3.064 &  3.135 & &  0.756 &  3.379 &  3.461 & &  -0.466 &  4.387 &  4.410 \\
			AKX         & & $k_{n}=78$           & &   --   &   --   &   --   & &   --   &   --   &   --   & &   --   &   --   &   --   \\
			AKX         & & $k_{n}=91$           & &   --   &   --   &   --   & &   --   &   --   &   --   & &   --   &   --   &   --   \\
			AKX         & & $k_{n}=117$          & &   --   &   --   &   --   & &   --   &   --   &   --   & &   --   &   --   &   --   \\
			\hline\hline
		\end{tabular} 
		\label{Table:simulation-IBeta-multicollinearity}
		All these quantities are multiplied by 100. The modified simulation design allows dependence among the redundant factors and also correlations between relevant and redundant factors. The cutoff level in the TLP function is $\tau_{n}=\alpha_{\tau}\sqrt{\text{MedRV}_{T}}$ for each simulated path of $Y$. The number of B-spline basis functions $K_{n}$ and the effective penalty level $\lambda_{n}/\tau_{n}$ are selected via 5-fold cross-validation. AKX stands for the estimator of \citet{ait2020high}. For both the spline-based and AKX estimators, the truncation thresholds for the increments of $Y$ and each component of $X$ are given by $3\Delta_{n}^{0.47}\sqrt{\text{MedRV}_{T}}$.  All results are based on 1,000 Monte Carlo replications.	
	\end{threeparttable}
\end{table}

\begin{table}[h!]
	\centering
	\onehalfspacing
	\addtolength{\leftskip} {-2cm}
	\addtolength{\rightskip}{-2cm}
	\setlength{\tabcolsep}{10pt}
	\begin{threeparttable}
		\caption{Simulation results for model selection with imperfect multicollinearity}
		\scriptsize 
		\begin{tabularx}{0.9\textwidth}{Xc *{3}{Y} c *{3}{Y}}
			\hline\hline
			& & \multicolumn{3}{c}{$\alpha_{\tau}=0.05$} & & \multicolumn{3}{c}{$\alpha_{\tau}=0.01$}\\
			\cline{3-5} \cline{7-9}
			& & Relevant & Irrelevant & Correct & & Relevant & Irrelevant & Correct\\
			\hline
			$p=3$   & & 1.000 & --    & 1.000 & & 1.000 & --    & 1.000 \\
			$p=10$  & & 1.000 & 0.010 & 0.931 & & 1.000 & 0.000 & 0.997\\
			$p=50$  & & 1.000 & 0.003 & 0.878  & & 1.000 & 0.000 & 1.000\\
			$p=100$ & & 1.000 & 0.012 & 0.348 & & 1.000 & 0.000 & 1.000\\
			$p=500$ & & 1.000 & 0.062 & 0.000 & & 0.999 & 0.000 & 0.984\\
			\hline\hline
		\end{tabularx} 
		\label{Table:simulation-selection-multicollinearity}
		Average selection frequencies for the relevant and irrelevant factors, and frequencies of correct model specification. The modified simulation design allows dependence among the redundant factors and also correlations between relevant and redundant factors. The cutoff level in the TLP function is $\tau_{n}=\alpha_{\tau}\sqrt{\text{MedRV}_{T}}$ for each simulated path of $Y$. The number of B-spline basis functions $K_{n}$ and the effective penalty level $\lambda_{n}/\tau_{n}$ are selected via 5-fold cross-validation. The truncation thresholds for the increments of $Y$ and each component of $X$ are given by $3\Delta_{n}^{0.47}\sqrt{\text{MedRV}_{T}}$.  All results are based on 1,000 Monte Carlo replications.
	\end{threeparttable}
\end{table}

\clearpage

\subsection{Supplementary Empirical Results}
\label{AP:Supp-empirical}

\subsubsection*{Sample information}

\cref{Table:empirical_dataDescription} reports the list of individual stocks in our sample, including their sector and industry group assignments. Sector classifications are based on the GICS sector code from WRDS Compustat. The Fama-French 48 industry group for each firm is assigned based on its four-digit SIC code from Compustat and the SIC-to-industry definitions from Kenneth R. French's data library. \cref{Table:empirical_zeroReturns} reports the fraction of zero 5-minute returns in each year for all 57 stocks.

\begin{table}[h!]
	\centering
	\onehalfspacing
	\addtolength{\leftskip} {-2cm}
	\addtolength{\rightskip}{-2cm}
	\setlength{\tabcolsep}{4.5pt}
	\begin{threeparttable}
		\caption{Descriptive information of selected S\&P 100 stocks}
		\scriptsize
		\begin{tabularx}{\textwidth}{lllllY}
			\hline\hline
			Symbol & Security & GICS & Sector & SIC & Fama-French 48 Industry\\
			\hline
			AAPL & Apple Inc & 45 & Information Technology & 3663 & Electronic Equipment (36)\\
			ABBV & AbbVie Inc & 35 & Health Care & 2836 & Pharmaceutical Products (13)\\
			ACN & Accenture Plc & 45 & Information Technology & 8742 & Business Services (34)\\
			AMGN & Amgen Inc & 35 & Health Care & 2836 & Pharmaceutical Products (13)\\
			AMZN & Amazon.com Inc & 25 & Consumer Discretionary & 5961 & Retail (42)\\
			AXP & American Express Co & 40 & Financials & 6141 & Banking (44)\\
			BA & Boeing Co & 20 & Industrials & 3721 & Aircraft (24)\\
			BIIB & Biogen Inc & 35 & Health Care & 2836 & Pharmaceutical Products (13)\\
			BLK & BlackRock Inc & 40 & Financials & 6282 & Trading (47)\\
			BMY & Bristol-Myers Squibb Co & 35 & Health Care & 2834 & Pharmaceutical Products (13)\\
			C & Citigroup Inc & 40 & Financials & 6199 & Banking (44)\\
			CAT & Caterpillar Inc & 20 & Industrials & 3531 & Machinery (21)\\
			COF & Capital One Financial Corp & 40 & Financials & 6141 & Banking (44)\\
			COP & ConocoPhillips & 10 & Energy & 1311 & Petroleum and Natural Gas (30)\\
			COST & Costco Wholesale Corp & 30 & Consumer Staples & 5399 & Retail (42)\\
			CVS & CVS Health Corp & 35 & Health Care & 8000 & Healthcare (11)\\
			CVX & Chevron Corp & 10 & Energy & 2911 & Petroleum and Natural Gas (30)\\
			DIS & Walt Disney Co & 50 & Communication Services & 4888 & Communication (32)\\
			EMR & Emerson Electric Co & 20 & Industrials & 3823 & Measuring and Control Equipment (37)\\
			FDX & FedEx Corp & 20 & Industrials & 4513 & Transportation (40)\\
			GD & General Dynamics Corp & 20 & Industrials & 3721 & Aircraft (24)\\
			GILD & Gilead Sciences Inc & 35 & Health Care & 2836 & Pharmaceutical Products (13)\\
			GOOG & Alphabet Inc & 50 & Communication Services & 7370 & Business Services (34)\\
			GS & Goldman Sachs Group Inc & 40 & Financials & 6211 & Trading (47)\\
			HD & Home Depot Inc & 25 & Consumer Discretionary & 5211 & Retail (42)\\
			HON & Honeywell International Inc & 20 & Industrials & 9997 & --\\
			IBM & IBM & 45 & Information Technology & 7370 & Business Services (34)\\
			JNJ & Johnson \& Johnson & 35 & Health Care & 2834 & Pharmaceutical Products (13)\\
			JPM & JPMorgan Chase \& Co & 40 & Financials & 6020 & Banking (44)\\
			LLY & Eli Lilly \& Co & 35 & Health Care & 2834 & Pharmaceutical Products (13)\\
			LMT & Lockheed Martin Corp & 20 & Industrials & 3760 & Defense (26)\\
			LOW & Lowe's Cos Inc & 25 & Consumer Discretionary & 5211 & Retail (42)\\
			MA & Mastercard Inc & 40 & Financials & 7389 & Business Services (34)\\
			MCD & McDonald's Corp & 25 & Consumer Discretionary & 5812 & Restaurants, Hotels, Motels (43)\\
			MDT & Medtronic Plc & 35 & Health Care & 3845 & Medical Equipment (12)\\
			MMM & 3M Co & 20 & Industrials & 9997 & --\\
			MO & Altria Group Inc & 30 & Consumer Staples & 2111 & Tobacco Products (5)\\
			MRK & Merck \& Co Inc & 35 & Health Care & 2834 & Pharmaceutical Products (13)\\
			MSFT & Microsoft Corp & 45 & Information Technology & 7372 & Business Services (34)\\
			NKE & Nike Inc & 25 & Consumer Discretionary & 3021 & Apparel (10)\\
			PEP & PepsiCo Inc & 30 & Consumer Staples & 2080 & Beer \& Liquor (4)\\
			\hline\hline
		\end{tabularx}
		\caption*{\small Continued on next page.}
		\label{Table:empirical_dataDescription}
	\end{threeparttable}
\end{table}

\begin{table}[h!]
	\centering
	\onehalfspacing
	\addtolength{\leftskip} {-2cm}
	\addtolength{\rightskip}{-2cm}
	\setlength{\tabcolsep}{4.5pt}
	\begin{threeparttable}
		\scriptsize
		\begin{tabularx}{\textwidth}{lllllY}
			\hline\hline
			Symbol & Security & GICS & Sector & SIC & Fama-French 48 Industry\\
			\hline
			PG & Procter \& Gamble Co & 30 & Consumer Staples & 2840 & Consumer Goods (9)\\
			PM & Philip Morris International Inc & 30 & Consumer Staples & 2111 & Tobacco Products (5)\\
			QCOM & Qualcomm Inc & 45 & Information Technology & 3674 & Electronic Equipment (36)\\
			SBUX & Starbucks Corp & 25 & Consumer Discretionary & 5812 & Restaurants, Hotels, Motels (43)\\
			SLB & Schlumberger Ltd & 10 & Energy & 1389 & Petroleum and Natural Gas (30)\\
			SPG & Simon Property Group Inc & 60 & Real Estate & 6798 & Trading (47)\\
			TGT & Target Corp & 30 & Consumer Staples & 5331 & Retail (42)\\
			UNH & UnitedHealth Group Inc & 35 & Health Care & 6324 & Insurance (45)\\
			UNP & Union Pacific Corp & 20 & Industrials & 4011 & Transportation (40)\\
			UPS & UPS & 20 & Industrials & 4210 & Transportation (40)\\
			V & Visa Inc & 40 & Financials & 6099 & Banking (44)\\
			VZ & Verizon Communications Inc & 50 & Communication Services & 4812 & Communication (32)\\
			WBA & Walgreens Boots Alliance Inc & 30 & Consumer Staples & 5912 & Retail (42)\\
			WFC & Wells Fargo \& Co & 40 & Financials & 6020 & Banking (44)\\
			WMT & Walmart Inc & 30 & Consumer Staples & 5331 & Retail (42)\\
			XOM & Exxon Mobil Corp & 10 & Energy & 2911 & Petroleum and Natural Gas (30)\\
			\hline\hline
		\end{tabularx}
Individual stocks used in the empirical analysis. Sector classifications are based on the GICS sector code from WRDS Compustat. The Fama-French 48 industry group for each firm is assigned based on its four-digit SIC code from Compustat and the SIC-to-industry definitions from Kenneth R. French's data library. 
	\end{threeparttable}
\end{table}

\begin{table}[h!]
	\centering
	\onehalfspacing
	\addtolength{\leftskip}{-2cm}
	\addtolength{\rightskip}{-2cm}
	\setlength{\tabcolsep}{5pt}
	\begin{threeparttable}
		\caption{Proportions of zero 5-minute returns (\%) for selected S\&P 100 stocks}
		\label{Table:empirical_zeroReturns}
		\scriptsize
		\begin{tabularx}{\textwidth}{l*{5}{Y}|l*{5}{Y}}
			\hline\hline
			Symbol & 2016 & 2017 & 2018 & 2019 & 2020 & Symbol & 2016 & 2017 & 2018 & 2019 & 2020\\
			\hline
			AAPL & 1.75 & 1.59 & 0.88 & 1.09 & 0.39 & LLY & 3.82 & 4.71 & 3.81 & 2.98 & 1.45\\
			ABBV & 3.35 & 3.45 & 1.99 & 2.86 & 1.60 & LMT & 1.24 & 1.39 & 0.99 & 0.92 & 0.47\\
			ACN & 3.05 & 3.19 & 2.00 & 2.04 & 1.22 & LOW & 3.69 & 3.30 & 2.49 & 2.70 & 1.13\\
			AMGN & 1.59 & 1.72 & 1.17 & 1.50 & 0.88 & MA & 3.49 & 3.58 & 1.37 & 1.44 & 0.64\\
			AMZN & 0.37 & 0.21 & 0.16 & 0.17 & 0.13 & MCD & 3.09 & 2.98 & 1.87 & 1.73 & 1.02\\
			AXP & 4.65 & 4.71 & 3.16 & 3.40 & 1.75 & MDT & 3.44 & 4.01 & 3.12 & 2.79 & 1.81\\
			BA & 2.05 & 1.67 & 0.58 & 0.61 & 0.46 & MMM & 2.14 & 2.21 & 1.40 & 1.69 & 0.99\\
			BIIB & 0.69 & 0.86 & 1.01 & 1.27 & 0.88 & MO & 3.90 & 4.13 & 3.11 & 3.06 & 3.03\\
			BLK & 0.63 & 0.87 & 0.80 & 0.85 & 0.58 & MRK & 3.93 & 4.37 & 3.23 & 3.23 & 2.01\\
			BMY & 3.40 & 4.13 & 3.71 & 4.17 & 2.55 & MSFT & 3.03 & 3.39 & 1.63 & 1.52 & 0.69\\
			C & 3.97 & 3.51 & 2.83 & 3.32 & 1.91 & NKE & 3.52 & 4.39 & 3.30 & 2.99 & 1.78\\
			CAT & 3.35 & 2.93 & 1.42 & 2.27 & 1.24 & PEP & 3.44 & 3.55 & 2.67 & 2.80 & 1.57\\
			COF & 4.24 & 3.58 & 2.97 & 3.44 & 2.01 & PG & 2.95 & 3.72 & 2.53 & 2.70 & 1.93\\
			COP & 3.59 & 4.16 & 3.40 & 3.40 & 2.43 & PM & 3.19 & 3.50 & 2.73 & 3.13 & 2.44\\
			COST & 2.06 & 1.88 & 1.29 & 1.12 & 0.56 & QCOM & 3.81 & 3.52 & 3.08 & 2.85 & 1.37\\
			CVS & 3.07 & 3.18 & 2.92 & 2.74 & 1.80 & SBUX & 3.39 & 4.09 & 3.48 & 3.04 & 1.66\\
			CVX & 2.37 & 2.80 & 1.88 & 2.46 & 1.34 & SLB & 3.06 & 3.42 & 2.79 & 4.00 & 3.43\\
			DIS & 2.51 & 2.62 & 2.09 & 1.71 & 0.83 & SPG & 1.65 & 2.03 & 1.92 & 1.88 & 1.19\\
			EMR & 4.61 & 4.63 & 3.34 & 4.06 & 2.54 & TGT & 3.67 & 3.57 & 2.82 & 2.71 & 1.32\\
			FDX & 1.95 & 1.75 & 0.99 & 1.41 & 0.78 & UNH & 2.27 & 1.91 & 1.13 & 1.17 & 0.58\\
			GD & 2.01 & 1.77 & 1.48 & 1.87 & 1.31 & UNP & 3.02 & 2.97 & 1.88 & 2.00 & 1.21\\
			GILD & 2.52 & 3.37 & 2.77 & 3.51 & 2.06 & UPS & 3.71 & 3.40 & 2.07 & 2.66 & 1.13\\
			GOOG & 0.42 & 0.51 & 0.38 & 0.45 & 0.26 & V & 3.22 & 3.29 & 1.85 & 1.86 & 0.82\\
			GS & 1.90 & 1.40 & 1.11 & 1.67 & 0.83 & VZ & 3.96 & 3.73 & 3.26 & 3.82 & 2.58\\
			HD & 2.18 & 2.47 & 1.43 & 1.34 & 0.74 & WBA & 3.84 & 4.32 & 3.32 & 3.65 & 2.72\\
			HON & 2.85 & 3.44 & 2.01 & 2.22 & 1.19 & WFC & 3.96 & 3.82 & 3.41 & 4.22 & 2.61\\
			IBM & 2.30 & 2.35 & 1.70 & 2.27 & 1.44 & WMT & 4.26 & 3.89 & 2.56 & 2.88 & 1.23\\
			JNJ & 3.25 & 2.82 & 1.88 & 2.38 & 1.11 & XOM & 2.96 & 3.40 & 2.59 & 3.03 & 1.77\\
			JPM & 3.23 & 2.92 & 1.83 & 2.26 & 1.19 &  & -- & -- & -- & -- & --\\
			\hline\hline
		\end{tabularx}
	\end{threeparttable}
\end{table}

\newpage

\subsubsection*{Factor selection} 

In addition to the factor selection results for representative stocks in \cref{Fig:empirical_selectionExample} and the pooled rankings in \cref{Table:empirical_top20Factors,Table:empirical_top20Factors_stocks_industries}, we report the full sample results on the average number of selected factors for all selected S\&P 100 stocks and Fama-French industry portfolios in \cref{Table:empirical_factors_stocks,Table:empirical_factors_industries}, respectively.\\[1.2cm]

\begin{table}[h!]
	\centering
	\onehalfspacing
	\addtolength{\leftskip}{-2cm}
	\addtolength{\rightskip}{-2cm}
	\setlength{\tabcolsep}{5pt}
	\begin{threeparttable}
		\caption{Average number of selected factors per year for selected S\&P 100 stocks}
		\label{Table:empirical_factors_stocks}
		\scriptsize
		\begin{tabularx}{\textwidth}{l*{5}{Y}|l*{5}{Y}}
			\hline\hline
			Symbol & 2016 & 2017 & 2018 & 2019 & 2020 & Symbol & 2016 & 2017 & 2018 & 2019 & 2020\\
			\hline
			AAPL & 20.67 & 18.33 & 16.33 & 15.50 & 12.92 & LLY  & 6.50  & 6.08  & 6.50  & 12.00 & 9.00 \\
			ABBV & 22.08 & 17.25 & 27.17 & 6.83  & 14.83 & LMT  & 7.67  & 4.75  & 6.83  & 8.08  & 7.42 \\
			ACN  & 3.33  & 1.08  & 2.50  & 2.00  & 1.67  & LOW  & 6.08  & 3.25  & 4.50  & 4.25  & 6.83 \\
			AMGN & 12.33 & 9.75  & 5.75  & 8.83  & 6.83  & MA   & 7.42  & 13.58 & 15.75 & 17.83 & 18.25\\
			AMZN & 24.75 & 22.33 & 15.17 & 16.83 & 13.58 & MCD  & 9.42  & 6.83  & 2.50  & 10.42 & 6.08 \\
			AXP  & 2.67  & 3.08  & 6.58  & 9.50  & 2.83  & MDT  & 1.75  & 1.08  & 2.17  & 2.92  & 1.67 \\
			BA   & 12.67 & 12.92 & 16.42 & 18.67 & 22.83 & MMM  & 2.58  & 2.67  & 7.33  & 3.58  & 3.50 \\
			BIIB & 3.92  & 4.67  & 4.25  & 1.92  & 2.92  & MO   & 15.92 & 12.83 & 21.17 & 20.92 & 9.08 \\
			BLK  & 2.33  & 2.00  & 3.33  & 3.75  & 2.50  & MRK  & 15.00 & 9.58  & 13.67 & 18.83 & 15.75\\
			BMY  & 18.50 & 7.00  & 6.83  & 8.00  & 18.67 & MSFT & 21.25 & 17.67 & 18.67 & 12.42 & 11.42\\
			C    & 19.25 & 20.50 & 17.25 & 10.33 & 8.67  & NKE  & 19.00 & 5.00  & 9.08  & 4.25  & 7.75 \\
			CAT  & 3.58  & 3.92  & 7.58  & 7.83  & 4.67  & PEP  & 10.58 & 11.25 & 15.67 & 14.75 & 8.67 \\
			COF  & 4.67  & 4.92  & 5.25  & 5.25  & 4.08  & PG   & 19.33 & 19.17 & 14.00 & 16.42 & 16.33\\
			COP  & 7.67  & 10.42 & 12.75 & 10.75 & 8.67  & PM   & 18.50 & 23.08 & 14.00 & 22.67 & 8.25 \\
			COST & 2.67  & 0.92  & 3.17  & 6.08  & 7.92  & QCOM & 14.83 & 7.17  & 7.25  & 13.25 & 5.25 \\
			CVS  & 10.92 & 8.58  & 4.25  & 8.58  & 3.92  & SBUX & 7.17  & 2.42  & 1.08  & 3.58  & 4.17 \\
			CVX  & 25.58 & 23.67 & 23.33 & 17.42 & 28.67 & SLB  & 4.58  & 5.17  & 9.67  & 4.50  & 4.50 \\
			DIS  & 21.75 & 12.67 & 16.50 & 33.00 & 35.17 & SPG  & 0.92  & 0.17  & 1.58  & 0.58  & 3.17 \\
			EMR  & 2.08  & 4.17  & 4.75  & 3.75  & 3.08  & TGT  & 1.83  & 3.42  & 5.83  & 3.33  & 1.83 \\
			FDX  & 1.17  & 0.42  & 2.08  & 2.92  & 3.00  & UNH  & 25.50 & 26.83 & 13.17 & 20.67 & 20.67\\
			GD   & 1.83  & 3.42  & 3.00  & 2.42  & 2.75  & UNP  & 6.25  & 3.75  & 7.67  & 14.25 & 5.10 \\
			GILD & 25.00 & 7.75  & 10.83 & 5.42  & 6.75  & UPS  & 1.83  & 0.92  & 2.50  & 4.67  & 1.42 \\
			GOOG & 7.00  & 12.83 & 15.08 & 15.50 & 12.25 & V    & 20.08 & 21.58 & 18.75 & 19.83 & 17.50\\
			GS   & 5.92  & 7.92  & 11.00 & 7.92  & 3.92  & VZ   & 29.75 & 20.67 & 21.83 & 20.50 & 15.17\\
			HD   & 22.83 & 23.92 & 32.08 & 24.58 & 20.42 & WBA  & 13.75 & 6.58  & 3.58  & 2.33  & 2.67 \\
			HON  & 3.75  & 3.50  & 7.92  & 7.17  & 2.67  & WFC  & 25.58 & 26.33 & 26.75 & 23.42 & 14.67\\
			IBM  & 18.33 & 7.83  & 5.67  & 3.25  & 4.67  & WMT  & 33.92 & 27.92 & 22.58 & 28.08 & 17.00\\
			JNJ  & 15.33 & 17.00 & 22.08 & 26.42 & 23.33 & XOM  & 22.67 & 25.50 & 22.00 & 20.42 & 27.33\\
			JPM  & 18.00 & 16.17 & 15.92 & 20.42 & 31.75 &      & --    & --    & --    & --    & --   \\
			\hline\hline
		\end{tabularx}
		Average number of selected factors per year for all selected S\&P 100 stocks. The candidate covariate set consists of the 224 high-frequency factors of \citet{aleti2023high}. The cutoff level in the TLP function is $\tau_{n}=0.01\sqrt{\text{MedRV}_{T}}$, where $\mathrm{MedRV}_{T}$ denotes the MedRV of the test asset computed over the corresponding estimation window $[0,T]$. The number of B-spline basis functions $K_{n}$ and the effective penalty level $\lambda_{n}/\tau_{n}$ are selected via 5-fold cross-validation using the one-standard-error rule. 
	\end{threeparttable}
\end{table}

\begin{table}[h!]
	\centering
	\onehalfspacing
	\addtolength{\leftskip}{-2cm}
	\addtolength{\rightskip}{-2cm}
	\setlength{\tabcolsep}{5pt}
	\begin{threeparttable}
		\caption{Average number of selected factors per year for Fama-French 48 industry portfolios}
		\label{Table:empirical_factors_industries}
		\scriptsize
		\begin{tabularx}{\textwidth}{c l*{5}{Y}}
			\hline\hline
			Ind. & Industry & 2016 & 2017 & 2018 & 2019 & 2020\\
			\hline
			1  & Agriculture & 1.00 & 1.25 & 2.38 & 1.00 & 2.00\\
			2  & Food Products & 8.08 & 9.92 & 6.67 & 3.25 & 6.64\\
			3  & Candy \& Soda & 10.42 & 14.92 & 15.42 & 15.50 & 11.50\\
			4  & Beer \& Liquor & 3.00 & 9.67 & 3.00 & 1.00 & 1.30\\
			5  & Tobacco Products & 20.58 & 26.00 & 24.42 & 25.75 & 11.36\\
			6  & Recreation & 2.82 & 1.88 & 2.27 & 1.27 & 4.36\\
			7  & Entertainment & 9.08 & 10.83 & 20.00 & 33.83 & 31.83\\
			8  & Printing and Publishing & 1.00 & 1.67 & 1.50 & 2.44 & 3.82\\
			9  & Consumer Goods & 12.42 & 12.33 & 9.33 & 12.00 & 16.18\\
			10 & Apparel & 14.18 & 10.36 & 9.17 & 13.58 & 10.00\\
			11 & Healthcare & 13.83 & 5.67 & 5.50 & 7.75 & 6.08\\
			12 & Medical Equipment & 6.92 & 4.42 & 11.58 & 12.17 & 11.17\\
			13 & Pharmaceutical Products & 8.75 & 6.58 & 7.17 & 8.50 & 7.92\\
			14 & Chemicals & 13.83 & 8.58 & 14.83 & 5.58 & 5.50\\
			15 & Rubber and Plastic Products & 3.83 & 1.89 & 2.92 & 2.33 & 3.27\\
			16 & Textiles & 2.73 & 3.13 & 2.00 & 2.00 & 4.33\\
			17 & Construction Materials & 2.17 & 2.73 & 11.67 & 6.25 & 5.58\\
			18 & Construction & 5.67 & 2.91 & 7.50 & 5.75 & 6.25\\
			19 & Steel Works Etc. & 6.92 & 7.50 & 6.50 & 3.92 & 4.58\\
			20 & Fabricated Products & 3.25 & 4.08 & 3.82 & 2.33 & 3.36\\
			21 & Machinery & 11.17 & 9.08 & 12.50 & 5.75 & 9.17\\
			22 & Electrical Equipment & 2.67 & 4.36 & 3.50 & 2.75 & 4.33\\
			23 & Automobiles and Trucks & 10.42 & 9.83 & 9.50 & 8.58 & 8.92\\
			24 & Aircraft & 4.67 & 5.92 & 11.83 & 10.00 & 9.08\\
			25 & Shipbuilding, Railroad Equipment & 2.00 & 4.00 & 1.00 & 1.57 & 2.92\\
			26 & Defense & 7.25 & 4.75 & 10.33 & 9.42 & 6.33\\
			27 & Precious Metals & 4.80 & 2.00 & 2.33 & 3.00 & 1.00\\
			28 & Non-Metallic and Industrial Metal Mining & 7.75 & 10.50 & 6.75 & 2.55 & 2.08\\
			29 & Coal & 2.33 & 3.67 & 2.44 & 2.88 & 3.14\\
			30 & Petroleum and Natural Gas & 10.33 & 13.92 & 15.33 & 13.83 & 31.00\\
			31 & Utilities & 18.50 & 15.50 & 13.00 & 11.58 & 15.82\\
			32 & Communication & 15.75 & 14.75 & 11.75 & 14.83 & 10.92\\
			33 & Personal Services & 3.00 & 3.25 & 3.33 & 1.58 & 2.27\\
			34 & Business Services & 2.83 & 6.08 & 5.58 & 3.92 & 5.67\\
			35 & Computers & 12.75 & 17.00 & 12.00 & 10.25 & 9.83\\
			36 & Electronic Equipment & 9.92 & 11.58 & 13.58 & 11.42 & 14.08\\
			37 & Measuring and Control Equipment & 3.08 & 3.08 & 6.42 & 8.67 & 3.00\\
			38 & Business Supplies & 2.67 & 2.70 & 2.33 & 3.42 & 5.25\\
			39 & Shipping Containers & 1.83 & 1.88 & 2.58 & 1.09 & 2.45\\
			40 & Transportation & 18.25 & 12.33 & 15.67 & 8.33 & 13.67\\
			41 & Wholesale & 1.25 & 1.00 & 1.58 & 2.33 & 3.64\\
			42 & Retail & 6.42 & 11.67 & 8.33 & 7.42 & 7.83\\
			43 & Restaurants, Hotels, Motels & 8.75 & 2.25 & 5.67 & 5.83 & 21.45\\
			44 & Banking & 8.50 & 7.33 & 6.00 & 7.67 & 12.58\\
			45 & Insurance & 6.92 & 10.00 & 5.92 & 12.50 & 12.25\\
			46 & Real Estate & 3.33 & 2.40 & 1.67 & 1.17 & 2.92\\
			47 & Trading & 5.17 & 5.50 & 5.25 & 4.42 & 8.92\\
			48 & Other & 4.10 & 1.29 & 1.42 & 2.75 & 2.08\\
			\hline\hline
		\end{tabularx}
		Average number of selected factors per year for Fama-French industry portfolios. The candidate covariate set consists of the 224 high-frequency factors of \citet{aleti2023high}. The cutoff level in the TLP function is $\tau_{n}=0.01\sqrt{\text{MedRV}_{T}}$, where $\mathrm{MedRV}_{T}$ denotes the MedRV of the test asset computed over the corresponding estimation window $[0,T]$. The number of B-spline basis functions $K_{n}$ and the effective penalty level $\lambda_{n}/\tau_{n}$ are selected via 5-fold cross-validation using the one-standard-error rule. 
	\end{threeparttable}
\end{table}

\end{appendices}

\end{document}